\DeclareFontFamily{OT1}{pzc}{}
\DeclareFontShape{OT1}{pzc}{m}{it}{<-> s * [1.2500] pzcmi7t}{}
\DeclareMathAlphabet{\mathscr}{OT1}{pzc}{m}{it}
\newcommand{\ignore}[1]{}
\newtheorem{theorem}{Theorem}[section]
\newtheorem{lemma}[theorem]{Lemma}
\newtheorem{proposition}[theorem]{Proposition}
\newtheorem{observation}[theorem]{Observation}
\newtheorem{corollary}[theorem]{Corollary}
\def\part{{\it sp}}
\def\strong{{strong}}
\def\eps{{\epsilon}}
\def\aff{{{\mathtt{aff}}}}
\def\conv{{{\mathtt{conv}}}}
\def\1{{(1)}}
\def\2{{(2)}}
\def\A{{\cal A}}
\def\Z{{\cal Z}}
\def\comp{{\sf{\varpi}}}
\def\I{{\cal I}}
\def\r{{s}}
\def\s{{t}}
\def\u{{u}}
\def \nar {{nar}}
\def\F{{\cal F}}
\def\L{{\cal {L}}}
\def\HH{{\cal{H}}}
\def\C{{\cal C}}
\def\D{{\cal D}}
\def\B{{\cal B}}
\def\E{{{\cal{E}}}}
\def\K{{{\mathcal {K}}}}
\def\N{{\cal N}}
\def\W{{\cal W}}
\def\R{{\cal{R}}}
\def\SS{{\mathbb S}}
\def\reals{{\mathbb R}}
\def\P{{\cal P}}
\def\V{{\cal V}}
\def\p{{\bm{p}}}
\def\T{{{\cal{T}}}}
\begin{document}

\begin{titlepage}

\title{Stronger Bounds for Weak Epsilon-Nets in Higher Dimensions\thanks{A preliminary version of the present study has appeared in the Proceedings of the 53rd Annual ACM Symposium on Theory of Computing, 2021.}}

\author{
Natan Rubin\thanks{Email: {\tt rubinnat.ac@gmail.com}. Ben Gurion University of the Negev, Beer-Sheba, Israel.  
Supported
by grant 2891/21 from Israel Science Foundation. The project leading to this application has received funding from European Research Council (ERC)
under the European Unions Horizon 2020 research and innovation programme under grant agreement No. 678765.
} }

 \maketitle

\begin{abstract}
Given a finite point set $P$ in $\reals^d$, and $\eps>0$ we say that $N\subseteq \reals^d$ is a weak $\eps$-net if it pierces every convex set $K$ with $|K\cap P|\geq \eps |P|$.
We show that for any finite point set in dimension $d\geq 3$, and any $\eps>0$, one can construct a weak $\eps$-net whose cardinality is $\displaystyle O^*\left(\frac{1}{\eps^{2.558}}\right)$ in dimension $d=3$, and $\displaystyle o\left(\frac{1}{\eps^{d-1/2}}\right)$ in all dimensions $d\geq 4$.\footnote{In the sequel, the $O^*(x)$ notation hides asymptotic factors that are bounded by $x^{\gamma}$, for arbitrary small $\gamma>0$, whereas the $\tilde{O}(x)$-notation hides multiplicative factors that are polylogarithmic in $x$.}

To be precise, our weak $\eps$-net has cardinality $\displaystyle O\left(\frac{1}{\eps^{\alpha_d+\gamma}}\right)$ for any $\gamma>0$, with 

$$
    \alpha_d=
    \begin{cases}
      $2.558$ & \text{if} \ d=3\\
      $3.48$ & \text{if} \ d=4\\
      \left(d+\sqrt{d^2-2d}\right)/2 & \text{if} \ d\geq 5.
    \end{cases}
$$

This is the first significant improvement of the bound of $\displaystyle \tilde{O}\left(\frac{1}{\eps^d}\right)$ that was obtained in 1993 by Chazelle, Edelsbrunner, Grigni, Guibas, Sharir, and Welzl for general point sets in dimension $d\geq 3$.
\end{abstract}

\maketitle

%\blfootnote{\textup{2010} \textit{Mathematics Subject Classification}: \textup{52A35, 52C10, 52C15, 52C30,  52C35, 52C45, 05D40}}

\end{titlepage}

\section{Introduction} \label{sec:intro}
\subsection{Background}
\paragraph{Transversals and $\eps$-nets.} Given a family $\K$ of geometric ranges in $\reals^d$ (e.g., lines, triangles, or convex sets),
we say that $N\subset \reals^d$ is a transversal to $\K$ (or $N$ pierces $\K$) if each range $K\in \K$ is pierced by at least one point of $N$.
Given an underlying set $P$ of $n$ points, we say that a range $K\in \K$ is {\it $\eps$-heavy} if $|P\cap K|\geq \eps n$.
We say that $N$ is an {\it $\eps$-net} for $\K$ if it pierces every $\eps$-heavy range in $\K$.
We say that such a set $N$ is a {\it strong $\eps$-net} for $\K$ if $N\subset P$, that is, the points of the net are drawn from the underlying point set $P$. Otherwise (i.e., if $N$ includes additional points outside $P$), we say that $N$ is a {\it weak \it $\eps$-net}.

The study of $\eps$-nets was initiated by Vapnik and Chervonenkis \cite{VC1971}, in the context of Statistical Learning Theory.
Following a seminal paper of Haussler and Welzl \cite{HW87}, $\eps$-nets play a central role in  Discrete and Computational Geometry \cite{Trends}.
For example, bounds on $\eps$-nets determine the performance of the best-known algorithms for Minimum Hitting Set/Set Cover Problem in geometric hypergraphs \cite{NetsRectangles,BroGood,ClaVar,Even}, and the transversal numbers of families of convex sets \cite{AlonKalai,AKMM,AlonKleitman,Shakhar}.

Informally, the cardinality of the smallest possible $\eps$-net for the range set $\K$ determines the integrality gap of the corresponding transversal problem -- the ratio between (1) the size of the smallest possible transversal $N$ to $\K$ and (2) the weight of the ``lightest" possible fractional transversal to $\K$ \cite{AlonKleitman,AlonKalai,Even}.

Haussler and Welzl \cite{HW87} proved in 1986 the existence of strong $\eps$-nets of cardinality $O\left(\frac{1}{\eps}\log\frac{1}{\eps}\right)$ for families of simply-shaped, or semi-algebraic geometric ranges in $d$-space, for a fixed $d>0$ (e.g., lines, boxes, spheres, halfspaces, or simplices), by observing that their induced hypergraphs have a bounded Vapnik-Chervonenkis dimension (so called {\it VC-dimension}).\footnote{The constant hidden within the $O(\cdot)$-notation is specific to the family of geometric ranges under consideration, and is proportional to the VC-dimension of the induced hypergraph.}
While the bound is generally tight for set systems with a bounded VC-dimension \cite{KPW90}, tight $o\left(\frac{1}{\eps}\log \frac{1}{\eps}\right)$-size constructions were discovered for discs in $\reals^2$, halfplanes in $\reals^2$ and halfspaces in $\reals^3$ \cite{ClaVar,KPW90,MSW90}, and rectangles in $\reals^2$ and boxes in $\reals^3$ \cite{NetsRectangles,PachTardos}. We refer the reader to a recent state-of-the-art survey \cite{HandbookNets}.

It had long been conjectured that all the ``natural" geometric instances, that involve simply-shaped geometric ranges in a fixed-dimensional Euclidean space $\reals^d$, admit a strong $\eps$-net 
of cardinality $O(1/\eps)$. %(where the constant of proportionality depends on the VC-dimension).
The conjecture was refuted, for the particular case of line ranges, in 2010 by Alon \cite{Alon}.
% who used a density version of  Hales-Jewett Theorem \cite{HJ} to show that some families of geometric ranges (e.g., lines in the Euclidean plane) require an $\eps$-net whose cardinality is larger than $O(1/\eps)$. 
Pach and Tardos \cite{PachTardos} subsequently demonstrated that the multiplicative term $\Theta\left(\log 1/\eps\right)$ is necessary for halfspaces in dimension higher than $3$.

\paragraph{\bf Weak $\eps$-nets for convex sets.} In sharp contrast to the case of simply-shaped ranges, no constructions of small-size strong $\eps$-nets exist for general families of convex sets in $\reals^d$, for $d\geq 2$.  For example, given an underlying set of $n$ points in convex position in $\reals^2$, any strong $\eps$-net with respect to convex ranges must include at least $n-\eps n$ of the points. Informally, this  phenomenon can be attributed to the fact that 
the VC-dimension of a geometric set system is closely related to the {\it description complexity} of the underlying ranges, and it is unbounded for general convex sets.
Nevertheless, B\'{a}r\'{a}ny, F\"{u}redi and Lov\'{a}sz \cite{BFL} observed in 1990 that families of convex sets in $\reals^2$ still admit weak $\eps$-nets of cardinality $O(\eps^{-1026})$.
Alon, B\'{a}r\'{a}ny, F\"{u}redi, and Kleitman \cite{AlonSelections} were the first to show in 1992 that families of convex sets in any dimension $d\geq 1$ admit weak $\eps$-nets whose cardinality is bounded in terms of $1/\eps$ and $d$. The subsequent study and application of weak $\eps$-nets bear strong relations to convex geometry, including Helly-type, Tverberg-type, and Selection Theorems; see \cite[Sections 8 -- 10]{JirkaBook} for a comprehensive introduction.

\medskip
\noindent{\bf Weak $\eps$-nets and the Hadwiger-Debrunner Problem.} Alon and Kleitman \cite{AlonKleitman} used the boundedness of weak $\eps$-nets to confirm a long-standing {\it $(p,q)$-conjecture} by Hadwiger and Debrunner \cite{HD}. To this end, we say that a family $\K$ of convex sets satisfies the {\it $(p,q)$-property}  if any its $p$-size subfamily $\K'\subset \K$ contains a $q$-size subset $\K''\subset \K$ with a non-empty common intersection $\bigcap \K''\neq \emptyset$. 
Hadwiger and Debrunner conjectured that for every positive integers $p,q$ and $d$ that satisfy $p\geq q\geq d+1$, there exists an integer $C_d(p,q)<\infty$ such that the following statement holds: {\it Any family $\K$ of convex sets in $\reals^d$ with the $(p,q)$-property admits a transversal by at most $C_d(p,q)$ points.} Note that the celebrated Helly Theorem yields a transversal by a {\it single} point in the case $p=q=1$ whenever $|\K|\geq d+1$.

Showing good quantitative estimates for the Hadwiger-Debrunner numbers $C_d(p,q)$ is a formidable open problem which requires tight asymptotic bounds for weak $\eps$-nets; see the latest study by Keller, Smorodinsky and Tardos \cite{Shakhar}, and the concluding discussion in Section \ref{Sec:Final}.
Very recently, lower bounds for several of the above questions -- including strong and weak $\eps$-nets with respect to line ranges \cite{BS18}, and the 2-dimensional Hadwiger-Debrunner numbers $C_2(p,q)$ \cite{KS18} -- were improved using the novel combinatorial machinery of hypergraph containers \cite{BMS,SaxTh}.

\medskip
\noindent{\bf Weak $\eps$-nets, Radon numbers, and the fractional Helly theorem.} Alon, Kalai, Matou\v{s}ek, and Meshulam \cite{AKMM} studied the above problems in a more general setting of abstract hypergraphs that are closed under intersections. They showed that  the existence of weak $\eps$-net (whose size is bounded in $1/\eps$) can be combinatorially deduced from the so called fractional Helly property. More recently,  Holmsen and Lee \cite{Holmsen, HolmsenLee} showed that the latter property is a purely combinatorial consequence of the bounded Radon number, which is equal to $d+2$ for convex sets in $\reals^d$; also see a related study \cite{MoYe}.  

\paragraph{Point-selection theorems.} The aforementioned first general construction of weak $\eps$-nets in the Euclidean spaces $\reals^d$ of dimension $d\geq 2$, due to Alon {\it et al.} \cite{AlonSelections}, used the following property of simplicial hypergraphs: for any $n$ point set $P$ in general position in $\reals^d$ (with $n\geq d+1$), and any dense collection $E$ of $d$-dimensional simplices over $P$, which are identified with a subset of ${P\choose d+1}$, there is a point $x\in \reals^d$ piercing a significant fraction of the simplices of $E$. (This fraction is polynomial in the edge density $|E|/{n\choose d+1}$. The asymptotic guarantee was dramatically improved by the author \cite{SelectionsSODA}, partly using the methods that are developed in this paper.)
This deep result, which has since become known as {\it The Second Selection Theorem}, underlies a number of other fundamental results in discrete and computational geometry, such as the best-known estimates for $k$-sets and $k$-levels in dimension $d\geq 5$ \cite{BFL}. Its special case dealing with $E={P\choose d+1}$, is known as {\it The First Selection Theorem}.

\paragraph{Bounds on weak $\eps$-nets.} For any $\eps>0$ and $d\geq 0$, let $f_d(\eps)$ be the smallest number $f>0$ so that, for any underlying finite point set $P$, one can pierce all the $\eps$-heavy convex sets
using only $f$ points in $\reals^d$. 
It is
an outstanding open problem in Discrete and Computational geometry to determine the true asymptotic behaviour of $f_d(\eps)$ in dimensions $d\geq 2$. As Alon, Kalai, Matou\v{s}ek, and Meshulam noted in 2001: ``{\it Finding the correct estimates for weak $\eps$-nets is, in our opinion, one of the truly important open problems in combinatorial geometry"} \cite{AKMM}.

\medskip
Alon, B\'ar\'any, F\"{u}redi, and Kleitman \cite{AlonSelections} (see also \cite{AlonKleitman}) showed that $f_d(\eps)=O\left(1/\eps^{(d+1)(1-1/\beta_d)}\right)$, where $0<\beta_d<1$ denotes the so called {\it point selection exponent}, to be defined in the sequel. Using a different argument, they showed that $f_2(\eps)=O\left(1/\eps^2\right)$.\footnote{An outline of the planar $f_2(\eps)=O\left(1/\eps^2\right)$ bound can be found in a popular textbook by Chazelle \cite{ChazelleBook}.}
The bound in higher dimensions $d\geq 3$ has been subsequently improved in 1993 by Chazelle {\it et al.}  \cite{Chazelle} to roughly $\displaystyle \tilde{O}\left(1/\eps^d\right)$. Though the latter construction was somewhat simplified in 2004 by Matou\v{s}ek and Wagner \cite{MatWag04} using simplicial partitions with low hyperplane-crossing number \cite{PartitionTrees}, no study in the subsequent 25 years came close to tackling the notorious ``$\displaystyle \frac{1}{\eps^d}$-barrier" for general families of convex sets and arbitrary finite point sets in the Euclidean spaces $\reals^d$ of (fixed) dimension $d\geq 2$. 

In view of the best known lower bound of $\Omega\left(\frac{1}{\eps} \log^{d-1}\left(\frac{1}{\eps}\right)\right)$ for $f_d(\eps)$ due to Bukh, Matou\v{s}ek and Nivasch \cite{Staircase},  it still remains to settle whether the asymptotic behaviour of this quantity substantially deviates from the long-known ``almost-$(1/\eps)$" bounds on strong $\eps$-nets (e.g., for lines and triangles in $\reals^2$ or simplices in $\reals^d$)? 
The only interesting instances in which the gap has been nearly closed, involve essentially lower-dimensional distributions of points \cite{Chazelle,Sphere,AlonChains}.
For example, Alon, Kaplan, Nivasch, Sharir, and Smorodinsky \cite{AlonChains} showed in 2008 that any finite point set in {\it a convex position in $\reals^2$} allows for a weak $\eps$-net of cardinality $\displaystyle O\left(g(\eps)/\eps\right)$ with respect to convex sets, where $g(\cdot)$ denotes the very slowly inverse Ackerman function.

A previous study by the author \cite{FOCS18} made the first step to breaching the infamous $\displaystyle \frac{1}{\eps^d}$-barrier by showing that $\displaystyle f_2(\eps)=O\left(\frac{1}{\eps^{3/2+\gamma}}\right)$, for any constant $\gamma>0$. Unfortunately, its (partly) ad-hoc machinery did not directly apply to dimensions $d\geq 3$. 

\subsection{Our result} 

The present study offers the first $o\left(1/\eps^d\right)$-size construction in the Euclidean spaces $\reals^d$ of dimension $d\geq 3$.\footnote{The preliminary version of this paper has claimed a slightly weaker bound in dimension $d\geq 4$, and misstated the bound for $f_3(\eps)$.}

\begin{theorem}\label{Thm:Main} Let $d\geq 3$, and
\begin{equation}\label{Eq:Main}
    \alpha_d:=
    \begin{cases}
      $2.558$ & \text{if} \ d=3\\
      $3.48$ & \text{if} \ d=4\\
      \left(d+\sqrt{d^2-2d}\right)/2 & \text{if} \ d\geq 5.
    \end{cases}
\end{equation}

\noindent Then we have that
$\displaystyle f_d(\eps)=O\left(\displaystyle\frac{1}{\eps^{\alpha_d+\gamma}}\right)$ for any $\gamma>0$. That is, for any underlying set $P$ of $n$ points in $\reals^d$, and any $\eps>0$, one can construct a weak $\eps$-net with respect to convex sets, whose cardinality is $O\left(\displaystyle\frac{1}{\eps^{\alpha_d+\gamma}}\right)
$, for any $\gamma>0$.\footnote{As a rule, the constants proportionality in our $O(\cdot)$ and $\Omega(\cdot)$ notation heavily depend on the dimension $d$, and the choice of $\gamma>0$.} 

In particular, we have that $f_3(\eps)=O^*\left(1/\eps^{2.558}\right)$, and $f_d(\eps)=o\left(1/\eps^{d-1/2}\right)$ for any $d\geq 4$.%Here $\gamma>0$ is am arbitrary small constant, and the $O(\cdot)$-notation may hide constant factors that depend on $\gamma$.
\end{theorem}

\noindent{\bf Proof overview.} Our proof of Theorem \ref{Thm:Main} is fully constructive. 
Similar to a plethora of previous works \cite{Chazelle, MatWag04,FOCS18}, it uses standard divide-and-conquer methods (e.g., Matou\v{s}ek's simplicial partitions \cite{PartitionTrees}) to derive an efficient recurrence in $\eps$, and the ground point set $P$.   
%However, as is easily demonstrated by the particular instance of very flat, ``hyperplane-like" convex sets, the standard subdivision strategy alone (as presented, e.g., by Matou\v{s}ek and Wagner \cite{MatWag04}) is not sufficient for showing that $f_d(\eps)=o\left(1/\eps^d\right)$. 

However, at the heart of our improved estimate lies a rather basic reduction of the weak $\eps$-net problem in $\reals^d$ to a sequence of non-recursive {\it $1$-dimensional} strong $\nu$-nets, with $\eps^d\ll  \nu\leq 1$, which are restricted to a handful of vertical lines within $\reals^d$. 
To this end, we establish a novel {\it line-selection} result, which yields a small set of vertical lines that would pierce many $(d-1)$-dimensional simplices induced by an {\it unknown} set of $\lceil \eps n\rceil$ points.

As is shown in the sequel, any 1-dimensional reduction of this sort must leave out some sub-family of ``narrow" sets $K$ whose simplices are hard to intercept by few vertical lines. Fortunately, such sets $K$ can be passed on to subsidiary recursive instances, which involve smaller ground sets in $\reals^d$, and larger fractions $\eps$. 

More importantly, though, a naive implementation of this scheme would yield $\nu\approx \eps^d$, thus leading to 1-dimensional $\nu$-nets of overall cardinality at least $\Omega(1/\nu)=\Omega\left(1/\eps^d\right)$.
To attain a better ratio $\nu$, we first construct a small auxiliary net $\tilde{N}\subset \reals^d$, of roughly $O\left(1/\eps^{\alpha_d}\right)$ points, with the following property: every $\eps$-heavy convex set $K$ that is missed by $\tilde{N}$, must be {\it flat} in a certain combinatorial sense. %: its respective subset $P_K$ of $\lceil \eps n\rceil$ points is largely determined by ``almost any" $d$-tuple of its points in ${P_K\choose d}$. 
(The measure of this relative flatness of $K$ will be determined using such spatial decomposition tools as simplicial partitions, and cuttings in hyperplane arrangements, which were previously used in the study of point-hyperplane incidences; see, e.g., \cite{ManyCells} and \cite[Section 4]{JirkaBook}.) Specializing to this category of convex sets $K$, our $1$-dimensional reduction will be achieved with $\nu\approx \eps^{\alpha_d}$.

Our weak $\eps$-net construction combines such powerful notions as simplicial partitions \cite{PartitionTrees}, cuttings in hyperplane arrangements \cite{Cuttings,Cuttings1}, upper
bounds on the complexity of many cells in hyperplane arrangements \cite{ManyCellsAMS,ManyCellsAS}, strong $\eps$-nets \cite{HW87}, and point-selection theorems \cite{AlonSelections,SelectionsSODA}. Nevertheless, the eventual net boils down to the following two basic ingredients:  (1) $1$-dimensional $\Omega(\eps^{\alpha_d+\gamma})$-nets, which are constructed within few vertical lines and with respect to carefully chosen point sets, and (2) strong $\Omega\left(\eps^{\alpha_d+\gamma}\right)$-nets with respect to convex $(2d)$-hedra in $\reals^2$.

\subsection{Paper organization} 

%The rest of the paper is organized as follows:

\smallskip
In Section \ref{Sec:Prelim} we introduce the essential geometric machinery, %: (i) the implicit structure of point sets, (ii) selection theorems, (iii) simplicial partitions, (iv) random sampling and strong $\eps$-nets, and (v) hyperplane arrangements. In addition, we 
lay down the recursive framework for bounding the quantities $f_d(\eps)$, and, lastly, provide a more comprehensive roadmap to the entire proof of Theorem \ref{Thm:Main}. 

In Section \ref{Sec:MultipleSelection} we obtain the cornerstone reduction of the weak $\eps$-net problem to a sequence of $1$-dimensional $\nu$-nets. %(albeit, initially with $\nu\approx \eps^d$). %Specifically, our line-selection result is derived via a non-trivial combination of Matou\v{s}ek's simplicial partition theorem and the point-selection theorem of Alon {\it et al.}.

In Section \ref{Sec:MainRecurrence} we establish Theorem \ref{Thm:Main} by fixing the underlying set $P$ of $n$ points, and then describing a small-size net that pierces all the $\eps$-heavy convex sets $K$. 
To this end, we analyze the implicit structure of the subsets of (at least) $\lceil\eps n\rceil$ points within $P$, that are cut out by the $\eps$-heavy convex sets $K$, with particular emphasis on certain families of simplices that are determined by such subsets of $\lceil \eps n\rceil$ points. As a result,
 the convex sets under our consideration are subdivided into finer sub-categories, to be dispatched by separate nets whose details are relegated to Sections \ref{Sec:Surrounded} and \ref{Sec:VerticallyConvex}. 
Namely, in Section \ref{Sec:Surrounded}, we show how to pierce the convex sets $K$ that determine (so called) {\it $\delta$-punctured} subsets of $\lceil\eps n\rceil$ points within $P$, whereas
 in Section \ref{Sec:VerticallyConvex} we pierce the remaining $\eps$-heavy convex sets, which yield
{\it $\delta$-hollow} subsets within $P$.

In Section \ref{Sec:Final}, we sum-up the properties of our improved weak $\eps$-net construction, and outline a few promising directions for further investigation.

%, so that vertical projections of their points lie a near-convex position in $\reals^{d-1}$.

\section{Geometric preliminaries}\label{Sec:Prelim}

\subsection{Basic notation}\label{Subsec:Notation}

\noindent{\bf Asymptotic estimates.} For any $x,y\in \reals$, we denote $x\ll y$ whenever $x=O(y)$.

\medskip
\noindent{\bf Convex hulls and affine closures.} In the sequel, we use $\conv(A)$ to denote the convex hull of any point set $A\subseteq \reals^d$, and we use $\aff(A)$ to denote the {\it affine closure} of $A$ -- the smallest affine space (i.e.,  a translate of a linear subspace of $\reals^d$) that contains $A$.
We say that a finite point set $A\subset \reals^d$ is in {\it a convex position} if every point of $A$ lies on the boundary of $\conv\left(A\right)$.

\medskip
\noindent{\bf Projections.} 
For any point $x=(x_1,\ldots,x_d)\in \reals^d$, and any set $A\subseteq \reals^d$, we denote their vertical projections by $x^\perp:=(x_1,\ldots,x_{d-1})$ and $A^\perp:=\{x^\perp\mid x\in A\}$.

Conversely, for any point $y\in \reals^{d-1}$, and any set $B\subseteq \reals^{d-1}$, we use $y^*$ to denote the vertical line $\{x\mid x^\perp=x\}$ {\it in $\reals^d$} over $x$, and we
use $B^*$ to denote vertical prism $\bigcup\{y^*\mid y\in B\}$ over $B$. 
(With some abuse of notation, these definitions will also extend to points and sets in $\reals^{d}$. For example, for any point $x\in \reals^d$, we will use $x^*$ to denote the vertical line through $x$ in $\reals^d$.)

We say that a set $A\subseteq \reals^d$ lies {\it above} another set $B\subseteq \reals^d$ (whereas $B$ lies {\it below} $A$) if (i) we have that either $A^\perp\subseteq B^\perp$ or $A^\perp\supseteq B^\perp$, and (ii) for any vertical line $\ell$, that crosses both sets, the interval $\ell\cap A$ lies entirely above $\ell\cap B$. (Notice that the described partial relation between sets in $\reals^d$ is not transitive.)

\medskip
\noindent {\bf General position.} To simplify the exposition, we assume in what follows, that the underlying point set $P$ in Theorem \ref{Thm:Main} is in {\it a general position} in $\reals^d$ (as described, e.g.,  \cite[Section 5]{JirkaBook} and \cite{GeneralPosition}).  In particular, no $d+1$ of them lie on the same hyperplane, and no $d$ of them determine a hyperplane that is parallel to any of the coordinate axes; hence, no $d$ of the vertically projected points lie in the same $(d-2)$-plane within $\reals^{d-1}$. 
Furthermore, the affine hull $\aff(A)$ of any subset $A\subseteq P$ of at most $d+1$ points must be an affine space of dimension $|A|-1$. 
More generally, for any $k$ pairwise disjoint subsets $A_1,\ldots,A_k\subseteq P$, cardinality $|A_i|\leq d+1$ each,  the intersection $\bigcap_{i=1}^d \aff(A_i)$ is an affine space of dimension $\max\{-1,d-\sum_{i=1}^l (d-|A_i|+1)\}$ (where $-1$ is the dimension of the empty set), and the analogous property should hold, in $\reals^{d-1}$, for the vertically projected set $P^\perp$.

A standard perturbation argument \cite{GeneralPosition} shows that these assumptions incur no loss of generality: a weak $\eps$-net with respect to an infinetisimately perturbed point set, now in a general position, would immediately yield such a net with respect to the original set.

 %We now classify the elements of $K\in \K(P,\eps)$ with respect to the order types of the vertical projections $P^\perp_K$.
%We say that an  finite point set $A\subset \reals^d$ is {\it vertically convex} if every point of $A^\perp$ lies on the boundary of $\conv\left(A^\perp\right)$. %Note that if $A$ is not convex, there must be $i\in \{1,\ldots,d+1\}$ so that $p^\perp_i\in \conv(A^\perp\setminus \{p^\perp_i\})$.
%We say that $A\subset \reals^d$ is {\it vertically $\delta$-convex} if at least $(1-\delta){|A|\choose d+1}$ of the $(d+1)$-subsets $B\in {A\choose d+1}$ are vertically convex; see Figure \ref{Fig:VerticallyConvex}. %, if more than $\delta{m \choose d+1}$ of the $(d+1)$-subsets $A\subset B$ are not convex, we say that $K$ is of {\it type B}.

\medskip
\noindent{\bf Simplices.} Let $A$ be a finite set, and $k\geq 0$. In the sequel, we use ${A\choose k}$ to denote the collection of all the $k$-size subsets of $A$. (This collection is empty if $|A|<k$.)  We then use ${A\choose \leq k}$ to denote the collection of all such subsets of $A$ whose cardinality is at most $k$.

If $A$ is a finite {\it point set} in general position in $\reals^d$, then the elements of ${A\choose \leq d+1}$ are called {\it simplices}.
Specifically, for $0\leq k\leq d$, every $(k+1)$-size set $S\in {A\choose k+1}$ determines a {\it $k$-dimensional simplex $\conv(S)$} (or, shortly, a {\it $k$-simplex}) which supports a $k$-dimensional affine space $\aff(S)$  \cite[Section 5]{JirkaBook}.
For the sake of brevity, we identify the induced set of ${|A|\choose k+1}$ of $k$-dimensional simplices with ${A\choose k+1}$. The $1$-dimensional simplices of ${A\choose 2}$ will be called {\it $2$-edges}.

\medskip
\noindent {\bf Supporting hyperplanes.} 
In what follows, we refer to any $k$-dimensional affine space as a {\it $k$-dimensional plane}, or shortly a {\it $k$-plane}; the $(d-1)$-dimensional planes in $\reals^d$ are called {\it hyperplanes}. 

Any $d$-point set (i.e., a $(d-1)$-dimensional simplex) $S$ in a general position in $\reals^d$ determines the unique hyperplane $H(S)=\aff(S)$. We then say that $H(S)$ is {\it spanned} by $S$, while the point set (or the simplex) $S$ is {\it supported} by $H(S)$.

Given a subset $\Pi\subset {A\choose d}$ of $(d-1)$-dimensional simplices that are determined by a point set $A$ in general position, we use
$\HH(\Pi)$ to denote the set $\{H(S)\mid S\in \Pi\}$
of all the hyperplanes that support the elements of $\Pi$.

\medskip
\subsection{Random sampling and strong $\eps$-nets}\label{Subsec:PointConfig}

Let $X$ be a (finite) set of elements and $\F\subset 2^X$ be a set of hyperedges spanned by $X$. 
A {\it strong $\eps$-net} for the hypergraph $(X,\F)$
is a subset $Y\subseteq X$ so that $F\cap Y\neq \emptyset$ holds for all the hyperedges $F\in \F$ that satisfy $|F|\geq \eps n$.

\medskip
\noindent{\bf Definition.} Let $X$ be a finite set of $n$ elements, and $r$ be an integer that satisifes $1\leq r\leq n$. An {\it $r$-sample} of $X$ is a subset $Y\subseteq X$ of $r$ elements chosen at random from $X$, so that each such subset $Y\in {X\choose r}$ is selected with uniform probability $1/{n\choose r}$.

\medskip
The Epsilon-Net Theorem of Haussler and Welzl \cite{HW87} states that any hypergraph $(X,\F)$,  as above, that is drawn from a so called range space of a bounded VC-dimension $D>0$, admits a strong $\eps$-net $Y$ of cardinality $r=O\left(\frac{D}{\eps}\log \frac{D}{\eps}\right)$. 
Moreover, such a net $Y$ can obtained, with probability at least $1/2$, by choosing an $r$-sample of $X$. In particular, this implies the following property which will be used in the proof of Theorem \ref{Thm:Main}.

\begin{lemma} \label{Thm:StrongNet}
Let $\eps>0$. Then any finite point set $P$ in the Euclidean space $\reals^d$ of fixed dimension $d\geq 1$, contains a subset $Q$ of $O\left(\frac{1}{\eps}\log \frac{1}{\eps}\right)$ points piercing all the $\eps$-heavy polytopes in $\reals^d$ whose boundaries contain at most $2d$ boundary vertices; the implicit constant of proportionality depends on the dimension $d$.
\end{lemma}

The following more elementary property of $r$-samples will prove useful in Section \ref{Sec:Surrounded}.

 \begin{lemma} \label{Lemma:Sample}
Let $A$ be a set of $m$ elements, and let $A=A_1\uplus A_2\uplus \ldots \uplus A_r$ be a partition of $A$ into $r\geq 1$ parts. Let $S$ be a random $r'$-sample of $A$, so that $r'=\min\{10^3 r,m\}$. Then, with probability at least $1/2$, the parts $A_i$ that intersect $S$ encompass at least $m/3$ elements of $A$.
\end{lemma}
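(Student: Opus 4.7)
The plan is to reduce the lemma to an expectation bound via Markov's inequality. Let $U := \bigcup_{A_i \cap R \ne \emptyset} A_i$ denote the union of the parts that are hit by $R$, and $V := A \setminus U$ the union of those missed. Since $|U| + |V| = m$, it suffices to show $E[|V|] \leq m/3$: Markov then gives $\Pr[|V| > 2m/3] \leq 1/2$, i.e., $\Pr[|U| \geq m/3] \geq 1/2$.

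To estimate $E[|V|]$, split the parts of the partition into two classes. Call $A_i$ \emph{tiny} if $|A_i| \leq m/(4r)$, and \emph{big} otherwise. Since there are only $r$ parts, the tiny parts together contain at most $r \cdot m/(4r) = m/4$ elements, which bounds their contribution to $|V|$ trivially (regardless of whether they are hit).

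For the big parts, the standard estimate for sampling without replacement gives
\[
\Pr[A_i \cap R = \emptyset] \;=\; \prod_{j=0}^{r'-1}\Bigl(1-\tfrac{|A_i|}{m-j}\Bigr) \;\leq\; \exp\!\Bigl(-|A_i|\sum_{j=0}^{r'-1}\tfrac{1}{m-j}\Bigr) \;\leq\; \exp(-|A_i|\,r'/m).
\]
Since $r' \geq Cr$ and $|A_i| > m/(4r)$ for a big part, each such part is missed with probability at most $e^{-C/4}$. By linearity of expectation, the expected mass contributed by unhit big parts is at most $\sum_{i \text{ big}} |A_i| \cdot e^{-C/4} \leq m\,e^{-C/4}$.

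Combining the two contributions, $E[|V|] \leq m/4 + m\,e^{-C/4}$. Picking $C$ large enough (e.g., $C \geq 4\ln 12$) makes $m\,e^{-C/4} \leq m/12$, so $E[|V|] \leq m/4 + m/12 = m/3$, as required. The upper bound $r' \leq m - m/(100r)$ in the hypothesis is not actually needed for this argument; it merely ensures $r' < m$, which is natural for sampling without replacement. The only subtlety is calibrating the threshold $m/(4r)$ separating tiny from big parts against the absolute constant $C$, so that the tiny-mass budget $m/4$ plus the exponentially small big-miss budget stays below $m/3$; this is purely a matter of choosing constants and presents no real obstacle.
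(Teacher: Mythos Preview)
Your proof is correct and follows essentially the same approach as the paper: split the parts by a size threshold $\Theta(m/r)$, bound the total mass of the small parts trivially, bound the miss probability of each large part exponentially in $C$, and finish with linearity of expectation plus Markov. The only cosmetic differences are the specific threshold constant ($m/(4r)$ versus the paper's $m/(100r)$) and your use of the product formula rather than the binomial-coefficient ratio for the miss probability.
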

\begin{proof}
If $m\leq 10^3r$, then we have that $S=A$, and there is nothing to prove.
Otherwise, note that at most $m/10$ elements of $A$ can belong to the partition sets $A_i$ with $|A_i|\leq m/(10r)$. Let $A':=\bigcup\left\{A_i\mid 1\leq i\leq r, |A_i|>m/(10r)\right\}$. Then we have $|A'|> m-r\cdot m/(10r)=9r/10$. 
Assume with no loss generality that $r'<m(1-\frac{1}{10r})$ (or, else, the claim again holds with probability $1$).
Fix any $x\in A'$ with the ambient set $A_{i(x)}$, so that $1\leq i(x)\leq r$. Since $|A_{i(x)}|> m/(10r)$, the probability $\p_x$ that $A_{i(x)}$ is missed by the sample $S$ satisfies
$$
\p_x\leq {\lfloor m-m/(10r)\rfloor \choose r'}/{m\choose r'}\leq {\left(1-\frac{1}{100r}\right)}^{10^3r}\leq 1/100.
$$

Hence, the expected number of elements in the union $\bigcup_{x\in R}A_{i(x)}$ of the parts that intersect $S$, is  
$$
\sum_{x\in A'}\p_x\geq 0.99\cdot 99m/100.
$$ 

\noindent Now the lemma follows by Markov's inequality.
\end{proof}

\subsection{Configurations of point sets and hyperplanes in $\reals^d$}\label{Subsec:PrelimConfiguration}

In preparation for the proof of Theorem \ref{Thm:Main}, let us establish several elementary properties of point sets, and their transversals by hyperplanes.

\medskip
\noindent {\bf $\delta$-punctured and $\delta$-hollow point sets.} We say that a set $A$ of $d+1$ points in general position is {\it punctured} if a vertical line $a^*$ through some point $a\in A$ crosses the  interior of $\conv(A\setminus \{a\})$; see Figure \ref{Fig:HollowPunctured} (left).  If such a set $A$ is not punctured, we say it is {\it hollow}, in which case the vertical projection $A^\perp$ of such a set $A$ must be in convex position within $\reals^{d-1}$; see Figure \ref{Fig:HollowPunctured} (right).

More generally, for $\delta\geq 0$, we say that a finite set $A$ in general position in $\reals^d$ is {\it $\delta$-punctured} if it encompasses more than $\delta{|A|\choose d+1}$ punctured  subsets $B\in {A\choose d+1}$, and that it is {\it $\delta$-hollow} otherwise (in which case $A$ encompasses at least $(1-\delta){|A|\choose d+1}$ hollow subsets $B\in {A\choose d+1}$). 
In particular, this point set $A$ is $0$-hollow if and only if its projection $A^\perp$ is in convex position in $\reals^d$.

\begin{figure}
    \begin{center}
       
       \input{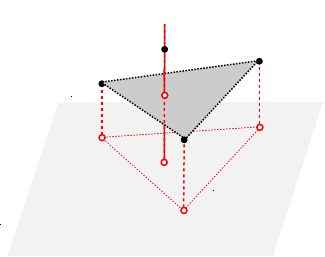_t}\hspace{2cm}\input{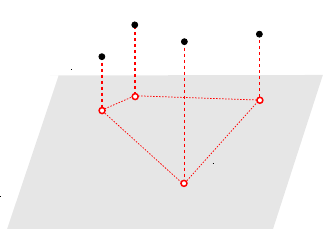_t}
        \caption{\small Left: The set $A=\{p_0,p_1,p_2,p_3\}$ in $\reals^3$ is punctured, as the vertical line $p_0^*$ through $p_0$ crosses $\conv(p_1,p_2,p_3)$. Right: The set $A=\{p_0,p_1,p_2,p_3\}$ in $\reals^3$ is hollow.}
        \label{Fig:HollowPunctured}
    \end{center}
\end{figure}

\begin{lemma}\label{Lemma:ConvexProjSample}
For any dimension $d\geq 2$ there is a constant $c_0>0$ with the following property: 

Let $A$ be a $\delta$-hollow set of $m$ points in $\reals^d$, and $S\subseteq A$ be an $r$-sample of $A$, so that $1\leq r\leq m$ and $\delta\leq c_0/r^{d+1}$.  Then $S$ is $0$-hollow with probability at least $99/100$.
\end{lemma}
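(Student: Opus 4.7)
The plan is to reduce vertical convexity of $R$ to a statement about its $(d+1)$-subsets, and then apply a first-moment argument.

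First, I would establish the following combinatorial equivalence: a finite point set $S\subset \reals^{d-1}$ in general position is in convex position if and only if every $(d+1)$-subset of $S$ is in convex position. One direction is immediate. For the other, suppose some $p\in S$ lies in $\conv(S\setminus\{p\})$. By Carath\'eodory's theorem in $\reals^{d-1}$, the point $p$ already lies in the convex hull of some subset $T\subseteq S\setminus\{p\}$ with $|T|\leq d$, and then $T\cup\{p\}$ is a $(d+1)$-subset of $S$ that is not in convex position. Applied to $S=R^\perp$, this shows that $R$ is vertically convex if and only if every $(d+1)$-subset $B\in \binom{R}{d+1}$ is vertically convex.

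Next, I would bound the expected number of ``bad'' $(d+1)$-subsets contained in $R$ (i.e., those that are not vertically convex) by linearity of expectation. For a fixed $(d+1)$-subset $B\in \binom{A}{d+1}$,
\[
\Pr[B\subseteq R] \;=\; \binom{m-d-1}{r-d-1}\bigg/\binom{m}{r} \;=\; \binom{r}{d+1}\bigg/\binom{m}{d+1}.
\]
Since $A$ is vertically $\delta$-convex, there are at most $\delta\binom{m}{d+1}$ bad $(d+1)$-subsets in $A$. Summing,
\[
\mathbb{E}\Bigl[\#\{B\in\tbinom{R}{d+1}: B \text{ is not vertically convex}\}\Bigr] \;\leq\; \delta\binom{m}{d+1}\cdot\frac{\binom{r}{d+1}}{\binom{m}{d+1}} \;\leq\; \frac{\delta\, r^{d+1}}{(d+1)!}.
\]

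Finally, by Markov's inequality, the probability that $R$ contains at least one bad $(d+1)$-subset is at most $\delta r^{d+1}/(d+1)!$. Choosing the constant $C_{samp}:=(d+1)!/100$, the hypothesis $\delta\leq C_{samp}/r^{d+1}$ forces this probability to be at most $1/100$, so by the combinatorial equivalence of the first step, $R$ is vertically convex with probability at least $99/100$. The only slightly non-routine ingredient is the Carath\'eodory-based reduction to $(d+1)$-subsets; the remaining sampling argument is a direct first-moment computation, so I do not anticipate any real obstacle.
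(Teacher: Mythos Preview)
Your proposal is correct and follows essentially the same first-moment argument as the paper: compute the probability that a fixed bad $(d+1)$-subset survives into the sample, sum over the at most $\delta\binom{m}{d+1}$ bad subsets, and apply Markov's inequality. Your version is in fact slightly more explicit than the paper's: you spell out the Carath\'eodory reduction (that $R^\perp$ is in convex position iff all its $(d+1)$-subsets are), which the paper leaves implicit in its final sentence, and you use the exact identity $\binom{m-d-1}{r-d-1}/\binom{m}{r}=\binom{r}{d+1}/\binom{m}{d+1}$ to obtain the clean bound $\delta\binom{r}{d+1}$ and an explicit constant $C_{\mathrm{samp}}=(d+1)!/100$, whereas the paper works with $\Theta(r^{d+1}/m^{d+1})$ and an unspecified constant.
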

\begin{proof}
We can assume that $r\geq d+1$, or else the $0$-hollowness holds trivially, with ${S\choose d+1}=\emptyset$.
Let $B\in {A \choose d+1}$ be a punctured $(d+1)$-subset. Notice that $B$ arises in our $r$-sample $S\subseteq A$ with probability $\displaystyle \p:=\frac{{m-d-1\choose r-d-1}}{{m\choose r}}=\Theta\left(\frac{r^{d+1}}{m^{d+1}}\right)$. Hence, the expected number of such $(d+1)$-subsets $B\in {A\choose d+1}$ is at most

\begin{equation}\label{Eq:ConvProj}
\delta {m\choose d+1} \p=O\left(\delta {m\choose d+1}\left(\frac{r}{m}\right)^{d+1}\right).
\end{equation}

\noindent A suitably small choice of the constant $c_0>0$ guarantees that the left-hand side of (\ref{Eq:ConvProj}) is always smaller than $1/100$. In particular, this implies that, with probability at least $99/100$, {\it all} the $(d+1)$-size subsets in ${S\choose d+1}$ are hollow.
\end{proof}

\noindent {\bf Surrounded subsets of points.} We say that a compact (yet not necessarily finite, or convex) set $\Delta_0$ in $\reals^d$ is {\it surrounded} by $d+1$ compact sets $\Delta_1,\ldots,\Delta_{d+1}\subseteq \reals^{d}$ if for any choice $p_0\in \Delta_0,p_1\in \Delta_1,p_2\in \Delta_2,\ldots,p_{d+1}\in \Delta_{d+1}$,  the point $p_0$ lies in the interior of $\conv(p_1,\ldots,p_{d+1})$. (In particular, no $d+1$ points in such a $(d+2)$-size sequence $p_0,\ldots,p_{d+1}$ can lie in the same hyperplane.) See Figure \ref{Fig:SurroundedSet} (left).

\medskip
For the sake of brevity, we say that a {\it point} $q\in \reals^d$  is {\it surrounded} by $d+1$ sets $\Delta_1,\ldots,\Delta_{d+1}\subseteq \reals^{d}$ if the singleton set $\Delta_0:=\{q\}$ is surrounded by $\Delta_1,\ldots,\Delta_{d+1}$.
%\footnote{In particular, such a $d$-tuple $(\Delta_1,\ldots,\Delta_{d+1})$ must be homogeneous with respect to $q$ in the notation of Fox, Gromov, Lafforgue, Naor, and Pach \cite{Overlap}.}

\begin{lemma}\label{Prop:NotSurrounded} Let $\Delta_1,\ldots,\Delta_{d+1}$ be $d+1$ distinct compact convex sets in $\reals^{d}$, and $\Delta_0$ be a compact convex set in $\reals^d$ that intersects $\conv\left(\bigcup_{i=1}^{d+1} \Delta_i\right)$, and yet is {\it not} surrounded by $\Delta_1,\ldots,\Delta_{d+1}$. Then there exists a hyperplane that crosses $\Delta_0$ together with at least $d$ among the remaining sets $\Delta_i$, with $1\leq i\leq d+1$. Refer to Figure \ref{Fig:SurroundedSet} (right).
\end{lemma}

\begin{proof} 
Notice that, since $\Delta_0$ intersects $\conv\left(\bigcup_{i=1}^{d+1} \Delta_i\right)$, there must exist points $p_0\in \Delta_0,p_1\in \Delta_1,\ldots,p_{d+1}\in \Delta_{d+1}$ so that $p_0\in \conv(p_1,\ldots,p_{d+1})$.\footnote{This statement does not necessarily hold if the sets $\Delta_i$ are not convex.}
The claim of the lemma follows immediately if there exists a hyperplane $H$ that contains $p_0$ together with some $d$ points among $p_1,\ldots,p_{d+1}$. Assume, then, that there is no such hyperplane $H$; hence, $p_0$ lies in the interior of $\conv(p_1,\ldots,p_{d+1})$. 
However, since $\Delta_0$ is {\it not} surrounded by $\Delta_1,\ldots,\Delta_{d+1}$, there must also exist some points $q_0\in \Delta_0,q_1\in \Delta_1,\ldots,q_{d+1}\in \Delta_{d+1}$ so that $q_0$ does not lie in the interior of $\conv(q_1,\ldots,q_{d+1})$.
For each $t\in [0,1]$, and each $0\leq i\leq d+1$, we define the point $q_i(t):=(1-t)p_i+tq_i$, which clearly lies within the convex set $\Delta_i$, and use $\tau(t)$ to denote the convex hull $\conv(q_1(t),\ldots,q_{d+1}(t))$. Notice that $\tau(0)=\conv\{p_i\mid 1\leq i\leq d+1\}$ contains the point $p_0=q_0(0)$ in its interior, whereas the interior of $\tau(1)=\conv\{q_i\mid 1\leq i\leq d+1\}$ no longer contains $q_0=q_0(1)$. 
Thus, there must be $t^*\in [0,1]$ so that $q_0(t^*)$ lies on the boundary of $\tau(t^*)$. 
It can, therefore, be assumed, with no loss of generality, that $q_0(t^*)\in \conv(q_1(t^*),\ldots,q_{d}(t^*))$, so that any hyperplane through $\conv(q_1(t^*),\ldots,q_{d}(t^*))$ contains $q_0(t^*)$ and intersects $\Delta_1,\ldots,\Delta_{d}$ at the respective points $q_1(t^*),\ldots,q_{d}(t^*)$.
\end{proof}

\noindent{\bf Separated families of convex sets.} We say that a finite family $\Sigma=\{\Delta_1,\ldots,\Delta_m\}$ of $m\geq d+1$ compact convex sets in $\reals^d$ is {\it separated} if no $d+1$ among its elements can be crossed by the same hyperplane. Equivalently, any choice $(p_1,\ldots,p_m)\in \Delta_1\times\ldots\times\Delta_m$ has to yield an $m$-point set $\{p_1,\ldots,p_m\}$ with the property that no $d+1$ among its elements lie in the same hyperplane. \cite{Cappell,PolWen,SelectionsSODA}.
Therefore, if a compact convex set $\Delta_0$ is surrounded by some $d+1$ compact convex sets $\Delta_1,\ldots,\Delta_{d+1}$, this necessarily yields a separated $(d+2)$-size family 
$\Sigma=\{\Delta_0,\ldots,\Delta_{d+1}\}$, which yields the following immediate corollary of Lemma \ref{Prop:NotSurrounded}.

\begin{corollary}\label{Corol:Surrounded}
	Let $\Delta_0,\Delta_1,\ldots,\Delta_{d+1}$ be $d+2$ distinct compact convex sets in $\reals^{d}$, so that $\Delta_0$ intersects $\conv\left(\bigcup_{i=1}^{d+1} \Delta_i\right)$. Then $\Delta_0$ is surrounded by $\Delta_1,\ldots,\Delta_{d+1}$ if and only if the family $\{\Delta_0,\Delta_1,\ldots,\Delta_{d+1}\}$ is separated.
\end{corollary}

%For $m\geq d+1$, the separation property is equivalent to the following one:  all the ordered $m$-tuples $(p_1,p_2,\ldots,p_m)\in \Delta_1\times \Delta_2\times \ldots\times \Delta_m$ in such a family $\F$, must have the same {\it order type} \cite{Wen-surv}.
%In other words, for any subsequence of $d+1$ sets $\Delta_{i_1},\ldots,\Delta_{i_{d+1}}$, with $1\leq i_1<i_2<\ldots<i_{d+1}\leq m$, all the ordered $(d+1)$-tuples $(p_1,\ldots,p_d)\in \Delta_{i_1}\times\ldots\times \Delta_{i_{d+1}}$ must have the same orientation; informally, the point $p_{d+1}$ has to lie to the ``given side" of the oriented hyperplane through the first $d$ points $p_1,\ldots,p_d$. 

\begin{figure}
    \begin{center}      
        \input{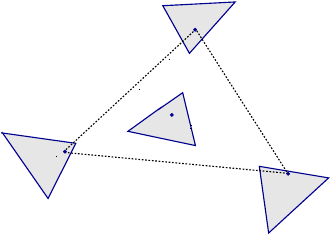_t}\hspace{2cm}\input{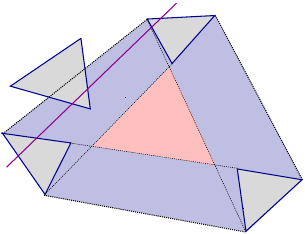_t}
        \caption{\small Left: The set $\Delta_0$ is surrounded by the three other sets $\Delta_1,\Delta_2$ and $\Delta_3$ in $\reals^2$. For any choice of $p_0\in P_0,p_1\in \Delta_1,p_2\in \Delta_2$, and $p_3\in \Delta_3$, the point $p_0$ lies in the interior of $\conv(p_1,p_2,p_3)$. Right: Lemma \ref{Prop:NotSurrounded} in $\reals^2$. The compact convex set $\Delta_0$ intersects the convex hull of $\bigcup_{i=1}^3\Delta_i$, yet it is not surrounded by the compact convex sets $\Delta_1,\Delta_2,\Delta_{d+1}$. Hence, there is a hyperplane (i.e., a line) that intersects $\Delta_0$ and at least 2 sets among $\Delta_1,\Delta_2$ and $\Delta_3$. In particular, the family $\{\Delta_0,\Delta_1,\Delta_2,\Delta_3\}$ is not separated.}
        \label{Fig:SurroundedSet}
    \end{center}
\end{figure}

\medskip
\noindent{\bf Crossing simplices with hyperplanes.} Our proof of Theorem \ref{Thm:Main} will use following elementary property of hyperplane transversals to a family of simplices. 

\begin{lemma}\label{Lemma:ExtremalHyperplane}
Let $\Sigma=\{\Delta_1,\ldots,\Delta_k\}$ be a family of $k\geq 1$ closed simplices $\Delta_i$ in $\reals^d$ with pairwise disjoint vertex sets $V(\Delta_i)$, and so that the point set $\biguplus_{i=1}^d V(\Delta_i)$ is in general position, and has cardinality at least $d$. 
Furthermore, suppose that $\Sigma$ admits a transversal by a hyperplane. Then there is such a hyperplane that passes through some $d$ distinct vertices of $\biguplus_{i=1}^d V(\Delta_i)$, and intersects every simplex in $\Sigma$. See Figure \ref{Fig:ExtremalHyperplane}.
\end{lemma}

\begin{figure}
    \begin{center}      
        \input{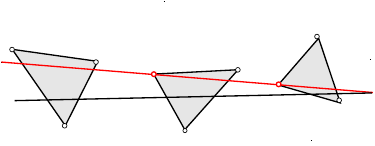_t}
        \caption{\small Lemma \ref{Lemma:ExtremalHyperplane} in dimension $d=2$: moving a transversal hyperplane (i.e., a line) $H$ to $\Sigma=\{\Delta_1,\Delta_2,\Delta_3\}$ to a position in which it contains a pair of vertices while still intersecting every closed simplex $\Delta_i\in \Sigma$.}
        \label{Fig:ExtremalHyperplane}
    \end{center}
\end{figure}

\begin{proof}
Let $H$ be any hyperplane transversal to $\Sigma$. We continuously translate $H$ in a fixed direction until the first time $H$ meets a vertex $v_1$ of some simplex $\Delta_{i_1}\in \Sigma$. We then continuously rotate $H$ in a fixed direction around $v_1$, so the normal to $H$ is moving along a fixed great circle of $\SS^{d-1}$, until the first time $H$ hits an additional vertex $v_2$ of a simplex $\Delta_{i_2}\in \Sigma$ (where $i_1$ is not necessarily different from $i_2$). This procedure is repeated $d-1$ times, each time adding a vertex $v_j$ of some simplex $\Delta_{i_j}\in \Sigma$, for $2\leq j\leq d$, to the hyperplane $H$ which we rotate around the affine hull of the previously added vertices $v_1,\ldots,v_{j-1}$. (This is possible because the vertex sets $V(\Delta_i)$ of the simplices $\Delta_i\in \Sigma$ encompass a total of at least $d$ vertices, whilst $H$ never contains more than $d$ of these vertices.) Once $H$ encounters a new vertex $v_{i_j}$, this vertex never leaves $H$, which reduces by $1$ the available degrees of freedom that are available to the following rotation step. The process terminates when $H$ contains $d$ vertices, so its rotation is no longer possible. 

Notice that, at every step, $H$ meets every closed simplex $\Delta_{j}\in \Sigma$, for the contact between $H$ and $\Delta_j$ cannot be lost before $H$ ``gains" at least one of the vertices of $\Delta_j$.
\end{proof}

\subsection{Simplicial hypergraphs} 
Our improved weak $\eps$-net construction relies on the intrinsic structure of certain simplicial hypergraphs that are induced by the ground set $P$ (and by certain $\lceil \eps n\rceil$-size subsets within $P$).

\medskip
\noindent{\bf Definition.} Let $1\leq k\leq d$. A {\it $k$-uniform hypergraph in $\reals^d$} is a pair $(P,E)$, where $P$ is a finite point set in a general position in $\reals^d$, and $E\subseteq {P\choose k}$ is a family of $k$-subsets which can be identified with $(k-1)$-dimensional simplices whenever $1\leq k\leq d+1$.
In the vast majority of the instances in the sequel, we have $k=d$, so $E$ consists of $(d-1)$-dimensional simplices. 

%For the sake of brevity, we identify our geometric hypergraph with its hyperedge set $E\subseteq {V\choose k}$ whenever the underlying vertex set $V$ is clear from the context (e.g., if $V$ is the underlying point set $P$, and $E={P\choose k}$ is the complete $k$-uniform hypergraph).

%Let $(V,E)$ be a $k$-uniform geometric hypergraph, and let $0<\sigma\leq 1$. We say that the subset $E'\subseteq E$ is {\it $\sigma$-dense} over a vertex subset $V'\subseteq V$ if 
%$|E'\cap {V\choose k}|\geq \sigma{|V|\choose k}$.

\medskip
\noindent{\bf Joins.} Given a collection $\Pi\subseteq {P\choose d-1}$ of $(d-2)$-dimensional simplices, and a finite point set $X\subseteq \reals^d\setminus P$ (in general position with respect to the underlying set $P$), we define their {\it join} as $\Pi\ast X:=\{\conv(\tau\cup \{p\})\mid \tau\in \Pi,p\in X\}$. That is, $\Pi\ast X$ consists of all the $(d-1)$-dimensional simplices so that $d-1$ of their vertices form a hyperedge of $\Pi$, and the remaining vertex belongs to $X$.

\medskip
\noindent{\bf Selection theorems.} At the heart of our improved construction lies the following crucial property of simplicial hypergraphs, due to Alon, B\'ar\'any, F\"uredi, and Kleitman \cite{AlonSelections}. See \cite[Section 9]{JirkaBook} and \cite{SelectionsSODA} for a comprehensive exposition, and Figure \ref{Fig:SelectionSimplicial} (left) for an illustration in dimension $2$.

\begin{theorem}[``The Second Selection Theorem"]\label{Theorem:SecondSelection}
 For any dimension $d\geq 1$ there exists a constant $\beta_d>0$ with the following property:

Let $(P,E)$ be a $(d+1)$-uniform simplicial hypergraph in $\reals^d$ with $|E|=h{n\choose d+1}$ edges, then there is a point $x\in \reals^d$ that pierces at least $\Omega\left(h^{\beta_d}{n\choose d+1}\right)$ of the $d$-simplices of $E$. 
\end{theorem}

While the original proof of Alon {\it et al.} yields $\beta_d\leq (4d+1)^{(d+1)}$, the best known estimate for the second selection theorem (due to the author \cite{SelectionsSODA}) yields the exponent $\beta_d\leq d^5+O\left(d^4\right)$, and even better ad-hoc bounds for $\beta_d$ are known in dimension $2$ (see, e.g., \cite{EppsteinSelection,GabrielSelection}). Note that the implicit constant of proportionality in the lower bound $\Omega\left(h^{\beta_d}{n\choose d+1}\right)$ cannot be improved beyond $(d+1)^{-(d+1)}$ \cite{BMN} even in the special case $h=1$ (which yields the so called First Selection Theorem).

%\natan{Check 3d.}
\begin{figure}
    \begin{center}
       
        \input{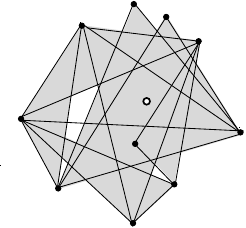_t}\hspace{2cm}\input{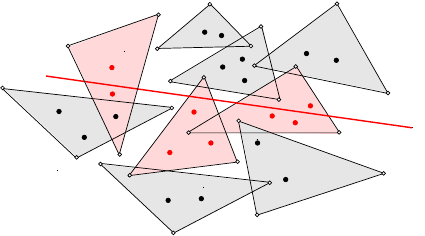_t}
        \caption{\small Left: The Second Selection Theorem. Depicted is a 3-uniform simplicial hypergraph $(P,E)$, in dimension $d=2$; the edge set $E$ consists of at least $h{n\choose 3}$ triangles, and the theorem yields a point $x\in \reals^2$ that pierces $\Omega\left(h^{\beta_2}{n\choose 3}\right)$ of these triangles. Right: Matou\v{s}ek's Simplicial Partition in dimension $d=2$ (the zone of the hyperplane $H$ is colored red).}
        \label{Fig:SelectionSimplicial}
    \end{center}
\end{figure}

%\noindent{\bf Remark.} The proof of our Multiple Selection Theorem \ref{Theorem:MultipleSelection} in Section \ref{Subsec:MultipleSelection} will require a slightly more general version of Theorem \ref{Theorem:SecondSelection}, in the form of Theorem \ref{Theorem:SecondSelection1}.

\subsection{Matou\v{s}ek's Simplicial Partition Theorem}
Let $P$ be a set of $n$ points in general position in $\reals^d$, and $r>0$ be an integer. A {\it simplicial $r$-partition $\P$} of $P$ is a collection $\{(P_i,\Delta_i)\mid 1\leq i\leq r\}$ of $r$ pairs, where for each $1\leq i\leq r$ we have that $P_i\subset P$ and $\Delta_i$ is a simplex of dimension at most $d$ in $\reals^d$, so that the following properties are satisfied:\footnote{For the sake of brevity, we require that each partition encompasses exactly $r$ sets $P_i$, for $1\leq i\leq r$, some of which can be empty.} 

\begin{enumerate}
\item $P=\biguplus_{i=1}^r P_i$.

\item For each $1\leq i\leq r$ so that $P_i\neq \emptyset$, the cardinality $n_i:=|P_i|$ of $P_i$ satisfies
$
 \lceil n/r\rceil \leq n_i<2\lceil n/r\rceil.
$
 
\item For each $1\leq i\leq r$, the set $P_i$ is contained in the relative interior of $\Delta_i$.
\end{enumerate}

\medskip
\noindent{\bf Definition.} In the sequel, we refer to the simplices $\Delta_i$, that enclose the parts $P_i$ in $\P$, as the {\it cells} of the partition $\P$.\footnote{This is done so as to differentiate between these simplices, which enclose the partition subsets $P_i$, on one hand, and the simplices of ${P\choose d}$ on the other. Notice that the relative interiors of these cells $\Delta_i$ may overlap. Though a point of $P$ may belong to several cells $\Delta_i$, for $1\leq i\leq r$, it is assigned to a {\it unique} cell $\Delta_i$ via the respective set $P_i$.}
For each set $P_i$ in an $r$-partition $\P=\{(P_i,\Delta_i)\mid 1\leq i\leq r\}$ of $P$, and each point $p\in P_i$, we refer to $\Delta_i$ as the {\it ambient cell} of $p$ in $\P$. 

For any hyperplane in $\reals^d$, we say that a point $p\in P$ {\it lies in the zone of $H$ within $\P$} if its ambient cell is crossed by $H$.

\begin{theorem}[The Simplicial Partition Theorem]\label{Theorem:Simplicial}
For any $d\geq 2$ there is a constant $c_\part=c_\part(d)$ with the following property.
For any $n$-point set, and any $1\leq r\leq n$, there is a simplicial $r$-partition $\{(P_i,\Delta_i)\mid 1\leq i\leq r\}$ so that any hyperplane crosses at most $c_\part\cdot  r^{1-1/d}$ of the cells $\Delta_i$, for $1\leq i\leq r$. See Figure \ref{Fig:SelectionSimplicial} (right).
Furthermore, such a partition $\P$ can be computed in time $O^*(n)$.\footnote{The running time was improved by Chan \cite{Chan} to $O(n\log n)$ (with high probability).}
\end{theorem}

For any $d\geq 2$, any $n$-point set $P\subset \reals^{d}$, and any $1\leq r\leq n$, we fix a unique partition $\P=\P_d(P,r)$ as in Theorem \ref{Theorem:Simplicial}, which we briefly denote by $\P(P,r)$ when the ambient dimension of $P$ is clear from the context.

\medskip
If the points of the underlying set $P$ are in a general position, it can be assumed that all the cells $\Delta_i$ in Theorem \ref{Theorem:Simplicial} are $d$-dimensional; furthermore, their vertices can be perturbed in a general position with respect to one another, and with respect to the point set $P$.

\subsection{Arrangements of hyperplanes in $\reals^d$}\label{Subsec:Arrangements}

%Our decomposition in Section \ref{Sec:VerticallyConvex} is based on the cells in the arrangement of hyperplanes that are sampled at random from among a certain subset of $$

%To simplify the exposition, we can assume that the points of $P$ are in a general position.
%In particular, no $d+1$ of them lie in the same hyperplane, and no two of them span a vertical line.\footnote{To construct a weak $\eps$-net for a degenerate point set $P$, we perform a routine symbolic perturbation of $P$ into a general position. A weak $\eps$-net with respect to the perturbed set would immediately yield such a net with respect to the original set.}

\paragraph{Definition.} Any finite family $\HH$ of $m$ hyperplanes in $\reals^d$ induces the {\it arrangement} $\A(\HH)$ -- the partition of $\reals^d$ into open faces. The $d$-dimensional faces of $\A(\HH)$ are called {\it cells}, and they are maximal connected regions of $\reals^d\setminus \left(\bigcup \HH\right)$. 
Each cell is a convex polyhedron whose boundary is comprised of $k$-dimensional faces of $\A(\HH)$, for $0\leq k\leq d-1$; each of these faces is a $k$-dimensional polyhedron which lies within a $k$-dimensional plane -- an intersection of some $d-k$ of the hyperplanes of $\HH$.
Specifically, the $(d-1)$-dimensional faces of $\A(\HH)$ are $(d-1)$-dimensional polyhedral portions of the hyperplanes of $\HH$.
The $0$-dimensional faces are called {\it vertices}, and each of them is an intersection of some $d$ among the hyperplanes in $\HH$. 
%Let $\A_k(\H)$ denote the collection of all $k$-dimensional faces in $\A(\H)$, for $0\leq k\leq d$.

It is known that an arrangement $\A(\HH)$ of $n$ hyperplanes encompasses a total of $O\left(n^d\right)$ faces of all dimensions (and this bound is tight for $\HH$ in a general position).  

The {\it complexity} $\comp(\Delta)$ of a polyhedral cell $\Delta$ of $\A(\HH)$ is the total number of faces, whose dimensions may vary from $0$ and $d-1$, that lie on its boundary. See, e.g., \cite[Section 6]{JirkaBook} for a more comprehensive background on hyperplane arrangements. %Notice that a single 

One can restrict the unbounded faces in $\A(\HH)$ by augmenting $\HH$ with a pair of imaginary hyperplanes that are orthogonal to the $i$-th axis,
$H^+_{i}:=\{X_i=\infty\},H^-_i:=\{X_i=-\infty\}$, for each $1\leq i\leq d$. With this adjustment, every original cell in $\A(\HH)$ now corresponds to exactly one bounded {\it polytopal} cell in the augmented arrangement $\A(\HH)$. In what follows, we can restrict our analysis to these latter cells. 

\begin{theorem}\label{Theorem:SampleHyperplanes}
Let $\HH$ be a family of $m$ hyperplanes in $\reals^d$, and $0<r\leq m$ integer. Then, with probability at least $1/2$, an $r$-sample $\R\in {\HH\choose r}$ of $\HH$ crosses every open segment in $\reals^d$ that is intersected by at least $c(m/r)\log r$ hyperplanes of $\HH$.\footnote{In the sequel we use $\log x$ denotes the standard binary logarithm $\log_2 x$.} 
Here $c>0$ is a sufficiently large constant that does not depend on $m$ or $r$.
\end{theorem}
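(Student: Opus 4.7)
\medskip
\noindent\textbf{Proof proposal.} The plan is to reduce the statement to the Epsilon-Net Theorem applied to the dual range space whose ground set is $\HH$ and whose ranges record, for each open segment $\sigma\subset \reals^d$, the set of hyperplanes of $\HH$ that it crosses. Specifically, let $\F_{\HH}=\{H_\sigma\mid \sigma \text{ an open segment in }\reals^d\}$, where $H_\sigma:=\{H\in \HH\mid H\cap \sigma\neq \emptyset\}$. A set $\R\subseteq \HH$ crosses every segment that meets at least $t$ hyperplanes of $\HH$ precisely when $\R\cap H_\sigma\neq \emptyset$ for every $\sigma$ with $|H_\sigma|\geq t$; that is, when $\R$ is a strong $(t/m)$-net for the hypergraph $(\HH,\F_\HH)$.

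The main technical ingredient is a bound on the VC-dimension of $(\HH,\F_\HH)$. An open segment is described by $2d$ real parameters, and the predicate ``hyperplane $H$ crosses segment $\sigma$'' is a Boolean combination of a bounded number of polynomial inequalities in the coefficients of $H$ and the coordinates of the endpoints of $\sigma$, of degree depending only on $d$. Hence, by the standard bounds on VC-dimension of semi-algebraic set systems (see, e.g., \cite[Section 10]{JirkaBook}), $(\HH,\F_\HH)$ has VC-dimension $D=D(d)=O(1)$. With this bound in hand, the Epsilon-Net Theorem of Haussler and Welzl \cite{HW87} implies that, for any $\eps>0$, a random $r$-sample $\R\in {\HH\choose r}$ of size $r\geq C''(D/\eps)\log(D/\eps)$ is a strong $\eps$-net for $(\HH,\F_\HH)$ with probability at least $1/2$, where $C''$ is an absolute constant.

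I now set $\eps:=C\log r/r$, where $C$ is a sufficiently large constant (chosen in terms of $C''$ and $D$) so that the required sample-size bound $r\geq C''(D/\eps)\log(D/\eps)$ is automatically satisfied. Then, with probability at least $1/2$, the sample $\R$ intersects every range $H_\sigma$ with $|H_\sigma|\geq \eps m = C(m/r)\log r$, which is exactly the assertion of the theorem. The only genuinely non-trivial step is the VC-dimension bound for the dual range space; all the remaining arguments are standard applications of Haussler--Welzl. A minor point to verify is that the Haussler--Welzl bound, typically stated for i.i.d.\ samples, carries over to the uniform $r$-sample (without replacement) considered here; this is a routine coupling argument since sampling without replacement is only more concentrated.
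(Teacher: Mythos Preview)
Your proposal is correct and follows essentially the same approach the paper indicates: the paper does not give a self-contained proof but states that the theorem is established by applying the Epsilon-Net Theorem of Haussler and Welzl to the range space whose ground set is $\HH$ and whose hyperedges are the sets of hyperplanes crossed by segments in $\reals^d$ --- exactly the dual range space $(\HH,\F_\HH)$ you define. Your added details (the semi-algebraic VC-dimension bound, the choice $\eps=C\log r/r$, and the remark on sampling without replacement) are all standard and correctly fill in what the paper leaves implicit.
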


The proof of Theorem \ref{Theorem:SampleHyperplanes} can be found, e.g., in \cite{ChazelleBook}. It is established by applying the Epsilon Net Theorem to the range space of hypergraphs in which every vertex set is a finite family $\HH$ of hyperplanes in $\reals^d$, and every edge consists of all the hyperplanes in $\HH$ that are crossed by a certain segment with $\reals^d$.

\medskip
\noindent{\bf The bottom-vertex triangulation.}
Every polytopal cell $\Delta$ of $\A(\HH)$ can be decomposed into a collection $\D_\Delta$ of interior-disjoint $d$-dimensional simplices that share the lowermost boundary vertex $v$ of $\Delta$.\footnote{The vertex $v$ is unique given that no hyperplane in $\HH$ is horizontal, which can be achieved by a slight generic rotation of the axis frame.} 
Each of these simplices is of the form $\tau=\conv\left(\{v\}\cup \kappa\right)$ where $\kappa$ is a $(d-1)$-dimensional simplex that is not adjacent to $v$, and arises in a recursive triangulation of some $(d-1)$-dimensional face on the boundary of $\Delta$. Let $\D(\HH)$ denote the resulting collection of interior-disjoint simplices $\bigcup_{\Delta\in \A(\HH)}\D_\Delta$ that comprise the bottom-vertex triangulations of all the $d$-cells of $\A(\HH)$. (A more comprehensive exposition of this fundamental structure can be found, e.g., in \cite{Cuttings1,Cuttings}.)

\smallskip
An easy consequence of Theorem \ref{Theorem:SampleHyperplanes} is that, with probability at least $1/2$, every cell of the bottom-vertex triangulation $\D(\R)$, that is determined by such a sample $\R$, is crossed by at most $(d+1)C(m/r)\log r$ hyperplanes of $\HH$; in other words, it serves as an {\it $\left(\frac{(d+1)C\log r}{r}\right)$-cutting of $\HH$} \cite{JirkaBook}. (This is because every hyperplane of $\HH$ that crosses a simplex $\tau$ of $\D(\R)$ must also cross at least $d$ of the ${d+1\choose 2}$ $1$-dimensional faces, or edges, of $\tau$.)

\medskip
\noindent{\bf Zones.} Let $\C$ be a family of polytopal cells in $\reals^d$ (e.g., faces in the above arrangement $\A(\HH)$, the $d$-dimensional simplices of its refinement $\D(\HH)$, or the cells in Matou\v{s}ek's partitioning theorem). The {\it zone} of any set $X\subset \reals^d$ in $\C$  is the subset of all the cells in $\C$ that intersect $X$.

The following elementary property can be instantly deduced through induction on the ambient dimension $d$, and it is explicitly established by Aronov, Pellegrini, and Sharir \cite[Lemmas 2.2 and 2.3]{APS} as the basis for a more general bound; also see \cite{Tagansky}. 

\begin{lemma}\label{Lemma:Zone}
Let $\HH$ be a collection of $n$ hyperplanes in $\reals^d$, and $K\subset \reals^d$ a convex set. Then the boundary $\partial K$ of $K$ intersects $O\left(n^{d-1}\right)$ cells in the arrangement $\A(\HH)$.
\end{lemma}

\bigskip
\noindent{\bf The complexity of many cells.} Note that for any collection $\HH$ of $n$ hyperplanes, the overall complexity of all the cells in their arrangement $\A(\HH)$ is $O\left(n^d\right)$. (This is because, after perturbing $\HH$ into a general position, every $k$-dimensional face is adjacent to at most $2^{d-k}$ cells.)
However, if $\C$ is a subset of cells in $\A(\HH)$ whose cardinality $m$ is much smaller than $n^d$, then the overall complexity $\comp(\C)=\sum_{\Delta\in \C}\comp(\Delta)$ is $o\left(n^d\right)$. %Let us quantify this phenomenon in the terms of $n$ and $m$.

\begin{theorem}[Aronov, Matousek, Sharir 1994 \cite{ManyCellsAMS}; Aronov, Sharir 2004 \cite{ManyCellsAS}]\label{Theorem:ManyCells}
Let $\HH$ be a collection of $n$ hyperplanes in $\reals^d$.
Then for any collection $\C$ of $m$ cells in $\A(\HH)$, their overall complexity $\comp(\C)$ is
$
O\left(m^{1/2}n^{d/2}\log^{\zeta_d}n\right),
$
where $\zeta_d=\left(\lfloor d/2\rfloor-1\right)/2$.
\end{theorem}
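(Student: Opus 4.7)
The plan is to follow the Clarkson--Shor random sampling paradigm together with the bottom-vertex triangulation $\D(\R)$ of Section \ref{Subsec:Essentials}. Let $r$ be a parameter to be optimized, and take a random $r$-sample $\R\subseteq \HH$; by Theorem \ref{Theorem:SampleHyperplanes}, with constant probability $\D(\R)$ forms a $(c\log r/r)$-cutting of $\HH$, in the sense that each simplex of $\D(\R)$ is crossed by at most $\kappa = O((n/r)\log r)$ hyperplanes of $\HH$. Since $\A(\HH)$ refines $\A(\R)$, every cell $\Delta\in \A(\HH)$ is contained in a unique cell of $\A(\R)$, and hence in a constant number of simplices of $\D(\R)$ (after a generic perturbation).

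The key local estimate is that within each simplex $\tau\in \D(\R)$, the intersection of any $\Delta\in \A(\HH)$ with $\tau$ is a cell of the arrangement of at most $\kappa$ hyperplanes, and its complexity is at most $O(\kappa^{\lfloor d/2\rfloor})$ by the classical Upper Bound Theorem of McMullen. To pass from this local estimate to a global bound on $\comp(\C)$, I would combine three ingredients: (i) Theorem \ref{Theorem:Zone}, which exploits the convexity of each $\Delta\in \C$ to show that $\partial\Delta$ meets only $O(r^{d-1})$ cells of $\A(\R)$; (ii) the Clarkson--Shor moment estimate $\sum_{\tau\in \D(\R)} k_\tau^c = O(r^{d-c}n^c)$ for each constant $c$ (where $k_\tau$ is the number of hyperplanes of $\HH$ that cross $\tau$); and (iii) a Cauchy--Schwarz step over the sampled simplices that decouples the contribution of $m$ from the contribution of $n$. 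Balancing the resulting expression by choosing $r$ roughly equal to $m^{1/d}$ when $d$ is even, or to $(m/n)^{1/(d-1)}$ when $d$ is odd (both valid since $m = O(n^d)$, so $r\leq n$), delivers the polynomial factor $m^{1/2}n^{d/2}$.

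The main obstacle is to shave the logarithmic overhead from the roughly $\log^{\lfloor d/2\rfloor} n$ factors produced by the one-shot sampling down to the tight exponent $\eta_d = (\lfloor d/2\rfloor-1)/2$. To do so I would replace the single random sample by the iterated hierarchical sampling scheme of Aronov--Sharir (2004): apply the sampling step at a nested sequence of sample sizes $r_1\ll r_2\ll \cdots$, propagating the local Upper Bound Theorem contributions through successively finer cuttings and charging the convex boundaries $\partial\Delta$ level by level via Theorem \ref{Theorem:Zone}. The careful accounting in this hierarchical scheme turns what would otherwise be a multiplicative accumulation of $\log r$ factors into an additive one, and is where the principal technical work of the proof is concentrated.
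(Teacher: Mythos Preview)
The paper does not prove Theorem~\ref{Theorem:ManyCells}; it is quoted verbatim from the cited references \cite{ManyCellsAMS,ManyCellsAS} and used as a black box (e.g., in Proposition~\ref{Prop:RichCells}). So there is no ``paper's own proof'' to compare against.

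That said, your sketch diverges substantially from the actual argument in the cited sources, and contains a concrete error. The real proof is a one-line application of Cauchy--Schwarz on top of the sum-of-squares bound of \cite{ManyCellsAMS}: that paper shows
\[
\sum_{\Delta\in\A(\HH)} \comp(\Delta)^2 \;=\; O\!\left(n^{d}\log^{2\eta_d}n\right),
\]
where the sum ranges over \emph{all} cells of $\A(\HH)$; restricting to the $m$ cells of $\C$ and applying Cauchy--Schwarz gives
\[
\comp(\C)\;\le\;\sqrt{m}\cdot\Bigl(\sum_{\Delta\in\C}\comp(\Delta)^2\Bigr)^{1/2}
\;=\;O\!\left(m^{1/2}n^{d/2}\log^{\eta_d}n\right).
\]
All of the work (sampling, cuttings, the Upper Bound Theorem, and the delicate log-shaving) is buried inside the sum-of-squares estimate, which is proved once and for all, independently of $m$ or of any particular collection $\C$.

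Your sketch tries to run the sampling argument directly against the $m$ marked cells, and here it goes wrong. The claim that ``every cell $\Delta\in\A(\HH)$ is contained in a unique cell of $\A(\R)$, and hence in a constant number of simplices of $\D(\R)$'' is false: a single cell of $\A(\R)$ may be decomposed into arbitrarily many simplices by the bottom-vertex triangulation (its complexity can be as large as $\Theta(r^{\lfloor d/2\rfloor})$), so a cell of $\A(\HH)$ sitting inside it can meet that many simplices of $\D(\R)$. This breaks the local-to-global accounting you outline. Your invocation of Theorem~\ref{Theorem:Zone} for item~(i) is also off target: that theorem bounds the zone of a convex boundary in $\A(\HH)$, not the number of cutting cells that a full-dimensional cell of $\A(\HH)$ can occupy. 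If you wish to reconstruct the proof, the cleaner route is to first establish the sum-of-squares bound (which is where the random sampling and Upper Bound Theorem actually enter) and then apply Cauchy--Schwarz as above.
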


%The bound of Theorem \ref{Theorem:ManyCells} can be improved in dimension $d=3$.

%\begin{theorem}[Edelsbrunner, Guibas, Sharir 1990 \cite{ManyCells3d}]\label{Theorem:ManyCells3D}
%Let $\HH$ be a collection of $n$ planes in $\reals^3$.
%Then for any collection $\C$ of $m$ cells in $\A(\HH)$,  their overall complexity $\comp\left(\C\right)$ is $O\left(m^{2/3}n\log n+n^2\right)$.
%\end{theorem}

\subsection{A recursive framework for bounding $f_d(\eps)$}\label{Subsec:RecursiveFramewk}

Let us now lay down a more formal framework for the recursive analysis of the weak $\eps$-numbers $f_d(\eps)$, in which the proof of Theorem \ref{Thm:Main} will be cast.

\medskip
\noindent{\bf Definition.} 
For a finite point set $P$ in $\reals^d$ and $\eps>0$, let $\K(P,\eps)$ denote the family of all the $\eps$-heavy convex sets with respect to $P$. 
To simplify the exposition, every set $K\in \K(P,\eps)$ will be assigned a unique {\it principal subset} $P_K\subseteq P\cap K$ of exactly $\lceil\eps n\rceil$ points. Furthermore, for the sake of our inductive analysis of the quantities $f_d(\eps)$, every finite set $P$ in $\reals^d$
will be assigned a unique weak $\eps$-net $N(P,\eps)$ whose cardinality is at most $f_d(\eps)$.

Theorem \ref{Thm:Main} will be established in a divide-and-conquer manner, by subdividing the convex sets of $\K(P,\eps)$ into several finer sub-classes $\K$, and then constructing a separate net for each class $\K$. For the sake of brevity, we say that $N\subset \reals^d$ is {\it a weak $\eps$-net} for a family $\K$ of convex sets in $\reals^d$ if it pierces every in $\K$ that is $\eps$-heavy with respect to $P$. (In particular, every weak $\eps$-net with respect to $P$ is also a weak $\eps$-net for {\it any} subfamily $\K$ of convex sets in $\reals^d$.) If the parameter $\eps$ is fixed, we can assume that every set in the family $\K$ is $\eps$-heavy, so $N$ is simply a point transversal to $\K$.

%Let $f(P,\eps)$ denote the size of the {\it smallest possible point transversal} $N$ to $\K(P,\eps)$. Thus, we have that

%$$
%f_d(\eps):=\min_{P\subset \reals^d, |P|<\infty}f(P,\eps).
%$$

%\noindent To each finite set $P\subset \reals^d$, and each $0<\eps\leq \eps$, we assign a unique such net $N(P,\eps)$ of cardinality $f(P,\eps)$. 

%\medskip
%To simplify the exposition, we select a {\it unique} witness set $P_K\subset P\cap K$ of cardinality $\lceil \eps n\rceil$ for every convex set $K\in \K(P,\eps)$, which determines a sub-collection ${P_K\choose d}$ of ${|P_K|\choose d}=\Theta\left(\eps^dn^d\right)$ $(d-1)$-dimensional simplices within ${P\choose d}$.

\medskip
\noindent {\bf Recurrence in $\eps$ and $P$.} 
To bound the quantity $f_d(\eps)$, the previous recursive arguments (e.g., by Chazelle {\it et al.} \cite{Chazelle} for points in convex position in $\reals^2$, and Matou\v{s}ek and Wagner \cite{MatWag04} for the general case) typically advance by fixing the ground set $P\subset \reals^d$, and constructing a net $N$ for the induced family $\K(P,\eps)$. An upper bound on $|N|$ then serves as an upper bound on $f_d(\eps)$.

The desired net $N$ for $\K(P,\eps)$ is obtained in a top-down fashion, by first decomposing the underlying point set $P$ into $r$ subsets $P_1,\ldots,P_r\subset P$ of size $O\left(n/r\right)$ each (e.g., by the means of Matou\v{s}ek's simplicial partition or another, comparably efficient spatial subdivision). As a rule, the parameter $r$ is either a constant, or an arbitrary small, albeit positive, degree of $1/\eps$. Notice that at least one of the following scenarios is encountered for every convex set $K\in \K(P,\eps)$: %(e.g., via the simplicial partition of Theorem \ref{Theorem:Simplicial}). 

\begin{enumerate}
\item  If $\eps n/r^{1-1/a}$ the points of $P_K$ fall into in a particular part $P_i$, for some constant $a=a(d)>1$, then $K$ can be relegated to the respective recursive instance $\K\left(P_i,\eps'\right)$, with $\eps'=r^{1/a}\eps>\eps$.
Hence, such a set $K$ can be pierced by the ``subsidiary" net $N(P_i,\eps')$, whose cardinality is $f_d(\eps')=f(r^{1/a}\eps)$.
	\item  Otherwise, if the points of $P_K$ are more evenly distributed between the parts $P_i$, then $K$ is instantly pierced by an explicit non-recursive net of cardinality $r^c/\eps^b$, where $b=b(d)$ and $c=c(d)$ denote fixed exponents that depend on the dimension $d$. %As a rule, such sets $K$ happen to be relatively ``fat" with respect to the underlying decomposition of $\reals^d$.
\end{enumerate}

% and then 2. recursively tackling a number of ``locally" defined sub-instances $\K(P_i,\eps')$ with a smaller ground set $P_i\subset P$ and a {\it larger} density parameter $\eps'>\eps$. 

 The resulting upper bound\footnote{For the sake of brevity, in the sequel we will suppress constant factors within the arguments of the recursive terms $f_d(\eps\cdot w)$, provided that $w$ is a fixed degree of $1/\eps$. To guarantee that the arguments of such terms keep increasing with every recursive step, it will be necessary to assume that $\eps$ is smaller than a certain fixed threshold $\eps_0>0$.}
 
 \begin{equation}\label{Eq:SimpleRecurrence}
 	 f_d(\eps)\leq f_d\left(r^{1/a}\cdot \eps \right)+\frac{r^{c}}{\eps^b}
 \end{equation}

 \noindent combines the recursive term $r \cdot f_d\left(r^{1/a}\cdot \eps\right)$ with an additional, non-recursive, term $r^c/\eps^b$ that accounts for piercing the ``well-spread" convex sets, which could not be relegated to any of the subordinate instances $\K(P_i,\eps')$. 
As a rule, such recurrences in $\eps$ bottom out as soon as $\eps$ by-passes a certain constant $\eps_0$ (in which case the long-known bound of Alon {\it et al.} \cite{AlonSelections} yields $f_d(\eps)\leq f_d(\eps_0)=\Theta(1)$), and ultimately solve to $f_d(\eps)=O^*\left((1/\eps)^{\max\{a,b\}}\right)$. 
In particular, Matou\v{s}ek and Wagner \cite{MatWag04} used the simplicial partition of Theorem \ref{Theorem:Simplicial} to obtain a variant of  (\ref{Eq:SimpleRecurrence}) with $a=d$ and $b=1$ and $c=d^2$, which yields $f_d(\eps)=\tilde{O}\left(1/\eps^{d}\right)$.
%As is easily demonstrated by the special case of very thin ``hyperplane-like" convex sets, 
%In view of the optimality of Matou\v{s}ek's simplicial partition, the exponent $a(d)=d$ is the best that can be attained via such a divide-and-conquer argument in $\reals^d$. %Nevertheless, a similar divide-and-conquer approach will be used in the sequel to pierce certain special sub-classes of convex sets within $\K(P,\eps)$.

\medskip  
\noindent {\bf Towards a more efficient divide-and-conquer.} At the center of our improved construction, whose more comprehensive sketch is given in Section \ref{Subsec:Overview}, lies a reduction of the weak $\eps$-net problem to a handful of {$1$-dimensional} strong $\nu$-nets, with $\nu\approx \eps^{\alpha_d}$, each of them restricted to a certain vertical line $\ell$, and defined over the $\ell$-intercepts of certain $(d-1)$-simplices within ${P\choose d}$.

To amplify this gain, our recursive framework has to be further specialized to dealing with more restricted sub-classes $\K$ of such $\eps$-heavy convex sets $K$ whose induced families ${P_K\choose d}$ of $(d-1)$-simplices attain a considerable overlap with some particular subset $\Pi$ of $(d-1)$-simplices within ${P\choose d}$.

\medskip
\noindent {\bf The edge-constrained families $\K(P,\Pi,\eps,\sigma)$.} Each of these finer families $\K=\K(P,\Pi,\eps,\sigma)$ is determined by $\eps>0$, a finite point set $P\subset \reals^d$, a set of hyperedges $\Pi\subseteq {P\choose d}$, and a threshold $0<\sigma\leq 1$, and is comprised of all the $\eps$-heavy convex sets $K\in \K(P,\eps)$ whose induced subsets ${P_K\choose d}$ of $(d-1)$-simplices satisfy $\left|{P_K\choose d}\cap \Pi\right| \geq \sigma{|P_K|\choose d}$.

%To this end, we say that a convex set $K\in \K(P,\eps)$ is {\it $(\eps,\sigma)$-restricted} to the $d$-uniform hypergraph $(P,\Pi)$ if $K\in \K(P,\eps)$ and, in addition, the induced subset of hyperedges $\Pi_K:={P_K\choose d}\cap \Pi$ has cardinality at least $\sigma{|P_K|\choose d}$.

In what follows, we refer to the resulting hypergraph $(P,\Pi)$ (or, simply, to $\Pi$, once the ground set $P$ is clear from the context) as the {\it restriction hypergraph}, and to $\sigma$ as the {\it restriction threshold} of the family $\K(P,\Pi,\eps,\sigma)$. 
We then say that the sets $K\in \K(P,\Pi,\eps,\sigma)$ are {\it $(\eps,\sigma)$-restricted} to $(P,\Pi)$.
 %More generally, for any $K\in \K(P,\eps)$ (not necessarily in $\K(P,\Pi,\eps,\sigma)$) we will use $\Pi_K$ to denote the subset ${P_K\choose d}\cap \Pi$ of $\Pi$ that is ``cut out" by $K$.
 All the recursive instances in the sequel will involve positive thresholds $\sigma$ that are either constant, or bounded from below by an arbitrary small degree of $\eps$.\footnote{In the latter case, the dependence of our bounds on $\sigma>0$ will be explicitly spelled out.}

Thus, our recurrence can now advance not only by increasing the parameter $\eps$ while reducing the ground set $P$, but also by restricting the convex sets to ``include" almost ${\lceil\eps n\rceil\choose d}$ edges of the progressively sparser subset $\Pi$ of $(d-1)$-simplices over $P$.

\medskip
\noindent{\bf The improved recurrence for $f_d(\eps)$.} 
For any choice of $0<\eps,\sigma\leq 1$ and $0<\rho\leq 1$, we use $f(\eps,\rho,\sigma)$ to denote the smallest number $f$ so that for any finite point set $P$ in $\reals^d$, and any subset $\Pi\subseteq {P\choose d}$ of density $|\Pi|/{n\choose d}\leq \rho$, there is point transversal of size at most $f$ to $\K(P,\Pi,\eps,\sigma)$. In addition, we set $f(\eps,\rho,\sigma)=1$ whenever $\eps\geq 1$.

Since the underlying dimension $d$ is fixed, for the sake of brevity we use $f(\eps)$ to denote the quantity $f_d(\eps)=f(\eps,1,1)$, and note that the trivial bound $f(\eps,\rho,\sigma)\leq f(\eps)$ always applies.

To establish Theorem \ref{Thm:Main}, in Section \ref{Sec:MainRecurrence} we first obtain a recurrence for the quantity $f(\eps,\rho,\sigma)$, which takes the following more general form:
 
\begin{equation}\label{Eq:GenRecurrence}
f(\eps,\rho,\sigma)\leq f\left(\eps,\rho/h,\sigma/2\right)+\sum_{i=1}^l r_i \cdot f\left(\eps\cdot r_i^{1/a_i}\right)+O\left(1/\eps^{\alpha_d+\eta}\right).
\end{equation}

\noindent Here $\eta>0$ is an arbitrary small constant which is far smaller than our target constant $\gamma>0$ in Theorem \ref{Thm:Main}, $l=O(\log (1/\eps))$, $\alpha_d$ is the exponent in Theorem \ref{Thm:Main}, $a_i\leq \alpha_d+\eta$ for all $1\leq i\leq l$, while $h$ and all the remaining parameters $r_i$ are various degrees of $1/\eps$ that are far smaller than $1/\eps^{\alpha_d}$. 

%That is, each recursive term of the form $f\left(\eps'\right)$ or $f(\eps,\rho',\sigma')$ on the right hand side of (\ref{Eq:GenRecurrence}) involves either a considerably larger $\eps'=\eps\cdot r_i^{1/b_i}$ or {much smaller} maximum density $\rho'=\rho/h$. In the latter case, the decrease in the density $\rho$ is accompanied by only a comparatively mild decrease in the restriction threshold $\sigma$, which will be bounded from below by a certain positive constant $\sigma_0$ throughout the recurrence. %Nevertheless, the dependence on the restriction threshold $\sigma$ will be spelled out throughout most of our analysis.

To bound $f(\eps)$, we begin with applying the more general recurrence (\ref{Eq:GenRecurrence}) for $f(\eps,1,1)$.
This double recurrence in $\eps$ and $\rho$ will bottom out when either (i) the maximum density $\rho$ of $\Pi$ falls below $\eps$, or (ii) $\eps$ bypasses a certain constant threshold $\eps_0$ that is determined in the sequel. In the former case, our line-selection result will yield an ``$O\left(1/\eps^{d-1}\right)$-type" bound of the form  
$$
f(\eps,\rho,\sigma)=O^*\left(r\cdot f\left(\eps\cdot r^{1/(d-1)}
\right)+\frac{\rho}{\eps^d}\right)=O^*\left(r\cdot f\left(\eps\cdot r^{1/(d-1)}
\right)+\frac{1}{\eps^{d-1}}\right).
$$

\noindent In the latter case, we can use the $\tilde{O}\left(\left(1/\eps_0\right)^{d}\right)=O(1)$ bound of Chazelle {\it et al.} \cite{Chazelle}.

By keeping the main parameter $\eps>0$ fixed, and following through $\log_h(1/\eps)=O(1)$ applications of (\ref{Eq:GenRecurrence}) to its first term,
 one can get rid of the term $f(\eps,\rho,\sigma)$, which depends on $\rho$.
By fixing a suitably small $\eta>0$, and using the standard, and fairly general substitution methodology (see, e.g., \cite{MatWag04,EnvelopesHigh} and \cite[Section 7.3.2]{SA}), the remaining recurrence (in $\eps$ alone) will then solve to $\displaystyle f(\eps)=O\left((1/\eps)^{\alpha_d+\gamma}\right)$, for any constant $\gamma>0$.

\subsection{A roadmap to the proof of Theorem \ref{Thm:Main}} \label{Subsec:Overview}
Here is a more comprehensive outline of our improved weak $\eps$-net construction, and its subsequent analysis.

%hight level summary of our proof of Theorem \ref{Thm:Main}.

\medskip
\noindent{\bf The reduction to $1$-dimensional nets.}  Let us first describe a generic reduction of the weak $\eps$-net problem in $\reals^d$, in dimension $d\geq 2$, to a sequence of $1$-dimensional $\Theta\left(\eps^d\right)$-nets {\it with respect intervals of $\reals^1$}. Recall that every convex set $K$ that is $\eps$-heavy with respect to the underlying point set $P$, has been assigned a unique {\it principal subset} $P_K\subseteq P\cap K$ of $\lceil \eps n\rceil$ points, which determine a collection of ${|P_K|\choose d}=\Theta\left(\eps^dn^d\right)$ simplices.

Let $\ell$ be any vertical line that pierces $\xi{|P_K|\choose d}$ of the simplices of ${P_K\choose d}$, for some constant $\xi$ that depends only on the dimension $d$. Then, by the convexity of $K$, the interval $K\cap \ell$ must contain $\xi{|P_K|\choose d}$ of the $\ell$-intercepts $\conv(\tau)\cap \ell$ of the simplices $\tau\in {P_K\choose d}$. See Figure \ref{Fig:CrossingTriangles} (left). Hence, all such sets $K$ can be pierced by including in our net $N=N_\ell$ every $\lceil C\eps^dn^d\rceil$-th point of the set $X_\ell=\{\conv(\tau)\cap \ell\mid \tau\in {P\choose d}\}$, whose cardinality is at most ${n\choose d}=O\left(n^d\right)$; here $C>0$ is a constant that depends on $\xi$. In other words, we construct {\it a strong $\nu$-net} with respect to the {\it 1-dimensional set $X_\ell\subseteq \ell$}, and with 
\begin{equation}\label{Eq:Reduction}
	\nu=\Omega\left(\frac{{\eps n\choose d}}{{n\choose d}}\right)=\Omega\left(\eps^d\right).
\end{equation}

By using a small number of canonical vertical lines $\ell$, and effectively replacing the set ${P\choose d}$ with its more elaborately defined subsets of $o(n^d)$ simplices (which would still suit the above $1$-dimensional reduction), our divide-and-conquer strategy will bring the density parameter $\nu$ well above $\eps^d$, and ultimately result in a system of $\nu$-nets $N_\ell\subset \ell$ of overall cardinality $O^*(1/\nu)=o\left(1/\eps^d\right)$. 

%Our subsequent analysis in Sections \ref{XX} and \ref{Sec:Surrounded} effectively brings the density parameter $\lambda$ above $\eps^d$, which ultimately results in an even smaller $\lambda$-net $N_\ell\subset \ell$ of cardinality $O(1/\lambda)=o\left(1/\eps^d\right)$. 

\begin{figure}
    \begin{center}

        \input{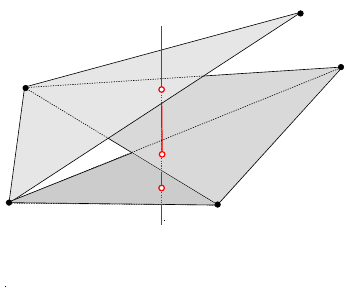_t}\hspace{2cm}\input{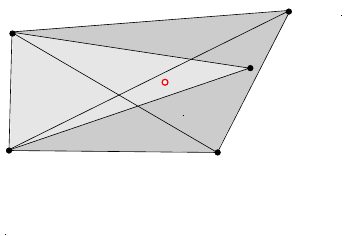_t}
        \caption{\small The reduction to $1$-dimensional nets illustrated in $\reals^3$. The vertical line $\ell$ pierces the triangles $\triangle p_1p_2p_5, \triangle p_3p_4p_1,\triangle p_3p_4p_2$, which are determined by $P_K=\{p_1,\ldots,p_5\}$ (left), if and only if the projection of $\ell$ pierces the projections of the 3 triangles (right). Note that the $\ell$-intercepts of the triangles lie within $\ell\cap \conv(p_1,\ldots,p_5)\subseteq  K\cap \ell$.}
        \label{Fig:CrossingTriangles}
    \end{center}
\end{figure}

\medskip
\noindent{\bf Finding a small canonical set of vertical lines.} The existence of {\it at least one} vertical line $\ell$ for each $\eps$-heavy set $K$, which would pierce many of the simplices ${P_K\choose d}$, is guaranteed by the point-selection theorem of Alon {\it et al.} \cite{AlonSelections} (i.e., Theorem \ref{Theorem:SecondSelection}). To see this, it is enough to consider the vertical projection $P_K^\perp$ of $P_K$ (see Figure \ref{Fig:CrossingTriangles} (right)), apply the lemma to the complete $d$-uniform hypergraph ${P_K^\perp\choose d}$ (with $h=1$), and lift the resulting point $x$ back to a vertical line $x^*$ which now pierces a large fraction of the simplices of ${P_K\choose d}$. \footnote{So far, we have only used the special case of Theorem \ref{Theorem:SecondSelection}, with $h=1$ (the so called {\it First Selection Theorem}). The reasons why we need the full generality of Theorem \ref{Theorem:SecondSelection} (and even more than that) will become clear already in Section \ref{Sec:MultipleSelection}.}

%\natan{Citations on first selection.}

\medskip
Unfortunately, the proposed choice of the vertical line $\ell$ for the sake of our our 1-dimensional reduction, heavily depends not only on $P$ but also on the convex set $K$ at hand. In fact, finding a small set $\L$ of canonical vertical lines, so that at least one of them would suit {\it every} $\eps$-heavy convex set $K$, is at least as hard as finding a weak $\eps$-net with respect to the $(d-1)$-dimensional projection of $P$. (In other words, such a canonical family $\L$ must encompass at least $f_{d-1}(\eps)$ vertical lines, for some ground sets $P$.) 

\medskip
\noindent {\it Remark.} To get around this problem, Chazelle {et al.} \cite{Chazelle} cross the simplices of ${P\choose d}$ using a fairly large canonical family $\L$ of vertical lines. Each line $\ell\in \L$ is assigned a carefully selected subset $\Pi_\ell\subseteq {P\choose d}$ of simplices (all of them crossed by $\ell$); its $\nu$-net $N_\ell$ is then constructed with respect to the $1$-dimensional set $X_\ell:=\{\conv(\tau)\cap \ell\mid \tau\in \Pi_\ell\}$, and with $\nu=\tilde{\Omega}\left(\eps^dn^d/|\Pi_\ell|\right)$ that depends on $|\Pi_\ell|$. 
%, and argue that every $\eps$-heavy convex set $K$ is  pierced by at least one such $N_\ell$.
Despite the considerable size of the canonical set $\L$, the ultimate cardinality of the net $\bigcup_{\ell\in \L}N_\ell$ is shown to be only $\tilde{O}\left(1/\eps^d\right)$.
While this method readily yields the bound $f_d(\eps)=\tilde{O}\left(1/\eps^d\right)$, it is of little use to our more elaborate scheme, which calls for a {\it small} canonical line set $\L$.

\medskip
\medskip
\noindent{\bf Pruning the narrow sets.} As is shown in the sequel, the desired {\it small-size} canonical set $\L$ of vertical lines can still be attained at the expense of introducing a very mild restriction on the principal subsets $P_K$, that are cut out by such $\eps$-heavy convex sets; the ``left-out" sets $K$ can be pierced using a more conventional recurrence in $\eps$, that was outlined in the beginning of Section \ref{Subsec:RecursiveFramewk} (with $a\leq 1/(d-1)<1/\alpha_d$).
% the vertical projections $P^\perp_K$ of their respective witness sets $P_K$ must be reasonably ``well-spread" within the projection $P^\perp\subseteq \reals^{d-1}$ of the entire ground set $P$. (This condition is met, e.g., if the points of $P^\perp_K$ are chosen uniformly and at random from $P^\perp$.)

To this end, we fix an integer $s$, that is an arbitrary small (albeit fixed) degree of $1/\eps$, project the points of $P$ onto $\reals^{d-1}$, and construct the simplicial $s$-partition $\P_{d-1}\left(P^\perp,s\right)$, within $\reals^{d-1}$, of the projected set $P^\perp$, as described in Theorem \ref{Theorem:Simplicial}.
%To quantify the distribution of $P_K^\perp$ within $P^\perp$, we fix a small parameter $s>0$, and use the Simplicial Partition Theorem \ref{Theorem:Simplicial} to obtain an $s$-partition $\P_{d-1}\left(P^\perp,s\right)=\{\left(P^\perp_i,\Delta_i\right)\mid 1\leq i\leq s\}$ of $P^\perp$ in $\reals^{d-1}$. 
Lifting the sets of $\P_{d-1}(P^\perp,s)$ back to $\reals^d$ yields a $d$-dimensional partition 
$\V(P,s)=\{\left(P_i,\Delta_i\right)\mid 1\leq i\leq s\}$ of $P$ so that each subset $P_i$ is contained in the vertical prism $\Delta_i^*$ above the respective simplicial cell $\Delta_i\subset \reals^{d-1}$, and satisfies $|P_i|\leq 2\lceil n/s \rceil$. In the sequel, we refer to $\V(P,s)$ as the {\it vertical simplicial partition}.
We say that an $\eps$-heavy convex set $K$ is {\it narrow} if a significant fraction of the points of $P_K$ fall into only $O\left(s^{1-1/(d-1)}\right)$ parts of $\V(P,s)$, whose ambient cells can be crossed by a single $(d-2)$-plane $g$ in $\reals^{d-1}$; see Figure \ref{Fig:NarrowSpread} (left). 
%Otherwise, we say that $K$ is {\it spread} in $\V(P,s)$; see Figure \ref{Fig:NarrowSpread} (right). 
Though the simplex families ${P_K\choose d}$, that are determined by narrow sets, may prove hard for interception by few vertical lines,
%, the bulk of their points are localized in only $O\left(s^{1-1/(d-1)}\right)$ parts $P_i$ of the vertical partition $\V(P,s)$. Hence, it can be shown that such sets 
such sets $K$ can be easily pierced, via elementary recurrence in $\eps>0$, by an auxiliary 
net whose cardinality is bounded by an essentially ``$O\left(1/\eps^{d-1}\right)$-type" term $O\left(s\cdot f_d\left(\eps\cdot s^{1/(d-1)}\right)\right)$. \footnote{In the sequel, $s$ is chosen to be a very small, albeit fixed, degree of $1/\eps$. Matou\v{s}ek and Wagner \cite{MatWag04} used a similar divide and conquer argument, albeit using a $d$-dimensional simplicial partition; the resulting recurrence $f_d(\eps)=s\cdot f_d\left(\eps \cdot s^{1/d}\right)+s^{d^2}$ solved to $f_d(\eps)=O^*\left(1/\eps^d\right)$.}

\begin{figure}
    \begin{center}
        \input{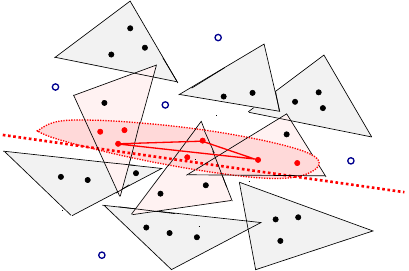_t}\hspace{2cm}\input{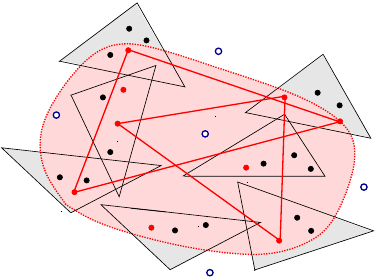_t}
        \caption{\small Using the vertical simplicial partition $\V(P,s)$ in $\reals^3$ -- view from above. Left: The set $K$ (red) is {\it narrow}, so the points of $P_K$ project to the zone of the line $g\subseteq \reals^2$ (purple), in which case the triangles of ${P_K\choose 3}$ are likely to be missed by our canonical set of vertical lines $\ell\in \L$. Right: The set $K$ is {\it spread}, so many triangles in ${P_K\choose 3}$ can be crossed by a line $\ell\in \L$.}
        \label{Fig:NarrowSpread}
    \end{center}
\end{figure}

\medskip
Specializing to the remaining convex sets $K$, whose principal subsets $P_K$ are more ``broadly" distributed among the parts of our vertical partition $\V(P,s)$, our principal line-selection result (Theorem \ref{Theorem:MultipleSelection}) will yield the desired canonical collection $\L=\L(P,s)$ of at most $s^{(d-1)^2}$ vertical lines, and with the following property: {\it For each ``well-spread" $\eps$-heavy convex set $K$ as above, there exists a line $\ell\in \L$ that pierces a fixed fraction of the simplices of ${P_K\choose d}$.} See Figure \ref{Fig:NarrowSpread} (right).
Hence, with the aforementioned exception of the narrow sets, all the $\eps$-heavy sets $K$ can be pierced by a combination of at most $s^{(d-1)^2}$ $1$-dimensional strong $\nu$-nets $N_\ell\subset \ell$, for $\ell\in \L$ and $\nu=\Omega\left(\eps^d\right)$. 

\medskip
\noindent {\bf Attaining $\nu\gg \eps^{\alpha_d}$.} Most of our analysis is, therefore, devoted to bringing the effective density ratios $\nu$ in the previously described $1$-dimensional instances above $\eps^{\alpha_d}$. 
Informally, to reduce the enormous denominator ${n\choose d}$ in (\ref{Eq:Reduction}), we seek to replace the complete collection ${P\choose d}$ of $(d-1)$-simplices with some {\it smaller} subset $\Pi\subset {P\choose d}$. % which would still suit the proposed $1$-dimensional reduction.
%In order for our $1$-dimensional reduction to succeed, it is essential for such a family $\Pi\subset {P\choose d}$ to still contain a large fraction of the simplices of ${P_K\choose d}$, for every $\eps$-heavy convex set $K$. As the property cannot be satisfied by any given family $\Pi$ of $o\left(n^d\right)$ simplices,  
However, in order for the resulting net to still pierce a given $\eps$-heavy convex set $K$, it is essential for such a smaller-size substitute $\Pi$ of ${P\choose d}$ to include a large enough fraction of the $(d-1)$-simplices of ${P_K\choose d}$.
Since no single and {\it small-size} replacement $\Pi$ of ${P\choose d}$ can simultaneously suit {\it all} the $\eps$-heavy convex sets $K$, these sets must be subdivided into several categories, which can be more easily dispatched via separate partial nets.
%Our proof of Theorem \ref{Theorem:MultipleSelection} uses the Second Selection Theorem \ref{Theorem:SecondSelection}, and takes advantage of the remarkably strong regularity properties of the (vertically projected) convex sets that are spread in $\P(r)$.% with respect to order types.

\medskip
\noindent{\it Flat convex sets are easy to pierce.} It is well known that hypergraphs induced by hyperplane ranges have Vapnik-Chervonenkis dimension $d$ and, therefore, admit strong $\eps$-nets of cardinality $O\left(\frac{1}{\eps} \log \frac{1}{\eps}\right)$ \cite{HW87}. However, strong $\eps$-nets are no longer possible if the points of $P$ are slightly perturbed in a way that transforms many of the $\eps$-heavy hyperplanes $H$ into arbitrary thin, albeit strictly convex, sets. 
Instead, most bounds on point-hyperplane incidences, including the tight Szemeredi-Trotter estimate \cite{SzT} in the plane, stem from the subtler fact that, for any $\eps$ that is significantly larger than $1/n$,
any $n$-point set in $\reals^d$ determines only $o(n^d)$ $(d-1)$-simplices whose supporting hyperplanes are ``$\eps$-heavy" (i.e., contain at least $\eps n$ points).

% the entire subset $P\cap H$ can be fully ``read off" from any $d$-size subset $S$ of $P\cap H$ (given that the points of $S$ do not fall in the same $(d-2)$-flat). Provided that $n\gg 1/\eps$, this ``global-to-local" relation can be used to show that only $o\left(n^d\right)$ subsets $S\in {P\choose d}$ can support $\eps$-heavy hyperplanes.

At the heart of our improved construction lies a more robust notion of combinatorial ``co-planarity", which applies to a broader class of $\eps$-heavy ``hyperplane-like" convex sets (whose principal sets $P_K$ may be in a perfectly general position).
To this end, we will use a $d$-dimensional simplicial $t$-partition $\P(P,t)$ of $P$,\footnote{To be precise, such a partition $\P(P_i,t)$ will be constructed in Section \ref{Sec:Surrounded} for each part $P_i$ of the vertical simplicial partition $\V(P,s)$ that underpins our small-size canonical family $\L=\L(P,s)$ of vertical lines. 
Since $s$ is an arbitrary small degree of $1/\eps$, this will yield a simplicial $O^*(t)$-partition $\P(P,s,t)=\bigcup_{1\leq i\leq s}\P(P_i,t)$ of $P$ whose properties resemble those described in Matou\v{s}ek's Theorem \ref{Theorem:Simplicial}.} with $t=(1/\eps)^{3d/4+o_d(1)}=o\left((1/\eps)^{d}\right)$, and take into account the possible ``co-planarities" which occur between the {\it ambient cells} of the points of $P_K$.

Specifically, we consider such hyperplanes $H$ that contain, so called, {\it short} $2$-edges $\lambda\in {P_K\choose 2}$, so that both of their endpoints fall in the same part $Q\in \P(P,t)$.
Informally, an $\eps$-heavy convex set $K$ will be called {\it flat} if the following property holds for a fixed fraction of these 2-edges $\lambda\in {P_K\choose 2}$: there exists such a hyperplane $H$ through $\lambda$ whose zone, within $\P(P,t)$, encompasses a large fraction of the points of $P_K$. See Figure \ref{Fig:TightSurrounded}.

In what follows, we obtain a relatively sparse subset $\tilde{\Pi}\subseteq {P\choose d}$ which, nevertheless, encompasses nearly $\Omega\left(\eps^dn^d\right)$ of the simplices within every family ${P_K\choose d}$ that is determined by any ``flat" $\eps$-heavy convex set $K$. Hence, plugging this sparser set $\tilde{\Pi}$, or rather the $1$-dimensional cross-sections $\{\conv(\tau)\cap \ell\mid \tau\in \tilde{\Pi}\}$, into the 1-dimensional instances of the canonical lines $\ell\in \L$, will ultimately result in a net of overall cardinality $O^*\left(1/\eps^{\alpha_d}\right)$ that pierces all the ``flat" $\eps$-heavy sets $K$.

\begin{figure}
    \begin{center}
    \input{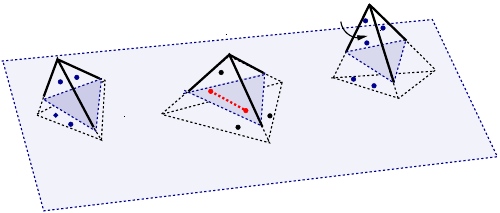_t}%\hspace{2cm}\input{Surrounded.pdf_t}
        \caption{\small Characterizing a ``flat" set $K$ in $\reals^3$. For a fixed fraction of the short 2-edges $\lambda\in {P_K\choose 2}$, there exists such a hyperplane $H$ through $\lambda$ whose zone in $\P(P,t)$ encompasses almost $\Omega(\eps n)$ points of $P_K$.}
        %Right: The set $K$ is $\delta$-punctured in $\reals^3$ (view from above), hence, one can fix the principal $4$-tuple $(P_1,\ldots,P_4)$ in $\V(P,r)$ so that  $P_4$ is surrounded by $P_1,P_2$, and $P_3$.}
        \label{Fig:TightSurrounded}
    \end{center}
\end{figure}

\medskip
The remaining convex sets, whose principal subsets $P_K$ are not sufficiently ``co-aligned" with their short edges $\lambda\in {P_K\choose 2}$, will be dispatched via a combination of more elementary (and, for most, non-recursive) nets. For example, the strong net of Lemma \ref{Thm:StrongNet} can be used to get rid of such convex sets that contain a sufficiently heavy convex $(2d)$-hedron. Furthermore, if a convex set $K$ is relatively ``fat" with respect to the partition $\P(P,t)$, the points of $P_K$ can be effectively replaced, for the sake of our 1-dimensional reduction, by the vertices of their ambient cells, which again results in a larger density parameter $\nu$.

%The partition parameter $\s$ in $\Q(\s)$ is selected so as to interpolate between the two scenarios.

\medskip
\noindent{\bf  $\delta$-punctured vs. $\delta$-hollow sets.} To facilitate the outlined divide-and-conquer treatment of the $\eps$-heavy convex sets $K$, each of them has to be assigned a $(d+1)$-tuple $A_0,A_1,\ldots,A_{d}\subseteq P_K$ of pairwise disjoint subsets whose cardinalities are close to $\eps n$, and so that $A_0^\perp$ is surrounded, within $\reals^{d-1}$, by $A_1^\perp,\ldots,A_d^\perp$; namely, any choice of $d+1$ points $p_0\in A_0,\ldots,p_{d}\in A_{d}$ yields a punctured set $\{p_0,\ldots,p_{d}\}$, with $p_{0}^\perp\in \conv\left(p^\perp_1,\ldots,p^\perp_{d}\right)$. See Figure \ref{Fig:DeltaPunctured} (left).

\begin{figure}
    \begin{center}
    \input{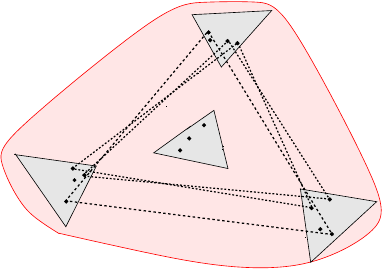_t}\hspace{2cm}\input{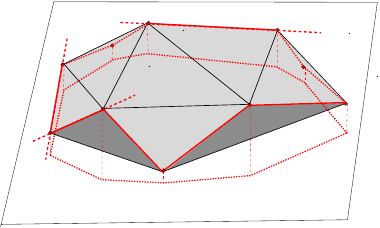_t}
        \caption{\small The two classes of convex sets in $\reals^3$. Left: The principal subset $P_K$ is $\delta$-punctured (view from above). Hence, it contains 4 subsets $A_0,A_1,A_2,A_3$, of cardinality almost $\eps n$, which satisfy $p_0^\perp\in \conv(p_1^\perp,p_2^\perp,p_3^\perp)$ for all $p_0\in A_0,p_1\in A_1,p_2\in A_2$ and $p_3\in A_3$. Right: The principal subset $P_K$ is $0$-hollow; the edges of the silhouette of $\conv(P_K)$ are highlighted, along with their vertical projections. Notice that the directions of the silhouette edges may diverge, even if $\conv(P_K)$ is relatively slim.}
        \label{Fig:DeltaPunctured}
    \end{center}
\end{figure}

However, such a $(d+1)$-tuple $A_0,\ldots,A_{d}$ can possibly exist only if the principal set $P_K$ is $\delta$-punctured, where $\delta>0$ is a very small degree of $\eps$ that will be determined in the sequel. Indeed, if the set $P_K$ is $\delta$-hollow, then the vast majority of the projected $(d+1)$-tuples $\left\{p_1^\perp,\ldots,p_{d+1}^\perp\right\}\subseteq P^\perp_K$ are in a convex position.  %(so that the bulk of the points of $P_K$ fall in the relative vicinity of the vertical silhouette of $\conv(P_K)$).
%Instead, in Section \ref{Sec:VerticallyConvex} we construct a separate net for all the $\eps$-heavy sets $K$ so that their witness subsets $P_K$ are $\delta$-hollow (with $\delta>0$ that is an arbitrary small degree of $\eps$). 
 Informally, the singular challenge that is posed to our machinery by the $\delta$-hollow sets $K$, boils down to the highly divergent nature of the short 2-edges of $\conv\left(P_K\right)$ that lie in the vicinity of the vertical silhouette of $\conv(P_K)$, as illustrated in Figure \ref{Fig:DeltaPunctured} (right). %Indeed, it is quite possible that the vast majority of the points in $P_K$ lie outside the zones of the hyperplanes $H(\tau)$ that support all such $(d-1)$-simplices $\tau\in {P_K\choose d}$. 
 As a result, the delicate ``co-alignment" between the simplices of ${P_K\choose d}$ and the short 2-edges of ${P_K\choose 2}$, may break down while the convex hull $\conv(P_K)$ remains overly ``slim" (and, therefore, difficult to pierce).

To exploit the almost {\it $(d-2)$-dimensional} distribution of such $\delta$-hollow principal subsets $P_K$ (whose points are concentrated in the vicinity of the vertical silhouette of $\conv(P_K)$), 
we will resort to the more basic recurrence in the ground set $P$ and $\eps>0$, which was described in the beginning of Section \ref{Subsec:RecursiveFramewk}. To this end, we will introduce yet another subdivision of $P$ (and, implicitly, of each principal set $P_K$), which is based on a cutting in the arrangement of the hyperplanes $H(\tau)$ that support certain simplices $\tau\in {P\choose d}$. 

\medskip
\noindent{\bf Fast forward.} The proof of Theorem \ref{Thm:Main} is organized as follows. In Section \ref{Sec:MultipleSelection} we formalize the crucial reduction of the weak $\eps$-net problem to a sequence of $1$-dimensional nets which are restricted to few vertical lines. This machinery is then used in Section \ref{Sec:MainRecurrence} to establish Theorem \ref{Thm:Main} via an efficient recurrence for the quantity $f_d(\eps)$.
To facilitate the weak $\eps$-net construction in Section \ref{Sec:MainRecurrence}, in Sections \ref{Sec:Surrounded} and \ref{Sec:VerticallyConvex} we describe separate nets for the special categories of the $\eps$-heavy convex sets $K$ whose principal subsets $P_K$ of $\lceil \eps n\rceil$ points are, respectively, $\delta$-punctured and $\delta$-hollow.%, where $\delta$ denotes a very small, albeit positive, degree of $\eps$.
%In the course of our analysis, we will use Theorem \ref{Theorem:ManyCells} in order to bound the overall number of simplices that arise in the bottom-vertex triangulation of a certain subset of cells in a hyperplane arrangement.

 %in a vertically convex (or, rather, $\delta$-convex) position. Hence, we dedicate the entire Section \ref{Sec:VerticallyConvex} to piercing the sets $K$ that cut out vertically $\delta$-convex sub-sets $P_K$ within $P$. To that end, $\delta$ is set to be a very small, albeit fixed, degree of $\eps$.
%Informally, the simplices in the vicinity of the vertical silhouette $\partial (\conv\left(P_K\right))^\perp$ of $\conv(P_K)$ may exhibit a very  non-monotone behaviour while the entire set $K$ remains remarkably slim in $\reals^d$ (and, therefore, difficult to pierce).
%To explo

\section{Piercing many simplices with few canonical lines}\label{Sec:MultipleSelection}

%In this section we establish Theorems \ref{Theorem:MultipleSelection} and \ref{Theorem:CharacteristicTuple} that were asserted in Section \ref{Subsec:Canonical}.  

%Our proof of Theorem \ref{Theorem:MultipleSelection} uses the following more refined (and more explicit) variant of the Second Selection Theorem \ref{Theorem:SecondSelection}, where some vertices of the underlying geometric graph $(P,\Pi)$ may overlap in $\reals^d$. 

Let us now cast the missing details into the key reduction of the weak $\eps$-net problem in dimension $d\geq 3$ to a system of {\it $1$-dimensional nets} which are constrained to a {\it small-size} canonical family $\L(P,s)$ of vertical lines. As was emphasized in Section \ref{Subsec:Overview}, any such reduction must leave out certain ``narrow" sets $K\in \K(P,\eps)$, to be dispatched by a separate recursive construction, at the expense of introducing an additional recursive term of the form $O\left(s\cdot f_d\left(\eps \cdot s^{1/(d-1)}\right)\right)$.
%The ``leftover" sets $K\in \K(P,\eps)$, whose projected witness sub-sets $P_K^\perp$ are too ``narrowly" distributed within $P^\perp$, will be pierced using the more standard recurrence in $P$ and $\eps$, thereby adding a ``$1/\eps^d$-type" term term of the form $s\times f\left(\right)$. 

\subsection{The vertical simplicial partitions $\V(P,s)$} \label{Subsec:VerticalPartition}\label{Subsec:VerticalPartition}
To quantify the relative distributions of the projected principal subsets $P_K^\perp$, we vertically project the entire point set $P$ onto the copy of $\reals^{d-1}$, which is spanned by the first $d$ coordinate axes, and construct the simplicial partition of Theorem \ref{Theorem:Simplicial} of the projected set $P^\perp\subset \reals^d$. 

%\begin{figure}
   % \begin{center}

      %  \input{Narrow.pdf_t}\hspace{2cm}\input{Spread.pdf_t}
        %\caption{\small Using the vertical simplicial partition $\V(P,r)$ in $\reals^3$ -- view from above. Left: The set $K$ is {\it narrow}, so the points of $P_K$ project in the zone of the line $g\subseteq \reals^2$ (red); the triangles of ${P_K\choose 3}$ are likely to be missed by our universal set of lines $\L$. Right: The set $K$ is {\it spread}, so many triangles in ${P_K\choose 3}$ are crossed by a line $\ell\in \L$.}
       % \label{Fig:NarrowSpread}
    %\end{center}
%\end{figure}

\medskip
\noindent{\bf Definition.} Let $P$ be an $n$-point set in a general position in $\reals^d$, and $s$ an integer that satisfies $1\leq s\leq n$.\footnote{Unless indicated otherwise, all partition parameters are chosen to be a very small, albeit fixed, degree of $1/\eps$.} We say that $\{(P_i,\Delta_i)\mid 1\leq i\leq s\}$ is the {\it vertical simplicial $s$-partition} (or, shortly, the vertical simplical partition, if the parameter $s$ is clear from the context) if 
$\{(P^\perp_i,\Delta_i)\mid 1\leq i\leq s\}$ is the simplicial $s$-partition $\P_{d-1}(P^\perp,s)$ of the projected set $P^\perp$ in $\reals^{d-1}$ that was described in Theorem \ref{Theorem:Simplicial}.

In particular, every cell $\Delta_i$ is a {\it $(d-1)$-dimensional} simplex in $\reals^{d-1}$, and every subset $P_i$ is contained in the vertical prism $\Delta^*_i$ that is raised over the respective $(d-1)$-dimensional simplicial cell $\Delta_i$.

\medskip
By definition, for each finite point set $P\subset \reals^d$ in general position, and each parameter $1\leq s\leq |P|$, there is a unique such vertical simplicial partition $\{(P_i,\Delta_i)\mid 1\leq i\leq s\}$, which we denote by $\V(P,s)$.
For each $1\leq i\leq s$, and each point $p\in P_i$, we refer to $\Delta_i$ as {\it the ambient cell} of $p$ in $\V(P,s)$, which we denote by $\Delta(p)$.

%Let $\P^\perp=\{(P^\perp_1,\Delta_i)\mid 1\leq i\leq r\}$ be the resulting simplicial partition of $P^\perp$.

\medskip
\noindent{\bf Definition.} Let $\varepsilon>0$. We say that a convex set $K\subseteq \reals^d$ is  {\it $\varepsilon$-narrow} with respect to the vertical simplicial partition $\V(P,s)=\{(P_i,\Delta_i)\mid 1\leq i\leq s\}$ if there exists a $(d-2)$-plane $g$ within $\reals^{d-1}$ whose zone within the respective $(d-1)$-dimensional partition $\P_{d-1}(P^\perp,s)=\{(P_i^\perp,\Delta_i)\mid 1\leq i\leq s\}$ encompasses at least $\varepsilon n$ of the projected points of $(P\cap K)^\perp$. See Figure \ref{Fig:NarrowSpread} (left).	
If the convex set $K\subseteq \reals^d$ is not $\varepsilon$-narrow with respect to a vertical simplicial partition $\V(P,s)$, we say that $K$ is {\it $\varepsilon$-spread} in $\V(P,s)$.\footnote{In particular, every ``$\varepsilon$-light" convex set outside $\K(P,\eps)$ is $\varepsilon$-spread. However, most applications of the lemma involve convex sets $K$ that are at least $\varepsilon$-heavy with respect to the underlying point set.} (See Figure \ref{Fig:NarrowSpread} (right).)

\medskip
As a rule, the narrowness threshold $\varepsilon$ will be either $\Theta(\eps)$, or only marginally smaller than the parameter $\eps$ in Theorem \ref{Thm:Main}.

\subsection{Piercing the $\varepsilon$-narrow convex sets}\label{Subsec:Narrow} 

As was mentioned in Section \ref{Subsec:Overview}, the simplices in the $\varepsilon$-narrow sets of $\K(P,\eps)$ may prove hard for interception with few vertical lines. Fortunately, a seizable fraction of their points can be localized in only $O\left(s^{1-1/(d-1)}\right)$ parts $P_i$ of the vertical partition $\V(P,s)$. Hence such sets can still be pierced by a 
net whose cardinality is bounded by an essentially ``$O\left(1/\eps^{d-1}\right)$-type" recursive term which is close to
$O\left(s\cdot f\left(\eps\cdot s^{1/(d-1)}\right)\right)$.

\begin{lemma}\label{Lemma:Narrow}
Let $P$ be a finite point set in $\reals^d$, and $\varepsilon>0$. Let $s>0$ be an integer that satisfies $1\leq s\leq |P|$, and $\V(P,s)$ be the vertical simplicial $s$-partition of $P$ as described in Section \ref{Subsec:VerticalPartition}. Then there exists a set $N_\nar(P,s,\varepsilon)\subset \reals^d$, of cardinality $O\left(s\cdot f\left(\varepsilon\cdot s^{1/(d-1)}\right)\right)$, that pierces every convex set $K$ that is $\varepsilon$-narrow with respect to $\V(P,s)$.\footnote{For the sake of brevity, we keep suppressing the constant multiplicative factors within the recursive terms of the general form $a\cdot f\left(\eps\cdot b\right)$, provided that these factors are much larger than $1/b$ and $1/a$.}
\end{lemma}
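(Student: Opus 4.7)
The plan is to exploit the fact that a $(d-2)$-plane in $\reals^{d-1}$ is exactly a hyperplane of $\reals^{d-1}$, so Theorem \ref{Theorem:Simplicial} applied in dimension $d-1$ controls the zone of any such witness $g$ by at most $\Csp_{d-1} r^{1-1/(d-1)}$ of the partition simplices $\Delta_i$. A pigeonhole argument will then concentrate the $\eps' n$ witness projected points into a single part $P_{i_j}$, whose relative density is thereby amplified by the factor $r^{1/(d-1)}$; this reduces the problem to $r$ independent recursive weak-net sub-problems, one per part.

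The first step is to set $\eps'' := \eps' \cdot r^{1/(d-1)}/(4\Csp_{d-1})$ and, for each $1\leq i\leq r$, invoke the definition of $f$ on the finite point set $P_i\subset \reals^d$ to obtain a weak $\eps''$-net $N_i\subset \reals^d$ with respect to $P_i$, of cardinality at most $f(\eps'')$. The final net will be
$$N(\eps',r) := \bigcup_{i=1}^r N_i,$$
whose cardinality is at most $r\cdot f(\eps'') = O\left(r\cdot f\left(\eps'\cdot r^{1/(d-1)}\right)\right)$, as claimed.

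To verify correctness, let $K$ be any $\eps'$-narrow convex set, witnessed by a $(d-2)$-plane $g\subset \reals^{d-1}$ whose zone within $\P^\perp$ contains at least $\eps' n$ of the projected points of $(P\cap K)^\perp$. Since $g$ is a hyperplane of $\reals^{d-1}$, Theorem \ref{Theorem:Simplicial} guarantees that this zone consists of at most $\Csp_{d-1}r^{1-1/(d-1)}$ simplices $\Delta_{i_1},\ldots,\Delta_{i_m}$. By pigeonhole, some $P_{i_j}$ would contain at least $\eps' n/(\Csp_{d-1}r^{1-1/(d-1)})$ points of $P\cap K$. Combined with the partition estimate $|P_{i_j}|\leq 2\lceil n/r\rceil\leq 4n/r$ (assuming $n\geq r$), this would yield
$$\frac{|P_{i_j}\cap K|}{|P_{i_j}|}\;\geq\;\frac{\eps'\cdot r^{1/(d-1)}}{4\Csp_{d-1}}\;=\;\eps'',$$
so $K$ is $\eps''$-heavy with respect to $P_{i_j}$ and is therefore pierced by $N_{i_j}\subseteq N$.

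The main obstacle is essentially none: the argument is a short pigeonhole once the $(d-1)$-dimensional hyperplane zone bound is in hand, and the exponent $1/(d-1)$ in the claimed density boost matches exactly the exponent $1-1/(d-1)$ delivered by Theorem \ref{Theorem:Simplicial} in $\reals^{d-1}$. Minor care is needed in the degenerate regimes: if $n<r$ one simply takes $N:=P$, and if $\eps'\cdot r^{1/(d-1)}\geq 1$ the recursive term collapses to $f(1)=1$ and the bound reduces to the trivial $O(r)$.
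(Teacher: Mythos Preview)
Your proof is correct and follows essentially the same approach as the paper: construct a weak $\eps''$-net on each part $P_i$ with $\eps''=\Theta(\eps' r^{1/(d-1)})$, and use the hyperplane-crossing bound of Theorem~\ref{Theorem:Simplicial} in $\reals^{d-1}$ together with pigeonhole to show that any $\eps'$-narrow $K$ is $\eps''$-heavy in some $P_i$. The paper's version is slightly less explicit about the constants and omits the degenerate-regime remarks, but the argument is otherwise identical.
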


\begin{proof}
For every $1\leq i\leq s$ we construct a net $N_i:=N\left(P_i,\tilde{\eps}\right)$ for the local problem $\K\left(P_i,\tilde{\eps}\right)$, where
%\footnote{To simplify the presentation, we routinely omit the constant multiplicative factors within the recursive terms of the form $f\left(\eps\cdot h\right)$ as long as these constants are much larger than $1/h$.}
$$
\tilde{\eps}:=c\cdot \varepsilon {s^{1/(d-1)}}=\Theta\left(\frac{\varepsilon n/\left(s^{1-1/(d-1)}\right)}{n/s}\right),
$$

\noindent and $c>0$ is a suitably small fixed constant that does not depend on $\eps$, $\varepsilon$ or $s$ (but may depend on $d$. Notice that each $N_i$ is a $d$-dimensional net of cardinality $|N_i|\leq f\left(\tilde{\eps}\right)$.

It suffices to show that the union $N_\nar(P,s,\varepsilon):=\bigcup_{1\leq i\leq s} N_i$ pierces all the $\varepsilon$-narrow convex sets. Indeed,  let $K$ be such a narrow convex set with respect to $\V(P,s)$, and $g$ be a ``witness" $(d-2)$-plane within $\reals^{d-1}$ that certifies the $\varepsilon$-narrowness of $K$. That is, at least $\varepsilon n$ points of $P\cap K$ belong to such sets $P_i$ whose ambient $(d-1)$-dimensional cells $\Delta_i\subset \reals^{d-1}$ are crossed by $g$. Since the number of the latter simplices is only $O\left(s^{1-1/(d-1)}\right)$ (via Theorem \ref{Theorem:Simplicial}), the pigeonhole principle yields such a set $P_i$ that contains $\Omega\left(\varepsilon n/\left(s^{1-1/(d-1)}\right)\right)$ of the points of $P\cap K$. A suitable choice of the constant $c>0$ guarantees that $K$ is pierced by the respective net $N_i$.
\end{proof}

\subsection{The Canonical Line-Selection Theorem}\label{Subsec:Canonical}

Specializing to the remaining sets, which are not pierced by the recursive net of Section \ref{Subsec:Narrow}, we obtain a so called {\it canonical set} $\L=\L(P,s)$ of $s^{O\left(d^2\right)}$ vertical lines, with the following property: for any $\Omega^*(\epsilon)$-spread convex set $K\in \K(P,\eps)$, there exists a subset of roughly $\Omega\left(\eps^dn^d\right)$ simplices of ${P_K\choose d}$ that can be crossed by a single line of $\L$. Before we proceed with the more formal definition of such canonical families $\L$ of lines, and the proof of their existence, further notation is in order.

% In view of the subsequent applications, the main result of this section -- the Canonical Selection Theorem \ref{Theorem:MultipleSelection} -- will be formulated
%, and subsequently established in Section \ref{Sec:MultipleSelection} 
%in the terms of {\it vertical line transversals} to $(d-1)$-dimensional simplices of ${P_K\choose d}$.
%To that end, the Second Selection Theorem \ref{Theorem:SecondSelection} is to be extended to the $d$-uniform simplicial hypergraphs in $\reals^{d-1}$ that are induced by the vertical projections of {\it unknown} convex sets $K\in \K(P,\eps)$. %that are sufficiently spread in the underlying vertical partition $\V(P,s)$. %In the end, the transversal points will not even depend on the sets $K$ as long as these sets are $\tilde{O}(\eps)$-spread in the underlying vertical partition $\P(r)$. 

%\medskip
%\noindent{\bf Definition.} For a point set $P$ in a general position, and $\eps>0$,  we say that a convex set $K\in \K(P,\eps)$ is {\it $\delta$-hollow} (resp., {\it $\delta$-punctured}) if and only if its witness subset $P_K$ is.

%\medskip
%\noindent {\bf The Easy Selection Theorem.} Our first selection result applies only to $\delta$-punctured sets $K\in \K(P,\eps)$. 

%Specializing to the $\delta$-punctured convex sets, a simpler (and rather more economical) analogue of Theorem \ref{Theorem:MultipleSelection} is established in Section \ref{Sec:MultipleSelection}.

\medskip
\noindent{\bf Definition.} We say that a vertical line $\ell$ in $\reals^d$ is {\it surrounded} by some $d$ sets $\Delta_1,\ldots,\Delta_d\subseteq \reals^{d-1}$ if the vertical intercept $\ell^\perp$ of $\ell$ in $\reals^{d-1}$ is surrounded by $\Delta_1,\ldots,\Delta_d$ according to the definition in Section \ref{Subsec:PointConfig}.

\medskip
Note that, if a vertical line $\ell$ is surrounded by some $d-1$ cells $\Delta_{i_1},\ldots,\Delta_{i_d}\subset \reals^{d-1}$ that arise in the vertical $r$-partition $\V(P,r)$ of Section \ref{Subsec:VerticalPartition}, for $1\leq i_1\neq i_2\neq \ldots \neq i_d\leq r$, then it pierces the interior of {\it every} simplex $\conv(p_1,\ldots,p_d)$ with $p_j\in P_{i_j}$.

\medskip
\noindent{\bf Definition.} Let $\vartheta>0$, and let $\beta_{d-1}$ denote the constant selection exponent that suits Theorem \ref{Theorem:SecondSelection1} in dimension $d-1$. (One can choose $0<\beta_{d-1}\leq d^5+O(d^4)$ \cite{SelectionsSODA}.)

Let $P$ be a set of $n$ points in general position in $\reals^d$, and $1\leq s\leq n$. 
We say that a family $\L$ of vertical lines is {\it $\vartheta$-canonical} with respect to the vertical simplicial partition $\V(P,s)$ if it satisfies the following property:

\smallskip
 {\it For any $0\leq \eps,\sigma\leq 1$, any $(\vartheta\sigma^{\beta_{d-1}}\eps)$-spread set $K\in \K(P,\eps)$ with a principal subset $P_K$ of exactly $\lceil \eps n\rceil$ points, and any subset $\Pi_K\subseteq {P_K\choose d}$ of at least $\sigma {\lceil \eps n\rceil \choose d}$ $(d-1)$-simplices, there is a line $\ell\in \L$ with a subset $\Pi_K(\ell)$ of at least $\vartheta\sigma^{\beta_{d-1}}{\lceil \eps n\rceil \choose d}$ simplices within $\Pi_K$, that are pierced by $\ell$. Furthermore, for every simplex $\{p_1,\ldots,p_d\}\in \Pi_{K}(\ell)$, this line $\ell$ is surrounded by the $d$ ambient cells $\Delta(p_1),\ldots,\Delta(p_d)$ in $\V(P,s)$.} See Figure \ref{Fig:MultipleSelection}.

\begin{figure}
    \begin{center}      
        \input{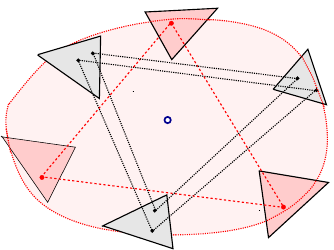_t}
        \caption{\small The Canonical Line-Selection Theorem in dimension $d=3$ (view from above). 
The set  $K\in \K(P,\eps)$ is $(\vartheta\sigma^{\beta_{d-1}}\eps)$-spread in the vertical simplicial partition $\V(P,s)$. The $\vartheta$-canonical family $\L(P,s)$ contains a vertical line $\ell$ piercing at least $\vartheta\sigma^{\beta_{2}}{\lceil\eps n\rceil\choose 3}$ simplices $\{p_1,p_2,p_3\}\in \Pi_K$, such that the ambient cells $\Delta(p_1),\Delta(p_2),\Delta(p_3)\subset \reals^2$ of their vertices $p_1,p_2$, and $p_3$ surround $\ell$.}
        \label{Fig:MultipleSelection}
    \end{center}
\end{figure}

\begin{theorem}[{\it The Canonical Selection Theorem}]\label{Theorem:MultipleSelection}
There is a constant $\vartheta>0$ so that for any finite point set $P$, and any $1\leq s\leq |P|$, there exists a $\vartheta$-canonical set $\L$ of $O\left(s^{(d-1)^2}\right)$ vertical lines with respect to $\V(P,s)$. Furtheremore, such a set $\L=\L(P,s)$ can be constructed from $\V(P,s)$ in time $O\left(s^{(d-1)^2}\right)$.
\end{theorem}

In what follows, every finite point set $P$, and every integer $s$ that satisfies $1\leq s\leq |P|$, will be assigned a unique $\vartheta$-canonical family $\L(P,s)$.

\medskip
Before proceeding with the proof of Theorem \ref{Theorem:MultipleSelection}, let us demonstrate its usefulness to piercing the finer ``edge-constrained" sub-families $\K(P,\Pi,\eps,\sigma)$, which have been introduced in Section \ref{Subsec:RecursiveFramewk}. To this end, the subsets $\Pi_K$ will be set to $\Pi\cap {P_K\choose d}$, which clearly satisfies $|\Pi_K|\geq \sigma{\lceil \eps n\rceil\choose d}$. Therefore, specializing to the $(\vartheta\sigma^{\beta_{d-1}}\eps)$-spread sets $K$, the complete collection ${P\choose d}$ of the $(d-1)$-dimensional simplices can be replaced, for the sake of our $1$-dimensional reduction, by its potentially sparser subset $\Pi$. %, which results in nets $N_\ell$ of cardinality $o\left(1/\eps^d\right)$ whenever $|\Pi|=o\left(n^d\right)$. 
Combined with the use of Lemma \ref{Lemma:Narrow} for the $(\vartheta\sigma^{\beta_{d-1}}\eps)$-narrow sets, this yields a more accurate recursive bound on the quantity $f(\eps,\rho,\sigma)$ whenever the density parameter $\rho$ is significantly smaller than $1$ (and the restriction threshold $\sigma$ remains near-constant).

%Nevertheless, the theorem does not rely on the existence of such a global set $\Pi\subseteq {P\choose d}$, and permits an ``independent" choice of the subset $\Pi_K\subseteq {P_K\choose d}$ for each $\eps$-heavy set $K\in \K(P,\eps)$. %This generality will be exploited in Section \ref{Subsec:Loose}. 

\begin{theorem}\label{Theorem:Sparse}
Let $0<\eps, \sigma\leq 1$, $P$ be an $n$-point set in $\reals^d$, $s$ an integer that satisfies $1\leq s\leq n$, and $\Pi\subseteq {P\choose d}$. %Suppose that $r\ll \min\{n,m\}$. 
Then the family $\K(P,\Pi,\eps,\sigma)$ can be pierced by a net of size
\begin{equation}\label{Eq:ThmSparseBound}
O\left(\frac{s^{(d-1)^2}|\Pi|}{\sigma^{\beta_{d-1}}\eps^dn^d}+s\cdot f\left(\sigma^{\beta_{d-1}}s^{1/(d-1)}\cdot \eps\right)\right).
\end{equation}

Thus, we have that

$$
f(\eps,\rho,\sigma)=O\left(\frac{s^{(d-1)^2}\rho}{\sigma^{\beta_{d-1}}\eps^d}+s\cdot f\left(\sigma^{\beta_{d-1}}s^{1/(d-1)}\cdot \eps\right)\right)
$$

\noindent for all $0<\eps,\sigma\leq 1$, $0\leq \rho\leq 1$, and $1\leq s\leq 1/\eps$.

\end{theorem}

\noindent {\it Remark.}  In particular, for $s$ that is an arbitrarily small (albeit fixed) degree of $1/\eps$, $\rho=\eps$, and $\sigma=\Theta(1)$, Theorem \ref{Theorem:Sparse} yields a ``$1/\eps^{d-1}$-type" estimate $f(\eps,\eps,\sigma)=O^*\left(\frac{1}{\eps^{d-1}}+s\cdot f\left(s^{1/(d-1)}\cdot \eps\right)\right)$.

%Notice that Theorem \ref{Theorem:Sparse} readily yields a net of cardinality $o\left(1/\eps^d\right)$ whenever $|\Pi|$ is substantially smaller than ${n\choose d}$.

\begin{proof}[Proof of Theorem \ref{Theorem:Sparse}.]
If $n\leq 1/\eps$, then the entire family $\K(P,\eps)$ is pierced by the net $N=P$. Hence, the second part of the theorem is an immediate corollary of its first part.

To see the first part, we construct the vertical simplicial partition $\V(P,s)=\{(P_1,\Delta_1),\ldots,(P_s,\Delta_s)\}$ that was described in Section \ref{Subsec:Narrow}.
We then invoke Lemma \ref{Lemma:Narrow}, with $\varepsilon=\vartheta\sigma^{\beta_{d-1}}\eps$, to construct a net $N_\nar(P,s,\varepsilon)$ of cardinality $O\left(s\cdot f\left(\varepsilon\cdot s^{1/(d-1)}\right)\right)$, and
that pierces all the $\varepsilon$-narrow sets $K\in \K(P,\Pi,\eps,\sigma)$ with respect to $\V(P,s)$. (To this end, we stick with the same constant $\vartheta>0$ as in Theorem \ref{Theorem:MultipleSelection}.)

Applying the line-selection of Theorem \ref{Theorem:MultipleSelection} to $\V(P,s)$ yields a $\vartheta$-canonical family $\L=\L(P,s)$ of at most $s^{(d-1)^2}$ vertical lines. 
For each line $\ell\in \L(P,s)$ we construct the set $X_\ell:=\{\conv(\tau)\cap \ell\mid \tau\in \Pi\}$, which consists of at most $|\Pi|$ intersection points of $\ell$ with the simplices of $\Pi$, and then select a $1$-dimensional strong $\nu$-net $N_\ell$ over the set $X_\ell$ with respect to intervals, with $\nu=c\sigma^{\beta_{d-1}}\eps^d n^d/|\Pi|$. Here $c>0$ is a suitably small constant that may depend on $d$ (but not on $\eps$).
Lastly, we define $N':=(\bigcup_{\ell\in \L(P,s)}N_\ell)\cup N_\nar(P,s,\varepsilon)$. Since the cardinality of $N'$ is obviously bounded by (\ref{Eq:ThmSparseBound}), it suffices to show that every convex set $K\in \K(P,\Pi,\eps,\sigma)$ is pierced by $N'$. 

Indeed, if $K$ is $\varepsilon$-spread with respect to $\V(P,s)$, there must be a vertical canonical line $\ell\in \L(P,s)$ that pierces $\Omega\left(\sigma^{\beta_{d-1}}{\lceil \eps n\rceil\choose d}\right)$ simplices of $\Pi_K:=\Pi\cap {P_K\choose d}$, each time at some point of $X_\ell$ within the segment $K\cap \ell$. Therefore, a suitable choice of the constant $c>0$ guarantees
that $K$ is pierced by the respective net $\nu$-net $N_\ell\subset N'$.

On the other hand, if $K$ is $\varepsilon$-narrow with respect to $\V(P,s)$, then it must be pierced by the net $N_\nar(P,s,\varepsilon)$ of Lemma \ref{Lemma:Narrow}.
\end{proof}

\subsection{A useful lemma}

The proof of Theorem \ref{Theorem:MultipleSelection} relies on the following property. %, which will be formulated for point sets in $\reals^{d-1}$.

\begin{lemma}\label{Theorem:SecondSelection1}
 %For any dimension $d\geq 1$ there exists a constants $C_d>0$ with the following property: 
%\bigskip
%For any dimension $d\geq 2$ there is a constant $C_d$ with the following property.
Let $(V,E)$ be a $(d+1)$-uniform hypergraph so that $|V|=n$ and $|E|\geq h{n\choose d+1}$, and $\chi:V\rightarrow \reals^d$ be a mapping of $V$ to some point set $X=\chi(V)$ in general position in $\reals^d$.\footnote{Though the image $X=\chi(V)$ is in general position, the map $\chi$ is not necessarily injective. Thus, some simplices $f\in E$ could be mapped to lower-dimensional simplices $\chi(f)\in {X\choose \leq d}$.} % that satisfies $|\chi^{-1}(x)|\leq C_d h^{\beta_d}n$ for any $x\in X$. 
Then there exists a point $x\in \reals^d$ that satisfies the following conditions:
\begin{enumerate}
	\item  There exist $\Omega\left(h^{\beta_d}{n\choose d+1}\right)$ hyperedges $f\in E$ so that $x\in \conv\left(\chi(f)\right)$.
	\item There exist $k\leq d$ pairwise disjoint subsets $A_1,\ldots,A_k\subseteq X$, each set $A_i$ of cardinality $|A_i|\leq d$ points, with the property that $\{x\}=\bigcap_{i=1}^k\conv(A_i)$.
\end{enumerate}
\end{lemma}

\begin{proof}
%The mapping $\chi$ produces the family $\chi(E):=\{\conv\left(\chi(f)\right)\mid f\in E\}$ of simplices within ${X\choose d+1}$. We treat $\chi(E)$ as a multiset that may include several copies of the same simplex $\tau$, which correspond to different elements $f\in E$ that map to $\tau=\chi(f)$.
We apply an arbitrary small generic perturbation to the mapping $\chi: V\rightarrow \reals^d$. To this end, we fix a sufficiently small $\delta>0$, and replace the image $\chi(v)$ of every vertex $v\in V$ by a point $\chi'(v)$ that is chosen uniformly, independently, and at random from the $\delta$-neighborhood $B(\chi(v),\delta)$. 
Then, with probability $1$, the perturbed image $\chi'(V)$  is comprised of $n$ distinct points in general position. Thus, Theorem \ref{Theorem:SecondSelection} yields a point $x\in \reals^d$, with $\Omega\left(h^{\beta_d}{n\choose d+1}\right)$ edges $f\in E$, so that $x\in \conv(\chi'(f))$.
Due to an arbitrary small choice of $\delta>0$, the standard compactness argument yields such a point $x$ that meets condition (1) with respect to the original mapping $\chi$.

Let $E'$ denote the subset of all such edges $f\in E$ that satisfy $x\in \conv(\chi(f))$. Then it can be assumed, with no loss of generality, that $x$ is a vertex of the polytope $\bigcap_{f\in E'}\conv\left(\chi(f)\right)$. 
As such, it is an intersection of some $k$ boundary faces $\mu_1,\ldots,\mu_k$ of some $k$ distinct simplices $\tau_1,\ldots,\tau_k\in \{\conv(\chi(f))\mid f\in E'\}$ whose dimensions may vary between $0$ and $d-1$. Choosing the minimal such representation (which involves the smallest possible overall number of points) guarantees that the faces $\mu_1,\ldots,\mu_k$ have pairwise disjoint vertex sets, which we respectively denote by $A_1,\ldots,A_k$, so that $x=\bigcap_{i=1}^k \conv(A_i)$. 
%To see that $x_0$ lies in the relative interior of $\Omega\left(h^{\beta_d}{n\choose d+1}\right)$ simplices of $\chi(E)$, let 
%$$
%E_0=\{f\in \tilde{E}_0 \mid x_0\in {\tt int}\left(\conv\left(\chi\left(f\right)\right)\right)\}.
%$$
%\noindent Since the set $X=\chi(V)$ is in general position, any simplex of $\chi\left(\tilde{E_0}\right)\setminus \chi(E_0)$ must be adjacent to a vertex of $Y=\bigcup_{i=1}^k Y_i$. Using that
%$\chi^{-1}\left(Y\right)\leq |Y| \frac{|Y|C_d}{2d(d-1)^2} h^{\beta_d}n$, it follows that $|\tilde{E_0}\setminus E_0|\leq \frac{|Y|C_d}{2d(d-1)^2}h^{\beta_d}n{n-1\choose d}\leq \frac{C_d}{2}h^{\beta_d}{n\choose d+1}$.
%It, therefore, suffices to show that $|\tilde{E}_0|\geq C_dh^{\beta_d}{n\choose d+1}$.
%To this end, %Since the vertex set $X'=\chi'(P)=\{\chi'(v)\mid v\in V\}$ is in general position, there is a point $x_0\in \reals^d$, and a subset $E'\subseteq E$ of cardinality $|E'|=\Omega\left(h^{\beta_d}{n\choose d+1}\right)$, so that $x$ pierces every simplex in $\left\{\conv\left(\chi'(f)\right)\mid f\in E'\right\}$. Since the set $X$ is in general position, a small enough choice of $\delta$ guarantees that there exist at most $d$ $(d-1)$-simplices in ${X\choose d}$ whose supporting hyperplanes pass in the $\delta$-vicinity of $x_0$.
\end{proof}

%Though Theorem \ref{Theorem:SecondSelection1} is derived nearly verbatim via the original proof of the Second Selection Theorem \ref{Theorem:SecondSelection} by Alon, B\'ar\'any, F\"uredi, and Kleitman  \cite[Section 6]{AlonSelections} (using roughly the same combination of the Erd\H{o}s-Simonovits Theorem \cite{ErdosSimon} and an improved variant \cite{BMZ} of the Colorful Tverberg Theorem \cite{ColoredTverberg}), we spell out the full details in Appendix \ref{App:Selection}. 

%Notice that a convex set $K\in \K(P,\eps)$, with $P_K$ and $\Pi_K$ as in the hypothesis of Theorem \ref{Theorem:MultipleSelection}, can arise as part of a recursive instance $\K(P,\Pi,\eps,\sigma)$; in that case we have that $\Pi_K:=\Pi\cap {P_K\choose d}$, which clearly satisfies $|\Pi_K|\geq \sigma{\lceil \eps n\rceil\choose d}$. 
%Nevertheless, the theorem does not rely on the existence of such a global set $\Pi\subseteq {P\choose d}$, and permits an ``independent" choice of the subset $\Pi_K\subseteq {P_K\choose d}$ for each $\eps$-heavy set $K\in \K(P,\eps)$. This generality will be exploited in Section \ref{Subsec:Loose}. 

%In addition to Lemma \ref{Theorem:SecondSelection1}, our proof of Theorems \ref{Theorem:MultipleSelection} will use the following two elementary properties of the hyperplane transversals to a family of simplices in $\reals^d$.

\subsection{Proof of Theorem \ref{Theorem:MultipleSelection}}
%With these preparations, we are now ready to establish Theorem \ref{Theorem:MultipleSelection}.
We fix an $n$-point set $P$ with an integer $s$ that satisfies $1\leq s\leq n$, which together determine the vertical simplicial $s$-partition $\V(P,s)$ of $P$.

\medskip
\noindent{\bf Constructing the canonical line family $\L(P,s)$.} We pick a generic proxy point $x_i\in \reals^{d-1}$ within every $(d-1)$-dimensional cell $\Delta_i$ of the vertical simplicial partition $\V(P,s)$. This yields the set $X:=\{x_i\mid 1\leq i\leq s\}$, whose $s$ points in general position will be used to generate a significantly larger point set $Y$ within $\reals^{d-1}$.

To this end, for every $1\leq k\leq d-1$, and every choice $A_1,\ldots,A_k$ of $k\leq d-1$ pairwise disjoint sets $A_1,\ldots,A_k\in {X\choose \leq (d-1)}$ that satisfy $\sum_{i=1}^k(d-|A_i|)=d-1$ and $\bigcap_{i=1}^d\conv(A_i)\neq \emptyset$, we add to $Y$ the unique point $y$ so that $\{y\}=\bigcap_{i=1}^k \conv(A_i)$; see Figure \ref{Fig:UniversalSetMap} (left). %Notice that $|Y|=O\left(s^{(d-1)^2}\right)$.

\begin{figure}
    \begin{center}      
        \input{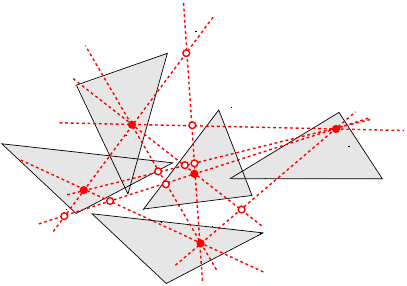_t}\hspace{1cm}\input{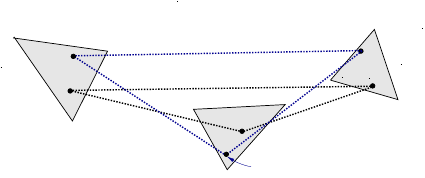_t}
        \caption{\small Proof of Theorem \ref{Theorem:MultipleSelection} in dimension $d=3$. Left: The set $X=\{x_1,\ldots,x_r\}$ with (part of) the induced set $Y$ in $\reals^2$. Right: The triangle $\tau=\conv(p_1,p_2,p_3)$ in the vertical projection, and its representative $\chi(\tau)$ under the mapping $\chi: P_K\rightarrow \reals^2$.}
        \label{Fig:UniversalSetMap}
    \end{center}
\end{figure}

We then set $\L(P,s):=\{y^*\mid y\in Y\}$. Namely, $\L(P,s)$ is the set of all the vertical lines that result from lifting the points of $Y$. 

\medskip
\noindent {\bf Analysis.} By definition, we have $|\L(P,s)|=|Y|=O\left(s^{(d-1)^2}\right)$, so it remains to show that, given a suitably small constant $\vartheta>0$, the family $\L(P,s)$ is $\vartheta$-canonical with respect to $\V(P,s)$.
To this end, let us fix the parameters $0<\eps,\sigma\leq 1$. Let $K\in \K(P,\eps)$ be a convex set which is endowed with a principal set $P_K\subseteq P$ of $\lceil \eps n\rceil$ points, and a subset $\Pi_K\subseteq {P_K\choose d}$ of at least $\sigma{\lceil \eps n\rceil\choose d}$ $(d-1)$-simplices. %By the definition of $\K(P,\eps)$, $K$ comes with a designated subset $P_K\subseteq P\cap K$ so that $|P_K|=\lceil\eps n\rceil$. Furthermore, $K$ is also endowed with a set $\Pi_K\subseteq {P_K\choose d}$ of  cardinality $|\Pi_K|\geq \sigma{P_K\choose d}$.

%We assign every point $p$ whose projection $p^\perp$ belongs to $P^\perp_i$, to the proxy point $x_i$ of $\Delta_i$. This yields an assignment $\phi:P\rightarrow \{x_i\mid 1\leq i\leq r\}$. 

To show that $\L(P,s)$ is $\vartheta$-canonical, with some $\vartheta>0$ that depends only on $d$, it is enough to show that at least one of the following statements holds with the previous choice of $K$, $P_K$, and $\Pi_K$: (i) there is a line $\ell\in \L(P,s)$ that pierces the relative interiors of at least $\Omega\left(\sigma^{\beta_{d-1}}{|P_K|\choose d}\right)$ simplices $\tau\in \{p_1,\ldots,p_d\}\in \Pi_K$ whose ambient cells $\Delta(p_i)$ surround $\ell$, or (ii) the set $K$ is $\Omega\left(\sigma^{\beta_{d-1}}\eps \right)$-narrow in $\V(P,s)$. To this end, it can be henceforth assumed that $|P_K|\geq d$ (or, else, the claim holds emptily with ${P_K\choose d}=\Pi_K=\emptyset$).

\medskip
\noindent {\it The mapping $\chi:P_K\rightarrow X$.} For each $1\leq i\leq s$, and each $p\in P_i\cap P_K$, we map every point $p\in P_K$ to the proxy point $\chi(p)=x_i$ within the ambient cell $\Delta_i=\Delta(p)\subseteq \reals^{d-1}$. See Figure \ref{Fig:UniversalSetMap} (right).
As a result, every simplex $\tau=\{p_1,\ldots,p_d\}\in \Pi_K$ maps to the set $\chi(\tau)=\{\chi(p_i)\mid 1\leq i\leq d\}\in {X\choose {\leq d}}$.
We say that this simplex $\tau\in \Pi_K$ is {\it crowded} if $|\chi(\tau)|<d$, and we say that $\tau$ is {\it split} otherwise.

\bigskip
\noindent{\bf Case 1.} At least $|\Pi_K|/3$ of the hyperedges in $\Pi_K$ are crowded. The pigeonhole principle yields a subset $\kappa=\{p_1,\ldots,p_{d-1}\}\in {P_K\choose d-1}$ with
at least $|\Pi_K|/\left(3{\lceil \eps n\rceil \choose d-1}\right)$ hyperedges $\tau=\{p_1,\ldots,p_d,p\}\in \Pi_K$ that satisfy $\chi(\tau)=\chi(\kappa)$. Up to relabeling, it can be assumed that at least $|\Pi_K|/\left(3d{\lceil \eps n\rceil \choose d-1}\right)=\Omega\left(\sigma \eps n\right)$ of the latter simplices $\tau$ satisfy $\chi(p_{d+1})=\chi(p_1)$. 
In other words, the ambient subset $P_i$ of $p_1$ in $\V(P,s)$ must encompass $\Omega(\sigma \eps n)$ points $p\in P_K$. Hence, the set $K$ must be $\Omega(\sigma\eps)$-narrow in $\V(P,s)$.

\medskip
\noindent {\bf Case 2.} Assume then that the previous scenario does not occur for $K$ and $\Pi_K$. Let $\Pi'_K$ denote the subset 
of all the split simplices in $\Pi_K$, which now must encompass at least $2|\Pi_K|/3$ simplices.
Applying Lemma \ref{Theorem:SecondSelection1} for the $d$-uniform hypergraph $(P_K,\Pi'_K)$, with the previous embedding function $\chi:P_K\rightarrow X\subseteq \reals^{d-1}$, yields a point $y\in Y$ and a subset $\Pi'_K(y)\subseteq \Pi'_K$ of $\Omega\left(\sigma^{\beta_{d-1}}{|P_K|\choose d}\right)$ simplices within $\Pi'_K$,
with the property that $y\in \conv\left(\chi(\tau)\right)$ for each $\tau\in \Pi'_K(y)$.
%Note that every element of $\Lambda_K(y)$ is a $(d-1)$-dimensional simplex with non-empty interior, of the form $\conv(\chi(\tau))$, for some $\tau=\conv\left(p_1,\ldots,p_d\right)\in \Pi_K$. (In particular $\chi(\tau)$ encompasses $d$ distinct points of $X$, so the $d$ ambient simplices $\Delta(p_1),\ldots,\Delta(p_d)$ are all distinct.)

\medskip
Let $\Pi''_K(\ell)\subseteq \Pi'_K(y)$ denote the subset of all the $(d-1)$-simplices $\tau=\{p_1,\ldots,p_d\}$ in $\Pi'_K(y)$, so that $y$ is surrounded by the ambient cells $\Delta(p_1),\ldots,\Delta(p_d)\subset\reals^{d-1}$ of their $d$ vertices $p_1,\ldots,p_d$. 
 To complete the proof of Theorem \ref{Theorem:MultipleSelection}, we distinguish between two possible scenarios (see Figure \ref{Fig:MutipleCases}).

\medskip
\noindent{\bf Case 2a.} If $|\Pi''_K(y)|\geq |\Pi'_K(y)|/2=\Omega\left(\sigma^{\beta_{d-1}}{|P_K\choose d}\right)$ then $y^*$ constitutes the desired canonical line in $ \L(P,s)$ (with a suitably small choice of $\vartheta>0$). 

\begin{figure}
    \begin{center}      
        \input{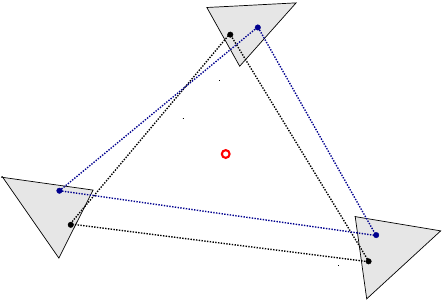_t}\hspace{1cm}\input{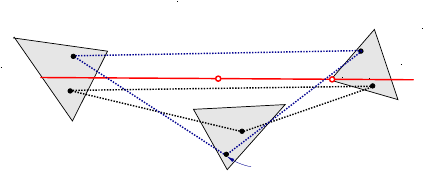_t}
        \caption{\small Proof of Theorem \ref{Theorem:MultipleSelection} (in dimension $d=3$). Left: Case (2a) -- the simplex $\tau=\conv(p_1,p_2,p_3)$ (shown in the vertical projection) belongs to $\Pi''_K(y)$, so that $y$ is surrounded by $\Delta(p_1),\Delta(p_2)$, and $\Delta(p_3)$; in particular, $\tau$ is crossed by the vertical line $\ell=y^*$ over $y$. Right:  Case (2b)  -- the simplex $\tau=\conv(p_1,p_2,p_3)$ belongs to $\Pi'_K(y)\setminus \Pi''_K(y)$, so that $y$ is not surrounded by $\Delta(p_1),\Delta(p_2)$, and $\Delta(p_3)$. The line $g_\tau$ passes through $y$ and a vertex of $\Delta(p_1)$, and crosses $\Delta(p_2)$.}
        \label{Fig:MutipleCases}
    \end{center}
\end{figure}

\medskip
\noindent{\bf Case 2b.} Suppose, then, that the set $\Pi'_K(y)\setminus \Pi''_K(y)$ encompasses at least $|\Pi'_K(y)|/2=\Omega\left(\sigma^{\beta_{d-1}}{|P_K|\choose d}\right)$ simplices.
It suffices to show that such a convex convex set $K$ must be $\Omega\left(\sigma^{\beta_{d-1}}\eps\right)$-narrow in $\V(P,s)$.

To this end, we fix $\tau=\conv\left(p_1,\ldots,p_d\right)\in \Pi'_K(y)\setminus \Pi''_K(y)$. Notice that the point $y$ is {\it not} surrounded in $\reals^{d-1}$ by the $d$ ambient cells $\Delta(p_1),\ldots,\Delta(p_d)$ of the vertices of $\tau$, and yet it lies within $\conv\left(\bigcup_{i=1}^d \Delta(p_i)\right)$ (and, specifically, within the simplex $\conv(\chi(\tau))$, whose vertices $\chi(p_i)$ are chosen from the respective cells $\Delta(p_i)$). Hence, Lemma \ref{Prop:NotSurrounded} yields a $(d-2)$-plane $g$ through $y$, within $\reals^{d-1}$, that intersects some $d-1$ of the cells $\Delta(p_i)$, say $\Delta(p_1),\Delta(p_2),\ldots,\Delta(p_{d-1})$. Furthermore, applying Lemma \ref{Lemma:ExtremalHyperplane} with $\Sigma=\{\Delta(p_1),\ldots,\Delta(p_{d-1}),\{y\}\}$ yields such a $(d-2)$-plane $g=g_\tau$ that contains $y$ and, in addition, passes through some $d-2$ boundary vertices of $\Delta(p_1),\Delta(p_2),\ldots,\Delta(p_{d-1})$.

Notice that the supporting vertices of $g_\tau$ belong to at most $d-2$ simplices which, with no loss of generality, can be assumed to be amongst $\Delta(p_1),\ldots,\Delta(p_{d-2})$. Also note that, for a fixed $y\in Y$, the $(d-2)$-plane $g_\tau$ can be guessed from $p_1,\ldots,p_{d-2}$ in at most ${(d-2)d\choose d-2}$ different ways (from among the vertices of the $d-2$ ambient cells $\Delta(p_i)$). We, therefore, assign to $\tau$ this characteristic $(d-2)$-subset $S_\tau=\{p_1,\ldots,p_{d-2}\}\subset P_K$.

\smallskip
We repeat the above argument for each simplex $\tau\in \Pi'_K(y)\setminus \Pi''_K(y)$. Since the overall number of distinct characteristic subsets $S_\tau$, over all the simplices $\tau\in \Pi'_K(y)\setminus \Pi''_K(y)$, is at most ${\lceil \eps n \rceil\choose d-2}$, by the pigeonhole principle there must be such a $(d-2)$-subset $S\in {P_K\choose d-2}$ that has been assigned to $\Omega\left(\sigma^{\beta_{d-1}}{{\lceil \eps n\rceil}\choose 2}\right)$ of the $(d-1)$-simplices $\tau\in\Pi'_K(y)\setminus \Pi''_K(y)$ (which all satisfy $S_\tau=S$). For each of these simplices $\tau$, one or both of the two vertices in $\tau\setminus S$ must belong to such partition sets $P_i$ whose ambient cells $\Delta_i$ are crossed (in $\reals^{d-1}$) by the $(d-2)$-plane $g_\tau$. Since the $(d-2)$-plane $g_\tau$ can be ``guessed" from $S$ in at most ${d(d-2)\choose d-2}=O(1)$ ways, at least $\Omega\left(\sigma^{\beta_{d-1}}{{\lceil \eps n\rceil}\choose 2}\right)$ of these latter simplices $\tau$ must also ``share" the same $(d-2)$-plane $g=g_\tau$. Using that any point of $P_K$ belongs to at most $\lceil \eps n \rceil$ distinct pairs $\tau\setminus S$, with $\tau\in \Pi'_K(y)\setminus \Pi''_K(y)$, we conclude that the projections of at least $\Omega\left(\sigma^{\beta_{d-1}}\lceil \eps n\rceil\right)$ points in $P_K$ must fall, within the partition $\P_{d-1}(P^\perp,s)=\{(P_1^\perp,\Delta_1),\ldots,(P_s^\perp,\Delta_s)\}$, in the zone of the {\it same} $(d-2)$-plane $g$. In other words, the set $K$ must be $\Omega\left(\sigma^{\beta_{d-1}}\eps\right)$-narrow. $\Box$

\section{Proof of Theorem \ref{Thm:Main}}\label{Sec:MainRecurrence}

%To establish of Theorem \ref{Thm:Main}, we derive and solve an efficient recurrence for the quantity $f_d(\eps)$. To that end, in Section \ref{Subsec:RecursiveFramewk} we refine the notation of Section \ref{sec:intro} and lay down the formal framework in which our proof of Theorem \ref{Thm:Main} is cast. In Section \ref{Subsec:Reduction} we introduce a simple yet fairly general reduction of the weak $\eps$-net problem with respect convex sets in $\reals^2$ to a sequence of strong $\eps'$-net instances with respect to {\it intervals of $\reals^1$}, for $0<\eps'\leq \eps$. In order for this paradigm to succeed, the reduction must use only {\it few} $1$-dimensional nets. This is achieved through a combination of the following key ingredients, which we introduce in Sections \ref{Subsec:Narrow} and \ref{Subsec:MultipleSelection}: (i) novel line-selection theorems for the subsets of ${P\choose d}$ that are ``cut out" by the $\eps$-heavy convex sets over $P$, and (ii) efficient divide-and-conquer treatment of convex sets with relatively narrow vertical projections. For the parameters $\eps'$ in our $1$-dimensional nets to be larger that $\eps^d$, in Section \ref{Subsec:Hyperplanes} we subdivide the $\eps$-heavy convex sets into two classes with respect to the order type of the vertical projection $(P\cap K)^\perp$. The eventual weak $\eps$-net in Section \ref{Subsec:FinalRecurrence} is comprised of recursive sub-nets that are separately constructed for these two main sub-classes. 

In this section we will use the machinery that has been assembled in the previous Sections \ref{Sec:Prelim} and \ref{Sec:MultipleSelection}, in order to establish the main result of this paper -- Theorem \ref{Thm:Main}. To this end, we fix an arbitrary constant $\gamma>0$, and derive a complete recurrence formula for the quantity $f(\eps)=f_d(\eps)$, which will ultimately solve to $f(\eps)=O\left(1/\eps^{\alpha_d+\gamma}\right)$. It can, furthermore, be assumed with no loss of generality that $\gamma<1/100$.

To facilitate the asymptotic analysis of the weak $\eps$-numbers $f(\eps)$, we let $\eta:=\gamma^{2d}/(100d)$, and adopt the following useful notation.
For $y>1$, we denote $x\lll_\eta y$ whenever $x=O\left(y^{\frac{\eta}{100d^4\beta_{d-1}}}\right)$, where $\beta_{d-1}\leq d^5(1+o(1))$ denotes the fixed selection exponent that Theorem \ref{Theorem:SecondSelection} yields in dimension $d-1$. For $0<y<1$, we accordingly denote $x\lll_\eta y$ whenever $1/y\lll_\eta 1/x$.

\subsection{A recurrence for $f(\eps,\rho,\sigma)$.} 
 
\noindent{\bf The setup.} As was mentioned in Section \ref{Subsec:RecursiveFramewk}, we first develop an efficient recurrence for the more general quantity $f(\eps,\rho,\sigma)$, with $0<\eps,\sigma,\rho\leq 1$. Recall that $f(\eps,\sigma,\rho)$ denotes the smallest possible number of points that would suffice to pierce {\it any} ``edge-constrained" family $
 \K=\K(P,\Pi,\eps,\sigma)$ which is determined by a finite point set $P$ in general position, and a subset $\Pi\subseteq {P\choose d}$ of at most $\rho{|P|\choose d}$ $(d-1)$-dimensional simplices.
 Unless stated otherwise, in what follows we keep using $\Pi_K$ to denote the sets ${P_K\choose d}\cap \Pi$ that are induced by the elements of $\K$. Hence, every family
 $\K=\K(P,\Pi,\eps,\sigma)$ under consideration is comprised of all such $\eps$-heavy convex sets $K\in \K(P,\eps)$ that satisfy $|\Pi_K|\geq \sigma{|P_K|\choose d}$.

In what follows, it can be assumed that $\eps>0$ is bounded from above by a sufficiently small absolute constant $\eps_0=\eps(\gamma,d)>0$, which is smaller than $1/100$.\footnote{This is done so as to guarantee that the arguments in all the recursive terms $a \cdot f(\eps\cdot b)$, with $b=b(\eps)\gg \log 1/\eps$, significantly increase with each invocation of the recurrence.}
 % ich, in particular,  guarantees that the following inequality holds\footnote{In the sequel $\log x$ denotes the binary logarithm $\log_2 x$.}
%\begin{equation}\label{Eq:SmallConstantEps}
%10^5\log \frac{1}{{\eps}}<\frac{1}{{\eps}^\eta}.
%\end{equation}
Otherwise, if $\eps\geq \eps_0$, we can use, e.g., the bound $f(\eps,\rho,\sigma)\leq f(\eps)=O\left(1/\eps_0^{d+1}\right)=O(1)$ due to Alon {\it et al.} \cite{AlonSelections}.
 
%\footnote{The latter terms $a\cdot f(\eps \cdot b)$ involve values $b\gg \log 1/\eps$ which are far larger than the constant restriction thresholds $\sigma$ in the recursive instances. %In addition to the fixed degrees of $\sigma$ that arise in the main recurrence, in the end of Section \ref{Sec:Surrounded} we  encounter a terminal instance $f(\eps',\rho',\sigma')$, with $\sigma'$ that is a very small, albeit fixed and positive, degree of $\eps$. This latter instance will be dispatched instantly, via Theorem \ref{Theorem:Sparse}, and the precise asymptotic dependence on $\sigma'$ will be spelled out.

\medskip
In addition to $\eps,\rho$ and $\sigma$, our recurrence formula for $f(\eps,\rho,\sigma)$ will involve five integer parameters $h,r,s,u$, and $t$, which depend on $\eps>0$ (and also on the constants $\gamma$ and $d$).
Specifically, $t$ is a fixed degree of $1/\eps$ which satisfies $\left\lceil 1/\eps^{d/2}\right\rceil\leq t\leq \left\lceil 1/\eps^{\alpha_d}\right\rceil$, and it will be
set in what follows to $(1/\eps)^{3d/4+o_d(1)}$ so as to ``optimize" the eventual recurrence for $f(\eps,\rho,\sigma)$.
The remaining four parameters, $h,r,s$ and $u$, are very small degrees of $1/\eps$ that satisfy

$$
h\lll_\eta r\lll_\eta s\lll_\eta u \lll_\eta 1/\eps. 
$$ 

\noindent To this end, we denote $\tilde{\eta}:=\eta/(100d^4\beta_{d-1})$, and set $h:=\left\lceil (1/\eps)^{\tilde{\eta}^4}\right\rceil$, $r:=\left\lceil (1/\eps)^{\tilde{\eta}^3}\right\rceil$, $s:=\left\lceil (1/\eps)^{\tilde{\eta}^2}\right\rceil$, and $u:=\left\lceil (1/\eps)^{\tilde{\eta}}\right\rceil$.

\medskip
As was explained in Section \ref{Subsec:RecursiveFramewk}, our recurrence launches with $f(\eps,1,1)=f_d(\eps)$, and has only constant depth in the second parameter $\rho$. 
Though the restriction threshold $\sigma>0$ may decrease in the course of this recurrence in $\rho$, for the rest of this paper it will be assumed that $\sigma$ is bounded from below by
a positive constant $\sigma_0$ that satisfies the inequality
$\sigma_0\leq 2^{-\lceil\log_h (1/\eps)\rceil}$ for any $\eps<1/100$. (The condition is clearly met by $\sigma_0\geq 2^{-cd^{40}/\eta^4}$, where $c>0$ denotes some absolute constant which does not depend on $\eta$ or $d$.)

% For the sake of brevity, asymptotic $O_\sigma(\cdot)$-notation will be used to suppress the fixed positive multiples of $1/\sigma$, along with fixed degrees of $\sigma$ within the arguments of the recursive terms $a\cdot f(\eps\cdot b)$, while indicating that such implicit factors may exist.

\medskip
It can, furthermore, be assumed that $\rho>\eps$ for, otherwise, Theorem \ref{Theorem:Sparse} yields the estimate
\begin{equation}\label{Eq:SparseBound}
	f(\eps,\rho,\sigma)\leq f(\eps,\eps,\sigma)=O\left(\frac{s^{(d-1)^2}}{\sigma^{\beta_{d-1}}\eps^{d-1}}+s\cdot f\left(\sigma^{\beta_{d-1}}s^{1/(d-1)}\cdot \eps\right)\right)=
\end{equation}
$$
=O\left(\frac{s^{(d-1)^2}}{\eps^{d-1}}+s\cdot f\left(s^{1/(d-1)}\cdot \eps\right)\right).
$$

To facilitate our analysis, we also introduce two fractional parameters $0<\varepsilon\leq 1$ and $0<\delta\leq 1$, which too depend on $\eps$. The first parameter $\delta$ is a very small, yet positive, degree of $\eps$ which is given by

\begin{equation}\label{Eq:Narrow}
	\delta=\delta(\eps):=\frac{1}{r^{d(d+1)+1}}.
\end{equation}

\noindent The second parameter $\varepsilon>0$ is only marginally smaller than $\eps$, and satisfies

\begin{equation}\label{Eq:Spread}
	\varepsilon:=\frac{\vartheta^2\sigma_0^{\beta_{d-1}}\delta}{10^6\cdot 2^{\beta_{d-1}}d^2(d+1){(d-1)d\choose d-1}r}\cdot \eps=\Omega\left(\eps^{1+\eta}\right),
\end{equation}

\noindent where $\vartheta>0$ is the constant in the Canonical Line-Selection Theorem \ref{Theorem:MultipleSelection}.

\medskip
 To bound the quantity $f(\eps,\rho,\sigma)$ for  $\eps<\min\{\rho,\eps_0\}$, we fix a finite point set $P$ in a general position in $\reals^d$, with a subset $\Pi$ of at most $\rho {P\choose d}$ simplices within ${P\choose d}$, and construct a small-size net $N$ for the induced ``edge-constrained" sub-family $\K=\K(P,\Pi,\eps,\sigma)$ within $\K(P,\eps)$. The cardinality of $N$ will then serve as an upper bound on $f(\eps,\rho,\sigma)$. 
 
 % in the terms of the quantities $f(\eps,\rho/r,\sigma/2)$ and $f(\eps')$, for $\eps'>\eps$. 

\smallskip 

In the rest of this proof, it can be assumed, with no loss of generality, that the ground point set $P$ is larger than a certain threshold $n_0(\eps)$ which satisfies
\begin{equation}\label{Eq:ManyPoints} 
n_0(\eps)=\max\left\{100\cdot s\cdot \lceil 1/\eps^{\alpha_d}\rceil,\frac{100ds}{\varepsilon}
%\frac{(400d^2+400d)s}{\eps\delta}
\right\}\geq 100st.
\end{equation}

\noindent If this is not the case, then our net $N=P$ for $\K$ consists of $|P|\leq n_0(\eps)=O\left(1/\eps^{\alpha_d+\eta}\right)=O\left(1/\eps^{\alpha_d+\gamma}\right)$ points. In the sequel, it can be explicitly assumed that $|\Pi|\geq \eps {n\choose d}$ or, else, such a family $\K$ is again pierced by a net whose maximum cardinality $f(\eps,\eps,\sigma)$ has been estimated in (\ref{Eq:SparseBound}).

%For $|P|>n_0(\eps)$, our net $N$ is overly based on the ideas sketched in Section \ref{Subsec:Overview}. 

%To facilitate its construction and subsequent analysis, we adopt the constant $\vartheta>0$ in accordance with the Canonical Line-Selection Theorem \ref{Theorem:MultipleSelection}, and another constant $\tilde{C}_\sigma>0$ which satisfies

%\begin{equation}\label{Eq:ConstantDelta}
%	\tilde{C}_\sigma=\frac{d^{d+1}\vartheta^{d+1}}{2^{(d+1)\beta_{d-1}}}\frac{C'\cdot \sigma^{d\beta_{d-1}}}{40^{3d+3}},
%\end{equation}
%\noindent where $C'>0$ denotes the constant in Lemma \ref{Lemma:ConvexProjSample}. 

%At the heart of our construction lies the $1$-dimensional reduction to a small number of $1$-dimensional nets, so that each net is restricted to a vertical line $\ell$, and is constructed with respect to the $\ell$-intercepts of certain simplices of ${P\choose d}$.

\bigskip
\noindent{\bf The vertical partition $\V(P,s)$, and the canonical line family $\L(P,s)$.} 
Since $|P|>n_0(\eps)\geq s$, the point set $P$ admits the vertical simplicial $s$-partition $\V(P,s)=\{(P_1,\Delta_1),\ldots,(P_s,\Delta_s)\}$, which has been described in Section \ref{Subsec:VerticalPartition}.
According to Theorem \ref{Theorem:MultipleSelection}, this partition $\V(P,s)$ is endowed with the $\vartheta$-canonical set $\L(P,s)$ of $O\left(s^{(d-1)^2}\right)$ lines. 

%Our key reduction to a small number of $1$-dimensional nets is underpinned by the $\vartheta$-canonical set $\L(P,s)$ of $O\left(s^{(d-1)^2}\right)$ lines, which
%exists for $\V(P,s)$ according to the Canonical Line-Selection Theorem \ref{Theorem:MultipleSelection}.
%, this partition $\V(P,s)$ determines the $\vartheta$-canonical set $\L(P,s)$ of $O\left(s^{(d-1)^2}\right)$ lines.

\medskip
\noindent{\bf Constructing the net $N$.} We say that a set $K\in \K(P,\eps)$ is {\it $\delta$-punctured} (resp., {\it $\delta$-hollow}) if its principal subset $P_K$ are $\delta$-punctured (resp., $\delta$-hollow) according to the definitions in Section \ref{Subsec:PrelimConfiguration}.

\medskip
Note that every convex sets $K$ in $\K=\K(P,\Pi,\eps,\sigma)$ belongs to one of the following three families.

%Notice that every element of $\K=\K(P,\Pi,\eps,\sigma)$ must belong to at least one of the following categories.

\begin{enumerate}
	\item[(i)] The first family $\N_{\varepsilon}=\N_\varepsilon(P,\eps)$ is comprised of all such sets $K\in \K(P,\eps)$ that are $\varepsilon$-narrow with respect to the vertical $s$-partition $\V(P,s)$ of $P$.
	\item[(ii)] The second family $\K_{>\delta}=\K_{>\delta}(P,\eps)$ consists of all such $\delta$-punctured sets $K\in \K(P,\eps)$ that are $\varepsilon$-spread with respect to $\V(P,s)$.
	\item[(iii)] The third family $\K_{\leq \delta}=\K_{\leq \delta}(P,\Pi,\eps,\sigma)$ consists of all such $\delta$-hollow sets $K\in \K$ that are $\varepsilon$-spread with respect to $\V(P,s)$.
\end{enumerate}

% Hence, our task comes down to constructing a separate net for each of these finer families $\hat{\K},\K_{>\delta}$, and $\K_{\leq \delta}$ within $\K(P,\eps)$.

In what follows, we obtain a separate net for each of these families.
Note that the first two of them, $\N_\varepsilon$ and $\K_{>\delta}$, may include additional $\eps$-heavy sets $K\in \K(P,\eps)$, which are not $(\eps,\sigma)$-restricted to $(P,\Pi)$. As a result, their respective nets will make no use of the subset $\Pi\subseteq {P\choose d}$.

\medskip
\noindent {\it 1. The net $N_\nar(P,s,\varepsilon)$ for $\N_\varepsilon$.}  Lemma \ref{Lemma:Narrow} yields a net $N_\nar(P,s,\varepsilon)$ whose cardinality satisfies
$$
|N_\nar(P,\eps,\varepsilon)|\leq s\cdot f\left(s^{1/(d-1)}\varepsilon\right)=s \cdot f\left(\eps \cdot \frac{s^{1/(d-1)}}{r^{d(d+1)+2}}\right)\ll s\cdot f\left(\eps \cdot s^{1/(d-1)-\eta}\right),
$$
\noindent and that pierces every convex set $K$ in $\N_\varepsilon$. 

\medskip

\medskip
\noindent{\it 2. The net $N_{>\delta}$ for $\K_{>\delta}$.} A small-size net $N_{>\delta}$ for the family $\K_{>\delta}$ is provided by the following statement, whose proof is postponed to Section \ref{Sec:VerticallyConvex}.

\begin{theorem}\label{Theorem:BoundPunct}
Let $d\geq 3$. Then one can fix a constant $d/2\leq \theta_d\leq \alpha_d$ so that, with the previous
 choice of the constant $\eta>0$, $0<\eps\leq 1$, $\delta=\delta(\eps)$, and the finite point set $P$ in $\reals^d$, the induced family $\K_{>\delta}=\K_{>\delta}(P,\eps)$ within $\K(P,\eps)$ can be pierced by a net $N_{>\delta}$ whose cardinality satisfies

$$
|N_{>\delta}|\leq \s^{1+\eta}\cdot f\left(\eps\cdot \s^{(1/\alpha_d)-\eta}\right)+{\u}^{1+\eta}\cdot f\left({\u}^{\frac{1}{d-1}-\eta}\cdot \eps\right)+O\left(1/\eps^{\alpha_d+\eta}\right).
$$

\noindent Here $t=\left\lceil (1/\eps)^{\theta_d}\right\rceil$, whereas $s$ and $u$ denote the previously selected (and small) degrees of $1/\eps$ which satisfy $s\lll_\eta u\lll_\eta 1/\eps$. %The implicit constants of proportionality in this bound may depend on $\eta=\eta(\gamma)$, and the dimension $d$.

\end{theorem}

 Let us spell out the key details of the rather involved proof of Theorem \ref{Theorem:BoundPunct}.
In view of Theorem \ref{Theorem:Sparse}, the key to an efficient recurrence lies in replacing the set ${P\choose d}$ with a {sparser} subset $\tilde{\Pi}$ that would still be sufficiently dense over all the principal subsets $P_K$ of the convex sets $K\in \K$ under our consideration. \footnote{Applying Theorem \ref{Theorem:Sparse} directly to $\K(P,\Pi,\eps,\sigma)$ would yield a sub-$(1/\eps^d)$-estimate only if the hypergraph $(P,\Pi)$ is already sufficiently sparse. Also note that the desired sparser subset $\tilde{\Pi}$ of ${P\choose d}$ need not necessary be a subset of the collection $\Pi$ which figures in the initial definition of our family $\K=\K(P,\Pi,\eps,\sigma)$. } 
 However, as was explained in Section \ref{Subsec:Overview}, such a subset of ${P\choose d}$ may not exist in absence of further restrictions on the {\it shape} of the sets in $\K$.

Specializing to the $\delta$-punctured sets $K\in \K_{>\delta}$, our case analysis will be facilitated by secondary {$d$-dimensional} simplicial partitions $\P_i=\P(P_i,t)$, which will be obtained via Theorem \ref{Theorem:Simplicial} for each part $P_i$ in the vertical partition $\V(P,s)$ of $P$. The fact that $s\lll_\eta t$ will guarantee that the overall partition $\P:=\bigcup_{i=1}^s \P_i$ is comprised of $O(s\cdot t)=O^*(t)$ parts, while each hyperplane crosses $O\left(st^{1-1/d}\right)=O^*\left(t^{1-1/d}\right)$ cells.
Specializing to the sets $K\in \K_{>\delta}$ that are flat with respect to $\P$ in the sense that was outlined in Section \ref{Subsec:Overview}, we will construct a family $\tilde{\Pi}\subseteq {P\choose d}$, of cardinality $O^*\left(\eps^{d-\alpha_d}n^{d}\right)$, and that contains a large fraction of every set ${P_K\choose d}$ that is induced by a flat convex set $K$. As a result, the flat convex sets will be pierced using Theorem \ref{Theorem:Sparse}.
All the ``left-out" $\delta$-punctured convex sets $K\in \K_{>\delta}$, which are either ``thick" with respect to the partition $\P$, or the points of their principal subsets $P_K$ are not in a sufficiently convex position, will dispatched via a combination of more elementary nets.

%
%, for each sufficiently flat (i.e., ``hyperplane-like") $K\in \K_{>\delta}$. Plugging $\Pi'$ into our $1$-dimensional reduction yields the desired sub-$1/\eps^d$-size recursive construction of cardinality 
%$\tilde{O}\left(|\Pi'|/\eps^d+s\cdot f\left(\eps\cdot s^{1/(d-1)}\right)\right)$.

%\begin{figure}
   % \begin{center}
    %\input{TightSet.pdf_t}        \caption{\small Characterizing a tight set $K$ in $\reals^3$: For an average $3$-set $S\in {P_K\choose 3}$, whose points come from the same part $Q_i$, the zone of $\aff(S)$ encompasses a large fraction of the points in $P_K$.}
   %     \label{Fig:TightSurrounded}
   % \end{center}
%\end{figure}

% Let us spell out the main ideas that underly the rather elaborate constructions of the nets $N_{\leq \delta}$ and $N_{>\delta}$. 
%However, instead of the vertical partition $\V(P,s)$ that was used to establish Theorem \ref{Theorem:Sparse},
%our divide-and-conquer for $\K_{\leq \delta}$ will involve a $d$-dimensional cutting of the set $\HH=\HH(\Pi)$ of the supporting hyperplanes of $\Pi$ (see Section \ref{Subsec:Arrangements}), whereas a similar treatment of $\K_{>\delta}$ will use the $d$-dimensional simplicial $t$-partition of Theorem \ref{Theorem:Simplicial}.

\medskip
\noindent {\it 3. The net $N_{\leq \delta}$ for $\K_{\leq \delta}$.}   Specializing to the remaining family $\K_{\leq \delta}$, which is comprised of all the $\delta$-hollow sets {within the ``edge-constrained" family $\K=\K(P,\Pi,\eps,\sigma)$}, in Section \ref{Sec:Surrounded} we obtain a small-size net whose properties are summarized in the following theorem.

\begin{theorem}\label{Theorem:BoundHollow} 
Let $d\geq 3$. Then with the previous choice of the constants $\eta$ and $\sigma_0$, of the fractions $0<\eps\leq 1$, $2\sigma_0\leq \sigma\leq 1$ and $\delta=1/r^{d(d+1)+1}$, and of the finite point set $P$ in $\reals^d$, along with $\Pi\subseteq {P\choose d}$, the induced family $\K_{\leq \delta}$
within $\K=\K(P,\Pi,\eps,\sigma)$ can be pierced by a net $N_{\leq \delta}$ whose cardinality satisfies

$$
|N_{\leq \delta}|\leq f(\eps,\rho/h,\sigma/2)+f\left(\eps\log r\right)+%s\cdot f\left(\eps \cdot s^{1/(d-1)-\gamma'}\right)+
$$

$$
+O\left(r^{\alpha_d+\eta}\cdot  f\left(\eps\cdot r^{1-\eta}\right)+\sum_{i=1}^l 2^iw^{\alpha_d+\eta}\cdot f\left(\eps \cdot 2^i w\right)+\frac{1}{\eps^{\alpha_d+\eta}}\right),
$$

\noindent Here $r$ and $s$ denote the previously selected and small degrees of $1/\eps$ which satisfy $r\lll_\eta s\lll_\eta 1/\eps$, $w=r^{\sqrt{d^2-2d}-d+2}$, and $l$ is a positive integer that satisfies $l=O(\log r)$. %The implicit constants of proportionality in this bound may depend on $\eta=\eta(\gamma)$, and the dimension $d$.

\end{theorem}

The case analysis in the proof Theorem \ref{Theorem:BoundHollow} will overly resemble that of Theorem \ref{Theorem:BoundPunct}, except for replacing Matou\v{s}ek's simplicial partitions $\P_i$ with a $\Theta\left(\frac{\log r}{r}\right)$-cutting $\D$ of the family $\HH=\HH(\Pi)$, which consists of all the hyperplanes that support the simplices of $\Pi$. As was explained in Section \ref{Subsec:Arrangements}, such a cutting can be obtained by triangulating the random arrangement $\A(\R)$ which is built over some $r$-sample $\R\subseteq \HH$.

Note, though, that removing the relatively ``thick" convex sets $K\in \K_{\leq \delta}$ is no longer enough to attain an even a weaker variant of flatness for the $\delta$-hollow sets $K$ at hand. Instead, we will revert to a simple recurrence in $P$ (and in $\eps$) that outlined in the beginning of Section \ref{Subsec:RecursiveFramewk}. An improved ``$O^*(1/\eps^{\alpha_d})$-type" recurrence for this special case will then stem from the following property: every sufficiently ``slim" set $K\in \K_{\leq \delta}$ can be assigned a so called {\it principal hyperplane} $H_K\in \HH(\Pi)$ whose zone within our cutting of $\HH(\Pi)$ encompasses a fixed fraction of the points of $P_K$. Together with the $\delta$-hollowness of $P_K$, this will guarantee that $\Omega(\eps n)$ of the points in $P_K$ must lie in the vicinity of the relative boundary of some {$(d-1)$-dimensional} convex polytope $T_K\subset H_K$. As a result, their distribution within $\A(\R)$ is close to {\it $(d-2)$-dimensional}. Therefore, such sets $K\in \K_{\leq \delta}$ can be relegated to recursive sub-problems whose ground subsets are cut out by the cells of the random arrangement $\A(\R)$.

\medskip
The complete net $N$ for the family $\K=\K(P,\Pi,\eps,\sigma)$ is, therefore, given by the combination

\begin{equation}
N:=N_\nar(P,s,\varepsilon)\cup N_{>\delta}\cup N_{\leq \delta},
\end{equation}

\noindent whose cardinality satisfies

\begin{equation}\label{Eq:NetDensity}
	|N|\leq f(\eps,\rho/h,\sigma/2)+F(\eps),
\end{equation}

\noindent with

$$
	F(\eps)=O\left(f(\eps\cdot \log r)+\r\cdot f\left(\eps \cdot \r^{\frac{1}{d-1}-\eta}\right)+
\s^{1+\eta}\cdot f\left(\eps\cdot \s^{(1/\alpha_d)-\eta}\right)+{\u}^{1+\eta}\cdot f\left({\u}^{\frac{1}{d-1}-\eta}\cdot \eps\right)\right)+
$$

$$
+O\left(r^{\alpha_d+\eta}\cdot  f\left(\eps\cdot r^{1-\eta}\right)+\sum_{i=1}^l 2^iw^{\alpha_d+\eta}\cdot f\left(\eps \cdot 2^i w\right)+1/\eps^{\alpha_d+\eta}\right).
$$

\noindent Here $t=\left\lceil 1/\eps^{\theta_d}\right\rceil$ is an integer that depends only on $\eps$, and satisfies $\left\lceil1/\eps^{d/2}\right\rceil\leq t\leq \left\lceil 1/\eps^{\alpha_d}\right\rceil$, $w=r^{\sqrt{d^2-2d}-d+2}$, and $l$ is a positive integer that satisfies $l=O(\log r)$. As was previously mentioned, the same upper bound must also hold for the quantity $f(\eps,\rho,\sigma)$, which thus satisfies

\begin{equation}\label{Eq:RecurseDensity}
f(\eps,\rho,\sigma)\leq f(\eps,\rho/h,\sigma/2)+F(\eps).
\end{equation}

\subsection{Solving the recurrence}

 To obtain a complete recurrence formula for $f(\eps)$, we begin with $f(\eps)=f(\eps,1,1)$ and invoke the bound (\ref{Eq:RecurseDensity}) $j_0=j_0(\eps,\eta):=\lceil \log_h(1/\eps)\rceil=O\left(d^{40}/\eta^4\right)$ times, each time only to the first term of the general form $f(\eps,\rho',\sigma')$, while leaving all the remaining recursive terms $f(\eps')$ intact. Since each iteration reduces the density $\rho$ by a factor of $h$, and $\sigma$ by a factor of $2$, this yields

\begin{equation}\label{Eq:IterateDensity}
	f(\eps)\leq f\left(\eps,1/h^{j_0},\sigma_0\right)+\sum_{i=0}^{j_0-1}F\left(\eps\right),
\end{equation}

\noindent where $\sigma_0>0$ denotes the aforementioned constant $2^{-cd^{12}/\eta^4}$ which satisfies $\sigma_0\leq 1/2^{j_0}$. Applying Theorem \ref{Theorem:Sparse} to the first term $f\left(\eps,1/h^{j_0},\sigma_0\right)$ of (\ref{Eq:IterateDensity}), with $\rho\leq 1/h^{j_0}\leq \eps$, yields 

\begin{equation}\label{Eq:FinalRecurrence}
	f(\eps)\leq j_0\cdot F(\eps)+O\left(\frac{s^{(d-1)^2}}{\eps^{d-1}}+s\cdot f\left(s^{1/(d-1)}\cdot \eps\right)\right)=
\end{equation}

$$
=O\left(f(\eps\cdot \log r)+\r\cdot f\left(\eps \cdot \r^{\frac{1}{d-1}-\eta}\right)+
\s^{1+\eta}\cdot f\left(\eps\cdot \s^{(1/\alpha_d)-\eta}\right)+{\u}^{1+\eta}\cdot f\left({\u}^{\frac{1}{d-1}-\eta}\cdot \eps\right)\right)
$$

$$
+O\left(r^{\alpha_d+\eta}\cdot  f\left(\eps\cdot r^{1-\eta}\right)+\sum_{i=1}^l 2^iw^{\alpha_d+\eta}\cdot f\left(\eps \cdot 2^i w\right)+1/\eps^{\alpha_d+\eta}\right),
$$

\noindent with implicit constants of proportionality that depend only on $\eta=\eta(\gamma)>0$, and the dimension $d\geq 3$.

Recall that the recurrence in $\eps>0$ bottoms out as soon as $\eps$ by-passes a certain upper threshold $\eps_0>0$. 
Since $\eta$ is much smaller than $\gamma$, fixing a sufficiently small constant threshold $\eps_0=\eps_0(d,\gamma)$, and following the standard induction argument which applies to all recurrences of this type (see, e.g., \cite{Envelopes3D,FOCS18,EnvelopesHigh} and \cite[Section 7.3.2]{SA}), yields

\begin{equation}\label{Eq:FinalFinal}
	f(\eps)\leq \frac{a_0}{\eps^{\alpha_d+\gamma}},
\end{equation}

\noindent where $a_0\geq f(\eps_0)$ is a suitably large constant which too depends on $d$ and $\gamma>0$.

For the sake of completeness, let us spell out some key details of this generic induction. We first fix a small enough auxiliary threshold $\eps'_0=\eps'_0(d,\gamma)>0$ so that all the recursive terms $f(\eps')$ in the right hand side of (\ref{Eq:FinalRecurrence}) involve arguments that satisfy $\eps'>2\eps$ whenever $\eps\leq \eps'_0$.\footnote{Since we have that $f(\eps)\leq f(\eps')\leq f(\eps/2)$ for all pairs $0<\eps',\eps\leq 1$ that satisfy $0<\eps/2\leq \eps'\leq \eps\leq 1$, condition (ii) guarantees that the value of $\eps$ increases with each invocation of (\ref{Eq:FinalRecurrence}) and, furthermore, the induction in $\eps$ can be restricted to $\eps=1/2^j$, with $j\in {\mathbb N}$.} Since the inequality (\ref{Eq:FinalFinal}) trivially holds for $\eps_0<\eps\leq 1$ whenever $a_0\geq f(\eps_0)$, it suffices to choose the constant threshold $\eps_0\leq \eps'_0(d,\gamma)$ that would facilitate the induction step for all the smaller values $\eps<\eps_0$. To this end, plugging the desired induction assumption (\ref{Eq:FinalFinal}), with so far unknown $a_0$, into (\ref{Eq:FinalRecurrence}), we obtain
%, so as to replace the recursive terms $f(\eps\cdot \log r)$, $f\left(\eps \cdot s^{1/(d-1)}\right)$, $f\left(\eps\cdot t^{(1/\alpha_d)-\eta}\right)$, $f\left(\eps \cdot u^{1/(d-1)}\right)$, $f\left(\eps\cdot r^{1-\eta}\right)$, and $f(\eps\cdot 2^iw)$. 

$$
f(\eps)\leq \frac{a_0}{\eps^{\alpha_d+\gamma}}	\cdot J(\eps),
$$

\noindent where
$$
J(\eps)\leq a_1\cdot \frac{1}{\log^{\alpha_d+\gamma} r}+a_2\cdot \frac{s}{s^{(\alpha_d+\gamma)\left(\frac{1}{d-1}-\eta\right)}}+a_3\cdot \frac{t^{1+\eta}}{t^{(\alpha_d+\gamma)\left(\frac{1}{\alpha_d}-\eta\right)}}+
$$

$$
+a_4\cdot\frac{u^{1+\eta}}{u^{(\alpha_d+\gamma)\left(\frac{1}{d-1}-\eta\right)}}+a_5\cdot\frac{r^{\alpha_d+\eta}}{r^{(\alpha_d+\gamma)(1-\eta)}}+a_6\cdot\sum_{i=1}^l\frac{2^i w^{\alpha_d+\eta}}{2^{i\cdot (\alpha_d+\gamma)}w^{\alpha_d+\gamma}}+a_7\cdot \eps^{\gamma-\eta},
$$

\noindent and $a_1,\ldots,a_7$ are positive constants that do not depend on the choice of $\eps_0\leq \eps'_0(d,\gamma)$ or $a_0$. Using that (i) $s,t,u,r$ and $w$ are fixed a positive degrees of $1/\eps$, (ii) $\gamma<1/100$,  and (iii) $\eta\leq \gamma^{2d}/100$, a sufficiently small choice of $0<\eps_0\leq \eps'$ yields the inequality $J(\eps)\leq 1$ for all $\eps\leq \eps_0$.
Hence, the bound (\ref{Eq:FinalFinal}) holds, with $a_0=f(\eps_0)$, for all $0<\eps\leq 1$, which concludes the proof of Theorem \ref{Thm:Main}. $\Box$

\section{Piercing the $\delta$-punctured sets -- proof of Theorem \ref{Theorem:BoundPunct}}\label{Sec:Surrounded}
We stick with $0<\eps\leq 1$, the auxiliary integer parameters $r,s,u$, which have been defined in Section \ref{Sec:MainRecurrence} as very small yet positive degrees of $1/\eps$ which satisfy $r\lll_\eta s\lll_\eta u\lll_\eta 1/\eps$, and the $n$-point set $P$ within $\reals^d$, which satisfies $n=|P|\geq n_0(\eps)\geq 100s\lceil 1/\eps^{\alpha_d}\rceil\geq  100st$.  
The proof of Theorem \ref{Theorem:BoundPunct} will be facilitated by the underlying vertical simplicial partition $\V(P,s)$, and the $\vartheta$-canonical family $\L(P,s)$ of $O\left(s^{(d-1)^2}\right)$ vertical lines, whose choice is too described in Section \ref{Sec:MainRecurrence}.%, to which we will make use of the fact that every set $K\in \K_{>\delta}$ is not only $\delta$-punctured but also $\varepsilon$-spread in $\V(P,s)$.
 %Hence, all the recursive terms in our bound on $|N_{>\delta}|$ will be of the simpler form $f\left(\eps'\right)$, with $\eps'>\eps$.

To establish Theorem \ref{Theorem:BoundPunct}, we fix the exponent $d/2\leq \theta_d\leq \alpha_d$, along with the integer parameter $t=\lceil(1/\eps)^{\theta_d}\rceil$, which will be selected in the end of this section, and describe a small-size net $N_{>\delta}$ for the family $\K_{>\delta}$, which is comprised of all such $\delta$-punctured sets $K\in \K(P,\eps)$ that are $\varepsilon$-spread with respect to $\V(P,s)=\{(P_1,\Delta_1),\ldots,(P_s,\Delta_s)\}$. 
The construction of $N_{>\delta}$ begins with an empty set, and proceeds through four steps.

%Recall that, for each of these sets $K\in \K_{>\delta}$, its principal subset $P_K$ encompasses more than $\delta{\lceil\eps n\rceil \choose d+1}$ punctured $(d+1)$-size subsets.
%The crucial property of such sets, which will be 
%is the existence of a $(d+1)$-tuple $(P_{j_0},\ldots,P_{j_{d}})$ of sets, within the underlying vertical simplicial partition $\V(P,s)$ of $P$, so that (i) $P_{j_0}$ is surrounded by the remaining $d$ sets, and (ii) each set $P_{j_i}$ contains $\eps^{1+o(1)}n$ points of $P_K$. In what follows, every set $K\in \K_{>\delta}$ will be assigned a unique such $(d+1)$-tuple $\psi(K)=(P_{j_0},\ldots,P_{j_{d}})$, to be referred to as the {\it principal $(d+1)$-tuple of $K$}.
In the preparatory ``$0$-step", we establish the following crucial property of the $\delta$-punctured sets $K\in \K_{>\delta}$: each of them can be assigned an ordered $(d+1)$-tuple $\Psi=(P_{j_0},\ldots,P_{j_{d}})$ of subsets within $\V(P,s)$, so that (i) $P_{j_0}$ is surrounded by the remaining $d$ subsets $P_{j_i}$, with $1\leq i\leq d$, and (ii) each subset $P_{j_i}$ in $\Psi$ contains $\Omega(\delta\eps n/s)\gg \eps^{1+\eta}n$ points of $P_K$. 
In what follows, every convex set $K\in \K_{>\delta}$ will be assigned a unique such $(d+1)$-tuple $\Psi(K)=(P_{j_0},\ldots,P_{j_{d}})$, which will be referred to as the {\it principal $(d+1)$-tuple of $K$}.

In each of the subsequent three steps $1\leq i\leq 3$, we will define, and immediately add to $N_{>\delta}$, a new auxiliary net $N_{>\delta,i}$. Once a set $K\in \K_{>\delta}$ is pierced by $N_{>\delta,i}$, it is immediately removed from further consideration, whereas the surviving sets of $\K_{>\delta}$ are passed on to the subsequent steps. 
Upon the completion of the second step, every remaining set $K\in \K_{>\delta}$ will be sufficiently flat in the sense that was outlined in Section \ref{Subsec:Overview}. With these preparations, in the third step we will at last construct a sparse subset $\tilde{\Pi}\subseteq {P\choose d}$ that is sufficiently dense over every remaining $\lceil \eps n\rceil$-subset $P_K$ that is determined by some $K\in \K_{>\delta}$, so the third net $N_{>\delta,3}$ will be obtained through a straightforward application of Theorem \ref{Theorem:Sparse}.

%Hence, our divide-and-conquer treatment of $\K_{>\delta}$ combines the following key ingredients: 
%\begin{enumerate}
	%\item Recursive nets $N(P',\eps')$, of cardinality $|N(P',\eps')|\leq f(\eps')$, which we construct for locally induced sub-families $\K(P',\eps')$, with $P'\subset P$ and $\eps'>0$.
%These subsets $P'\subset P$ will be obtained by further subdividing the subsets $P_i$ of our vertical decomposition $\V(P,s)$, which was defined in Section \ref{Sec:MainRecurrence}.
	%\item Explicit $1$-dimensional nets $N_\ell$ with respect to point sets that are restricted to a handful of canonical lines. %Each of these lines belongs either to the canonical set $\L(P,s)$, which we construct using $\V(P,s)$, or to one of the ``secondary" canonical sets $\L(P_i,u)$ which we construct for each part $P_i$ in the vertical partition $\V(P,s)$. 
	%\item The strong $\eps'$-nets of Theorem \ref{Thm:StrongNet} with respect to convex $O(d)$-hedra, and with $\eps^d \ll \eps'\ll \eps$.
%\end{enumerate}

%, so that 1. every part $P_{j_i}$ encompasses sufficiently many points of $P_K$, and 2. the ambient simplex $\Delta_{j_0}$ of $P_{j_0}$ is surrounded by such simplices $\Delta_{j_1},\ldots,\Delta_{j_d}$ of the remaining $d$ parts $P_1,\ldots,P_{d+1}$.

\subsection{Step 0: Fixing a principal $(d+1)$-tuple}

\begin{lemma} \label{Lemma:CharacteristicTuple} %Let $0<\eps,\delta\leq 1$ and
%$$
%\varepsilon\leq \frac{\delta\eps}{32d^2(d+1){(d-1)d\choose d-1}}.
%$$

%\noindent Let $P$ be a set of $n$ points in general position in $\reals^d$ and $K\in \K(P,\eps)$ be a $\delta$-punctured convex set that is $\varepsilon$-spread in the vertical $s$-partition $\V(P,s)=\{(P_i,\Delta_i)\mid 1\leq i\leq s\}$ of $P$. Then 

Let $K\in \K_{>\delta}$. Then the partition $\V(P,s)=\{(P_1,\Delta_1),\ldots,(P_s,\Delta_s)\}$ contains an ordered $(d+1)$-tuple $\Psi=\left(P_{j_0},\ldots,P_{j_{d}}\right)$ with the following properties (see Figure \ref{Fig:Characteristic} (left)):
\begin{itemize}
\item[(P1)] The ambient cell $\Delta_{j_{0}}$ of $P_{j_{0}}$ is surrounded by the respective ambient cells $\Delta_{j_i}$ of the remaining $d$ sets $P_{j_i}$, for $1\leq i\leq d$. 

\item[(P2)] For all $0\leq i\leq d$, we have that 
$$
|P_K\cap P_{j_i}|\geq \frac{\delta|P_K|}{(4d+4)s}
$$

\end{itemize}
\end{lemma}

%  The easy proof of Theorem \ref{Theorem:CharacteristicTuple} proceeds by observing that the bulk of the punctured $(d+1)$-sets $\{p_0,\ldots,p_{d}\}\in {P_K\choose d+1}$ in a $\varepsilon$-spread and $\delta$-punctured set $K$ come from the $(d+1)$-tuples $P_{j_0},\ldots,P_{j_{d}}\in \V(P,s)$ (so that $p_{i}\in P_{j_i}$) whose ambient simplices $\Delta_{j_0},\ldots,\Delta_{j_{d}}$ are distinct and {\it homogeneous} in the sense of Fox {\it et al.} \cite{Regularity, Overlap}. Namely, the orientation of $(x_0,\ldots,x_{d})\in \Delta_{j_0}\times \ldots\times \Delta_{j_{d}}$ is invariant in the choice of $x_0\in \Delta_{j_0},\ldots,p_{d}\in \Delta_{j_{d}}$, so that $x_0\in \conv\left(x_1,\ldots,x_d\right)$ holds for each such ordered choice of representatives.

\begin{proof} 
Let us begin with introducing a few handy notations. Recall that, according to our definition in Section \ref{Subsec:PointConfig}, a collection of $k\geq d$ $(d-1)$-dimensional simplicial cells $\{\Delta_{i_1},\ldots,\Delta_{i_k}\}$ in $\V(P,s)$, for $1\leq j_1\neq j_2\neq \ldots\neq j_k\leq \r$, is {\it separated} if no $d$ of its members can be crossed by a single $(d-2)$-plane in $\reals^{d-1}$. 
%(Specializing to such families of at least $d$ compact convex sets in $\reals^{d-1}$, this definition of separation coincides with one provided in Section \ref{Subsec:PointConfig}, and is equivalent to the  following property:  every $k$ representatives $x_1\in \Delta_{i_1},\ldots,x_k\in \Delta_{i_k}$ are in a general position within $\reals^{d-1}$, and their order type is invariant in the choice of $x_j\in \Delta_{i_j}$; see, e.g., \cite{Cappell,Overlap,SelectionsSODA}.) 

 We then say, for $k\geq d$, that a $k$-size subset of points $\{p_1,\ldots,p_k\}\in {P_K\choose k}$ is {\it good} if their ambient cells $\Delta(p_i)$ in the vertical partition $\V(P,s)$ are all distinct, and form a separated collection $\{\Delta(p_1),\ldots,\Delta(p_k)\}$. In any other case, the $k$-size subset $\{p_1,\ldots,p_k\}$ will be called {\it bad}.

 For the rest of this proof, it can be assumed that $\lceil \eps n\rceil\geq 2d+2$ for, otherwise, there are no $\varepsilon$-spread convex sets in $\K_{>\delta}$, and there is nothing to prove. 
 
 \medskip
 Our proof of Lemma \ref{Lemma:CharacteristicTuple} relies on the following property.

%In particular, $\{p_1,\ldots,p_k\}\in {P_K\choose d}$ is degenerate if we have $\Delta(p_i)=\Delta(p_j)$ for any $1\leq i\neq j\leq k$.

\begin{figure}
    \begin{center}      
        \input{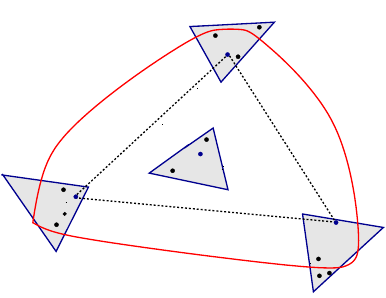_t}\hspace{2cm}\input{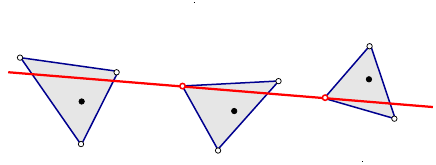_t}
        \caption{\small Left: Lemma \ref{Lemma:CharacteristicTuple} in dimension $d=3$. The $4$-tuple $\Psi=\left(P_{j_0},P_{j_1},P_{j_2},P_{j_3}\right)$ is depicted in the vertical projection. We have that $|P_K\cap P_{j_i}|\geq \frac{\delta|P_K|}{(4d+4)s}$. The cell $\Delta_{j_0}$ is surrounded by the cells $\Delta_{j_1},\Delta_{j_2}$, and $\Delta_{j_3}$: for each selection $x_i\in \Delta_{j_i}$ we have that $x_0\in \conv(x_1,x_2,x_3)$. Right: Proof of Lemma \ref{Lemma:Regularity} in $\reals^3$. The depicted bad triple $A=\{p_1,p_2,p_3\}$ within $P_K$ is labeled with $(B,g)$, where $B=\{p_2,p_3\}$, and the 1-plane (i.e., line) $g$ passes through a pair of vertices of $\Delta(p_1)$ and $\Delta(p_2)$, and crosses $\Delta(p_3)$.}
        \label{Fig:Characteristic}
    \end{center}
\end{figure}

\begin{lemma}\label{Lemma:Regularity}
There exist at most $(\delta/(4(d+1)){|P_K|\choose d}$ bad $d$-size subsets in ${P_K\choose d}$.
Hence, there exist at most $(\delta/4){|P_K|\choose d+1}$ bad $(d+1)$-subsets in ${P_K\choose d+1}$.
\end{lemma}
\begin{proof}[Proof Lemma \ref{Lemma:Regularity}.]
Clearly, the second property follows from the first one, as any bad $(d+1)$-tuple in ${P_K\choose d+1}$ of points can be ``charged" to a bad $d$-tuple in ${P_K\choose d}$, whereas any bad $d$-tuple is part of at most $|P_K|-d$ bad $(d+1)$-tuples, for a total of at most
$$
\left(|P_K|-d\right)\left(\frac{\delta}{4(d+1)}\right)\frac{|P_K|\cdot(|P_K|-1)\ldots(|P_K|-d+1)}{d!}
$$

\noindent such bad $(d+1)$-tuples in ${P_K\choose d+1}$.

\smallskip
Assume for a contradiction that there exist more than $\frac{\delta}{4(d+1)}{P_K\choose d}$ bad subsets in ${P_K\choose d}$. Let $\B_K\subseteq {P_K\choose d}$ denote the collection of all such bad $d$-size subsets in ${P_K\choose d}$. Fix any $A\in \B_K$.
Then there must be a $(d-2)$-plane $g$ within $\reals^{d-1}$ that crosses every simplicial cell in the family $\Gamma_A:=\{\Delta(p_1),\ldots,\Delta(p_d)\}$, whose cardinality may be smaller than $d$. In particular, if $|\Gamma_A|\leq d-1$, so that $\Delta(p_j)=\Delta(p_{j'})$ for some $1\leq j\neq j'\leq d$, then {\it every} $(d-2)$-plane $g$ through $A^\perp\setminus \{p^\perp_i\}$ meets every simplex in $\Sigma$.

Applying Lemma \ref{Lemma:ExtremalHyperplane} to this family $\Gamma_A$, which is comprised of at least one $(d-1)$-dimensional simplicial cell in $\reals^{d-1}$, yields a $(d-1)$-subset $B\in {A\choose d-1}$,  and a $(d-2)$-plane transversal $g$ to $\Gamma_A$ that passes through some $d-1$ boundary vertices of the cells in the family $\Gamma_B=\{\Delta(p)\mid p\in B\}$. See Figure \ref{Fig:Characteristic} (right).
We thus label $A$ with such a pair $(B,g)$, and repeat the above argument to all the (at least) 
$\frac{\delta}{4(d+1)}{\lceil\eps n\rceil\choose d}$ such
bad $d$-size sets $A\in \B_K$. Since any $(d-1)$-set $B\in {P_K\choose d-1}$ participates in at most ${(d-1)d\choose d-1}$ of the labels $(B,g)$, by the pigeonhole principle, and the prior choice (\ref{Eq:Narrow}) of the threshold $\varepsilon$, there must be such a label $(B,g)$ that is shared by at least 
$$
\frac{\delta{\lceil \eps n\rceil \choose d}}{4(d+1){(d-1)d\choose d-1}{\lceil \eps n\rceil \choose d-1}} \geq \varepsilon n
$$ 
of the degenerate $d$-size sets $A\in \D_K$. The desired contradiction follows because each of these degenerate $d$-sets $A$ contributes a distinct point $p_A\in A\setminus B$ whose ambient cell $\Delta\left(p_A\right)$ is crossed by $H$, for a  total of at least $\varepsilon n$ of such points. Hence, $K$ must be $\varepsilon$-narrow in $\V(P,s)$ and, thereby, have previously been removed from consideration.
\end{proof}

\noindent {\it Back to the proof of Lemma \ref{Lemma:CharacteristicTuple}.} Since $K$ is $\delta$-punctured, the family ${P_K\choose d+1}$ must contain a sub-family $\E_K$ of at least $\delta{|P_K|\choose d+1}$ punctured $(d+1)$-subsets $A$.
By the previous Lemma \ref{Lemma:Regularity}, at most $(\delta/4){|P_K|\choose d+1}$ of these $(d+1)$-subsets in $\E_K$ can be bad. 

\medskip
Let
$$
\V':=\left\{P_i\mid 1\leq i\leq \r, |P_i\cap P_K|\leq \frac{\delta}{4\r(d+1)} |P_K|\right\}.
$$

\noindent Since the sets of $\V'$ encompass at total of at most $\delta |P_K|/(4d+4)$ points of $P_K$,
at most $(\delta/4){|P_K|\choose d+1}$ of the $(d+1)$-size sets $A\in \E_K$ can include a point of $P_K\cap \left(\bigcup \V'\right)$.

In particular, there must remain at least one good $(d+1)$-set $A=\{p_0,\ldots,p_{d}\}$ in $\E_K$, so that $A\subset \bigcup\left(\V(P,s)\setminus\V'\right)$.\footnote{With some abuse of notation, $\V(P,s)$ is treated as the collection $\{P_1,\ldots,P_\r\}$ of its parts.} Assume with no loss of generality that $p_{0}^\perp\in \conv\left(p_1^\perp,\ldots,p_d^\perp\right)$. Since the simplices $\Delta(p_0),\ldots,\Delta(p_d)$ are separated in $\reals^{d-1}$, and $\Delta(p_0)$ clearly intersects $\conv\left(\bigcup_{i=1}^d \Delta(p_i)\right)$ at the point $p_0$, it follows via Corollary \ref{Corol:Surrounded} that $\Delta\left(p_{0}\right)$ must be surrounded by the cells $\Delta\left(p_{i}\right)$, for $1\leq i\leq d$. 
Furthermore, as $A$ is contained in $\bigcup\P(s)\setminus \V'$, the respective $d+1$ subsets $P_{j_i}\ni p_i$ of $\V(P,s)$ are distinct and satisfy $|P_{j_i}|\geq \delta |P_K|/((4d+4)\r)$ for all $0\leq i\leq d$. Hence, the $(d+1)$-tuple $(P_{j_0},\ldots,P_{j_{d}})$ meets both criteria (P1) and (P2), which completes the proof of Lemma \ref{Lemma:CharacteristicTuple}. 
\end{proof}

\noindent{\bf Definition.} We assign to each $K\in \K_{>\delta}$ a {\it unique} ordered $(d+1)$-tuple $\Psi(K)=(P_{j_0},\ldots,P_{j_d})$ of subsets in $\V(P,s)$ that meet the criteria (P1) and (P2), and refer to $\Psi(K)$ as the {\it principal $(d+1)$-tuple} of $K$ in $\V(P,s)$.

\subsection{Step 1}\label{Subsec:MainPartition}\label{Subsec:NetSurrounded}

To facilitate the subsequent divide-and-conquer treatment of the family $\K_{>\delta}$, let us first refine the underlying vertical simplicial $s$-partition $\V(P,s)=\{(P_1,\Delta_1),\ldots,(P_s,\Delta_s)\}$ of $P$ by the means of two secondary (and mutually independent) sub-partitions $\P(P,s,t)$ and $\W(P,s,u)$.  %We then introduce the first net $N_{>\delta,1}$, whose inclusion in $N_{>\delta}$ will guarantee that all the remaining sets (that are missed by $N_{>\delta,1}$)
Note that, in the view of the previous assumption that $n\geq n_0(\eps)$, each non-empty part $P_i$ in $\V(P,s)$ satisfies $|P_i|\geq \lceil |P|/s\rceil\geq 10t\ggg_\eta u$.

\medskip
\noindent{\bf The secondary simplicial partition $\P(P,s,t)$.} 
 We invoke the simplicial partition of Theorem \ref{Theorem:Simplicial} for each part $P_i\in \V(P,s)$ so as to obtain the secondary $d$-dimensional simplicial $t$-partition $\P\left(P_i,t\right)=\{(P_{i,j},\varphi_{i,j})\mid 1\leq j\leq \s\}$. See Figure \ref{Fig:Secondary} (left). We then denote
$$
\P(P,s,t):=\bigcup_{i=1}^s \P\left(P_i,t\right)=\{\left(P_{i,j},\varphi_{i,j}\right)\mid 1\leq i\leq s,1\leq j\leq t\}.
$$ 

For each point $p\in P_i$, its {\it ambient cell $\varphi(p)$ in $\V(P,s,t)$} is defined as the ambient cell of $p$ in the secondary simplicial partition $\P(P_i,s)$ of $P_i$.

\begin{figure}
    \begin{center}
    \input{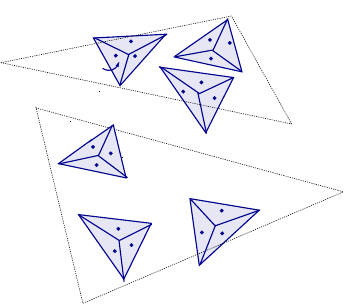_t} \hspace{2cm}\input{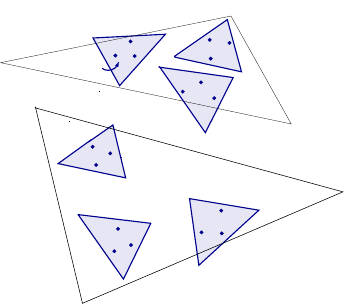_t}       \caption{\small Left: The secondary partition $\P(P,s,t)$ of $P$ in $\reals^3$ -- view from above. 
    Each subset $P_i$ in $\V(P,s)$ is subdivided into pairwise disjoint parts $P_{i,j}$ of size at most $2\lceil|P_i|/t\rceil$, each of them enclosed by a cell $\varphi_{i,j}$ in $\reals^3$. Right: The secondary vertical partition $\W(P,s,u)$, which is obtained by constructing the vertical simplicial partition $\V(P_i,u)$ for each part $P_i$.}
      \label{Fig:Secondary}
    \end{center}
\end{figure}

\medskip
\noindent{\bf The secondary vertical partition $\W(P,s,u)$.}  We apply the vertical simplicial partition of Section \ref{Subsec:VerticalPartition} to each part $P_i\in \V(P,s)$, to obtain an auxiliary vertical $u$-partition $\V(P_i,u)=\{(W_{i,k},\Delta'_{i,k})\mid 1\leq k\leq u\}$ as detailed in Section \ref{Subsec:Narrow}. See Figure \ref{Fig:Secondary} (right). We then set

$$
\W(P,s,u):=\bigcup_{i=1}^s \V(P_i,u)=\{\left(W_{i,k},\Delta'_{i,k}\right)\mid 1\leq i\leq s,1\leq k\leq u\}.
$$

 For each point $p\in P_i$, the {\it ambient cell $\Delta'(p)\subset \reals^{d-1}$ in $\W(P,s,u)$} is accordingly defined as the ambient cell of $p$ in the secondary vertical partition $\V(P_i,s)$ of $P_i$.

\medskip
\noindent {\bf The net $N_{>\delta,1}$.} Similar to the recursive net $N_\nar(P,s,\varepsilon)$ in Section \ref{Subsec:Narrow}, the net $N_{>\delta,1}$ is a combination of several recursively defined nets, which are constructed over ground sets $P_{i,j}$ (resp., $W_{i,k}$) are cut out by the subsets in the secondary partitions $\P(P,s,t)$ and $\W(P,s,u)$. Their inclusion in $N_{>\delta}$ will guarantee, for all of the remaining sets $K\in \K_{>\delta}$, that the points of their principal subsets $P_K$ are sufficiently well-distributed among the parts of $\P(P,s,t)$ and $\W(P,s,u)$.

\medskip
To this end, let

\begin{equation}\label{Eq:1stEpsPunct}
	\epsilon_1:=\cdot\frac{\delta\eps}{st^{1-1/\alpha_d}}.
\end{equation}

\noindent The net $N_{>\delta,1}$ is comprised of the following ingredients:

\begin{enumerate}
	\item For each $1\leq i\leq s$, and each $1\leq j\leq t$, we construct the net $N\left(P_{i,j},\epsilon_1\cdot st\right)$, which pierces the recursive sub-family $\K\left(P_{i,j},\eps_1\cdot st\right)$. Notice that the overall contribution of these nets $N\left(P_{i,j},\epsilon_1\right)$ to $|N_{>\delta,1}|$ amounts to 
$$
\left|\bigcup_{1\leq i\leq s,1\leq j\leq t}N\left(P_{i,j},\eps_1\cdot st\right)\right|\leq st\cdot f\left(\epsilon_1\cdot st\right)=st\cdot f\left(\eps\cdot t^{1/\alpha_d}\right).
$$

\item For $1\leq i\leq s$ we construct the net $N_\nar(P_i,u,\varepsilon)$, as described in Lemma \ref{Lemma:Narrow}, which pierces all the convex sets $K\in \K_{>\delta}$ that are $\varepsilon$-narrow in the secondary vertical subdivision $\V(P_i,u)$ of the set $P_i\in \V(P,s)$. Clearly, the overall contribution of these recursive nets amounts to at most

$$
\left|\bigcup_{i=1}^s N_\nar(P_i,u,\varepsilon)\right|\leq su\cdot f\left(\varepsilon \cdot u^{1/(d-1)}\right).
$$

\end{enumerate}

\noindent{\bf Analysis.} The properties of $N_{>\delta,1}$ are summarized in the following statement (using that $s\lll_\eta u\lll_\eta 1/\eps\ll t$).

\begin{lemma}\label{Lemma:1stNetPunctured}
\begin{enumerate}
	\item The cardinality of the net 
	$$
N_{>\delta,1}:=\left(\bigcup_{1\leq i\leq s,1\leq j\leq t}N\left(P_{i,j},\eps_1\cdot st\right)\right)\cup \left(\bigcup_{i=1}^s N_\nar(P_i,u,\varepsilon)\right)
$$ 

\noindent satisfies
$$
N_{>\delta,1}\leq st\cdot f\left(\eps\cdot t^{1/\alpha_d}\right)+su\cdot f\left(\varepsilon \cdot u^{1/(d-1)}\right)=	
$$
$$=O\left(t^{1+\eta}\cdot f\left(\eps\cdot t^{1/\alpha_d}\right)+u^{1+\eta}\cdot f\left(\varepsilon \cdot u^{1/(d-1)}\right)\right).$$

\item Upon adding the points of $N_{>\delta,1}$ 
 to our net $N_{>\delta}$ and, thereby, removing all the convex sets that are pierced so far by it, it can be assumed that every remaining convex set $K\in \K_{>\delta}$ meets the following criteria
 
 \begin{itemize}
 	\item[(a)]   $K$ is $\varepsilon$-spread in each sub-partition $\V(P_i,u)$, and
 	\item[(b)] we have that 
 	$|P_K\cap P_{i,j}|\leq \eps_1 n$ for all $1\leq i\leq s$ and $1\leq j\leq t$.
 \end{itemize}
\end{enumerate}
\end{lemma}

\subsection{Step 2}\label{Subsec:2ndNetPunctured}

The second net $N_{>\delta,2}$ that we construct for the $\delta$-punctured sets $K\in \K_{>\delta}$, is a combination of (i) the strong net of Lemma \ref{Thm:StrongNet} with respect to convex $(2d)$-hedra, and (ii) a small number of $1$-dimensional nets, whose carefully defined ground sets are contained in certain canonical vertical lines. Together with $N_{>\delta,1}$, its inclusion in $N_{>\delta}$ will at last guarantee that all of the remaining sets $K\in \K_{>\delta}$, that are missed by $N_{>\delta,1}\cup N_{>\delta,2}$, are sufficiently flat with respect to our secondary simplicial partition $\P(P,s,t)$, and thus can be dispatched using the strategy outlined in the end of Section \ref{Sec:Prelim}.

The construction of $N_{>\delta,2}$ is facilitated by additional geometric constructs, which rely on the secondary sub-divisions $\P(P,s,t)$ and $\V(P,s,u)$ introduced in Step 1.

\medskip
\noindent{\bf The secondary canonical families $\L_i=\L(P_i,u)$.} 
%Our non-recursive 1-dimensional nets are restricted to certain canonical vertical lines that, for any $K\in \K_{>\delta}$, can be shown to pierce many simplices of
%${P_K\choose d}$, whose vertices are selected from the same part $P_i\in \V(P,s)$.
For each $1\leq i\leq s$ we obtain the $\vartheta$-canonical collection $\L_i:=\L(P_i,u)$ which exists, according to Theorem \ref{Theorem:MultipleSelection}, for the secondary vertical partition $\V(P_i,u)$ of $P_i$ (as part of $\W(P,s,u)$). Altogether, this amounts to a total of 
$$
\left|\bigcup_{i=1}^s\L_i\right|=O\left(su^{(d-1)^2}\right)
$$ 
\noindent vertical lines.

\medskip
\noindent{\bf Clipped cells.} For each $p\in P$ we use $\varphi'(p)$ to denote the restriction of its ambient cell $\varphi(p)$ in $\P(P,s,t)$ to the vertical prism above $\Delta'(p)\cap \Delta(p)$. (Here $\Delta(p)$ denotes the ambient cell of $p$ in the primary vertical partition $\V(P,s)$, and $\Delta'(p)$ denotes the ambient cell of $p$ in the secondary vertical partition $\W(P,s,u)$ of Step 1.) In what follows, we refer to $\varphi'(p)$ as the {\it clipped cell around $p$}.

\smallskip
Notice that, for each $1\leq i\leq s$, and each $p\in P_i$, the clipped cell $\varphi'(p)$ is a convex polytope with at most $3d+1$ facets which encloses the point set $P_{i,j}\cap W_{i,k}$, where $P_{i,j}$ (resp., $W_{i,k}$) is the part of the secondary partition $\P(P_i,t)$ (resp., $\V(P_i,u)$) that contains $p$. 

\medskip
\noindent{\bf Definition.} In what follows, we use $V'(p)$ to denote the vertex set of $\varphi'(p)$, and we use
 $V'$ to denote the set $\bigcup_{p\in P}V'(p)$ of all the vertices of the clipped cells. Since $s\lll_\eta u\lll_\eta t$, we have that $|V'|=O(stu)=O\left(t^{1+\eta/{10d^2}}\right)$.

\medskip
\noindent{\bf The net $N_{>\delta,2}$.} Denote
\begin{equation}\label{Eq:SecondConstantPunctured}
\eps_2:=\frac{c_\part}{100(4d+4)}\cdot \frac{\delta\eps}{st^{1-1/d}},
\end{equation}

\noindent where $c_\part>0$ is the constant in Matou\v{s}ek's Theorem \ref{Theorem:Simplicial}.

\medskip
 Let $0<c_2\leq 1/100$ be a sufficiently small constant that will be determined in the sequel.
Our second net $N_{>\delta,2}$ for $\K_{>\delta}$ is comprised of the following non-recursive ingredients:

\begin{figure}
    \begin{center}
    \input{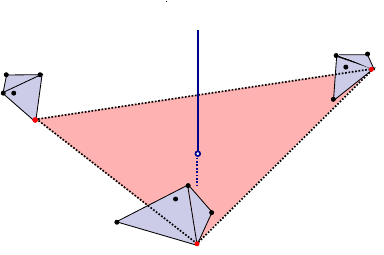_t}  \caption{\small Step 2 in the definition of $N_{>\delta}$ in $\reals^3$ -- building the ground sets $X'_{\ell}$ for the 1-dimensional nets $N'_\ell$. Depicted are a secondary canonical line $\ell\in \L_i$, the clipped cells $\varphi'(p_i)$ around some three points $p_1,p_2,p_3\in P$, and a simplex $\phi$ spanned by some $3$ vertices of $\varphi'(p_1),\varphi'(p_2),\varphi'(p_3)$. Then the $\ell$-intercept $\phi\cap \ell$ is added to $X'_\ell$.}
      \label{Fig:SecondaryLines}
    \end{center}
\end{figure}

\begin{enumerate}
	
\item 
For each $1\leq i\leq s$, and each line $\ell\in \L_i$, we construct the set $X'_\ell=\{\phi\cap \ell\mid \phi\in {V'\choose d}\}$ of all the $\ell$-intercepts of the $(d-1)$-dimensional simplices $\phi\in {V'\choose d}$, which are spanned by the boundary vertices of the clipped cells $\varphi'(p)$ of the points $p\in P$. See Figure \ref{Fig:SecondaryLines}.

We then add to $N_{>\delta,2}$ the net $N'_\ell$ which includes every $\left\lceil c_2t^{d-d/\alpha_d}/u^{(d-1)^2}\right\rceil$-th point of $X'_\ell$. 

\item Lastly, we add to $N_{>\delta,2}$ the strong $(c_2\eps_2)$-net $N_\strong\left(P,c_2\eps_2\right)$ of Lemma \ref{Thm:StrongNet} that pierces every convex $(2d)$-hedron that is $(c_2\eps_2)$-heavy with respect to $P$. Note that $|N_\strong\left(P,c_2\eps_2\right)|=O\left(\left(1/\eps_2\right)\log \left(1/\eps_2\right)\right)$.

\end{enumerate}

\medskip
\noindent{\it Remark.} Though the choice of the thresholds $\eps_2$ and $c_2t^{d-d/\alpha_d}/u^{(d-1)^2}$ may seem arbitrary at this stage, it will become clearer in the subsequent analysis of Sections \ref{Subsec:3rdNetPunct} and \ref{Subsec:Flat}. For the time being, it is instructive to point out that every convex $K\in \K_{>\delta}$, that has been pierced by $N_{>\delta,2}$, must fall into one of the following categories: (i) its principal set $P_K$ is not in sufficiently convex position, or (ii) the set $K$ is ``fat" with respect to $\P(P,s,t)$,\footnote{This is because no hyperplane $H(\tau)$ through a simplex $\tau\in {P_K\choose d}$ can pass in a close vicinity of a point of $X'_\ell$, provided a generic enough position of $P$.}
 so its overall shape is far from a hyperplane.

\medskip
\noindent{\bf The analysis.}  The cardinality of $N_{>\delta,2}$ is bounded by the following statement.
\begin{lemma}\label{Lemma:2nsNetPunctured}
The net
$$
N_{>\delta,2}:=\left(\bigcup_{1\leq i\leq s}\bigcup_{\ell\in \L_i} N'_\ell\right)\cup N_\strong(P,c_2\eps_2)
$$

\noindent satisfies

$$
|N_{>\delta,2}|=O\left(s^{d+1}u^{(d-1)^2+d}t^{d/\alpha_d}+\frac{s t^{1-1/d}}{\delta\eps}\log \frac{1}{\eps}\right)=
O\left(t^{d/\alpha_d+\eta/(10d)}+\frac{t^{1-1/d}}{\eps^{1+\eta}}\right).
$$

\end{lemma}
\begin{proof}
By the previous choice of $\delta=1/r^{d(d+1)}$, and of $r\lll_\eta s\lll_\eta u \lll_\eta 1/\eps\ll t$, it follows that the overall cardinality of the nets $N'_\ell$ satisfies
 
$$
\left|\bigcup_{1\leq i\leq s}\bigcup_{\ell\in \L_i} N'_\ell\right|=O\left(\left(\sum_{i=1}^s|\L_i|\right)\cdot \frac{u^{(d-1)^2}\left|V'\right|^d}{t^{d-d/\alpha_d}}\right)=O\left(\frac{s\cdot u^{2(d-1)^2}\cdot {sut\choose d}}{t^{d-d/\alpha_d}}\right)
=O\left(t^{d/\alpha_d+\eta/(10d)}\right),
$$

\noindent while 
$$
|N_\strong(P,c_2\eps_2)|=O\left(\frac{1}{\eps_2}\log \frac{1}{\eps_2}\right)=O\left(\frac{s t^{1-1/d}}{\delta\eps}\log \frac{1}{\eps}\right)=O\left(\frac{t^{1-1/d}}{\eps^{1+\eta}}\right).
$$

\end{proof}

We add $N_{>\delta,2}$ to $N_{>\delta}$ and, thereby, remove from $\K_{>\delta}$ every convex set that is hit by $N_{>\delta,2}$. 

\subsection{Step 3} \label{Subsec:3rdNetPunct}

We are at last ready describe the third and final net $N_{>\delta,3}$, which uses Theorem \ref{Theorem:Sparse} to pierce all the remaining $\delta$-punctured $\varepsilon$-spread sets $K\in \K_{>\delta}$, that have been missed by the combination $N_{>\delta,1}\cup N_{>\delta,2}$ of the previous two nets. To this end, we will argue that, pending on a suitable choice of the constant $c_2>0$ in the previous Step 2, every remaining set $K\in \K_{>\delta}$ is sufficiently flat with respect to the secondary partition $\P(P,s,t)$ of $P$, in the sense that will be made formal shortly.

%To this end, we will show that 

\bigskip
\noindent{\bf Definition.} Let $K\in \K_{>\delta}$ be a convex set with the principal $(d+1)$-tuple $\Psi(K)=(P_{i_0},\ldots,P_{i_d})$, where $P_{i_j}\in \P(P,s)$ for all $0\leq j\leq d$.
For the sake of brevity, we relabel the elements of $\Psi(K)$ so that $\Psi(K)=(P_0,\ldots,P_d)$ (where the cell $\Delta_0\supset P_0^\perp$ is surrounded by the remaining $d$ cells $\Delta_i\supset P_i^\perp$, with $1\leq i\leq d$).

\medskip
For each $0\leq i\leq d$, we select a subset $P_i(K)\subseteq P_K\cap P_{i}$ of cardinality 
$$
|P_i(K)|=\lceil \delta |P_K|/(4d+4)s\rceil.
$$

Note that the secondary $t$-partition $\P_0=\P(P_{0},t)=\{P_{0,j}\mid 1\leq j\leq t\}$ of the part $P_0$ induces the partition $\P_0(K)=\{P_{0,j}(K)\mid 1\leq j\leq t\}$ of the subset $P_0(K)$, with $P_{0,j}(K):=P_{0,j}\cap P_0(K)$, for all $1\leq j\leq t$. 

\medskip
\noindent{\bf Definition.} We say that the part $P_{0,j}\in \P_0$ is {\it full} with respect to $K$ (or, simply, that the subset $P_{0,j}(K)$ is {\it full}) if we have that $|P_{0,j}(K)|\geq \eps_2n$.

\medskip
It can be assumed, in what follows, that $\eps_2n\geq 1/c_2\geq 100$ for, otherwise, every singleton subset within $P_K$ is $(c_2\eps_2)$-heavy, and the entire family $\K(P,\eps)$ must have been pierced by the strong net $N_\strong(P,c_2\eps_2)=P$ in Step 2.

In view of Lemma \ref{Lemma:1stNetPunctured}, every full subset $P_{0,j}(K)\in \P_0(K)$ must then satisfy

\begin{equation}\label{Eq:Full}
100\leq 1/c_2\leq \eps_2n \leq |P_{0,j}(K)|\leq \eps_1n.
\end{equation}

\medskip
We say that a $2$-edge $\lambda\in {P\choose 2}$ is {\it short} if both its endpoints lie in the same part $P_{i,j}$ of the secondary partition $\P(P,s,t)$ of $P$.

\begin{figure}
    \begin{center}
             \input{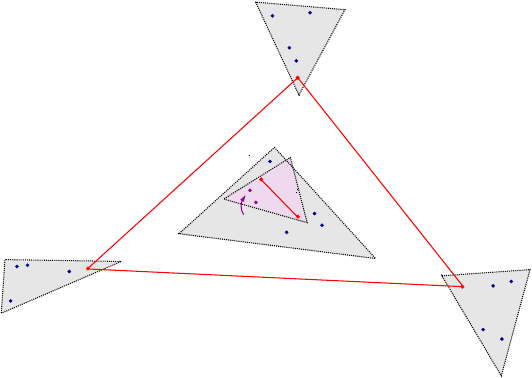_t}
        \caption{\small The overall setup for the case analysis of Step 3 -- a schematic view from above in dimension $d=3$. Depicted are the sets $P_i(K)$ together with the enclosing cells $\Delta_i$ in $\V(P,s)$, a $(d-1)$-simplex $\tau\in \Sigma(K)$, and a short 2-edge $\lambda\in \Lambda_{0,j}(K)$.  Notice that $\lambda^\perp\subseteq \Delta_0\subseteq  \tau^\perp$ holds for any such pair $(\lambda,\tau)\in \Lambda_{0,j}(K)\times\Sigma(K)$. %The set $\Sigma(K)$ is comprised of all the simplices $\tau=\conv(p_1,p_2,p_3)$, with $p_1\in P_1(K),p_2\in P_2(K)$ and $p_3\in P_3(K)$, whereas each family $\Lambda_{0,j}(K)$ is comprised of all the simplices whose $d$ vertices belong to the given subset $P_{0,j}(K)$ in the secondary partition $\P_0=\P(P_0,t)$. }
        }
        \label{Fig:SigmaSimplices}
    \end{center}
\end{figure}

\bigskip
\noindent{\bf Flatness.}  
At the heart of our construction lies the relation between the following two classes of simplices, which are determined by the remaining convex sets $K\in \K_{>\delta}$; see Figure \ref{Fig:SigmaSimplices}. 

\begin{enumerate}
	\item[A.] The $t$ families 
	$$
	\Lambda_{0,j}(K):={P_{0,j}(K)\choose 2},
	$$ 
	\noindent each comprised of such short 2-edges $\lambda$ that both of their vertices come from a given subset $P_{0,j}(K)$ in the induced partition $\P_0(K)$ of $P_0(K)$, with $1\leq j\leq t$.

\item[B.] The family $\Sigma(K)$ of all such simplices $\tau\in {P_K\choose d}$ whose $d$ vertices $p_i$ are chosen from the $d$ different subsets $P_i(K)$ of $P_K$, with $1\leq i\leq d$. Namely,
$$
\Sigma(K):=\left\{\conv\left(p_1,p_2,\ldots,p_d\right)
\mid p_1\in P_1(K),p_2\in P_2(K),\ldots,p_{d}\in P_{d}(K)\right\}.
$$  

\end{enumerate}

%\medskip
\noindent{\it Remark.} Since the ambient cell $\Delta_{0}$ of $P_0$ is surrounded by the $d$ ambient cells $\Delta_i$ of the remaining sets $P_i$, with $1\leq i\leq d$, it follows that every pair $(\lambda,\tau)\in \Lambda_{0,j}(K)\times \Sigma(K)$, with $1\leq j\leq t$, must satisfy $\lambda^\perp\subset \Delta_0\subset \tau^\perp$. See Figure \ref{Fig:SigmaSimplices}. In other words, the intersection of the supporting hyperplane $H(\tau)$ of $\tau$ with the vertical prism $\Delta_0^*$ over $\Delta_0$, must lie entirely within $\tau$, so that every point of $P_{0}(K)$ lies either above or below $\tau$. 

%Also note that any sufficiently ``skinny" convex set $K$, that survives till this stage, intersects only $O\left(t^{1-1/d}\right)$ parts $P_{0,j}$ of the secondary $t$-partition $\P_0=\P(P_0,t)$, as most of these parts $P_{0,j}$ fall in the zone of {\it any} simplex of $\Sigma_K$. Thus, pending a sufficiently small choice of the constant $C_2>0$ in (\ref{Eq:SecondConstantPunctured}), the majority of the $\lceil \delta |P_K|/(4d+d)s\rceil=\Theta\left(\delta \eps n/((4d+d)s)\right)$ points in $P_0(K)$ must fall in the full parts $P_{0,i}\in \P_0$, which satisfy 
%$$
%|P_{0,j}(K)|\geq \eps_2n=\frac{C_2\delta\eps n}{st^{1-1/d}}.
%$$

%\noindent As is shown in the sequel, this property is shared by {\it all} the remaining sets in $\K_{>\delta}$.

\smallskip
To gain a little intuition, suppose that the convex set $K\in \K_{>\delta}$ at hand is a {\it hyperplane}. Then the respective set $P_0(K)$ (along with all the short 2-edges in $\biguplus_{i=1}^t\Lambda_{0,j}(K)$) is contained in {\it every} simplex $\tau\in \Sigma(K)$. In particular, the points of $P_{0}(K)$ must lie in at most $c_\part t^{1-1/d}$ simplicial cells $\varphi_{0,j}$, which are crossed by every simplex $\tau\in \Sigma(K)$. Therefore, by the pigeonhole principle the majority of the points in $P_0(K)$ must belong to the full parts $P_{0,j}(K)\in \P_0(K)$, which satisfy 
$$
|P_{0,j}(K)|\geq \frac{|P_0(K)|}{100\cdot c_\part\cdot t^{1-1/d}}\geq \eps_2n.
$$

\noindent A suitable infinitesimal perturbation of $P$ puts the points of $P_K$ in a convex (and ``sufficiently generic") position, while placing the entire subset $P_0(K)$ either above, or below, of all the simplices $\tau\in \Sigma(K)$. As a result, every short edge $\lambda\in \biguplus_{i=1}^t\Lambda_{0,j}(K)$ is now ``parallel" to every simplex $\tau\in \Sigma(K)$, in the sense that its supporting line $\aff(\lambda)$ misses $\tau$.

\medskip
In what follows, we show that a very similar set of properties can be established for all the remaining convex sets $K\in \K_{>\delta}$ if (some of) the points in $\biguplus_{i=1}^d P_i(K)$ are replaced by their ambient simplicial cells $\varphi(p)$ in $\P(P,s,t)$, and the possible ``co-planarities" involving such cells $\varphi(p)$ are taken into account.

\bigskip
\noindent{\bf Definition.} For every convex set $K\in \K_{>\delta}$, and every simplex $\tau=\{p_1,\ldots,p_d\}\in \Sigma(K)$, we use $\Phi(\tau)$ to denote the convex hull $\conv\left(\bigcup_{i=1}^d\varphi(p_i)\right)$ of the ambient cells of the $d$ vertices of $\tau$, and refer to $\Phi(\tau)$ as the {\it envelope} of $\tau$ in $\P(P,s,t)$.\footnote{Notice that these $d$ ambient cells $\varphi(p_i)$ must be distinct, as each of them comes from a distinct secondary partition $\P_i=\P(P_i,t)$.} 

\bigskip
\noindent{\bf Definition.} Let $K\in \K_{>\delta}$ be a $\delta$-punctured set with $\Psi(K)=(P_0,\ldots,P_d)$, $\tau=\conv(p_1,\ldots,p_d)$ be a simplex in $\Sigma(K)$, and $\lambda$ be short 2-edge in $\Lambda_{0,j}(K)$. 
\begin{enumerate}
\item We say that $\lambda$ and $\tau$ are {\it parallel} if both $\lambda$ and $\tau$ are faces of the convex $(d+2)$-hedron $\conv(\tau\cup\lambda)$; see Figure \ref{Fig:ParallelSimplices}. %(In particular, no vertex of $\tau$ or $\mu$ lies in the convex hull of the remaining $2d-1$ vertices.)

Notice that the fact that $\lambda^\perp\subset \tau^\perp$ implies that the 2-edge $\lambda$ must lie either entirely above, or entirely below $\tau$, so that $\tau$ is always a facet of $\conv(\tau\cup \lambda)$.

\begin{figure}
    \begin{center}
            \input{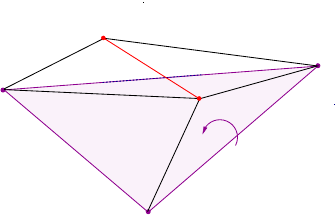_t}
        \caption{\small The 2-edge $\lambda\in \Lambda_{0,j}(K)$ is parallel to the simplex $\tau\in \Sigma(K)$ in $\reals^3$, so they comprise facets of the convex pentahedron $\conv(\tau\cup \lambda)$. %Notice that $\lambda$ lies either entirely above, or entirely below $\tau$. %The set $\Sigma(K)$ is comprised of all the simplices $\tau=\conv(p_1,p_2,p_3)$, with $p_1\in P_1(K),p_2\in P_2(K)$ and $p_3\in P_3(K)$, whereas each family $\Lambda_{0,j}(K)$ is comprised of all the simplices whose $d$ vertices belong to the given subset $P_{0,j}(K)$ in the secondary partition $\P_0=\P(P_0,t)$. }
        }
        \label{Fig:ParallelSimplices}
    \end{center}
\end{figure}

\item We say that the ordered pair $(\lambda,\tau)$ is {\it tight} if $\lambda$ and $\tau$ are parallel and, in addition, $\lambda$ is contained in the envelope $\Phi(\tau)$ of $\tau$; see Figure \ref{Fig:TightLoose3D} (left). (Notice that the $d$ vertices of $\tau=\{p_1,\ldots,p_d\}$ determine $d$ distinct simplicial cells $\varphi(p_i)$, each within the respective secondary partition $\P_i=\P(P_i,t)$.)
% there exists a hyperplane that (i) contains $\lambda$ and (ii) crosses the ambient simplices $\varphi(p_i)$ of at least $d-1$ vertices $p_i$ of $\tau$; see Figure \ref{Fig:TightLoose3D} (left). 

If such a pair $(\lambda,\tau)\in \Lambda_{0,j}(K)\times\Sigma(K)$ is only parallel yet not tight, we say that it is {\it loose}, in which case at least one of the two vertices of $\lambda$ lies either above or below $\Phi(\tau)$; see Figure \ref{Fig:TightLoose3D} (right). %In the sequel, we argue that, in the latter case, $\lambda$ must lie either above or below the convex hull of all the ambient simplices $\varphi(p_i)$ of the vertices of $\tau$.  
\end{enumerate}

\begin{figure}
    \begin{center}
             \input{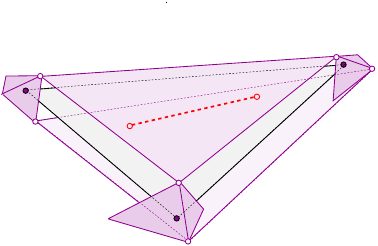_t}\hspace{2cm}\input{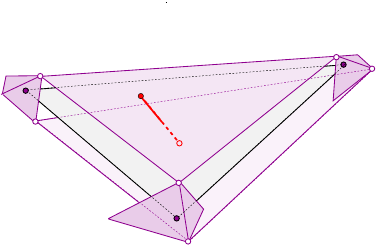_t}%\hspace{2cm}\input{NotParallel.pdf_t}
        \caption{\small Left: A tight pair $(\lambda,\tau)$ in $\reals^3$. The short 2-edge $\lambda\in \Lambda_{0,j}(K)$ is contained in the envelope $\Phi(\tau)=\conv(\varphi(p_1)\cup\varphi(p_2)\cup \varphi(p_3))$. Right: A loose pair $(\lambda,\tau)$ in $\reals^3$. At least one of the vertices of the short 2-edge $\lambda$ lies either above, or below $\Phi(\tau)$.} %Right -- If the simplices $\tau\in \Sigma(K)$ and $\lambda\in \Lambda_{0,j}(K)$ are not parallel, there is a $(d-2)$-face $\kappa\subset \lambda$ so that the remaining vertex $p$ of $\lambda$ lies in $\conv(\tau\cup \kappa)$.
        \label{Fig:TightLoose3D}
    \end{center}
\end{figure}

\bigskip
\noindent{\bf Definition.} Let $K\in \K_{>\delta}$ be a $\delta$-punctured set with $\Psi(K)=(P_0,\ldots,P_d)$. 

\begin{enumerate}

\item We say that a full subset $P_{0,j}(K)$ in the induced partition $\P_0(K)$ of $P_0(K)$ is {\it tight} if
at least $|\Lambda_{0,j}(K)|\cdot|\Sigma(K)|/80$ of the pairs $(\lambda,\tau)\in \Lambda_{0,j}(K)\times\Sigma(K)$ are tight.

\item Let $\T_0(K)$ to denote the subset of all the tight parts $P_{0,j}(K)$ within $\P_0(K)$.

\item We say that the set $K$ is {\it flat} if the tight parts $P_{0,j}(K)\in \T_0(K)$ encompass at least $|P_0(K)|/5$ points of $P_0(K)$.

\end{enumerate}

\medskip
\noindent{\bf The net $N_{>\delta,3}$.} Our small-size net the remaining sets $K\in \K_{>\delta}$ is an easy corollary of the following two properties, whose respective proofs are postponed to Sections \ref{Subsec:Flat} and \ref{Subsec:SparsePi}.

\begin{lemma}\label{Lemma:Flat}
Given a sufficiently small choice of the constant $c_2>0$ in Step 2, every remaining convex set $K\in \K_{>\delta}$, that was missed by the combination $N_{>\delta,1}\cup N_{>\delta,2}$, must be flat.
\end{lemma}

\begin{lemma}\label{Lemma:SparsePi}
There is a subset $\tilde{\Pi}=\tilde{\Pi}(P,\eps,\delta,s,t)\subseteq {P\choose d}$ with the following properties:

\begin{enumerate}
	\item[(i)] We have that
$$
|\tilde{\Pi}|=O\left(\frac{n^d}{\delta\eps t^{2/d}}\right).
$$ 

\item[(ii)] 
 Every flat convex set $K\in \K_{>\delta}$ satisfies
\begin{equation}\label{Eq:LocallyDense}
	\left|{P_K\choose d}\cap \tilde{\Pi}\right|=\Omega\left((\delta\eps n)^d/\r^d\right).
\end{equation}

\end{enumerate}
\end{lemma}

According to Lemma \ref{Lemma:Flat} every remaining set $K\in \K_{>\delta}$ is flat and, by the second part of Lemma \ref{Lemma:SparsePi}, satisfies (\ref{Eq:LocallyDense}).
Hence, the final net $N_{>\delta,3}$ for the all remaining sets in $\K_{>\delta}$ is provided by the following lemma, which is established by plugging $\tilde{\Pi}$ into Theorem \ref{Theorem:Sparse}, with the threshold $\sigma=\Omega\left(\delta^d/s^d\right)$, and then using the upper bound on $|\tilde{\Pi}|$ in the first part of Lemma \ref{Lemma:SparsePi}. (Notice that the last inequality follows by the choice of $s\lll_\eta u\lll_\eta t$, and $\delta=1/r^{d(d+1)+1}$, which in particular implies that $\delta^d/s^d$ is much larger than $1/u^{\eta}$.)  
 
\begin{lemma}\label{Lemma:ThirdNetPunctured}
	Suppose that the choice of the constant $c_2>0$ in Step 2 meets the criteria of Lemma \ref{Lemma:Flat}. 
	Then the remaining sets of $\K_{>\delta}$ can be pierced by a net $N_{>\delta,3}$ of cardinality 
	\begin{equation}\label{Eq:3rdNetPunctured}		
	\left|N_{>\delta,3}\right|=O\left(u\cdot f\left(\eps\cdot u^{1/(d-1)}\cdot \delta^{d-1}/s^{d-1}\right)+\frac{u^{(d-1)^2}\cdot s^{d\beta_{d-1}}}{\delta^{d\beta_{d-1}+1}\eps^{d+1} t^{2/d}}\right)=
\end{equation}

	$$
	=O\left(u\cdot f\left(\eps\cdot u^{1/(d-1)-\eta}\right)+\frac{1}{\eps^{d+1+\eta}t^{2/d}}\right).
	$$
		
\end{lemma}

\noindent{\it Remark.} As was explained in Section \ref{Subsec:RecursiveFramewk}, the recursive term in (\ref{Eq:3rdNetPunctured}) is of the ``$O^*\left(1/\eps^{d-1}\right)$-type", while the non-recursive term is smaller than $1/\eps^d$ for any value of $t$ that is considerably larger than $(1/\eps)^{d/2}$. To complete the proof of Theorem \ref{Theorem:BoundPunct}, in Section \ref{Subsec:PuncturedWrapUp} we will combine the bounds in Lemmas \ref{Lemma:1stNetPunctured}, \ref{Lemma:2nsNetPunctured}, and \ref{Lemma:ThirdNetPunctured}, and choose a suitable $t=(1/\eps)^{3d/4+o_d(1)}$.

\subsection{Proof of Lemma \ref{Lemma:Flat}}\label{Subsec:Flat}

The proofs of both Lemmas \ref{Lemma:Flat} and \ref{Lemma:SparsePi} will use the following basic property of tangent hyperplanes to convex polytopes.

\begin{lemma}\label{Lemma:TangentHyperplanes} Let $\varphi_1,\ldots,\varphi_{d}$ be a collection of $d$ convex polytopes in $\reals^d$ whose vertices are in general position, and $\Delta$  be a subset of $\reals^{d-1}$ that is surrounded by their respective vertical projections $\varphi_1^\perp,\ldots,\varphi_d^\perp$ (see Figure \ref{Fig:TangentHyperplanes} (left)). Then there is a unique hyperplane $H^+$ (resp., $H^-$) that is tangent
to each $\varphi_i$ from above (resp., below) at the respective point $y_i\in \varphi_i$ (resp., $z_i\in \varphi_i$), and so that $\Delta\subset \conv(y_1,\ldots,y_d)^\perp\cap \conv(z_1,\ldots,z_d)^\perp$.
\end{lemma}

In other words, any vertical line through $\Delta$ must cross the boundary of the polytope $\Phi:=\conv\left(\varphi_1\cup\ldots\cup \varphi_d\right)$ at the facets $\conv(z_1,\ldots,z_d)$ and $\conv(y_1,\ldots,y_d)$, which support, respectively, $H^-$ and $H^+$, and in this increasing order of the $d$-th coordinate.

\begin{proof}[Proof of Lemma \ref{Lemma:TangentHyperplanes}.]
Since $\Delta$ is surrounded by $\varphi_i^\perp$,  for $1\leq i\leq d$, it follows via Corollary \ref{Corol:Surrounded} that %we have that $\Delta\subseteq \conv(p^\perp_1,\ldots,p^\perp_d)$ for any choice of the points $p_i\in \varphi_i$, with $1\leq i\leq d$. 
the family $\Gamma:=\{\varphi^\perp_1,\ldots,\varphi^\perp_d,\Delta\}$ of $d+1$ compact convex sets in $\reals^{d-1}$ is separated.
In particular,
$\Delta$ cannot intersect any of the ``$(d-1)$-wise" hulls $\conv\left(
\bigcup_{j\in [d]\setminus \{i\}}\varphi_i^\perp\right)$, for $1\leq i\leq d$ (or, else, some $d$ elements of $\Gamma$, including $\Delta$, would be crossed by the same $(d-2)$-plane in $\reals^{d-1}$).

\begin{figure}
    \begin{center}
    \input{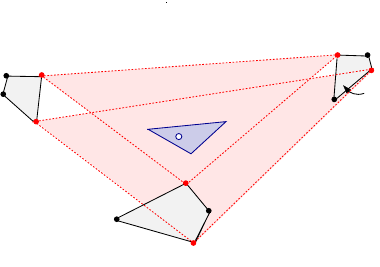_t} \hspace{2cm}\input{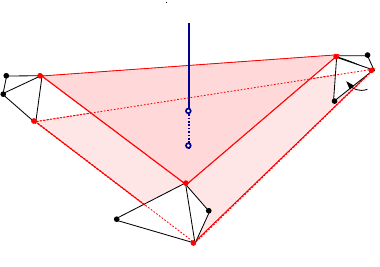_t}  \caption{\small Lemma \ref{Lemma:TangentHyperplanes}. Left: $\Delta$ is a subset of $\reals^2$ which is surrounded by the vertical projections $\varphi^\perp_1,\varphi_2^\perp$ and $\varphi^\perp_3$ of the 3-dimensional polytopes $\varphi_1,\varphi_2$ and $\varphi_3$. Right: Any vertical line $x_0^*$ over a point $x_0\in \Delta$ crosses the same facets $\phi^+$ and $\phi^-$ which support, respectively, the upper and the lower  common tangents to $\varphi_1,\varphi_2$ and $\varphi_3$.}
      \label{Fig:TangentHyperplanes}
    \end{center}
\end{figure}

 Fix a point $x_{0}\in \Delta$. The preceding discussion implies that the vertical line $x_0^*$ through $x_{0}$ must cross the boundary of the polytope $\Phi=\conv\left(\varphi_1\cup\ldots\cup \varphi_d\right)$ at some pair of facets $\phi^+$ and $\phi^-$
whose supporting vertices come from $d$ {\it distinct} polytopes $\varphi_i$, as depicted in Figure \ref{Fig:TangentHyperplanes} (right). Namely, we have that $\phi^-=(y_1,\ldots,y_d)$ and $\phi^+=(z_1,\ldots,z_d)$, where $y_i,z_i\in \partial\varphi_i$ for all $1\leq i\leq d$, and 
 $\Phi$ lies below $H\left(\phi^+\right)$ and above $H\left(\phi^-\right)$.

Since $\Delta$ is surrounded by $\varphi_1^\perp,\ldots,\varphi_d^\perp$, it must be that $\Delta\subseteq (\phi^+)^\perp\cap (\phi^+)^\perp$, and, furthermore, the simplex $\phi^+$ (resp., $\phi^-$) supports the unique common upper (resp., lower) tangent $H^+=H(\phi^+)$ (resp., $H^-=H(\phi^-))$ to the polytopes $\varphi_1,\ldots,\varphi_d$, which does not depend on the choice of $x_0$.\footnote{A more general study of the space of tangent hyperplanes, due to Cappell {\it et al.} \cite{Cappell}, implies that {\it any} family of $d$ convex sets in $\reals^d$, whose elements cannot be simultaneously crossed by a $(d-2)$-plane, must be ``sandwiched" between a pair of distinct oriented external hyperplane tangents. The condition of Cappell {\it et al.} is clearly satisfied by $\{\varphi_1,\ldots,\varphi_d\}$, due to the separation of $\{\varphi'_1,\ldots,\varphi'_d\}$ in $\reals^{d-1}$.}  %which does not depend on our choice of $x_{d+1}\in \varphi_{d+1}$.
\end{proof}

%Notice that a set $K\in \K_{>\delta}$ may fail to be tight for one of the following chief reasons:

%\begin{enumerate}
%	\item[1.] The bulk of the points of $P_0(K)$ fall into the parts $P_{0,j}(K)\in \P_0(K)$ that are {\it not} full. %(Clearly, this cannot occur for a ``skinny" convex set whose zone in $\P_0(K)$ can be traced by any hyperplane $H(\tau)$ that supports a simplex $\tau\in \Sigma(K)$.)
	%\item[2.] Almost all of the pairs $(\lambda,\tau)$, with $\lambda\in \bigcup_{i=1}^t\Lambda_{0,j}(K)$ and $\tau\in \Sigma(K)$, are {\it not} parallel. Hence, the position of the principal set $P_K$ is far from convex.
	%\item[3.] Many of the pairs $(\lambda,\tau)\in \Lambda_0(K)\times \Sigma(K)$ are parallel yet {\it not} tight.
%\end{enumerate}

%In the first and the third cases, we argue that such a set $K$ must have been pierced by one of the auxiliary 1-dimensional nets $N'_{\ell}\subset N_{>\delta,2}$ that were constructed within the secondary canonical lines of $\L_0=\L\left(P_0,u\right)$. In the second scenario, such a set $K$ must encompass an $(C\eps_2)$-heavy $(2d-1)$-hedron and, thereby, be pierced by the strong net $N_\strong(P,C_2\eps_2)\subseteq N_{>\delta,2}$ of Theorem \ref{Thm:StrongNet}.

%\medskip
%Let us now cast the missing details into the the outlined argument. 

To establish Lemma \ref{Lemma:Flat}, let us fix a convex set $K\in \K_{>\delta}$, with the principal $(d+1)$-tuple $\Psi(K)=(P_0,P_1,\ldots,P_d)$, and assume that $K$ is {\it not} flat.
That is, the tight subsets $P_{0,j}(K)\in \T_0(K)$ encompass fewer than $|P_0(K)|/5$ points of $P_0(K)$, whereas the remaining subsets $P_{0,j}(K)\in \P_0(K)\setminus\T_0(K)$ encompass more than $\frac{4}{5}|P_0(K)|$ points of $P_0(K)$.
 To establish the lemma, it suffices to show that, pending a sufficiently small choice of the constant $c_2>0$ in Step 2, such a set $K$ must have been pierced by the combination $N_{>\delta,1}\cup N_{>\delta,2}$ of previous nets, and subsequently removed from $\K_{>\delta}$.

In view of our choice of $\varepsilon$ in (\ref{Eq:Spread}), and the lower bound $n_0(\eps)$ on $n$ in (\ref{Eq:ManyPoints}), we have that

\begin{equation}\label{Eq:ManyPointsLocal}
|P_0(K)|\geq \frac{\delta\lceil \eps n\rceil}{(4d+4)s} \geq \frac{10^6\varepsilon \lceil n/s\rceil}{\vartheta}\geq \max\left\{10^5\cdot \frac{\varepsilon|P_0|}{\vartheta},10^6d\right\}.
\end{equation}

Also note that Lemma \ref{Lemma:TangentHyperplanes} yields a unique pair of, respectively, upper and lower common tangent hyperplanes $H^+$ and $H^-$ to the $d$ polytopes $\conv\left(P_i(K)\right)$, with $1\leq i\leq d$. Specifically, $H^+$ (resp., $H^-$) meets the polytopes $\conv\left(P_i(K)\right)$ at the respective vertices $a_i\in P_i(K)$ (resp., $b_i\in P_i(K)$), which determine 
the simplex $\tau^+=\conv(a_1,\ldots,a_d)$ (resp., $\tau^-=\conv(b_1,\ldots,b_d)$) in $\Sigma(K)$. 
Also note that the ``inner" cell $\Delta_0$, which encloses the projection of $P_0(K)$, lies in the intersection $(\tau^+)^\perp\cap (\tau^-)^\perp$ of the respective projections of $\tau^+$ and $\tau^-$. 
Hence, every point $p\in P_0(K)$ is either contained in the $(2d)$-hedron $I_0(K)=\conv(\tau^+\cup \tau^-)\subseteq K$, or it lies to the same side (either above or below) of both simplices $\tau^+$ and $\tau^-$.

\medskip
The preceding discussion implies that any part $P_{0,j}(K)$ in the induced partition $\P_0(K)$ of $P_0(K)$ must belong to at least one of the following families
(see Figure \ref{Fig:TypesOfSubsets}):
\begin{figure}
    \begin{center}
        \input{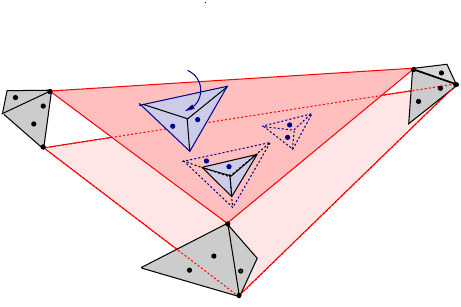_t}  \caption{\small Proof of Lemma \ref{Lemma:Flat} in $\reals^3$. The simplex $\tau^+$ (resp.., $\tau^-$) supports the upper (resp., lower) common tangent hyperplane to $\conv\left(P_1(K)\right),\conv(P_2(K))$, and $\conv\left(P_3(K)\right)$. Three classes of subsets $P_{0,j}\in \P_0(K)$ are illustrated, together with their enclosing cells $\varphi_{0,j}$.}
      \label{Fig:TypesOfSubsets}
    \end{center}
\end{figure}

\begin{itemize}
	\item the family $\I_0(K)$, which consists of all such subsets $P_{0,j}(K)\in \P_0(K)$ that are contained in the aforementioned $(2d)$-hedron $I_0(K)=\conv\left(\tau^+\cup\tau^-\right)$ within $K$,
	\item the family $\Z_0(K)$, which is comprised of all such subsets $P_{0,j}(K)$ that fall in the zone of either one, or both, of the hyperplanes $H^+$ and $H^-$ (i.e., their ambient cells $\varphi_{0,j}$ in $\P_0$ intersect $H^+\cup H^-$),\footnote{Notice that the described subdivision of $\P_0(K)$ is not necessarily disjoint due to the possible overlap between the families $\I_0(K)$ and $\Z_0(K)$.} or
	\item the family $\P_0^+(K)$ (resp., $\P_0^-(K)$) which is comprised of all such subsets $P_{0,j}(K)$ that lie above $\tau^+$ (resp., below $\tau^-$) and, furthermore, their enclosing cells $\varphi_{0,j}$ in $\P_0$ lie entirely above $H^+$ (resp., entirely below $H^-$). 
\end{itemize}

%must be of one of the following three types: (i) $P_{0,j}(K)$ is ``sandwiched" between $H^+$ and $H^-$ and, therefore, contained in $T_0(K)$, (ii) its ambient simplex $\varphi_{0,j}$ lies either entirely above $H^+$, or entirely below $H^-$, or (iii) its ambient simplex $\varphi_{0,j}$ is crossed by $\tau^+\cup \tau^-$.

%each subset $Q_{0,j}(K)\in \P_0(K)$ can be further subdivided into three parts
%$$
%Q_{0,j}(K):=Q_{0,j}^+(K)\cup Q^\boxdot_{0,j}(K)\cup Q_{0,j}^-(K).
%$$

%\noindent so that $Q_{0,j}^+(K)$ (resp., $Q_{0,j}^-(K)$) is comprised of all the points $p\in Q_{0,j}(K)$ that lie above (resp., below) $\Phi_0(K)$, and $Q^\boxdot_{0,j}(K):=Q_{0,j}(K)\cap \Phi_0(K)$. 

%Denote
%$$
%\P_0^+(K):=\left\{Q_{0,j}(K)\mid |Q_{0,j}^+(K)|\geq |Q^-_{0,j}(K)|\right\},
%$$

%\noindent and 

%$$
%\P_0^-(K):=\left\{Q_{0,j}(K)\mid |Q_{0,j}^+(K)|\leq |Q^-_{0,j}(K)|\right\}.
%$$

%\medskip
%\noindent We assume, with no loss of generality, that the former collection $\P^+_0(K)$
%covers at least $|P_0(K)|/2$ points of $P_0(K)$. 
 %We distinguish between four possibly overlapping scenarios. 

To proceed, we distinguish between three basic scenarios for $K\in \K_{>\delta}$, based on the distribution of the points of $P_0(K)$ within its partition $\P_0(K)$ that is induced by $\P_0=\P(P_0,t)$.

\medskip
\noindent {\bf Case (a).} If at least $c_2\eps_2n$ points of $P_0(K)$ belong to $\bigcup \I_0(K)$ (in particular, they are ``sandwiched" between $\tau^+$ and $\tau^-$),
then the polytope $I_0(K)$ is $(c_2\eps_2)$-heavy, so $K$ must have been pierced in Step 2 by the net $N_\strong(P,c_2\eps_2)$. %and at least $4|P_0(K)|/5$ points of $P_0(K)$ belong to the full cells of $\P_0(K)$

\medskip
Hence, it can be assumed, from now on, that case (a) does not occur, that is, 

\begin{equation}\label{Eq:FewInside}
	\left|\bigcup \I_0(K)\right|<c_2\eps_2n.
\end{equation}

%Let us assume, then, that case (a) does not occur, and consider
%the subset $P_0^+(K)$ of all the points in $\bigcup \P_0^+(K)$ that lie above $H_0^+$. 
%Then a straightforward calculation shows that 
%$$
%\left|P_0^+(K)\right|\geq \frac{\left|\bigcup\P_0^+(K)\right|-|\Phi_0(K)\cap P_0(K)|}{2}\geq \frac{\left|P_0(K)\right|/2-\lfloor \eps_2n \rfloor}{2}\geq \frac{|P_0(K)|}{5}\geq 
 %\frac{\delta\lceil\eps n \rceil}{(20d+20)s}
%$$

%$$
%\max\{|Q_{0,j}^+(K)|,|Q_{0,j}^-(K)|\}\geq \frac{|Q_{0,j}(K)|-|\overline{Q}_{0,j}(K)|}{2}\geq |Q_{0,j}(K)|/3\geq \frac{\delta\lceil\eps n \rceil}{(12d+12)s}.
%$$

%\medskip
%\noindent In the sequel, we will assume, with no loss of generality, that $|P_0^+(K)|\geq \delta\lceil\eps n\rceil/((12d+12)s)$.

%\medskip
%Let $\P_0'(K)\subseteq \P_0^+(K)$ denote the subset of all the non-full parts $Q_{0,j}(K)\in \P_0(K)$ whose ambient simplices $\varphi_{0,j}$ lie strictly above $H_0^+$. Set $P'_0:=\bigcup \P'_0(\K)$, while noting
 %that $P_0'(K)\subseteq P_0^+(K)$.

\medskip
\noindent {\bf Case (b).} If the {\it non-full} parts of $\P_0(K)$ encompass at least $|P_0(K)|/10$ points of $P_0(K)$,
we argue that, pending on a suitable choice of $c_2>0$, such a convex set $K\in \K_{>\delta}$ must have been pierced by some $1$-dimensional net $N'_{\ell_0}\subseteq N_{>\delta,2}$, which has been constructed Step 2 for some line $\ell_0$ in the secondary $\vartheta$-canonical subset $\L_0=\L(P_0,u)$ of $P_0$.

To this end, note that, according to Theorem \ref{Theorem:Simplicial}, the zones of the hyperplanes $H^+$ and $H^-$ together encompass at most $2c_\part t^{1-1/d}$
non-full parts of $\Z_0(K)$, whose ambient cells $\varphi_{0,j}$ cross $H^+\cup H^-$,
so that the union of these non-full parts accounts for fewer than $2c_\part t^{1-1/d}\eps_2n$ points of $P_0(K)$. Together with (\ref{Eq:ManyPointsLocal}) and (\ref{Eq:FewInside}) (and with our prior choice (\ref{Eq:SecondConstantPunctured}) of $\eps_2$, and of $c_2<1/100$), this implies that the non-full parts $P_{0,j}(K)$ of $\P_0^+(K)\uplus \P_0^-(K)$ account for at least

$$
\frac{|P_0(K)|}{10}-c_2\eps_2n-2c_\part t^{1-1/d}\eps_2n\geq \frac{|P_0(K)|}{10}-\frac{\delta\eps n\left(c_2+2c_\part \cdot t^{1-1/d}\right)}{c_\part\cdot (100d+100)st^{1-1/d}}\geq \frac{|P_0(K)|}{20}
$$

\noindent points of $P_0(K)$. 

It thus can be assumed, with no loss of generality, that the non-full parts of $\P_0^+(K)$ encompass a subset $P_0^+(K)$ that lies entirely above $\tau^+$, and whose cardinality satisfies
$$
|P_0^+(K)|\geq |P_0(K)|/40\geq \max\left\{\varepsilon|P_0|/\vartheta,10^4d\right\},
$$
with the last inequality following from the previous lower estimate (\ref{Eq:ManyPointsLocal}) on $|P_0|$.
Also note that $P^+_0(K)$ determines, within ${P_0\choose d}$, the non-empty sub-family $\Pi_0^+(K):={P_0^+(K)\choose d}$ whose cardinality satisfies
$|\Pi_0^+(K)|\geq 80^{-d}{|P_0(K)|\choose d}$.

Consider the secondary partition $\V_0=\V(P_0,u)$ of $P_0$, and the secondary $\vartheta$-canonical line family $\L_0=\L(P_0,u)$ which has been constructed in Step 2 over $\V(P_0,u)$.
Since the set $K$ is $\varepsilon$-spread in $\V(P_0,u)$, and at least $\varepsilon|P_0|/\vartheta$ of its points belong to $P_0$, there must exist a line $\ell_0\in \L_0$, with a subset $\Pi^+_{\ell_0}\subseteq\Pi^+_0(K)$ of at least $\vartheta |\Pi^+_0(K)|$ simplices, that together meet the following criteria (see Figure \ref{Fig:CaseB-Setup}):
 	\begin{enumerate}
	\item[(i)] For each simplex $\pi=\{p_1,\ldots,p_d\}\in \Pi^+_{\ell_0}$, the line $\ell_0$ is surrounded, within the secondary vertical partition $\V_0=\V(P_0,u)$, by the ambient  $(d-1)$-dimensional cells $\Delta'(p_1),\ldots,\Delta'(p_d)$ of the vertices of $\pi$.  %In particular, $\ell_0$ crosses all the simplices in $\Pi'_{\ell_0}$.
	\item[(ii)] $\ell_0$ also crosses $\tau^+$ at a point which lies below all its intersections with the simplices of $\Pi^+_{\ell_0}$. (This is because $\ell_0^\perp$ lies in the subset $\conv\left(P_0^\perp\right)\subseteq \Delta_0$, which is contained in $(\tau^+)^\perp$, while all of the simplices $\pi$ in $\Pi_{\ell_0}^+$ are determined by the set $\bigcup\P_0^+(K)$, which lies above $\tau^+$.)

\end{enumerate}

\begin{figure}
    \begin{center}
        \input{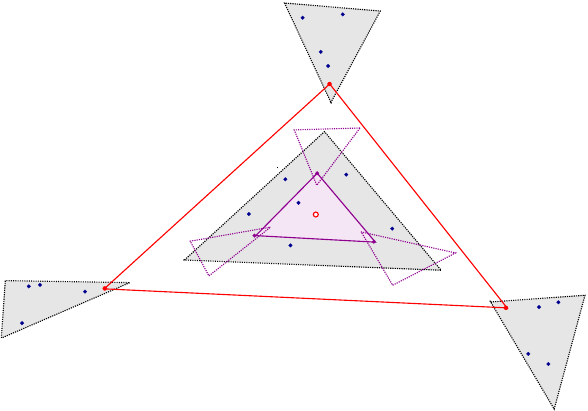_t}  \caption{\small Proof of Lemma \ref{Lemma:Flat} in $\reals^3$. Case (b) -- a schematic view from above. Depicted are the vertical line $\ell_0\in \L(P_0,u)$, a simplex $\pi=\conv(p_1,p_2,p_3)\in \Pi^+_{\ell_0}$. Since all of the vertices of $\pi$ lie above $\tau^+$, the intersection $\ell_0\cap \tau^+$ lies below $\ell_0\cap \pi$. In the secondary vertical partition $\V(P_0,u)$ of $P_0$, the line $\ell_0$ is surrounded by the ambient cells $\Delta'(p_1),\Delta'(p_2)$ and $\Delta'(p_3)$ of, respectively, $p_1,p_2$ and $p_3$.}
      \label{Fig:CaseB-Setup}
    \end{center}
\end{figure}

 Fix a simplex $\pi=\conv(p_1,\ldots,p_d)$ of $\Pi^+_{\ell_0}$. Note that, by the choice of $\P_0^+(K)$, and of the point set $P_0^+(K)\subseteq \bigcup \P_0^+(K)$, the $d$ ambient clipped cells $\varphi'_1=\varphi'(p_1),\ldots,\varphi'_d=\varphi'(p_d)$ of the vertices of $\pi$, must lie above $\tau^+$, as each of them is contained in the prism $\Delta^*_0$ over $\Delta_0$. Furthermore, the vertical projections of $\varphi'_1,\ldots,\varphi'_d$ must surround $\ell_0$, as each of these cells $\varphi'_i$ has been ``clipped" to the vertical prism $(\Delta'(p_i))^*$ over the ambient $(d-1)$-dimensional cell $\Delta'(p_i)$ of $p_i$ in $\V(P_i,u)$.\footnote{In particular, this implies that the clipped cells $\varphi'(p_1),\ldots,\varphi'(p_d)$ are distinct.} 

Consider the polytope $\Phi'(\pi):=\conv\left(\varphi'_1\cup \ldots \cup \varphi'_d\right)$. Applying Lemma \ref{Lemma:TangentHyperplanes} (with $\Delta=\{\ell^\perp_0\}$) yields a boundary $(d-1)$-simplex $\phi^-_\pi$ of $\Phi'(\pi)$
%\in {V\left(\left\{\varphi'_1,\ldots,\varphi'_d\right\}\right)\choose d}$
whose supporting hyperplane is tangent to $\varphi'_1,\ldots,\varphi'_d$ from below, and whose vertical projection $(\phi^-_\pi)^\perp$ contains $\{\ell_0^\perp\}$. 
In particular, this simplex $\phi_\pi^-$ is too crossed by the vertical line $\ell_0$. See Figure \ref{Fig:CaseB}.

Furthermore, recalling the notation of Section \ref{Subsec:2ndNetPunctured}, the simplex $\phi^-_\pi$ belongs to $V'(p_1)\times\ldots\times V'(p_d)$, thus meeting 
$\ell_0$ at a point of $X'_{\ell_0}$.
Since (i) all the vertices of $\phi_\pi^-\subset \conv\left(\varphi'_1\cup \ldots \cup \varphi'_d\right)$ lie above $H^+$, and (ii) all the vertices of $\pi$ lie above the lower tangent hyperplane $H\left(\phi^-_\pi\right)$ to $\Phi'(\pi)$, this intersection point $\phi^-_\pi\cap \ell_0\in X'_{\ell_0}$ must lie between the points $\ell_0\cap \tau^+=\ell_0\cap H^+$ and $\ell_0\cap \pi$ and, thus, within the interval $K\cap \ell_0$ (as both $\tau$ and $\pi$ belong ${P_K\choose d}$).

\begin{figure}
    \begin{center}
        \input{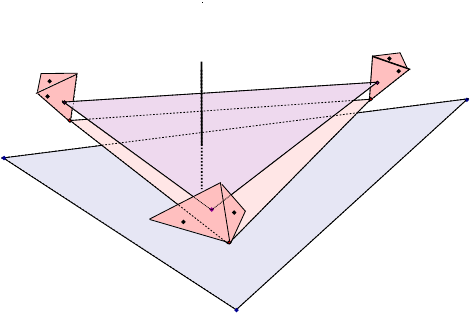_t}  \caption{\small Proof of Lemma \ref{Lemma:Flat} in $\reals^3$ -- case (b). Showing that the vertical line  $\ell_0\in \L_i$ crosses $\phi^-_\pi$ within $K\cap \ell_0$. The line $\ell_0$ crosses the simplices $\pi\in \Pi^+_{\ell_0}$, $\phi^-_\pi$, and $\tau^+$ in this decreasing order of the last coordinate.}
      \label{Fig:CaseB}
    \end{center}
\end{figure}

To recap, every simplex $\pi=\conv(p_1,\ldots,p_d)\in \Pi^+_{\ell_0}$ has been assigned the unique simplex $\phi^-_\pi\in {V'\choose d}$ which supports the common lower tangent $H(\phi^-_\pi)$ to $\varphi'(p_1),\ldots,\varphi'(p_d)$, and intersects $\ell_0$ at the point $\phi^-_\pi\cap \ell_0\in X'_{\ell_0}\cap K$.
Since any simplex $\phi=\conv(v_1,\ldots,v_d)\in {V'\choose d}$ is ``assigned" in this manner to at most $\lfloor\eps_2n\rfloor^d$ simplices $\pi=(p_1,\ldots,p_d)\in \Pi^+_{\ell_0}$, which satisfy $\phi^-_\pi=\phi$ (as the choice of $v_1,\ldots,v_d$  uniquely determines the $d$ ambient {\it non-full} parts of $p_1,\ldots,p_d$ within $\P_0^+(K)$), the interval $\ell_0\cap K$ must contain at least

$$
\frac{|\Pi^+_{\ell_0}|}{\left(\lfloor\eps_2 n \rfloor\right)^d}=\Theta\left(\frac{(\delta\eps n/s)^d}{\left(\delta\eps n/\left(st^{1-1/d}\right)\right)^d}\right)=\Theta\left(t^{d-1}\right)\gg \frac{t^{d-d/\alpha_d}}{u^{(d-1)^2}}
$$

\noindent points $X_{\ell_0}$. Thus, given a sufficiently small choice of the constant $c_2>0$ in Step 2, which underlies the definition of the nets $N'_{\ell}$, the interval $\ell_0\cap K$ must contain a point of the net $N'_{\ell_0}\subset N_{>\delta,2}$.

\medskip
\medskip
\noindent {\bf Case (c).}  It, therefore, can be assumed, from now on, that the convex set $K\in \K_{>\delta}$ at hand falls into neither of the previous scenarios (a) and (b). 
Therefore, the full parts $P_{0,j}(K)$ of $\P_0(K)$ must encompass at least $9|P_0(K)|/10$ points of $P_0(K)$. 

Since the polytope $I_0(K)=\conv\left(\tau^-\cup \tau^+\right)$, within $K$, encompasses at most $c_2\eps_2n$ points of $P_0(K)$ (or, else, such a set $K$ would have been hit in Step 2 by the net $N_\strong(P,c_2\eps_2)$), each of the full parts $P_{0,j}(K)$ must contain at least $|P_{0,j}(K)|-c_2\eps_2n\geq 2|P_{0,j}(K)|/3$ such points that lie outside $I_0(K)$ (either above $\tau^+$ or below $\tau^-$).
For each of these full parts $P_{0,j}(K)$, it must be that either (i) at least $|P_{0,j}(K)|/3$ of its points lie above the simplex $\tau^+$ (and, in particular, above the supporting hyperplane $H^+=H(\tau^+)$), or (ii) at least $|P_{0,j}(K)|/3$ of its points lie below $\tau^-$ (and, in particular, below $H^-=H(\tau^-)$).

It can thus be assumed, with no loss of generality, that at least $2|P_0(K)|/5$ of the points of $P_0(K)$ fall in the full subsets $P_{0,j}(K)\in \P_0(K)$ of the former kind (so that at least $|P_{0,j}(K)|/3$ of their points lie above $\tau^+$ and $H^+$).
Let $\F_0(K)$ denote the collection of these subsets $P_{0,j}(K)$, so that

\begin{equation}\label{Eq:FullPartsCover}
	\left|\bigcup \F_0(K)\right|\geq 2|P_0(K)|/5.
\end{equation}

\noindent For every part $P_{0,j}(K)\in \F_0(K)$, let $P_{0,j}^+(K)$ denote its subset of at least 
$$
|P_{0,j}(K)|/3\geq \eps_2n/3\geq 30
$$ points that lie {\it above} $H^+$ and $\tau^+$ and, therefore, above every simplex $\tau\in \Sigma(K)$.
 (The second inequality uses (\ref{Eq:Full}).)

Any such full part $P_{0,j}(K)$ in $\F_0(K)$ determines, within $\Lambda_{0,j}(K)$, the sub-family 
$$
\Lambda^+_{0,j}(K):={P^+_{0,j}(K)\choose 2}
$$ 
\noindent which is comprised of
 at least $|\Lambda_{0,j}(K)|/20$ short 2-edges that lie above $\tau^+$. 
If such a part $P_{0,j}(K)$ belongs to $\P_0(K)\setminus \T_0(K)$ then, in particular, fewer than $\frac{1}{4}|\Lambda^+_{0,j}(K)|\cdot|\Sigma(K)|$ of the pairs $(\lambda,\tau)\in \Lambda_{0,j}^+(K)\times \Sigma(K)$ are tight.
As a result, at least one of the following two scenarios has to occur for $P_{0,j}(K)$:

\begin{enumerate}
	\item[(i)] at least  $|\Lambda_{0,j}^+(K)|\cdot |\Sigma(K)|/2$ of the pairs $(\lambda,\tau)\in \Lambda_{0,j}^+(K)\times \Sigma(K)$ are not parallel, or
	\item[(ii)]  at least $|\Lambda_{0,j}^+(K)|\cdot |\Sigma(K)|/4$ of the pairs in $(\lambda,\tau)\in \Lambda_{0,j}^+(K)\times \Sigma(K)$ are parallel, yet not tight.

\end{enumerate}

Let $\F'_0(K)$ (resp., $\F''_0(K)$) denote the sub-family of {\it all} such full subsets $P_{0,j}(K)\in \F_0(K)$ that satisfy condition (i) (resp., criterion (ii)).\footnote{For the sake of our subsequent case analysis, it is immaterial whether the families $\F'_0(K)$ and $\F''_0(K)$ are contained in $\T_0(K)$, as long as they cover sufficiently many points of $P_0(K)$.} 
Using that $|\bigcup\T_0(K)|\leq |P_0(K)|/5$, $\F_0(K)\setminus\T_0(K)\subseteq \F'_0(K)\cup \F''_0(K)$, and (\ref{Eq:FullPartsCover}), we conclude that

$$
\left|\bigcup \F'_0(K)\cup \F''_0(K)\right|\geq \left|\bigcup\F_0(K)\right|-\left|\bigcup\T_0(K)\right|\geq |P_0(K)|/5.
$$

To complete the proof of Lemma \ref{Lemma:Flat}, we distinguish between two possible sub-scenarios for $K$.

%Given that $K$ is {\it not} flat, at least one of the sets $\bigcup \F'_0(K)$ and $\bigcup \F''_0(K)$ must encompass at least $|P_0(K)|/10$ points of $P_0(K)$. 

\medskip
\noindent {\bf Case (c1).} If at least one subset $P_{0,j}(K)\in \F_0(K)$ belongs to $\F'_0(K)$, we again argue that such a set $K$ must have been pierced by the net $N_\strong(P,c_2\eps_2)\subseteq N_{>\delta,2}$ (again, pending on a suitably small choice of the constant $c_2>0$ in Step 2).

\begin{figure}
    \begin{center}
        \input{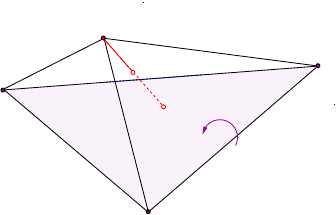_t}  \caption{\small Proof of Lemma \ref{Lemma:Flat} in $\reals^3$ -- case (c1). The 2-edge $\lambda$ lies above $\tau$. Since $\lambda$ and $\tau$ are {\it not} parallel, a vertex $q$ of $\lambda$ must lie in the convex hull of $\tau$ and the remaining vertex of $\lambda$.}
      \label{Fig:CaseC1}
    \end{center}
\end{figure}

\medskip
To this end, fix a non-parallel pair $(\lambda,\tau)$, with $\lambda\in \Lambda^+_{0,j}(K)$ and $\tau\in \Sigma(K)$.
Since $\tau$ is still a boundary facet of the polytope $\conv(\tau\cup \lambda)$ (as the short 2-edge $\lambda$ lies in the halfspace above $H(\tau)$), there must be a vertex $p\in \lambda$ so that $\conv(\tau\cup \kappa)=\conv(\tau\cup \{p\})$; see Figure \ref{Fig:CaseC1}. Thus, $\conv(\tau\cup \kappa)$ must contain the remaining vertex $q$ of $\lambda$.    
 We, therefore, assign any such non-parallel pair $(\lambda,\tau)\in \Lambda^+_{0,j}(K)\times \Sigma(K)$ the ``smaller" configuration $(p,\tau)$, which involves only $d+1$ vertices, so that the ``missing" vertex $q$ of the short 2-edge $\lambda$ lies in the interior of $\conv\left(\tau\cup \{p\}\right)$. 
 
 Since the set $P_{0,j}^+(K)\times \Sigma(K)$ encompasses only $O\left(|P_{0,j}^+(K)|\cdot |\Sigma(K)|\right)$ possible configurations, which we have assigned to a total of $\Theta\left(|P_{0,j}^+(K)|^2\cdot |\Sigma(K)|\right)$ pairs $(\lambda,\tau)$, the pigeonhole principle yields such a configuration $(p,\tau)$ that has been assigned to $\Omega\left(|P_{0,j}^+(K)|\right)=\Omega(\eps_2n)$ distinct pairs $(\lambda=\{p,q\},\tau)$, each adding a distinct vertex $q\in P_{0,j}^+(K)$ into the interior of the simplex $\conv\left(\tau\cup \{p\}\right)$.
Given a small enough choice of $c_2>0$ in Step 2,  this simplex must contain at least $c_2\epsilon_2n$ points of $P_K$. However, such a convex set $K\in \K_{>\delta}$ must have been pierced by the strong net $N_\strong(P,c_2\eps_2)\subseteq N_{>\delta,2}$, and removed in Step 2.

\bigskip
\noindent {\bf Case (c2).} Let us assume, at last, that $\F'_0(K)=\emptyset$ and, therefore,
$$
\left|\bigcup \F''_0(K)\right|\geq |P_0(K)|/5.
$$

\noindent Once again, we will find a canonical line $\ell_0\in \L_0=\L(P_0,u)$ so that, given a suitably small constant $c_2>0$ in Step 2, the interval $K\cap \ell_0$ must contain a point of the net $N'_{\ell_0}\subset N_{>\delta,2}$.

To this end, fix a simplex  $\tau=\conv(p_1,\ldots,p_d)\in \Sigma(K)$, so that $p_i\in P_i(K)$ for all $1\leq i\leq d$. Refer to Figure \ref{Fig:CaseC2-Setup}.
Note that the vertices of $\tau$``mark" $d$ ambient clipped cells $\varphi'(p_1),\ldots,\varphi'(p_d)$, whose respective projections are contained in the respective cells $\Delta_1,\ldots,\Delta_d$, which surround $\Delta_0$ (so that $\Delta_0$ is, in particular, surrounded by the projections $\varphi'(p_1)^\perp,\ldots,\varphi'(p_d)^\perp$). 

Once again, let us consider the ``clipped envelope" of $\tau$ 
$$
\Phi'(\tau)=\conv\left(\varphi'(p_1)\cup\ldots\cup \varphi'(p_d)\right),
$$ 

\noindent which is obviously contained in $\Phi(\tau)=\conv\left(\bigcup_{i=1}^d\varphi(p_i)\right)$. According to Lemma \ref{Lemma:TangentHyperplanes}, $\Phi'(\tau)$ has a unique facet $\phi^+_\tau\in {V'\choose d}$ which supports the common upper tangent $H^+_\tau=H(\phi_\tau)$ to $\varphi'(p_1),\ldots,\varphi'(p_d)$, with the property that $\Delta_0\subseteq (\phi_\tau^+)^\perp$ (hence, every point of $P_0(K)$ lies either above or below the simplex $\phi^+_\tau$). We thus assign $\phi^+_\tau$ to $\tau$, and use $P_{0}^+(K;\tau)$ to denote the subset of all such points in $P_{0}^+(K)$ that lie above the respective simplex $\phi^+_\tau\in {V'\choose d}$ of $\tau$.

 \begin{figure}
    \begin{center}
    \input{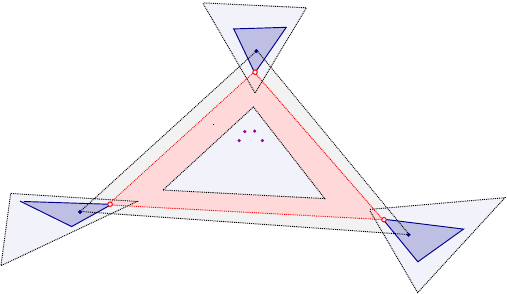_t}\caption{\small Proof of Lemma \ref{Lemma:Flat} in $\reals^3$. Case (c2) -- a schematic view from above. Depicted is a simplex $\tau\in \Sigma(K)$.
    The projections of the clipped cells $\varphi'(p_1),\varphi'(p_2)$ and $\varphi'(p_3)$ surround the simplicial cell $\Delta_0$, which contains the set $P_0(K)$.
    Depicted are the ``upper" facet $\phi^+_\tau$ of $\Phi'(\tau):=\conv\left(\varphi'(p_1)\cup\varphi'(p_2)\cup\varphi'(p_3)\right)$, and the subset $P^+_0(K;\tau)$ of all such points in $P^+_0(K)\subseteq P_0(K)$ that lie above $\phi^+_\tau$.}
      \label{Fig:CaseC2-Setup}
    \end{center}
\end{figure}

In view of (\ref{Eq:Full}), any simplex of $\phi\in {V'\choose d}$ can be assigned in this manner at most $\lfloor \eps_1 n\rfloor^d$ simplices $\tau\in \Sigma(K)$, which all satisfy $\phi=\phi_\tau^+$.\footnote{The somewhat weaker bound, than in case of (b), is because the ambient clipped cells $\varphi'(p_i)$ of the vertices $p_i$ of $\tau$, for $1\leq i\leq d$, can be full and, thus, each contain up to $\lfloor\eps_1n\rfloor$ points of the respective subset $P_i(K)$.}
At the center of our argument lies the following property.

\begin{lemma}\label{Lemma:Above}
With the previous choice of the convex set $K\in \K_{>\delta}$ that falls into case (c2), the family $\Sigma(K)$ contains a subset $\tilde{\Sigma}(K)$ of at least
$|\Sigma(K)|/10^5$ simplices $\tau\in \Sigma(K)$ that each satisfy $|P_0^+(K;\tau)|\geq |P_0(K)|/10^5$.
 \end{lemma}
 \begin{proof}[Proof of Lemma \ref{Lemma:Above}]
By the pigeonhole principle, it is enough to find $|\Sigma(K)|\cdot|P_0(K)|/10^4$ such pairs $(\tau,p)\in \Sigma(K)\times P^+_0(K)$ that satisfy $p\in P_0^+(K;\tau)$.
Furthermore, as the parts of $\F''_0(K)$ cover at least $|P_0(K)|/5$ points of $P_0(K)$, it suffices to show that any part $P_{0,j}(K)\in \F_0(K)$ determines at least $|\Sigma(K)|\cdot |P_{0,j}(K)|/2000$ pairs $(\tau,p)\in \Sigma(K)\times P_{0,j}(K)$, in which $p$ lies above $\phi_\tau^+$. 

To see the latter claim, we keep the subset $P_{0,j}(K)\in \F''_0(K)$ fixed, and use $\Upsilon_{j}$ to denote the associated subset of at least $|\Lambda^+_{0,j}(K)|\cdot |\Sigma(K)|/4$ %=\Theta\left(|P_{0,j}(K)|^d\cdot |\Sigma(K)|\right)$ 
parallel, albeit loose pairs $(\lambda,\tau)\in \Lambda_{0,j}^+(K)\times \Sigma(K)$.
The crucial observation is that, in any pair $(\lambda,\tau)\in\Upsilon_{j}$, at least one of the vertices of $\lambda$ must lie above $\Phi'(\tau)$ and, thereby, also above $\phi^+_\tau$, since (i) $\lambda$ lies above $\tau$, (ii) $\Phi'(\tau)$ is contained in the envelope $\Phi(\tau)$ of $\tau$, which does not contain $\lambda$ (as the pair $(\lambda,\tau)$ is loose), and (iii) we have that $\lambda^\perp\subseteq \Delta_0^\perp\subseteq (\phi^+_\tau)^\perp$ (so that every vertex of $\lambda$ that lies outside $\Phi'(\tau)\supset \tau$, has to lie above $\phi^+_\tau$). %the convex hull of $\varphi'\left(p_1\right),\ldots,\varphi'\left(p_d\right)$ must lie entirely below $H(\lambda)$. %In particular, the facet $\phi_\tau$ of $\Phi_\tau$ must lie entirely below $H(\lambda)$ (and, therefore, also below $\lambda$, as $\phi_\tau^\perp\subset \Delta_0$ is surrounded by the projections of $\varphi'\left(p_1\right),\ldots,\varphi'\left(p_d\right)$). 

 By the pigeonhole principle, there is a sub-collection $\tilde{\Sigma}_j(K)$ of at least $|\Sigma(K)|/16$ simplices $\tau\in \Sigma(K)$, so that each of them is part of at least $|\Lambda^+_{0,j}(K)|/16$ loose pairs $(\lambda,\tau)\in \Upsilon_j$.
Fix $\tau\in \tilde{\Sigma}_j(K)$. Then any short 2-edge $\lambda\in \Lambda_{0,j}^+(K)$, that completes a loose pair $(\lambda,\tau)\in \Upsilon_j$ with $\tau$, must include at least one vertex of $P^+_0(K;\tau)$ (as depicted in Figure \ref{Fig:TightLoose3D} (left)). 
Therefore, to allow at least $|\Lambda^+_{0,j}(K)|/16={|P_{0,j}^+(K)|\choose 2}/16$ choices of $\lambda\in \Lambda_{0,j}^+(K)$ as above, the subset $P_{0,j}^+(K;\tau)$ must contain at least $|P_{0,j}^+(K)|/32\geq |P_{0,j}(K)|/96$ points which lie above $\phi_\tau^+$. 
Repeating this argument for at least $|\Sigma(K)|/16$ simplices $\tau\in \tilde{\Sigma}_j(K)$ yields a total of at least $|\Sigma(K)|\times |P_{0,j}(K)|/2000$ pairs $(\tau,p)\in \Sigma(K)\times P_{0,j}^+(K)$ in which $p\in P_0^+(K,\tau)$.
\end{proof}

 \begin{figure}
    \begin{center}
 \input{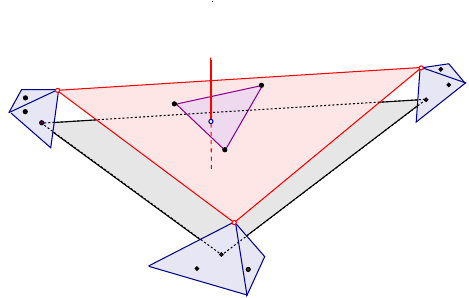_t}  \caption{\small Proof of Lemma \ref{Lemma:Flat} in $\reals^3$ -- case (c2). Arguing that the intersection $\phi_\tau^+\cap \ell_0$ lies in $K\cap \ell_0$. The line $\ell_0$ crosses the triangles $\pi\in {P^+_0(K,\tau)\choose 3}$, $\phi_\tau^+$, and $\tau$ in this decreasing order of the last coordinate. %Since both $\pi$ and $\tau$ belong to ${P_K\choose 3}$, the interval $K\cap \ell_0$ must contain the point $\phi^+_\tau\cap \ell_0\in X'_{\ell_0}$.
 }
      \label{Fig:CaseC2}
    \end{center}
\end{figure}

In view of the estimate (\ref{Eq:ManyPointsLocal}), it follows, for each of the simplices $\tau\in \tilde{\Sigma}(K)$ in Lemma \ref{Lemma:Above}, the cardinality of its respective subset $P^+_0(K;\tau)$ satisfies
$$
|P_0^+(K;\tau)|\geq |P_0(K)|/10^5\geq \max\{\varepsilon|P_0|/\vartheta,d\}.
$$
%, while the convex hull of $P^+_0(K;\tau)$ is $\varepsilon$-spread in the secondary vertical partition $\V_0=\V(P_0,u)$ (as $K$ have been missed by the first net $N_{>\delta,1}$). 
Let $\tau\in \tilde{\Sigma}(K)$, and consider the family $\Pi_0^+(K;\tau):={P_0^+(K;\tau)\choose d}$ of $(d-1)$-simplices over the set $P_0^+(K;\tau)$.
Using that $K$ is $\varepsilon$-spread within $\V(P_0,u)$ in a manner that is similar to case (b), it follows that the $\vartheta$-canonical family $\L_0=\L(P_0,u)$ must include a line $\ell$ piercing at least {\it one} simplex $\pi\in \Pi_0^+(K;\tau)$. 
Furthermore, since any such simplex $\pi\in \Pi^+_0(K;\tau)$ lies above the simplex $\phi^+_\tau$ (whose supporting hyperplane $H\left(\phi^+_\tau\right)$ lies above $\Phi'(\tau)\supset \tau$), this line $\ell$ must meet
$\pi$, $\phi_\tau^+$, and $\tau$ in this decreasing order of the $d$-th axis, and within the interval $K\cap \ell$, as depicted in Figure \ref{Fig:CaseC2}.

We thus have found a subset $\tilde{\Sigma}(K)$ of at least $|\Sigma(K)|/10^4=\Theta\left(\left(\delta \eps n/s\right)^d\right)$ simplices $\tau$, and assigned each of them a simplex $\phi^+_\tau\in {V'\choose d}$ meeting some line $\ell\in \L_0$ at a point of $X'_\ell\cap K$. As any simplex $\phi=(v_1,\ldots,v_d)\in {V'\choose d}$ has been assigned in this manner to at most $\left(\lfloor\eps_1n\rfloor\right)^d$ simplices $\tau=(p_1,\ldots,p_d)\in \tilde{\Sigma}(K)$, there must be such a line $\ell_0\in \L_0$ whose cross-section $\ell_0\cap K$ with $K$ encompasses at least

$$
\frac{|\tilde{\Sigma}(K)|}{\left(\lfloor\eps_1 n \rfloor\right)^d\cdot |\L_0|}=\Theta\left(\frac{(\delta\eps n/s)^d}{\left(\delta\eps n/\left(st^{1-1/\alpha_d}\right)\right)^d\cdot |\L_0|}\right)=\Theta\left(\frac{t^{d-d/\alpha_d}}{u^{(d-1)^2}}\right)$$

\noindent points of $X'_{\ell_0}$. (Here the first bound uses the definition (\ref{Eq:1stEpsPunct}) of $\eps_1>0$ in Step 1.) Thus, provided a sufficiently small choice of $c_2>0$ in the definition of the nets $N'_{\ell}$, the interval $\ell_0\cap K$ must again contain a point of the net $N'_{\ell_0}\subset N_{>\delta,2}$, which completes the analysis of the last scenario (c2).

\medskip
In conclusion, any convex set $K\in \K_{>\delta}$ that fails to meet the criteria of flatness in Step 3, must fall into at least one of the four scenarios (a), (b), (c1), or (c2), that arise in our analysis, and, pending on a suitably small choice of the constant $c_2>0$ in Step 2, be dispatched by the combination $N_{>\delta,1}\cup N_{>\delta,2}$ of their previous nets. $\Box$

\subsection{Proof of Lemma \ref{Lemma:SparsePi}}\label{Subsec:SparsePi}
The proof of Lemma \ref{Lemma:SparsePi} comes down to describing a sufficiently sparse subset $\tilde{\Pi}\subseteq {P\choose d}$ that would still be ``locally dense" over every principal subset $P_K$ that is cut out by some flat convex set $K\in \K_{>\delta}$. 

\medskip
\noindent{\bf The subset $\tilde{\Pi}=\tilde{\Pi}(P,\eps,\delta,s,t)$.}  To obtain such a subset $\tilde{\Pi}\subseteq {P\choose d}$, which would not depend on the flat convex set $K\in \K_{>\delta}$ at hand, let us first consider the family 
 $$
\Lambda(P,s,t):=\biguplus_{P_{i,j}\in \P(P,s,t)} {P_{i,j}\choose 2},
$$

\noindent which is comprised of all such 2-edges in ${P\choose 2}$ whose two vertices come from the {\it same part $P_{i,j}$} in our secondary partition $\P(P,s,t)$ of Step 1. Notice that

\begin{equation}\label{Eq:Lambda}
|\Lambda(P,s,t)|=O\left(\sum_{1\leq i\leq s}\sum_{1\leq j\leq t}{\lceil n/(st)\rceil \choose 2}\right)=O\left(\frac{n^2}{st}\right).
\end{equation}

\medskip
\noindent{\bf Definition.}  For every point $p\in P$, let us denote

\begin{center}
$\displaystyle \Pi_p:=\left\{\pi\in {P\choose d}\mid p\in \pi\right\}$, {and} $\displaystyle\Lambda_p:=\{\lambda\in \Lambda(P,s,t)\mid p\in \lambda\}$.
\end{center}

Namely, the set $\Pi_p$ (resp., $\Lambda_p$) is comprised of all the $d$-dimensional simplices in ${P\choose d}$ (resp., 2-edges in $\Lambda(P)$) that are adjacent to $p$. Note that every $\pi\in \Pi_p$ must be of the form $\conv(p\cup\pi_p)$, where $\pi_p$ denotes some $(d-2)$-simplex in ${P\setminus \{p\}\choose d-1}$.

\bigskip
The subset $\tilde{\Pi}(P,\eps,\delta,s,t)\subseteq {P\choose d}$ will be determined by the means of  a delicate ``local" relation between the simplices of $\Pi_p$ and the short 2-edges of $\Lambda_p$, that holds for each $p\in P$. 

\medskip
\noindent {\bf Definition.} Let $p\in P$.
We say that a short 2-edge $\lambda\in \Lambda_p$ and a $(d-1)$-simplex $\pi=\conv(p\cup \pi_p)\in \Pi_p$ are {\it friendly} at their shared vertex $p$ if 
there exists a hyperplane $H$ through $\lambda$ whose zone within the secondary partition $\P(P,s,t)$ encompasses the $d-1$ vertices $p_i$ of $\pi_p=\conv(p_1,\ldots,p_{d-1})$; namely, $H$ must contain both vertices of $\lambda$, and cross the ambient cells $\varphi(p_i)$, for $1\leq i\leq d-1$ (which need not be distinct). See Figure \ref{Fig:Friendly}.

\begin{figure}
    \begin{center}
        \input{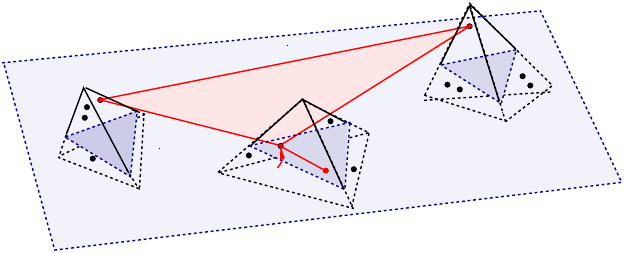_t}  \caption{\small Proof of Lemma \ref{Lemma:SparsePi} in $\reals^3$. The 2-edge $\lambda\in \Lambda_p$ and the triangle $\pi\in \Pi_p$ are friendly at the shared vertex $p$ in $\reals^3$. Depicted is the plane $H$ through $\lambda$ that crosses the ambient cells $\varphi(p_i)$ of the remaining $2$ vertices of $\pi$.}
      \label{Fig:Friendly}
    \end{center}
\end{figure}

\medskip
Let us now fix a sufficiently small constant $c_3>0$, which will be determined in the end of this proof (with implicit dependence on the constant $c_2>0$ in Step 2), and say that a simplex $\pi\in \Pi_p$ is {\it rich} at its vertex $p$ if there exist at least 

\begin{equation}\label{Eq:Popular}
c_3\cdot \frac{\delta\eps n}{st^{1-1/d}}\gg \eps_2 n
\end{equation}

\noindent simplices $\lambda\in \Lambda_p$ that are friendly with $\pi$ at $p$.

\medskip
For each $p\in P$, let $\tilde{\Pi}_p\subseteq {P\choose d}$ denote the subset of all such simplices $\pi\in {P\choose d}$ that are rich at $p$. We then define

$$
\tilde{\Pi}(P,\eps,\delta,s,t):=\bigcup_{p\in P} \tilde{\Pi}_p.
$$

It, therefore, remains to show that the just constructed subset $\tilde{\Pi}(P,\eps,\delta,s,t)\subseteq {P\choose d}$ meets both criteria of Lemma \ref{Lemma:SparsePi} (pending on a sufficiently small choice of the constant $c_3>0$ in (\ref{Eq:Popular})).

\bigskip
\noindent {\bf Part (i) -- bounding $|\tilde{\Pi}(P,\eps,\delta,s,t)|$.}  To express the cardinality of $\tilde{\Pi}(P,\eps,\delta,s,t)$ in the terms of the much smaller quantity $|\Lambda(P,s,t)|$, we devise a {\it charging scheme} which assigns every $\pi\in \tilde{\Pi}(P,\eps,\delta,s,t)$ to multiple short 2-edges in $\Lambda(P,s,t)$ that share a vertex with $\pi$.

For each point $p\in P$, we {\it assign} every simplex $\pi\in \tilde{\Pi}_p$ that is rich at $p$, to every short 2-edge $\lambda\in \Lambda_p$ that is friendly with $\pi$ at $p$; each of these short 2-edges $\lambda$ {\it pays} 1 unit of charge to $\pi$, via their shared vertex $p$.
By definition, every $\pi\in \tilde{\Pi}(P,\eps,\delta,s,t)$ is rich at (at least) one of its $d$ vertices $p\in \tau$ and, therefore, is assigned to at least
$c_3\cdot \frac{\delta\eps n}{st^{1-1/d}}$ 
\noindent short 2-edges $\lambda\in \Lambda_p$. It, therefore, remains to bound the maximum number of simplices in $\tilde{\Pi}(P,\eps,\delta,s,t)$ that have been assigned in the above manner, to any given short 2-edge $\lambda\in \Lambda(P,s,t)$.

\begin{lemma}\label{Prop:TimesCharged}
Any short 2-edge $\lambda\in \Lambda(P,s,t)$ is assigned a total of $O\left(n^{d-1}/\s^{1/d}\right)$ simplices of $\tilde{\Pi}(P,\eps,\delta,s,t)$.
\end{lemma}
\begin{proof}
Fix a short 2-edge $\lambda=\{p,q\}\in \Lambda(P,s,t)$. Note that, for every simplex $\pi=\conv(p,p_1,\ldots,p_{d-1})=\conv\left(p,\pi_p\right)\in \tilde{\Pi}(P,\eps,\delta,s,t)$, that has been assigned to $\lambda$ via $p$, 
the remaining $d-1$ vertices $p_i$ of $\pi_p=(p_1,\ldots,p_{d-1})$ lie the zone of some hyperplane $H$ through $\lambda$. (In other words, $H$ must contain both endpoints of $\lambda$, and also cross the ambient cell $\varphi(p_i)$ of every vertex $p_i\in \pi_p$.) 

Furthermore, according to Lemma \ref{Lemma:ExtremalHyperplane}, there must exist such a hyperplane $H$ that passes through $p,q$, and some $d-2$ vertices $v_1,\ldots,v_{d-2}$ of the (not necessarily distinct) cells $\varphi(p_i)$, with $1\leq i\leq d-1$. In other words, $H$ is fully determined by the ``signature" which is comprised of $p,q$ and some $d-2$ vertices of the simplicial cells in the secondary partition $\P(P,s,t)$ of Step 2. 
It can, furthermore, be assumed, up to relabeling, that the vertices $v_1,\ldots,v_{d-2}$ come entirely from the boundaries of the cells $\varphi(p_1),\ldots,\varphi(p_{d-2})$. Hence, given $p$, $q$, and the first $d-2$ vertices $p_1,\ldots,p_{d-2}$ of $\pi_p$, the hyperplane $H$ can be ``guessed" in up to $O({(d+1)(d-2)\choose d-2})$ distinct ways via the respective cells $\varphi(p_1),\ldots,\varphi(p_{d-2})$. 

Lastly, since any hyperplane crosses only $O(st^{1-1/d})$ simplicial cells in $\P(P,s,t)$, fixing the points $p,q,p_1,\ldots,p_{d-2}$, together with $H$, leaves only $O\left(\frac{n}{st}\cdot st^{1-1/d}\right)=O\left(n/t^{1/d}\right)$ possible choices for the remaining vertex $p_{d-1}$ from within the zone of $H$. Thus, the number of the possible choices of $\pi$ cannot exceed
$O\left(n^{d-2}\cdot n/t^{1/d}\right)=O\left(n^{d-1}/t^{1/d}\right)$.
\end{proof}

\noindent In view of the just established Lemma \ref{Prop:TimesCharged}, and the upper bound (\ref{Eq:Lambda}) on the cardinality of $\Lambda(P,s,t)$, the short 2-edges $\lambda\in \Lambda(P,s,t)$ transfer via their vertices a total of 
$$
O\left(|\Lambda(P,s,t)|\cdot \frac{n^{d-1}}{t^{1/d}}\right)=O\left(\frac{n^{d+1}}{st^{1+1/d}}\right)
$$
\noindent units of charge to the friendly rich simplices $\pi\in \tilde{\Pi}$ (each time via a common vertex $p$ of $\lambda$ and $\pi$, which may not be unique). Since any simplex of $\tilde{\Pi}$ receives at least
$c_3\cdot \frac{\delta\eps n}{st^{1-1/d}}$
of these units of charge, via one or more of its vertices, it follows that 

$$
|\tilde{\Pi}(P,\eps,\delta,s,t)|=O\left(\frac{n^{d+1}}{st^{1+1/d}}\cdot \frac{st^{1-1/d}}{\delta\eps n}\right)=O\left(\frac{n^d}{\delta\eps t^{2/d}}\right),
$$

\noindent which concludes the proof of the first part of Lemma \ref{Lemma:SparsePi}.

%\noindent {\bf Remark.} To achieve $|\Pi_\tight|=o(n^d)$, we are to choose $\displaystyle\frac{1}{\eps^{d/2+\gamma}}\ll t$ with an arbitrary small albeit fixed, constant $\gamma>0$. 

\bigskip
\noindent{\bf Part (ii).} It, therefore, remains to show, for any flat and $\delta$-punctured convex set $K\in \K_{>\delta}$, that its induced set ${P_K\choose d}$ encompasses sufficiently many simplices of the just defined family $\tilde{\Pi}=\tilde{\Pi}(P,\eps,\delta,s,t)\subseteq {P\choose d}$. 

To this end, recall that every set $K$ as above is endowed with the principal $(d+1)$-tuple $\Psi(K)=(P_0,\ldots,P_{d})$. Since $K$ is flat, the secondary $t$-partition $\P_0=\P(P_0,t)$ of $P_0$ (or, more precisely, the partition $\P_0(K)$ of $P_0(K)$ by $\P_0$) yields the subset $\T_0(K)\subseteq \P_0(K)$, whose tight parts $P_{0,j}(K)$ together encompass at least $|P_0(K)|/5$ points of $P_0(K)$. Specifically, every part $P_{0,j}(K)\in \T_0(K)$ is full (that is, it satisfies $|P_{0,j}(K)|\geq \eps_2 n\geq 100$), and yields at least $|\Lambda_{0,j}(K)|\cdot|\Sigma(K)|/80$ tight pairs $(\lambda,\tau)\in \Lambda_{0,j}(K)\times\Sigma(K)$.

\smallskip
At the heart of our argument lies the following property.

\begin{lemma}\label{Lemma:TightCrosses}
Let $K\in \K_{>\delta}$ be a flat convex set, and $P_{0,j}(K)\in \T_0(K)$ a tight subset, and $(\lambda,\tau)$ a tight pair in $\Lambda_{0,j}(K)\times\Sigma(K)$. Then there exists a hyperplane through $\lambda$ that crosses at least $d-1$ among the ambient cells $\varphi(p_i)$ of the vertices of $\tau=\conv(p_1,\ldots,p_d)$. 
\end{lemma}

\begin{proof}
	Refer to Figure \ref{Fig:CrossSidesTight}. Recall that, by the definition of a tight pair, both vertices of the short 2-edge $\lambda=\{p,q\}$ must lie in the envelope $\Phi(\tau)=\conv\left(\varphi(p_1)\cup\ldots\cup \varphi(p_d)
	\right)$ (which contains the $d$ {\it distinct} simplicial cells $\varphi(p_i)$ within the respective secondary partitions $\P_i=\P(P_i,t)$, for $1\leq i\leq d$). Furthermore, $\tau$ is missed by the line $\ell_{\lambda}=\aff(\lambda)$. Projecting the cells $\varphi(p_i)$ in the direction of $\ell_\lambda$ yields $d$ convex polytopes $\overline{\varphi}(p_i)$ within $\reals^{d-1}$, while $\ell_\lambda$ similarly projects to a point $\overline{\ell}_\lambda\in \reals^{d-1}$.

	\begin{figure}
    \begin{center}
        \input{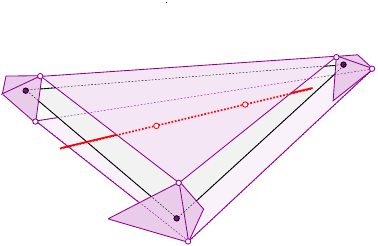_t}\hspace{2cm}\input{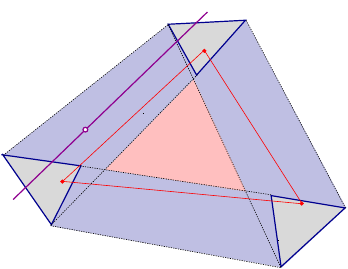_t}  \caption{\small Lemma \ref{Lemma:TightCrosses} in $\reals^3$. Left: The pair $(\lambda,\tau)$ is tight, so the line $\ell_\lambda$ through the 2-edge $\lambda$ must cross the convex hull of some 2 simplicial cells among $\varphi(p_1),\varphi(p_2),\varphi(p_3)$. Right: Proof of the lemma. Depicted are the projections  $\overline{\varphi}(p_i)$ of the cells $\varphi(p_i)$ in the direction of $\ell_\lambda$, along with the intercept $\overline{\ell}_\lambda$ of $\ell_\lambda$. Since $\overline{\ell}_\lambda\not\in \overline{\tau}$, there exists a $1$-plane (i.e., a line) $\overline{H}$ through $\overline{\ell}_\lambda$ that intersects some $2$ among the projected sets $\overline{\varphi}(p_i)$.}
      \label{Fig:CrossSidesTight}
    \end{center}
\end{figure}

	 Note that $\overline{\ell}_\lambda$ cannot be surrounded by $\overline{\varphi}(p_1),\ldots,\overline{\varphi}(p_d)$. Indeed, in the contrapositive scenario $\overline{\ell}$ would lie within the projection $\overline{\tau}$ of $\tau$, whose $d$ vertices are selected from the $d$ distinct polytopes $\overline{\varphi}(p_i)$, in which case the line $\ell_\lambda$ would intersect $\tau$.
	 
	Since $\overline{\ell}_\lambda$ clearly lies within the projection $\overline{\Phi}(\tau)=\conv\left(\overline{\varphi}(p_1)\cup\ldots\cup\overline{\varphi}(p_d)\right)$ of $\Phi(\tau)$ (as $\Phi(\tau)$ contains $\lambda\subset \ell_\lambda$), Lemma \ref{Prop:NotSurrounded} yields a $(d-2)$-plane $\overline{H}$ within $\reals^{d-1}$ that contains $\overline{\ell}_\lambda$, and intersects some $d-1$ among the $d$ polytopes $\overline{\varphi}(p_i)$, say, $\overline{\varphi}(p_1),\ldots,\overline{\varphi}(p_{d-1})$. (See Figure \ref{Fig:CrossSidesTight} (right).) Then the respective $d-1$ cells $\varphi\left(p_1\right),\ldots,\varphi\left(p_{d-1}\right)$ must be intersected  in $\reals^d$ by the hyperplane $H$, through $\lambda$, that is obtained by ``lifting" $\overline{H}$ in the direction of $\ell_\lambda$. \end{proof}

\medskip
\noindent{\bf Definition.} Let us fix a flat convex set $K\in \K_{>\delta}$. Denote $\tilde{P}(K):=\biguplus_{i=1}^d P_i(K)$. For every tight subset $P_{0,j}(K)\in \T_0(K)$, let $\Pi_{j}(K)$ denote the family $P_{j}(K)\ast {\tilde{P}(K)\choose d-1}\subset {P_K\choose d}$ which is comprised of all such simplices that exactly one of their vertices belongs to $P_{0,j}(K)$, whereas the rest $d-1$ vertices are chosen from $\tilde{P}(K)$. 

\medskip
The asserted claim for $K$ will follow from the fact that, provided a suitably small choice of $c_3>0$ in (\ref{Eq:Popular}), every tight subset $P_{0,j}(K)$ within $\T_0(K)$ contributes to $\tilde{\Pi}\cap {P_K\choose d}$ a subfamily $\tilde{\Pi}_j(K)$ of 
$\Theta\left(|\Pi_j(K)|\right)=\Theta\left(|P_{0,j}(K)|\cdot |\tilde{P}(K)|^{d-1}\right)\gg \eps_2n\cdot \left(\delta\eps n/s\right)^{d-1}$ such simplices $\pi\in \Pi_j(K)$ that are rich at their only vertex within $P_{0,j}(K)$. %To this end, it will be shown that every simplex $\pi\in \tilde{\Pi}_j(K)$ can be charged, via it only vertex $p$ within $P_{0,j}(K)$, to $\Omega\left(|P_{0,j}(K)|\right)\gg \eps_2 n$ friendly short 2-edges $\lambda\in \Lambda_{0,j}(K)$ that are adjacent to $p$.
%The existence of such subsets $\tilde{\Pi}_j(K)\subseteq \Pi_j(K)$ is an immediate corollary of the following statement.
The following lemma implies that an average simplex $\pi\in \Pi_j(K)$ is indeed friendly with $\Omega\left(|P_{0,j}(K)|\right)\gg \eps_2n$ short edges $\lambda\in \Lambda_{0,j}(K)$, at their shared vertex within $P_{0,j}(K)$. 

\begin{lemma}\label{Claim:ManyFriendly}
	Let $K\in \K_{>\delta}$ be a tight convex set, and $P_{0,j}(K)\in \T_0(K)$ be a tight subset. Then there exist 
$$
\Omega\left(|\Lambda_{0,j}(K)|\cdot |\tilde{P}(K)|^{d-1}\right)=\Omega\left(\left|P_{0,j}(K)\right|^2\cdot \left(\delta\eps n/s\right)^{d-1}\right)
$$ 
friendly pairs $(\lambda,\pi)$ with $\lambda\in \Lambda_{0,j}(K)$ and $\pi\in \Pi_{j}(K)$. 	
\end{lemma}

%a subset $\Lambda'_{K,j}\subset \Lambda_{K,j}$ of cardinality $\Omega\left(|\Lambda_{K,j}|\right)=\Omega\left(|Q_{K,j}|^d\right)$ so that each simplex $\mu\in \Lambda'_{K,j}$ is friendly with $\Omega\left(|S_K|^{d-1}\right)$ simplices of $Q_{K,j}\ast {S_K\choose d-1}\subset {P\choose d}$ (each time at a common vertex within $Q_{K,j}$). 
\begin{proof}
Since the subset $P_{0,j}(K)$ is tight, there exist at least 
$|\Lambda_{0,j}(K)\times \Sigma(K)|/80$
tight pairs $(\lambda,\tau)\in \Lambda_{0,j}(K)\times \Sigma(K)$. 
For any short 2-edge $\lambda\in \Lambda_{0,j}(K)$, let $\Sigma_\lambda(K)$ denote the subset of all such simplices $\tau\in \Sigma(K)$ that yield a tight pair $(\lambda,\tau)$.
By the pigeonhole principle, $\Lambda_{0,j}(K)$ must contain a subset $\tilde{\Lambda}_{0,j}(K)$ of at least $\frac{1}{400}|\Lambda_{0,j}(K)|$ such short 2-edges $\lambda$ that satisfy $|\Sigma_\lambda(K)|\geq \frac{1}{400}|\Sigma(K)|$. 

We claim that any short 2-edge $\lambda=\conv(p,q)\in \tilde{\Lambda}_{0,j}(K)$ must lie in a hyperplane $H_\lambda$ whose zone, within the secondary partition $\P(P,s,t)$, encompasses a subset $\tilde{P}_\lambda(K)\subseteq \tilde{P}(K)$ of cardinality $\Omega\left(|\tilde{P}(K)|\right)$.
To see this, let $\tau\in \Sigma_\lambda(K)$. By the previous Lemma \ref{Lemma:TightCrosses}, there is a hyperplane $H$ through $\lambda$ that crosses at least $d-1$ ambient cells $\varphi(p_i)$ of the vertices of $\tau=\conv(p_1,\ldots,p_d)$. Furthermore, Lemma \ref{Lemma:ExtremalHyperplane} yields such a hyperplane $H=H_{\lambda,\tau}$ that, in addition, passes through $p,q$, and some $d-2$ vertices of the cells $\varphi(p_i)$. It thus can be assumed, with no loss of generality, that all of these vertices come from the boundaries of the first $d-2$ cells $\varphi\left(p_1\right),\ldots,\varphi\left(p_{d-2}\right)$. 
Hence, the hyperplane $H$ can be ``guessed" from $p,q$ and $p_1,\ldots,p_{d-2}$ in $O\left({d(d-2)\choose d-2}\right)$ ways. 

We keep $\lambda\in \tilde{\Lambda}_{0,j}(K)$ fixed, and assign to each simplex $\tau\in \Sigma_\lambda(K)$ a unique signature $S_{\tau}=\{p_1,\ldots,p_{d-2}\}$ in ${\tilde{P}(K)\choose d-2}$. The pigeonhole principle then yields such a signature $S=\{p_1,\ldots,p_{d-2}\}\in {\tilde{P}(K)\choose d-2}$ that is shared by a subset $\Sigma_S\subseteq \Sigma_\lambda(K)$ of $\Theta\left(|\tilde{P}(K)|^2\right)$ simplices $\tau$ (which all satisfy $S_\tau=S$). Note that, for each of these simplices $\tau\in \Sigma_{S}$, at least one of the two ``free" vertices in the pair $\tau\setminus S$ must fall in the zone of the hyperplane $H=H_{\lambda,\tau}$. 

Repeating this argument for all $\tau\in \Sigma_S$, and using the pigeonhole principle one more time, yields (i) a hyperplane $H$ that contains $\lambda$ along with some $d-2$ vertices chosen from the boundaries of $\varphi(p_1),\ldots,\varphi(p_{d-2})$, and (ii) a subset $\tilde{P}_\lambda(K)$ of $\Omega\left(|\tilde{P}(K)|^2\right)$ such pairs in ${\tilde{P}(K)\choose 2}$ that at least one of their vertices lies in the zone $H$. As a result, the zone of $H$ must contain a subset $\tilde{P}_\lambda(K)\subseteq \tilde{P}(K)$ whose cardinality is $\Theta\left(|\tilde{P}(K)|\right)$.

%, which at least one of the $O\left({d(d-2)\choose d-2}\right)$ hyperplanes that contain $\lambda$ along with some $d-2$ vertices from the boundaries of $\varphi(p_1),\ldots,\varphi(p_{d-2})$
The asserted claim now stems from the following two facts: (i) any combination of $\lambda=\{p,q\}\in \tilde{\Lambda}_{0,j}(K)$ with a $(d-1)$-size subset $\mu\in {\tilde{P}_\lambda(K)\choose d-1}$ yields $2$ friendly pairs $(\lambda,\tau)\in \Lambda_{0,j}(K)\times \Pi_j(K)$, with $\tau\in \{\mu\cup \{p\},\mu\cup \{q\}\}$, and (ii) a friendly pair $(\lambda,\tau)$ is derived from at most $O(1)$ combinations $(\lambda,\mu)$, all of which involve the same set $\lambda\cup \tau$ of $d+1$ points.
\end{proof}

To complete the proof of Lemma \ref{Lemma:SparsePi}, let us fix a tight subset $P_{0,j}(K)\in \T_0(K)$. 
For every simplex $\pi\in \Pi_{j}(K)$ let $\Lambda_{0,j}(K;\pi)$ denote the subset of all such simplices $\lambda\in \Lambda_{0,j}(K)$ that are friendly with $\pi$ at its only vertex that belongs to $P_{0,j}(K)$.
Since $\left|\Pi_{j}(K)\right|=\Theta\left(|P_{0,j}(K)|\cdot\left(\delta\eps n/s\right)^{d-1}\right)$ and $|P_{0,j}(K)|\geq \epsilon_2 n=\Omega\left(\delta\eps n/\left(st^{1-1/d}\right)\right)$, Lemma \ref{Claim:ManyFriendly} yields, via the pigeonhole principle, a subset $\tilde{\Pi}_{j}(K)\subset \Pi_{j}(K)$ of $\Theta\left(|P_{0,j}(K)|\cdot\left(\delta\eps n/s\right)^{d-1}\right)$ 
such simplices $\pi\in \Pi_{j}(K)$ that satisfy $|\Lambda_{0,j}(K;\pi)|=\Theta\left(|P_{0,j}(K)|\right)=\Omega\left(\delta\eps n/\left(st^{1-1/d}\right)\right)$. 

\smallskip
A sufficiently small choice of the constant $c_{3}>0$ in (\ref{Eq:Popular})
guarantees that every $\pi\in \tilde{\Pi}_{j}(K)$ satisfies
$$
|\Lambda_{0,j}(K;\pi)|\geq c_{3}\cdot \frac{\delta\eps n}{st^{1-1/d}};
$$
\noindent that is, every $\pi\in \tilde{\Pi}_{j}(K)$ is rich at its only vertex within $P_{0,j}(K)$. As the subsets $\tilde{\Pi}_{j}(K)$ are clearly disjoint for all $1\leq j\leq t$, every tight subset $P_{0,j}(K)\in \T_0(K)$ contributes $|\tilde{\Pi}_{j}(K)|=\Theta\left(|\Pi_{j}(K)|\right)$ simplices to $\tilde{\Pi}\cap {P_K\choose d}$, for a total amount of

$$
\sum_{P_{0,j}(K)\in \T_0(K)}|\tilde{\Pi}_{j}(K)|\gg  \sum_{P_{0,j}(K)\in \T_0(K)}|\Pi_{j}(K)|\gg \sum_{P_{0,j}(K)\in \T_0(K)}|P_{0,j}(K)|\cdot \left(\frac{\delta\eps n}{s}\right)^{d-1}\gg
$$

$$
\gg |P_0(K)|\cdot \left(\frac{\delta\eps n}{s}\right)^{d-1}=\Theta\left(\frac{\delta^d\eps^dn^d}{s^d}\right).
$$

\noindent Here the third bound uses that $|\biguplus \T_0(K)|\geq |P_0(K)|/5$ for every flat convex set $K\in \K_{>\delta}$. This concludes the proof of Lemma \ref{Lemma:SparsePi}. $\Box$

\subsection{Wrap up for the $\delta$-punctured sets} \label{Subsec:PuncturedWrapUp}
To finish the proof of Theorem \ref{Theorem:BoundPunct}, let us first point out that, according to Lemma \ref{Lemma:ThirdNetPunctured}, every $\varepsilon$-spread and $\delta$-punctured set $K\in K_{>\delta}$ is pierced by the net 
$N_{>\delta}:=N_{>\delta,1}\cup N_{>\delta,2} \cup N_{>\delta,3}$. Putting together the estimates in Lemmas \ref{Lemma:1stNetPunctured}, \ref{Lemma:2nsNetPunctured}, and \ref{Lemma:ThirdNetPunctured} yields

\begin{equation}\label{Eq:PunctWrapUp}
|N_{>\delta}|\leq t^{1+\eta}\cdot f\left(\eps\cdot t^{1/\alpha_d-\eta}\right)+u^{1+\eta}\cdot f\left(\eps \cdot u^{1/(d-1)-\eta}\right)+
\end{equation}
$$
+O\left(t^{d/\alpha_d+\eta/(10d)}+\frac{t^{1-1/d}}{\eps^{1+\eta}}+\frac{1}{\eps^{d+1+\eta}t^{2/d}}\right),
$$

\noindent where $t=t_d$ is set to $\left\lceil 1/\eps^{\theta_d}\right\rceil$, for some constant $d/2\leq \theta_d\leq \alpha_d$ that is left to be determined for all $d\geq 3$. 

Since all the recursive terms in (\ref{Eq:PunctWrapUp}) are of the ``right" sort $a^{1+\eta}\cdot f\left(\eps\cdot b^{1-\eta}\right)$, with $b\geq a^{\frac{1}{\alpha_d}}$, it suffices to fix such an exponent $d/2\leq \theta_d\leq \alpha_d$ that would render all the three non-recursive terms $O\left((1/\eps)^{\alpha_d+\eta}\right)$.

For $d=3$, fixing $\theta_3:={2.165}$ renders all the non-recursive terms $O\left(1/\eps^{2.558+\eta}\right)=O\left(1/\eps^{\alpha_d+\eta}\right)$.

For $d=4$, we use $\theta_4:=3.02$, and similarly check that all the non-recursive terms are $O\left((1/\eps)^{3.48+\eta}\right)$.

In any higher dimension $d\geq 5$, we choose $\theta_d:=d(d+1-\alpha_d)/2=d(3/4+o_d(1))$ so as to directly render the third non-recursive term $O\left((1/\eps)^{\alpha_d+\eta}\right)$. 
It suffices to check the first two non-recursive terms are $O\left(1/\eps^{\alpha_d+\eta}\right)$.

Indeed, as we have that
$$
\theta_d=d(d+1-\alpha_d)/2=\frac{d}{4}\left(d+2-\sqrt{d^2-2d}\right)
$$

\noindent and, in particular, $\theta_d\ll d-1$ for all $d\geq 5$, the second non-recursive term is $O\left((1/\eps)^{\frac{(d-1)^2}{d}+1+\eta}\right)=O\left((1/\eps)^{d-1+1/d+\eta}\right)\ll (1/\eps)^{\alpha_d+\eta}$ for all $d\geq 5$. 

Lastly, the first non-recursive term satisfies
$$
t^{d/\alpha_d+\eta/(10d)}\ll (1/\eps)^{\frac{d^2\cdot\left(d+2-\sqrt{d^2-2d}\right)}{2d+2\sqrt{d^2-2d}}+\eta},
$$

\noindent where the exponent in the right hand side is smaller than $\alpha_d+\eta=\frac{1}{2}(d+\sqrt{d^2-2d})+\eta$ for all $d\geq 5$. 
 $\Box$

\section{Piercing the $\delta$-hollow sets -- proof of Theorem \ref{Theorem:BoundHollow}}\label{Sec:VerticallyConvex}
%In this section we establish Theorem \ref{Theorem:BoundHollow}, which is the last ingredient of the proof of Theorem \ref{Thm:Main}.

\medskip
\noindent{\bf Setup.} Throughout this section, we will stick with the quantities $P$, $\Pi$, $\eps$, $\sigma$, $n=|P|$, and the auxiliary parameters that were set in the beginning of Section \ref{Sec:MainRecurrence}. In particular, we will use positive integers $h,r,s$, and $u$, that are very small, albeit fixed and positive, degrees of $1/\eps$, which satisfy
$$
h\lll_\eta r\lll_\eta s\lll_\eta u\lll_\eta 1/\eps,
$$

\noindent the fraction $\delta=1/r^{d(d+1)+1}$, and the ``spread threshold" $\varepsilon=\Omega(\eps^{1+\eta})$ whose value  (\ref{Eq:Spread}) has been selected in the beginning of Section \ref{Sec:MainRecurrence}.

In the rest of this section, we denote $m:=|\Pi|$, and use $\HH$ to denote the family $\HH(\Pi)=\{H(\tau)\mid \tau\in \Pi\}$ of the $m$ hyperplanes that support the simplices of $\Pi$. 
Recall that we have that $\eps{n\choose d}\leq |\Pi|\leq \rho{n\choose d}$, and that our ``edge-constrained" family $\K=\K(P,\Pi,\eps,\sigma)$ consists of all such convex sets $K\in \K(P,\eps)$ that are
 $(\eps,\sigma)$-restricted to the hypergraph $(P,\Pi)$ and, thereby, satisfy $|\Pi_K|\geq \sigma{|P_K|\choose d}$. (As before, $\Pi_K$ is used to denote the subset ${P_K\choose d}\cap \Pi$ of $\Pi$ that is induced by the points of $P_K$.) Also recall that the fraction $0<\sigma\leq 1$ is bounded from below by some constant $\sigma_0>0$ that was described in Section \ref{Sec:MainRecurrence}.
  
To establish Theorem \ref{Theorem:BoundHollow} in Section \ref{Sec:MainRecurrence} and, therefore, complete the proof of Theorem \ref{Thm:Main}, we will describe a small-size net $N_{\leq \delta}$ for the sub-family $\K_{\leq \delta}$, which is comprised of all such $\delta$-hollow convex sets in $\K$ that are $\varepsilon$-spread within the vertical partition $\V(P,s)$ of $P$, which too has been introduced in Section \ref{Sec:MainRecurrence}. 
%In this section, we construct a small-size net $N_{\leq \delta}$ that pierces the sub-family $\K_{\leq \delta}$, which is comprised of all the $\varepsilon$-spread, $\delta$-hollow sets in $\K(P,\Pi,\eps,\sigma)$
%Our goal is to obtain the net $N_{\leq \delta}$ for the maximum number of points that would be necessary to pierce every vertically $\delta$-convex set in $\K_{\con}$. Our bound will be expressed in the recursive terms of the form $f(\eps')$ and $f(\eps',\rho',\sigma')$ for $\eps'=\Omega(\eps)$, $\rho'\leq \rho$, and $\sigma'=\Theta(\sigma)$.

 Similar to Section \ref{Sec:Surrounded}, our divide-and-conquer treatment of the family $\K_{\leq \delta}$ will use a decomposition of the ground point set $P$, along with the ambient space $\reals^d$. However, for reasons that will become clear towards the end of this section, this decomposition will be based on the random arrangement $\A(\R)$ of an $r$-sample $\R\subseteq \HH=\HH(\Pi)$, and a subsequent refinement of $\A(\R)$ to an $O\left(\frac{\log r}{r}\right)$-cutting $\D$ \cite{Cuttings1,Cuttings} of the family $\HH=\HH(\Pi)$. 

\medskip
\noindent{\bf The random arrangement $\A(\R)$.} 
Combining the assumption $m\geq\eps{n\choose d}$ with the lower bound (\ref{Eq:ManyPoints}) on $n$, we obtain that $|\HH|=m\geq r$.
Consider a random $r$-sample $\R$ of $\HH$. Note that the hyperplanes of $\R$ are not necessarily in general position, because those of $\HH$ are, most likely, {\it not} -- any point of $P$ can lie in up to ${n\choose d-1}$ hyperplanes of $\HH$. Nevetheless, any hyperplane in $\HH$ contains exactly $d$ points of $P$. %Thus, they encompass a total of at most $rd$ points of $P$. 
 As was mentioned in Section \ref{Subsec:Arrangements}, the hyperplane collection $\R$ determines the {\it arrangement} $\A(\R)$ whose $d$-dimensional faces constitute a decomposition of $\reals^d\setminus (\bigcup R)$ into $O\left(r^d\right)$ open polyhedral {\it cells}, of overall complexity $O(r^d)$ \cite[Section 6]{JirkaBook}. See Figure \ref{Fig:RandomArrangement}.

\begin{figure}
    \begin{center}
        \input{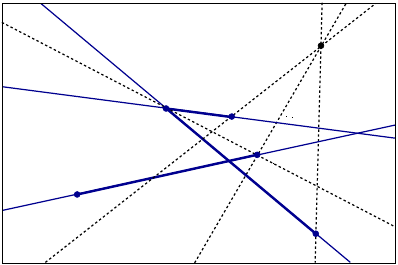_t}  \caption{\small The random arrangement $\A(\R)$ illustrated in $\reals^2$. The family $\HH=\HH(\Pi)$ is comprised of the hyperplanes $H(\tau)$ that support the simplices of $\tau\in \Pi$. The hyperplanes of the $r$-sample $\R\subset \HH(\Pi)$ are blue and solid, whereas the rest of the hyperplanes of $\HH(\Pi)$ are black and dashed. }
      \label{Fig:RandomArrangement}
    \end{center}
\end{figure}

As a result, the ground point set $P$ is subdivided into two parts $P'$ and $P''$. The larger part $P'=P\setminus \left(\bigcup \R\right)$ is comprised of all the points in $P$ that lie in the interiors of the cells of $\A(\R)$, whereas the complementary subset  $P''$ consists of at most $rd$ points that lie on one or more of the sample hyperplanes of $\R$. By again using the lower bound (\ref{Eq:ManyPoints}) on $n$, it follows that the principal subset $P_K$ of every convex set $K\in \K_{\leq \delta}$, must encompass at least $\lceil\eps n\rceil-rd\geq \lceil \eps n\rceil/2$ points of $P'$. 

\medskip
\noindent{\bf The cutting $\D=\D(\R,P)$.} Applying the bottom-vertex triangulation of Section \ref{Subsec:Arrangements} to the cells of $\A(\R)$ yields a simplicial decomposition $\D(\R)$, that is, a collection of $O\left(r^d\right)$ pairwise disjoint open $d$-dimensional  simplices, so that every point of $P'$ lies in the interior of a unique simplicial cell $\varphi$ of $\D(\R)$. See Figure \ref{Fig:CuttingD} (left). As has been noted in Section \ref{Subsec:Arrangements}, we have that, with probability at least $1/2$, the interior of such cell $\varphi$ of $\D(\R)$ is crossed by $O\left(\frac{m}{r}\log r\right)$ hyperplanes of $\HH$. Hence, in the broadly accepted terminology  \cite{Cuttings1,ChazelleBook,Cuttings}, the decomposition $\D(\R)$ already yields a $O(\log r/r)$-cutting of $\HH$.

%$\D(\Delta)$ of simplices whose cardinality is proportional to the complexity $\comp(\Delta)$ of $\Delta$. Altogether, this yields a
%decomposition $\D(\R)=\biguplus_{\Delta\in \A_d(\Delta)}\Sigma(\Delta)$ of $\reals^d$ into $O(r^d)$ interior-disjoint simplices. 

\begin{figure}
    \begin{center}
        \input{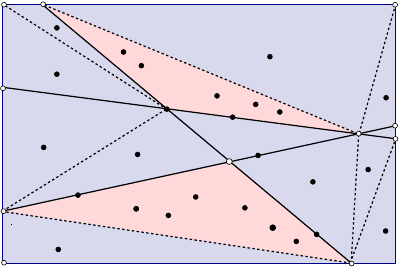_t}\hspace{1cm}\input{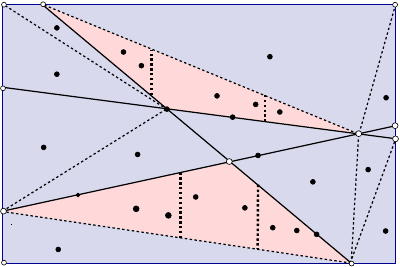_t}  
        \caption{\small The cutting $\D(\R,P)$ illustrated in $\reals^2$. We begin with a bottom-vertex triangulation $\D(\R)$ of the cells of $\A(\R)$ (left). 
        The ``overfull" cells $\varphi\in \D(\R)$, with at least $\lceil n/k\rceil$ points in their interiors, are called {\it red}; each of them is split, by the means of vertical walls, into several red cells $\varphi_i$ within $\D=\D(\R,P)$ (right).}
      \label{Fig:CuttingD}
    \end{center}
\end{figure}

 To further refine the described cutting $\D(\R)$, let us fix another auxiliary parameter $k:=r^{(d+\sqrt{d^2-2d})/2}\leq r^{\alpha_d}$, which satisfies $r^{d-1}\leq k\leq r^{d-1/2}$. 

\medskip
\noindent{\bf Definition.} Let us denote $P_\varphi:=P\cap \varphi$ for each cell $\varphi\in \D(\R)$.
We say that a simplicial cell $\varphi\in \D(\R)$ is {\it red} if $|P_\varphi|\geq \lceil n/k\rceil$, and we say that it is {\it blue} otherwise. %Otherwise, we say that $\tau$ is {\it light}.
By definition, there exist at most $k$ red cells in $\D(\R)$.

\medskip
We further subdivide every red cell $\varphi$ of $\D(\R)$ into $\lceil|P_\varphi| /\lfloor\frac{n}{k}\rfloor\rceil$ open polyhedral cells $\varphi'$, which satisfy $|\varphi'\cap P|\leq\lfloor n/k\rfloor$. To this end, we introduce at most $\lceil|P_\varphi| /\lfloor\frac{n}{k}\rfloor\rceil-1\leq 2|P_\varphi| \frac{k}{n}-1$ walls which are orthogonal to the first coordinate axis. (None of these walls can pass through a point of $P$.) Let $\D(\R,P)$ denote this more refined decomposition of $\reals^d$, as illustrated in Figure \ref{Fig:CuttingD} (right).

\medskip
\noindent{\bf Definition.} For every red simplex of $\D(\R)$ we refer to {\it all} its off-shoots in $\D(\R,P)$ as {\it red}. If a cell of $\D(\R,P)$ is not red, we say that it is {\it blue}, in which case it has been inherited en toto from the bottom-vertex triangulation $\D(\R)$.

\medskip
Note that the refined decomposition $\D:=\D(\R,P)$ still encompasses $O\left(r^d\right)$ cells, including at most $2k$ red cells; each of these cells $\varphi$ is a convex polytope with at most $d+2$ facets, at most ${d+1\choose 2}+2{d\choose 2}=O(d^2)$ edges, and $O(d^2)$ vertices. 
As $\D$ is a refinement of the $O\left(\frac{1}{r}\log r\right)$-cutting $\D(\R,P)$, any open cell of $\D$ is still crossed by $O\left(\frac{m}{r}\log r\right)$ hyperplanes of $\HH$, which yields the following immediate corollary.

\begin{observation}\label{Prop:Cutting} 
There is a constant $c_{\it cut}>0$, which depends only on the dimension $d$, so that
 the interior of every cell $\varphi\in \D(\R,P)$ is crossed by at most $c_{\it cut}\left(\frac{m}{r}\log r\right)$ hyperplanes of $\HH$.
\end{observation}

\medskip
The net $N_{\leq \delta}$ will combine the following key ingredients:
	(i) $\tilde{\eps}$-nets which are defined over smaller ground sets $\tilde{P}$, each of them cut out by either a cell of $\A(\R)$, or a red cell $\varphi$ of $\D$,    
(ii) $1$-dimensional $\Omega(\eps^{d-1+\eta})$-nets which will be constructed within the lines of the canonical family $\L=\L(P,s)$, that was defined in Section \ref{Sec:MainRecurrence} over the vertical partition $\V(P,s)$.

Similar to its predecessor $N_{>\delta}$ in Section \ref{Sec:Surrounded}, the net $N_{\leq \delta}$ will be defined in three incremental steps. Initially, the set $N_{\leq 
\delta}$ is empty. In each subsequent step $1\leq i\leq 3$ we add to $N_{\leq \delta}$ a relatively simple partial net $N_{\leq \delta,i}$ that pierces one or several specific categories of the convex sets, to be immediately removed from $\K_{\leq \delta}$. By the end of Step 3, no set in $\K_{\leq \delta}$ will be left unpierced by $N_{\leq \delta}=\bigcup_{i=1}^3N_{\leq \delta,i}$.

The first two nets $N_{\leq \delta,1}$ and $N_{\leq \delta,2}$ will overly resemble their counterparts in Section \ref{Sec:Surrounded}, and make no use of the $\delta$-hollowness of the sets $K\in \K_{\leq \delta}$. 

By the end of Step 2, it will be shown that every remaining convex set $K\in \K_{\leq \delta}$, that have been ``missed" by the combination $N_{\leq \delta,1}\cup N_{\leq \delta,2}$, is sufficiently ``thin" with respect to both the random arrangement $\A(\R)$, and its finer cutting $\D=\D(\R,P)$. In particular, a large fraction of its principal subset $P_K$ will lie in the zone of a certain {\it principal hyperplane $H_K\in \HH$}, which crosses only $O^*\left(\frac{k}{r}\right)$ among the red cells of $\D(\R,P)$.

In Step 3, the $\delta$-hollowness will be used, at last, to dispose of the remaining convex sets $K\in \K_{\leq \delta}$ by the means of the basic recurrence in $\eps>0$, that was sketched in the beginning of Section \ref{Subsec:RecursiveFramewk}.
%To this end, every such set $K$ will be assigned a {\it $(d-1)$-dimensional polytope $T_K$}, within the principal hyperplane $H_K$, with the property that almost $\eps n$ points of $P_K$ lie in the $O\left(r^{d-2}\right)$ cells of $\A(\R)$ that meet the relative boundary of $T_K$.\footnote{This is the only step when the $\delta$-hollowness property ever plays out.} 

\subsection{Step 1}\label{Subsec:1stNetHollow}

\noindent{\bf Definition.} Let $\Pi''$ denote the subset of all the simplices $\tau\in \Pi$ that meet at least one of the following criteria:
\begin{enumerate}
	\item[(i)]  one or more of the vertices of $\tau$ belong to $P''$ and, therefore, lie on a hyperplane of $\R$.
  \item[(ii)] the supporting hyperplane $H(\tau)\in \HH$ crosses at least $4c_{\it cut}\cdot \frac{hk}{r}\log r$ red cells of $\D=\D(\R,P)$.
 \end{enumerate}

We then denote $\Pi':=\Pi\setminus \Pi''$.

\medskip
\noindent {\bf The net $N_{\leq \delta,1}$.} We set

\begin{equation}\label{Eq:Sparser}
	N_{\leq \delta,1}:=N(P,\Pi'',\eps,\sigma/2),
\end{equation}

\noindent where the right hand side denotes the net of cardinality at most $f(\eps,|\Pi''|/{n\choose d},\sigma/2)$, that exists for the family
$\K(P,\Pi'',\eps,\sigma/2)$, which is
comprised of all such convex sets $K\in \K(P,\eps)$ that are $(\eps,\sigma/2)$-restricted to the hypergraph $(P,\Pi'')$ (and thus satisfy $|{P_K\choose d}\cap \Pi''|\geq (\sigma/2){|P_K|\choose d}$).

\medskip
\noindent{\bf The analysis.} The following claim shows that the recursive instance (\ref{Eq:Sparser}) involves a considerably sparser subset $\Pi''$.

\begin{lemma}
We have that $|\Pi''|\leq m/h\leq \rho{n\choose d}/h$. 
\end{lemma}
\begin{proof}

Using the lower bound (\ref{Eq:ManyPoints}) on $n=|P|$ in the beginning of Section \ref{Sec:MainRecurrence}, that $h\lll_\eta r\lll_\eta 1/\eps$, and that $m=|\Pi|\geq \eps {n\choose d}$, it follows that
at most $rd {n \choose d-1}\leq r {n\choose d}/n\leq rm/(\eps n)\leq  m/(2h)\leq \rho{n\choose d}/(2h)$ simplices of $\Pi''$ can fall in the first category. 

Furthermore, in view of Observation \ref{Prop:Cutting}, there exist at most
$2c_{\it cut}\cdot \frac{mk}{r}\log r$ ``incidences" between the (at most $2k$) red cells of $\D$, and the $m$ hyperplanes of $\HH$ that intersect their interiors. Thus, by the pigeonole principle, at most 
$m/(2h)\leq \rho{n\choose d}/(2h)$ simplices $\tau\in \Pi$ can fall in the second category.
\end{proof}

The properties of $N_{\leq \delta,1}$ are summarized in the following lemma.

\begin{lemma}\label{Lemma:1stNetHollow}
\begin{enumerate}
	\item The net $N_{\leq \delta,1}$ has cardinality at most $f(\eps,\rho/h,\sigma/2)$.
	\item Upon upon including the points of $N_{\leq \delta,1}$ in $N_{\leq \delta}$ and, accordingly removing from $\K_{\leq \delta}$ all the sets that
are pierced by $N_{\leq \delta,1}$, every remaining set $K\in \K_{\leq \delta}$ satisfies $|{P_K\choose d}\cap \Pi'|\geq (\sigma/2){\lceil \eps n\rceil\choose d}$. 

Namely, at least $\left\lceil (\sigma/2) {\lceil\eps n\rceil\choose d}\right\rceil$ of the simplices in the family $\Pi_K={P_K\choose d}\cap \Pi$ belong to ${P'\choose d}$, and their
supporting hyperplanes $H(\tau)$ in $\HH$ cross the interiors of fewer than $4c_{\it cut}\cdot \frac{hk}{r}\log r$ red cells within $\D$. 
\end{enumerate}
 \end{lemma}

\medskip

In the sequel, we denote $P'_K:=P_K\cap P'$, and $\Pi'_K:=\Pi'\cap {P_K\choose d}$, for every $K\in \K_{\leq \delta}$. %, so that every remaining set in $\K_{\leq \delta}$ is $(\eps,\sigma/2)$-restricted to the hypergraph $(P,\Pi')$.

\subsection{Step 2}\label{Subsec:2ndNetHollow}

Our second net $N_{\leq \delta,2}$ for the family $\K_{\leq \delta}$ will be comprised of $1$-dimensional $\Omega\left(\eps^{d-1+\eta}\right)$-nets $N_\ell$, which will be restricted to the $O\left(s^{(d-1)^2}\right)$ vertical lines $\ell$ of the $\vartheta$-canonical family $\L(P,s)$ over the vertical partition $\V(P,s)$.  

Upon adding the points of $N_{\leq \delta,2}$ to $N_{\leq \delta}$, and thereby removing from $\K_{\leq \delta}$, every convex set that is pierced by them, every remaining convex set $K\in \K_{\leq \delta}$ will be assigned a so called {\it principal hyperplane $H_K\in \HH\left(\Pi'_K\right)$}, whose zone in $\D$ will contain $\Omega(\eps n)$ points of $P_K$. 

%The measure of this combinatorial proximity will be determined by the sampled arrangement $\A(\R)$ and its finer cutting $\D_z(\R)$.

%Recall that all the convex sets in the family $\K_{\leq \delta}$ are $\varepsilon$-spread in $\V(P,s)$ of $P$, with the parameter $\varepsilon>0$ that has been defined in (\ref{Eq:Spread}) to be much smaller than $C_{\ref{Theorem:MultipleSelection}}(\sigma/2)^{\beta_{d-1}}\eps$. Therefore, and since the set $\L(P,s)$ of Theorem \ref{Theorem:MultipleSelection} is $C_{\ref{Theorem:MultipleSelection}}$-canonical for $\V(P,s)$, for every convex set in $K\in \K_{\leq \delta}$, there must be a line $\ell\in \L(P,s)$ that pierces  $\Omega\left(\sigma^{\beta_{d-1}}\eps^d n^d\right)$ of the simplices in the set $\Pi'_K:=\Pi'\cap {P_K\choose d}$ (whose overall cardinality we assumed to be at least $\frac{\sigma}{2}{|P_K|\choose d}$). 

\medskip

\noindent{\bf The net $N_{\leq \delta,2}$.} The net $N_{\leq \delta,2}$ is attained through several explicit constructions using the $\vartheta$-canonical line set $\L(P,s)$, the sample $\R\subseteq \Pi$ of $r$ hyperplanes, and the cutting $\D=\D(\R,P)$, which has been defined in the beginning of this section.

\begin{enumerate}
	\item For each vertical line $\ell\in \L(P,s)$, we consider the set
$$
X_\ell:=\{\ell\cap H\mid H\in \R\}
$$ 

\noindent of all the $\ell$-intercepts of the hyperplanes in $\R$, and let

$$
X:=\bigcup_{\ell\in \L(P,s)} X_\ell.
$$

We include in the latter set $X$ every vertex of $\D$.
Since $r\lll_\eta s$, we have that 
$$
|X|\leq s^{(d-1)^2}r+O(r^d)=O\left(s^{(d-1)^2}r\right).
$$

\item We then use the points of $X$ to define a smaller auxiliary family $\Pi_X$ of ``synthetic" $(d-1)$-simplices

$$
\Pi_X:={P\choose d-1}\ast X=\{\conv(\kappa\cup \{x\})\mid \kappa\in {P\choose d-1},x\in X\}
$$ 

\noindent which determine the family $\HH_X:=\HH(\Pi_X)$ of supporting hyperplanes. Notice that

$$
|\HH_X|=|\Pi_X|=O\left(n^{d-1}|X|\right)=O\left(n^{d-1}s^{(d-1)^2}r\right).
$$

\item For each canonical line $\ell\in \L(P,s)$ we construct the point set 

$$
Y_\ell=:\{\ell\cap H\mid H\in \HH_X\}
$$ 
\noindent which encompasses all the $\ell$-intercepts of the hyperplanes of $\HH_X$, and note that $|Y_\ell|=O\left(n^{d-1}s^{(d-1)^2}r\right)$.

\item Lastly, for each line $\ell\in \L(P,s)$ we construct the net $N_\ell$ which includes every $\left\lceil (\vartheta/4)(\sigma_0/2)^{\beta_{d-1}}{\lceil \eps n\rceil \choose d-1}\right\rceil$-th point of $Y_\ell$, and let

$$
N_{\leq \delta,2}:=\bigcup_{\ell\in \L(P,s)}N_\ell.
$$

\end{enumerate}

We add $N_{\leq\delta,2}$ to $N_{\leq \delta}$, and remove from $K_{\leq \delta}$ every convex set already pierced by $N_{\leq \delta,2}$.

\medskip
\noindent{\bf The analysis.} Using that $r\lll_\eta s\lll_\eta 1/\eps$ (and that $\sigma_0$ is a constant that depends only on $\eta$ and the dimension $d$), we obtain the following immediate bound on the cardinality of $N_{\leq \delta,2}$.
\begin{lemma}\label{Lemma:2ndNetHollow}
	The previously defined net $N_{\leq \delta,2}$ has cardinality 
$$
O\left(\frac{s^{2(d-1)^2}r}{\sigma_0^{\beta_{d-1}}\eps^{d-1}}\right)=O\left(\frac{1}{\eps^{d-1+\eta}}\right).
$$
\end{lemma}

\noindent {\bf Definition.} For any point $p\in \reals^d$, and any non-vertical hyperplane $H$ in $\reals^d$, we use $[p]_H$ to denote the vertical projection of $p$ onto $H$, and we use $e(p,H)$ to denote the closed vertical segment between $p$ and $[p]_H$.

We then say that $p$ {\it sees} the hyperplane $H$ in $\A(\R)$ if $e(p,H)$ does not cross any hyperplane in $\R$. (In other words, both $p$ and $p_H$ have to lie in the same open cell of $\A(\R)$. See Figure \ref{Fig:WitnessZone}.)

\begin{lemma}\label{Lemma:PrincipalHyperplane}
    Let 
    $$
	c'_2:=\frac{d\vartheta}{2^{\beta_{d-1}+5}}.
    $$
	Then every convex set $K\in \K_{\leq \delta}$ that is missed by $N_{\leq \delta,1}\cup N_{\leq \delta,2}$, can be assigned a hyperplane $H_K\in \HH\left(\Pi'_K\right)$, and a
	subset $A_K\subseteq P_K$, with the following properties: 
	
	\begin{enumerate}[label=(\roman*)]
	\item We have that $|A_K|\geq c'_2 \sigma_0^{\beta_{d-1}} \lceil\eps n \rceil$.

	\item $A_K$ is contained in the zone of $H_K$ in $\D(\R,P)$.
	\item Every point $p\in A_K$ sees $H_K$; see Figure \ref{Fig:WitnessZone}.
\end{enumerate}
	\end{lemma}

 \medskip
\noindent {\bf  Definition.} 
We assign to each remaining convex set $K\in \K_{\leq \delta}$ a {\it unique} hyperplane $H_K\in \HH\left(\Pi'_K\right)$ with a {\it unique} subset $A_K\subseteq P_K$, that together meet the criteria (i) -- (iii) of Lemma \ref{Lemma:PrincipalHyperplane},
and refer to $H_K$ as the {\it principal hyperplane} of $K$.

\begin{figure}
    \begin{center}      
        \input{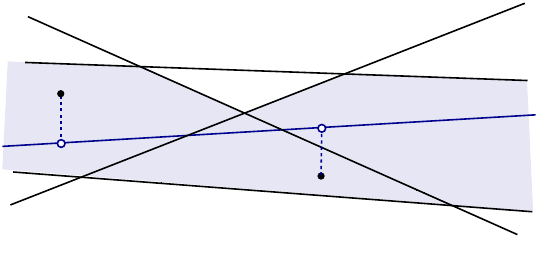_t}
        \caption{\small The point $p\in P_K$ {\it sees} the hyperplane $H\in \HH$ in $\A(\R)$: its $H$-projection $[p]_H$ lies in the same cell of $\A(\R)$. In contrast, the point $p'$ {\it does not see} $H$ in $\A(\R)$, as the vertical segment $e(p',H)$ between $p'$ and $[p']_H$ is crossed by a hyperplane $H'\in \R$. Notice that both $p$ and $p'$ lie in the shaded zone of $H$ in $\A(\R)$.}
        \label{Fig:WitnessZone}
    \end{center}
\end{figure}

\begin{proof}[Proof of Lemma \ref{Lemma:PrincipalHyperplane}.]	
We fix a convex set $K\in \K_{\leq \delta}$, and suppose, for a contradiction, that there does not exist any such combination of a hyperplane $H_K\in \HH\left(\Pi'_K\right)$, with a subset $A_K\subseteq P_K$, that together meet the criteria (i) -- (iii). 
We show that $K$ is already pierced by the net $N_{\leq \delta,2}$.

\smallskip
Indeed, since (a) $K$ is $\varepsilon$-spread in $\V(P,s)$, with the the threshold $\varepsilon>0$ in (\ref{Eq:Spread}) which is much smaller than $\vartheta(\sigma_0/2)^{\beta_{d-1}}\eps$, and (b) we have that $|\Pi'_K|\geq \frac{\sigma}{2} {\lceil \eps n\rceil\choose d}\geq \frac{\sigma_0}{2} {\lceil \eps n\rceil\choose d}$ (by Lemma \ref{Lemma:1stNetHollow}), the $\vartheta$-canonical family $\L(P,s)$ must include a line $\ell_0\in \L(P,s)$ that pierces a subset $\Pi'_K(\ell_0)$ of $\vartheta\left(\frac{\sigma_0}{2}\right)^{\beta_{d-1}}{\lceil\eps n\rceil\choose d}$ 
simplices within $\Pi'_K$. 

\medskip
To proceed with the proof of Lemma \ref{Lemma:PrincipalHyperplane}, let us introduce further notation.

\medskip
\noindent{\bf Definition.} 
Let $\kappa$ be a $(d-2)$-simplex in ${P'_K\choose d-1}$. 

\begin{enumerate}
	\item We say that a simplex $\tau\in \Pi'_K(\ell_0)$ is {\it adjacent} to $\kappa$ if $\kappa$ contains $d-1$ vertices of $\tau$.
We then say that a point $p\in P'_K$ is a {\it neighbor} of $\kappa$ if the $(d-1)$-dimensional simplex $\conv(\kappa\cup\{p\})$ belongs to $\Pi'_K(\ell_0)$. See Figure \ref{Fig:GoodSimplex}.

\item We use $\Gamma'_K(\kappa)\subseteq P_K$ to denote the subset of all the neighbours of $\kappa$. (Notice that the points of $\Gamma'_K(\kappa)$ correspond to the simplices of $\Pi'_K(\ell_0)$ that are adjacent to $\kappa$.)

 \item We say that $\kappa$ is {\it good} if we have that
 
\begin{equation}\label{Eq:Good}
|\Gamma'_K(\kappa)|\geq \frac{\vartheta}{4d}\cdot\left(\frac{\sigma_0}{2}\right)^{\beta_{d-1}}\left(\lceil\eps n\rceil-d+1\right)\geq \frac{\vartheta}{8d}\cdot\left(\frac{\sigma_0}{2}\right)^{\beta_{d-1}}\lceil\eps n\rceil,\end{equation}

\noindent where the last inequality follows from the lower bound (\ref{Eq:ManyPoints}) on $n$.

\end{enumerate}

\begin{figure}
    \begin{center}      
        \input{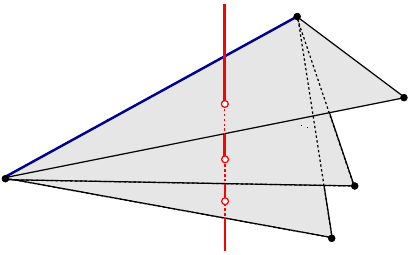_t}
        \caption{\small A $(d-2)$-simplex $\kappa$ along with the adjacent simplices $\tau\in \Pi'_K(\ell_0)$, which are crossed by the canonical line $\ell_0$. Each $\tau\in \Pi'_K(\ell_0)$ contributes a vertex $p$ to the neighbor set $\Gamma'_K(\kappa)$.}
        \label{Fig:GoodSimplex}
    \end{center}
\end{figure}

Since $|\Pi'_K(\ell_0)|\geq \vartheta\left(\frac{\sigma_0}{2}\right)^{\beta_{d-1}}{\lceil\eps n\rceil\choose d}$, the pigeonhole principle yields the following property.

\begin{proposition} \label{Claim:GoodSimplices}
There exist at least $\frac{\vartheta}{4}\left(\frac{\sigma_0}{2}\right)^{\beta_{d-1}}{\lceil \eps n \rceil\choose d-1}$ good $(d-2)$-simplices in ${P'_K\choose d-1}$.
\end{proposition}

 %We fix a good $(d-2)$-simplex $\mu$. % and arrange the adjacent simplices of $\Pi'_K$ in the circular order of their projections onto the $2$-plane that is orthogonal to (the affine hull of) $\mu\in {P_K\choose d-1}$.

\medskip
\noindent{\bf Definition.}  We say that a good $(d-2)$-simplex $\kappa\in {P'_K\choose d-1}$ is {\it very good} if for any two simplices $\tau_1,\tau_2\in \Pi'_K(\ell_0)$ that are adjacent to $\kappa$, their supporting hyperplanes $H_1:=H(\tau_1)$ and $H_2:=H(\tau_2)$ determine {\it identical} zones in $\D$ (i.e., $H_1$ and $H_2$ cross the {\it same} cells of $\D$).

\medskip
To establish the lemma, it is enough to show that every good $(d-2)$-simplex $\kappa\in {P_K\choose d-1}$ gives rise to a ``synthetic" simplex $\pi_\kappa\in \Pi_X$, of the general form $\conv(\kappa\cup \{x\})$; the supporting hyperplane $H\left(\pi_\kappa\right)$ then meets $\ell_0$ at a point of $Y_{\ell_0}\cap K$. Repeating this argument for each good $(d-2)$-simplex $\kappa\in {P'_K\choose d-1}$ would yield, according to Proposition \ref{Claim:GoodSimplices}, a total of at least $\frac{\vartheta}{8}\left(\frac{\sigma_0}{2}\right)^{\beta_{d-1}}{\lceil \eps n \rceil\choose d-1}$ such points of $Y_{\ell_0}\cap K$ (which are easily seen to be distinct due to the general position of $P$, and the generic choice of the lines of $\L(P,s)$).
Therefore, the convex set $K$ under consideration must have been pierced by a point of the 1-dimensional net $N_{\ell_0}\subseteq N_{\leq \delta,2}$, that was constructed over $Y_{\ell_0}$.

\medskip
We keep the good simplex $\kappa\in {P'_K\choose d-1}$ fixed, and distinguish between two possible scenarios.

\medskip
\noindent{\bf Case (i).} The good $(d-2)$-simplex $\kappa\in {P_K\choose d-1}$ is not very good. Thus, there must exist a pair of simplices $\tau_1,\tau_2\in \Pi'_K(\ell_0)$ that are adjacent to $\kappa=\tau_1\cap \tau_2$, and a cutting cell $\varphi\in \D$ that is intersected by $H(\tau_1)$ but not $H(\tau_2)$. 

Let us continuously rotate a hyperplane $H$ around the $(d-2)$-flat $\aff(\kappa)$, from $H(\tau_1)$ to $H(\tau_2)$, so that $H$ never ceases intersecting the vertical line $\ell_0$. As is easy to check, there is exactly one direction that permits such a continuous rotation around $\kappa$.
By continuity, there must be a point in time when the intersection $H\cap \varphi$ is about to vanish from $H$. By the general position, this can only happen when $H$ coincides with a vertex $v$ of $\varphi$ and, therefore, supports the ``synthetic" simplex $\pi_\kappa:=\conv(\kappa\cup \{v\})$ (so that $H=H\left(\pi_\kappa\right)$). Since $v$ has been included in $X$, this simplex $\pi_\kappa$ belongs to $\Pi_X$. Hence, the set $Y_{\ell_0}$ must  include the point $H\cap {\ell_0}$, which clearly lies in the vertical interval between the points $\tau_1\cap \ell_0$ and $\tau_2\cap \ell_0$, and well within $K\cap \ell_0$. See Figure \ref{Fig:NotVeryGood}.

\begin{figure}
    \begin{center}      
        \input{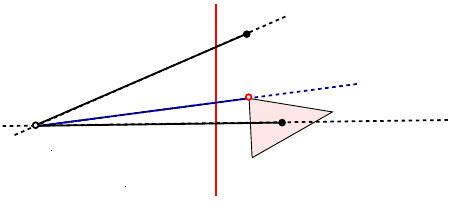_t}
        \caption{\small Proof of Lemma \ref{Lemma:PrincipalHyperplane} -- Case (i). The good $(d-2)$-simplex $\kappa$ is not very good (view in $\reals^3$, in the direction parallel to $\kappa$). There exist $(d-1)$-simplices $\tau_1,\tau_2\in \Pi'_K(\ell_0)$ so that $H(\tau_1)$ crosses a cell $\varphi\in \D$ that is missed by $H(\tau_2)$. There is a vertex $v$ of $\varphi$, and a simplex $\pi_\kappa=\conv(\kappa\cup \{v\})\in \Pi_X$,  such that the hyperplane $H(\pi_\kappa)\in \HH_X$ meets $\ell_0$ at a point of $Y_{\ell_0}$ between  $\tau_1\cap\ell_0$ and $\tau_2\cap \ell_0$.}
        \label{Fig:NotVeryGood}
    \end{center}
\end{figure}

%Repeating the above argument for every good $(d-2)$-simplex $\kappa \in {P_K\choose d-1}$ that is not very good, yields $(C_\mul/8)(\sigma/2)^{\beta_{d-1}}{\lceil\eps n\rceil \choose d-1}$ points of $Y_{\ell_0}$ that lie within the interval $K\cap \ell_0$. Furthermore, these points are easily seen to be distinct. Hence, the interval $K\cap \ell_0\subseteq K$ must contain a point of the net $N_{\ell_0}\subset N_{\leq \delta,2}$.

\medskip
\noindent {\bf Case (ii).} If the $(d-2)$-simplex $\kappa\in {P'_K\choose d-1}$ is very good,
then all of its $|\Gamma'_K(\kappa)|\geq (\vartheta/8d)(\sigma_0/2)^{\beta_{d-1}}\lceil\eps n\rceil$ neighbors $p\in \Gamma'_K(\kappa)$ must belong, in $\D$, to the zone of {\it every} hyperplane $H(\tau)$ that supports some $\kappa$-adjacent simplex $\tau\in \Pi'_K(\ell_0)$ (with $\kappa\subset \tau$). 

Note that every $\kappa$-adjacent simplex $\tau\in \Pi'_K(\ell_0)$ yields a (tentative) assignment $H_K:=H(\tau)$ and $A_K:=\Gamma'_K(\kappa)$ which meets the first two criteria of Lemma \ref{Lemma:PrincipalHyperplane}, yet fails to meet the last one.

\medskip
\noindent{\bf Definition.} 
For any simplex $\tau\in \Pi'_K(\ell)$ that is adjacent to $\kappa$, let $\Gamma'_K(\kappa,\tau)$ denote the subset of all such points in $\Gamma'_K(\kappa)$ that see $H(\tau)$.

\medskip
It can be further assumed that every adjacent simplex $\tau\in \Pi'_K(\ell_0)$ of $\kappa$, satisfies

\begin{equation}\label{Eq:NotSee}
	|\Gamma'_K(\kappa,\tau)|<\frac{\vartheta}{16d}\left(\frac{\sigma_0}{2}\right)^{\beta_{d-1}} \lceil\eps n \rceil,
\end{equation}

\medskip
\noindent or, else, all the three conditions (i) -- (iii) would have been met by the combination $H_K:=H(\tau)$ and $A_K:=\Gamma'_K(\kappa,\tau)$. 

\medskip
Specializing to the $|\Gamma'_K(\kappa)|$ $\kappa$-adjacent simplices $\tau\in \Pi'_K(\ell_0)$, let us fix such a simplex $\tau_\kappa=\tau$ whose supporting hyperplane $H(\tau)$ attains the {\it lowest} possible $\ell_0$-intercept $H(\tau)\cap \ell_0$, and denote $H_\kappa:=H(\tau_\kappa)$. 
Clearly,  the point $H_\kappa\cap \ell_0$ belongs to the interval $K\cap \ell_0$, for $\ell_0$ crosses the supporting simplex $\tau=\tau_\kappa$ of $H_\kappa$.

Combining the estimates (\ref{Eq:Good}) and (\ref{Eq:NotSee}) implies that the set $\Gamma'_K(\kappa)\setminus \Gamma'\left(\kappa,\tau_\kappa\right)$ is comprised of
at least 

$$
|\Gamma'_K(\kappa)|-|\Gamma'_K(\kappa,\tau_\kappa)|\geq \frac{\vartheta}{16d}\left(\frac{\sigma_0}{2}\right)^{\beta_{d-1}} \lceil\eps n \rceil
$$ 

\noindent such neighbouring points $p$ of $\kappa$ that {\it do not} see $H_\kappa$.
Furthermore, it can be assumed, with no loss of generality, that at least half of these points in $\Gamma'_K(\kappa)\setminus \Gamma'\left(\kappa,\tau_\kappa\right)$ lie {\it above} the hyperplane $H_\kappa$. Let $\Gamma_K^+(\kappa)$ denote this latter subset within $\Gamma'_K(\kappa)\setminus \Gamma'\left(\kappa,\tau_\kappa\right)$, whose cardinality satisfies

$$
|\Gamma_K^+(\kappa)|>\frac{\vartheta}{32d}\left(\frac{\sigma_0}{2}\right)^{\beta_{d-1}} \lceil\eps n \rceil.
$$

  \begin{figure}
    \begin{center}      
        \input{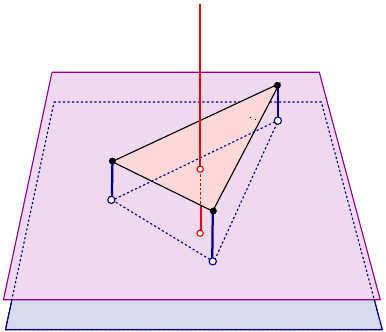_t}
        \caption{\small Proof of Lemma \ref{Lemma:PrincipalHyperplane} in $\reals^3$. The canonical line $\ell_\kappa\in \L(P,s)$ crosses at least one simplex $\phi$, that is determined by $P_K^+(\kappa)$, at some point $z$. The vertices $p_i$ of $\phi$ are ``blocked" from seeing $H_\kappa$ by the hyperplane $G_\kappa\in \R$. The depicted $\ell_\kappa$-intercept $x_\kappa\in X_\kappa$ of $G_\kappa$ too lies above $H_\kappa$.} 
        \label{Fig:NotExcellent}
    \end{center}
\end{figure}

Notice that, for each $p\in \Gamma_K^+(\kappa)$, the vertical segment $e(p,H)$, between $p$ and $[p]_{H_\kappa}$, must be crossed by one or more hyperplanes $G\in \R$, so that, by the pigeonhole principle, one can fix such a hyperplane $G=G_\kappa\in \R$ that crosses at least 
 $$
 |\Gamma^+_K(\kappa)|/r\geq \frac{\vartheta}{32rd}\left(\frac{\sigma_0}{2}\right)^{\beta_{d-1}} \lceil\eps n \rceil
 $$ 
 \noindent segments in $\{e(p,H_\kappa)\mid p\in \Gamma^+_K(\kappa)\}$. 

Let $P^+_K(\kappa)$ denote the resulting subset 
$\{p\in \Gamma^+_K(\kappa)\mid e(p,H_\kappa)\cap G_\kappa\neq \emptyset\}$. Note that the points of $P^+_K(\kappa)$ lie above both hyperplanes $G_\kappa$ and $H_\kappa$, and that

$$
|P_K^+(\kappa)|>\frac{\vartheta}{32rd}\left(\frac{\sigma_0}{2}\right)^{\beta_{d-1}} \lceil\eps n \rceil \geq \varepsilon\cdot n/\vartheta\geq d,
$$ 

\noindent where the second and the third inequalities follow from, respectively, the choice (\ref{Eq:Spread}) of $\varepsilon>0$, and the lower bound (\ref{Eq:ManyPoints}) on $n$ in Section \ref{Sec:MainRecurrence}. Since the convex set $K$ is $\varepsilon$-spread in the vertical partition $\V(P,s)$, and the line family $\L(P,s)$ is $\vartheta$-canonical with respect to $\V(P,s)$, there must be such a line $\ell_\kappa\in \L(P,s)$ that crosses at least 
{\it one} simplex $\phi$ whose $d$ vertices belong to $P_K^+(\kappa)$. See Figure \ref{Fig:NotExcellent}.

%\footnote{To meet the formal pre-requisites of Theorem \ref{Theorem:MultipleSelection}, we fix a {\it unique} such $\eps''$-heavy subset $P''_K\subseteq P_K$ for each $\eps'$-spread set $K\in \K_\con$ under consideration, that falls into case (ii).}

Since $G_\kappa\in \R$, its $\ell_\kappa$-intercept $x_\kappa=\ell_\kappa\cap G_\kappa$ must belong to the previously defined subset $X_{\ell_\kappa}
\subset X$, so that the simplex $\pi_\kappa:=\conv(\kappa\cup \{x_\kappa\})$ must belong to $\Pi_X$. Thus, the intersection point $y_\kappa=H(\pi_\kappa)\cap \ell_0$, between $\ell_0$ and the supporting plane $H(\pi_\kappa)$, must have been included in the set $Y_{\ell_0}$.

\begin{proposition}\label{Claim:YL}
Let $\kappa\in {P_K\choose d-1}$ be a very good $(d-2)$-simplex.
Then the previously described point $y_\kappa=H(\pi_\kappa)\cap \ell_0$ in $Y_{\ell_0}$ must lie in the interval $K\cap \ell_0$. 
\end{proposition}

\begin{proof}[Proof of Proposition \ref{Claim:YL}.] 
In view of the choice of $\ell_\kappa$, one can fix a simplex $\phi=\conv(p_1,\ldots,p_d)$, that is crossed by $\ell_\kappa$ at some point $z$, and whose $d$ vertices $p_1,\ldots,p_d$ belong to the above set $P_K^+(\kappa)$. 
Let us emphasize that, as $P^+_K(\kappa)\subseteq \Gamma^+_K(\kappa)$, each vertex $p_i$ of $\phi$ lies above the hyperplane $H_\kappa=H(\tau_\kappa)$.
Furthermore, as each of the downward segments $e(p_i,G_\kappa)$ of the vertices of $\phi$ is crossed by the hyperplane $G_\kappa\in \R$, it follows, by the convexity of $\phi$, that it lies entirely above $G_\kappa$, and the $\ell_\kappa$-intercepts of $H_\kappa$, $G_\kappa$ and $\phi$ appear along $\ell_\kappa$ in this increasing order of their $d$-th coordinates.

\begin{figure}
    \begin{center}      
    \input{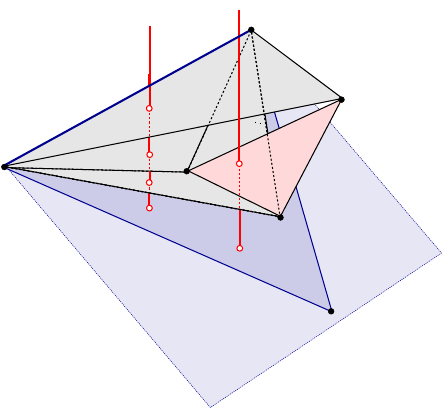_t}\hspace{1.7cm}\input{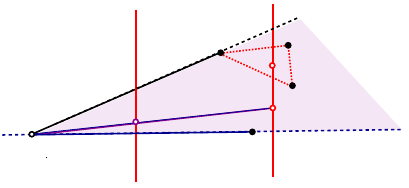_t}
        \caption{\small Proof of Proposition \ref{Claim:YL} in dimension $d=3$. Left: Depicted are the $4$ simplices $\tau_\kappa$, $\mu_1$, $\mu_2$ and $\mu_3$, whose intercepts appear in this order along $\ell_0$. 
        %Note that $\phi=\conv(p_1,p_2,p_3)$ lies in the 
        %wedge $W$ above $H_\kappa=H(\tau_\kappa)$ and below (or at) $H(\mu_3)$. 
        Right: Depicted in the $\kappa$-parallel direction is the dihedral wedge $W$ which intersects $\ell_\kappa$ and contains $\phi=\conv(p_1,p_2,p_3)$. Since $x_\kappa=\ell_\kappa\cap G_\kappa$ belongs to $X_{\ell_\kappa}$, and lies in $W$, the hyperplane through $\pi_\kappa=\kappa\ast \{x_\kappa\}$ lies within $W$, and meets $\ell_0$ at some point $y_\kappa$ of $Y_{\ell_0}\cap K$.} 
        \label{Fig:WedgeW}
    \end{center}
\end{figure}

Recall that each vertex $p_i$ of $\phi$ corresponds to a $\kappa$-adjacent simplex $\mu_i=\conv(\kappa\cup\{p_i\})$ of $\Pi'_K(\ell_0)\subseteq \Pi'_K$ (which, in particular, meets $\ell_0$). Assume with no loss of generality that the $\ell_0$-intercept $\mu_d\cap \ell_0$ of $\mu_d$ lies above the rest of such $\ell_0$-intercepts $\mu_i\cap \ell_0$, for $1\leq i\leq d-1$. See Figure \ref{Fig:WedgeW} (left). 
%, and note that each of these intersections lies above $H_\kappa\cap \ell_0$ and within the interval $K\cap \ell_0$. 
Consider the half-open dihedral wedge $W$ between $H_\kappa$ and $H(\mu_d)$, which is comprised of all such points that lie above $H_\kappa=H(\tau_\kappa)$ and below (or at) $H(\mu_d)$.
Then $W$ contains the entire simplex $\phi=\conv(p_1,\ldots,p_d)$ and, moreover, its cross-section $W\cap \ell_0$ is contained in $K$, as it is delimited by the $\ell_0$-intercepts of the simplices $\tau_\kappa,\mu_d\in \Pi'_K(\ell_0)$. See Figure \ref{Fig:WedgeW} (right). 
 
 Therefore, to ``place" the intersection point $H(\pi_\kappa)\cap \ell_0$ within the interval $\ell_0\cap K$, it suffices to check that the point $x_\kappa=G_\kappa\cap \ell_\kappa$ too lies in the wedge $W$
 (so that the hyperplane $H(\pi_\kappa)$, through $\kappa$ and $x_\kappa$, would remain in the same wedge $W$ between $H(\tau_\kappa)$ and $H(\mu_d)$ and, thereby, cross $K\cap \ell_0$, at the aforementioned point $y_\kappa$).
 However, this easily follows from the fact that, by the convexity of $W$ and $\phi$, the point $z=\ell_\kappa\cap \phi$ lies in $W$; thus, $x_\kappa=e(z,H_\kappa)\cap G_{\kappa}$ has to lie below $z$ (and, therefore, below $H(\mu_d)$), and above $H_\kappa$.
\end{proof}

To recap, Proposition \ref{Claim:GoodSimplices} yields at least $(\vartheta/4)\left(\frac{\sigma_0}{2}\right)^{\beta_{d-1}}{\eps n \choose d-1}$ good $(d-2)$-simplices $\kappa\in {P'_K\choose d-1}$. Each of these simplices $\kappa$ (whether it is very good, or not) contributes a $(d-1)$-simplex $\pi_\kappa\in \Pi_X$ whose supporting hyperplane $H(\pi_\kappa)$ meets $\ell_0$ at a point of $Y_{\ell_0}\cap K$. Furthermore, the generic choice of $\L(P,s)$ easily implies that no pair of these simplices $\pi_\kappa,\pi_{\kappa'}\in \Pi_X$ meet $\ell_0$ at the same point. Hence, the interval $K\cap \ell_0$ must contain at least $(\vartheta/4)\left(\frac{\sigma_0}{2}\right)^{\beta_{d-1}}{\eps n \choose d-1}$ points of $Y_{\ell_0}$ and, therefore, at least one point of the 1-dimensional net $N_{\ell_0}\subset N_{\leq \delta,2}$ (that was constructed over $Y_{\ell_0}$), which concludes the proof of Lemma \ref{Lemma:PrincipalHyperplane}.
\end{proof}

\subsection{Step 3}\label{Subsec:3rdNetHollow}
To finish the proof of Theorem \ref{Theorem:BoundHollow}, we describe the third net $N_{\leq \delta,3}$, which pierces all the remaining sets in $\K_{\leq \delta}$, and then bound the cardinality of the entire net $N_{\leq \delta}=\bigcup_{i=1}^3N_{\leq \delta,i}$. 
%To this end, we will again resort to the standard recurrence in $\eps>0$ and the underlying set $P$. 
%We begin by introducing the ``local" subsets $\tilde{P}\subset P$ which underly the various subsidiary nets $N(\tilde{P},\tilde{\eps})$. 

%Specifically, the subsidiary $\tilde{\eps}$-nets $N\left(\tilde{P},\tilde{\eps}\right)$, with $\tilde{\eps}>\eps$, will involve ground subsets $\tilde{P}\subset P$ that are cut out by either (i) the cells of the random arrangement $\A(\R)$, or (ii) the {\it red} cells of the finer cutting $\D=\D(\R,P)$ introduced in the beginning of this section. %In the subsequent analysis, every remaining set $K\in \K_{\leq \delta}$ can be ``assigned" to such a subset $Q$ that attains a much larger frequency $|P_K\cap Q|/|Q|$ of the 

%As a result,  the cardinality of $N_{\leq \delta,3}$ will be bounded by a sum of the terms $r^a\cdot f(\eps \cdot r^b)$, with $a/b\leq \alpha_d+\gamma'$. 

\medskip
\noindent{\bf Setup.} Recall that every cell $\varphi$ in our cutting $\D=\D(\R,P)$ encompasses at most $\lfloor n/k\rfloor$ points of $P$, and is colored as either red or blue. Namely, the blue cells are the simplicial cells that have been inherited en toto from the bottom-vertex triangulation $\D(\R)$ of the sampled arrangement $\A(\R)$, and their number can be as large as $O\left(r^d\right)$. In contrast, the number of the red cells is at most $2k$. 

In view of Lemmas \ref{Lemma:1stNetHollow} and \ref{Lemma:PrincipalHyperplane}, every remaining set $K\in \K_{\leq \delta}$ is now endowed with the principal hyperplane $H_K\in \HH(\Pi'_K)$, which crosses at most $4c_{\it cut}\cdot h\frac{k}{r}\log r=O\left(h\frac{k}{r}\log r\right)$ red cells of $\D$, and whose zone in $\D$ encompasses a subset $A_K\subseteq P_K$ of at least $c'_2 \sigma_0^{\beta_{d-1}} \lceil\eps n \rceil=\Theta(|P_K|)$ points which see $H_K$ within the underlying random arrangement $\A(\R)$. (Here $c_{\it cut}>0$ denotes the constant in Observation \ref{Prop:Cutting}.)

\medskip
\noindent{\bf Definition.} For any open cell $\Delta$ in $\A(\R)$, and any open cell $\varphi$ in the cutting $\D$, let us denote $P_\Delta:=P\cap \Delta$, and $P_\varphi:=P\cap \varphi$. Note that both subsets $P_\Delta$ and $P_\varphi$ are contained in $P'=P\setminus \left(\bigcup \R\right)$, and every point of $P'$ belongs to exactly one subset of each kind.

\medskip
 We will say that a point of $P'$ is {\it blue} (resp., {\it red}) if it lies in a blue (resp., purple) cell of $\D(\R,P)$, and denote the resulting two subsets of $P'$ by, respectively, $B$ and $R$, so that $P'=R\uplus B$. 
Note that a given open cell $\Delta$ of $\A(\R)$ may give rise to both red and blue cells within $\D$, so that its subset $P(\Delta)$ may include points of both color classes $R$ and $B$.

\medskip
\noindent{\bf Overview.} Here is a high-level sketch of our divide-and-conquer treatment of the remaining sets $K\in \K_{\leq \delta}$. Since at least half of the points in every subset $A_K$, that is induced by some remaining $\delta$-hollow set $K\in \K_{\leq \delta}$, must be of the same color, our task comes down to constructing a separate $\Theta(\eps)$-net $N_R$ (resp., $N_B$) over the point set $R$ (resp., $B$).

\begin{enumerate}
	\item {\it The net $N_R$.} Since the red points in every subset $A_K$ are distributed over $O\left(h\frac{k}{r}\log r\right)$ red cells $\varphi\in \D$, which are crossed by $H_K$, and each of these cells satisfies $|P_\varphi|\leq \lfloor n/k\rfloor$, it is enough to construct for each of these cells $\varphi\in \D$ a $\tilde{\eps}$-net $N_\varphi$ with some $\tilde{\eps}>0$ that satisfies
	 $$
	 \tilde{\eps}=\Theta\left(\frac{\eps n}{h\cdot \frac{k}{r}\log r\cdot (n/k)}\right)=\Theta\left(\frac{\eps \cdot r}{h\log r}\right) \gg \eps\cdot r^{1-\eta},
	 $$ 
	 \noindent for the total cardinality $|N_R|=O\left(k \cdot f\left(\eps \cdot r^{1-\eta}\right)\right)=O\left(r^{\alpha_d}\cdot f\left(\eps\cdot r^{1-\eta}\right)\right)$. (Here the last estimate stems from our choice $k=r^{(d+\sqrt{d^2-2d})/2}$, which is bounded by $r^{\alpha_d}$ for all $d\geq 3$.)
	 \item {\it The net $N_B$.} If the majority of the points in the subset $A_K$ are blue, these points can be spread, within $\D$, over a far larger number of the blue cells. 
	 To still end up with an efficient recurrence, our decomposition of the blue set $B$ will be based on the cells of the random arrangement $\A(\R)$.
	 
	 A simple random sampling argument will yield a $(d-1)$-dimensional polytope $T_K\subset H_K$ with the property that a fixed fraction of the blue points of $A_K$ lie in such cells $\Delta$ of $\A(\R)$ that meet the relative boundary of $T_K$. According to Lemma \ref{Lemma:Zone}, the overall number of the latter cells $\Delta$ in $\A(\R)$ is only $O\left(r^{d-2}\right)$.
	 Though some of these cells $\Delta\in \A(\R)$ may contain far more than $n/r^d$ points of $B$, Theorem \ref{Theorem:ManyCells} implies that such excessively ``heavy" cells $\Delta$ encompass relatively few points of $B$, which can be ``dispatched" by the means of a separate recursive $\Omega(\eps\log r)$-net. 
	 	 
	 With the aforementioned exception, we will construct for each cell $\Delta$ in $\A(\R)$ a separate $\eps_\Delta$-net, this time over the respective ``blue" subset $B\cap \Delta$. 
	 Choosing a suitable threshold $\eps_\Delta$ for each cell $\Delta\in \A(\R)$ will again result in a favourable upper bound on $|N_{B}|$, which involves only the terms of the ``right" type $O\left(w^{\alpha_d+\eta}\cdot f\left(\eps\cdot w^{1-\eta}\right)\right)$ (where $w$ is a fixed and positive degree of $r$).
\end{enumerate}

Let us now cast the missing details in the outlined construction of the nets $N_R$ and $N_B$, over the respective color classes $R$ and $B$ within $P'$. To this end, let us fix a suitably small constant $c'_3>0$, which will be chosen in the sequel so as to suit the analysis of $N_{>\delta,3}$. (This choice will depend on $d\geq 3$, $\eta>0$, $\sigma_0>0$, and also the constants $c_0$ and $c'_2$ in, respectively, Lemma \ref{Lemma:ConvexProjSample} and Lemma \ref{Lemma:PrincipalHyperplane}.)

For the rest of this construction, and its subsequent analysis, it can be assumed that $\eps<c'_3$ for, otherwise, any of the previous methods \cite{AlonSelections,Chazelle,MatWag04} yields a constant-size net $N$ of cardinality $O\left(1/{c'_3}^{d+1}\right)$ for the entire family $\K(P,\eps)$.

In what follows, we use $\D_R$ (resp., $\D_B$) to denote the sub-family of all the red (resp., blue) cells in our cutting $\D=\D(\R,P)$.

\medskip
\noindent{\bf The ``red" net $N_R$.}
	For every red cell $\varphi\in \D$ we construct the net $N_\varphi:=N(P_\varphi,\tilde{\eps})$ for the family $\K\left(P_\varphi,\tilde{\eps}\right)$, with
	\begin{equation}\label{Eq:RedNet}
		\tilde{\eps}:=\frac{c'_2 \sigma_0^{\beta_{d-1}} \lceil \eps n\rceil}{8\cdot (n/k)\cdot c_{\it cut}\cdot \frac{hk}{r}\log r}=\Theta\left(\frac{r}{h\log r}\cdot \eps\right).
	\end{equation}

 \noindent We then set 
 $$
 N_R:=\bigcup_{\varphi\in \D_R}N_\varphi.
 $$

\medskip
\noindent{\bf The ``blue" net $N_B$.}  % and $R_\Delta:=R\cap \Delta$.
Let

\begin{equation}\label{Eq:HeavyCells}
%M:=c'_3\cdot \frac{k^2\sigma^{2\beta_{d-1}}}{r^d(\log r)^{2\zeta_d+2}}
M:=c'_3\cdot \frac{k^2}{r^d(\log r)^{2\zeta_d+2}},
\end{equation}

\noindent where $\zeta_d>0$ is the constant in Theorem \ref{Theorem:ManyCells}.

For every cell $\Delta\in \A(\R)$, let us denote $B_\Delta:=B\cap \Delta$.
We then fix a threshold $b:=n/M$ and classify the cells $\Delta\in \A(\R)$ according to the number $|B_\Delta|$ of the blue points in each cell.
The first sub-family $\A'$ is comprised of all such cells $\Delta\in \A(\R)$ that satisfy either $|B_\Delta|<c'_3n/r^{d+2}$ or $|B_\Delta|\geq b$. The second sub-family $\tilde{\A}:=\A(\R)\setminus \A'$ encompasses all the remaining cells $\Delta$ of $\A(\R)$, which satisfy
	$$
c'_3n/r^{d+2}	\leq |B_\Delta|<b.
	$$

%Denote
%$B':=\bigcup_{\Delta\in \A'}B_\Delta$ and $\tilde{B}:=\bigcup_{\Delta\in \tilde{\A}}B_\Delta=B\setminus B'$. %and  Our last net $N_{\leq \delta,3}$ is comprised of three types of nets, which are recursively defined over smaller-size subsets of $P$:

%\medskip
%For any $0\leq i\leq l+1$ we use $B_i$ to denote the subset of all the blue points that lie in the open cells of $\A_i$.
 
%$\biguplus_{\Delta\in \A_{0}} P_{bl}(\Delta)$, $\biguplus_{\Delta\in \C_i} G(\Delta)$, and $\biguplus_{\Delta\in \C_{>l}} G(\Delta)$, for $1\leq i\leq l$, which altogether yield a decomposition of $G$ into $l+2$ disjoint subsets.

%\medskip
% Let $w>0$. We say that a cell $\Delta$ in the arrangement $\A(\R)$ is {\it $w$-complex} if it contains at least $w$ simplices of the bottom-vertex triangulation $\D(\R)$; otherwise, we say that the cell $\Delta$ is {\it $w$-simple}.

%Let $\C_\heavy$ denote the subset of all the heavy cells in $\A(\R)$.

The net $N_B$ combines the following ingredients.

\begin{enumerate}
	\item {\it The nets $N_\Delta$.} For every cell $\Delta\in \tilde{\A}$, we include in $N_B$ the net $N_\Delta:=N\left(B_\Delta,\eps_\Delta\right)$ for the family $\K\left(B_\Delta, \eps_\Delta\right)$, with
\begin{equation}\label{Eq:Ni}
\eps_\Delta:=c'_3\cdot \frac{\sigma_0^{\beta_{d-1}} \eps}{r^{d-2}}\frac{n}{|B_\Delta|},
\end{equation}

\item {\it The net $N_{B'}$.} Lastly, we let $B':=\bigcup_{\Delta\in \A'}B_\Delta$, and add to $N_B$ the net $N_{B'}:=N\left(B',\eps \log r\right)$ for the family $\K\left(B',\eps\log r\right)$.

\end{enumerate}

\medskip
\noindent {\bf The analysis.} The properties of our third (and last) net $N_{\leq \delta,3}:=N_R\cup N_B$ for the family $\K_{\leq \delta}$ are summarized in the following lemma, whose somewhat technical proof is postponed to Section \ref{Subsec:3rdNetHollowProofs}.

\begin{lemma}\label{Lemma:3rdNetHollow}
\begin{enumerate}

    \item With a suitably small choice of the constant $c'_3>0$, every remaining convex set in $\K_{\leq \delta}$, that was missed by the combination $N_{\leq \delta,1}\cup N_{\leq \delta,2}$, is pierced by the just described net $N_{\leq \delta,3}=N_R\cup N_B$.

	\item The cardinality of $N_{\leq \delta,3}$ satisfies 
$$
|N_{\leq \delta,3}|=O\left(f\left(\eps\log r\right)+k\cdot f\left(\eps\cdot r^{1-\eta}\right)+\sum_{i=1}^{l} \frac{2^ik^2}{r^d}\cdot  f\left(\eps\cdot 2^i\cdot \frac{k^2}{r^{2d-2+\eta/(10d)}}\right)\right)
$$
\noindent for some $l=O(\log r)$.

\end{enumerate}

\end{lemma}

\subsection{Proof of Lemma \ref{Lemma:3rdNetHollow}}\label{Subsec:3rdNetHollowProofs}

\noindent{\bf Part 1.} Let us first check that every set $K\in \K_{\leq \delta}$ (that survived so far) is pierced by $N_{\leq\delta,3}=N_R\cup N_B$, provided a suitably small choice of the constant $c'_3>0$ in (\ref{Eq:HeavyCells}). To this end, we fix such a set $K$, and distinguish between several scenarios according to the distribution of the (at least) $c'_2 \sigma_0^{\beta_{d-1}}\lceil \eps n\rceil$ points of the subset $A_K\subseteq P_K$ described in Lemma \ref{Lemma:PrincipalHyperplane}. %Recall that each point $p\in A_K$ lies in a cell $\varphi\in \D(\R,P)$ that is crossed by the principal hyperplane $H_K$ of $K$.

\medskip
\noindent {\bf Case (a).} Assume first that at least $|A_K|/2$ of all points in $A_K$ are red and, therefore, belong to the set $R=\bigcup_{\varphi\in \D_R}P_\varphi$. Since the principal hyperplane $H_K$ belongs to $\HH(\Pi')$ and, therefore, crosses a total of at most $4c_{\it cut}\cdot h\frac{k}{r}\log r$ red cells $\varphi\in \D_R$, there must be such a cell $\varphi\in \D_R$ that encompasses at least
$$
\frac{c'_2 \sigma_0^{\beta_{d-1}}\eps n}{8c_{\it cut}\cdot h\frac{k}{r}\log r}\geq \tilde{\eps}\cdot \frac{n}{k}\geq \tilde{\eps}\cdot |P_\varphi|.
$$

\noindent points of $A_K$ (where the second inequality uses the definition (\ref{Eq:RedNet}) of $\tilde{\eps}$). Hence, such a set $K$ is pierced by the local net $N_\varphi\subseteq N_{R}$, which was constructed for the instance $\K\left(P_\varphi,\tilde{\eps}\right)$.

\medskip
\noindent {\bf Case (b).} Let us assume, then, that at least $|A_K|/2\geq c'_2 \sigma_0^{\beta_{d-1}}\lceil \eps n\rceil/2$ among the points of $A_K$ belong to $B$ and, thereby, lie in the blue cells $\varphi\in \D_B$. %Denote $B'_K:=A_K\cap B'$ and $\tilde{B}_K:=A_K\cap B''$.
In view of the (potentially) very non-uniform distribution of the blue points between the cells of $\A(\R)$, two sub-scenarios are to be considered.

\medskip
\noindent{\bf Case (b1).} If $|A_K\cap B'| \geq |A_K|/4$, then a suitably small choice of $c'_3>0$ yields
$$
|A_K\cap B'|\geq c'_2 \sigma_0^{\beta_{d-1}}\lceil \eps n\rceil /4\geq \eps \log r \cdot |B'|,
$$
\noindent where the last inequality follows from Proposition \ref{Prop:SpecialSets} below. As a result, such a set $K$ is $(\eps\log r)$-heavy with respect to $B'$ and, therefore, must have been pierced by the net $N_{B'}\subseteq N_B$.

\begin{proposition}\label{Prop:SpecialSets}  With a suitably small choice of the constant $c'_3>0$ in the definition of $N_{\leq \delta,3}$, we have that
$|B'|\leq c'_2\sigma_0^{\beta_{d-1}} n/\left(6\log r\right)$.

\end{proposition}
\begin{proof}
Since $|\A'|\leq |\A(\R)|=O\left(r^d\right)$, the cells of $\Delta\in \A'$ that satisfy $|B_\Delta|< c'_3 n/r^{d+2}$ account for a total of 
\begin{equation}\label{Eq:LightCells}
	O\left(|\A(\R)|\cdot n/r^{d+2}\right)=O\left(n/r^2\right)
\end{equation}
\noindent points of $B'$. 

The remaining points of $B'$ are covered by at most $\lfloor n/b\rfloor\leq M$ cells of $\A'$ which satisfy $|B_\Delta|\geq b$.
Using the estimate of Theorem \ref{Theorem:ManyCells} (and the definition (\ref{Eq:HeavyCells}) of $M$), we obtain the bound

\begin{equation}\label{Eq:SmallTotalComplexity}
O\left(M^{1/2}r^{d/2}(\log r)^{\zeta_d}\right)=O\left(k/\log r\right)
\end{equation}

\noindent on the overall complexity of the latter cells $\Delta$ in $\A'$. Hence, their bottom vertex triangulation yields a total of $O\left(k/\log r\right)$ simplices of $\D_B$, each of them containing at most $n/k$ points of $B'$, for a total amount of $O\left(\left(k/\log r\right) \cdot (n/k)\right)=O\left(n/\log r\right)$ points.

Provided a small enough, albeit, fixed choice of $c'_3>0$ (which depends on $\sigma_0$ and $c'_2$), the implicit constants of proportionality on the right hand sides of both estimates (\ref{Eq:LightCells}) and (\ref{Eq:SmallTotalComplexity}) are small enough to ensure that the cells of $\A'$ encompass a total of at most $c'_2 \sigma^{\beta_{d-1}}n/(6\log r)$ points of $B$.
\end{proof}

\medskip
\noindent {\bf Case (b2).} We have that 

\begin{equation}\label{Eq:ManyBlue}
|A_K\cap \tilde{B}| \geq |A_K|/4\geq c'_2 \sigma_0^{\beta_{d-1}}\lceil \eps n\rceil/4,
\end{equation}

\noindent where $\tilde{B}$ denotes the set $B\setminus B'=\bigcup_{\Delta\in \tilde{\A}} B_\Delta$.
 Then the claim is an straightforward consequence of the following observation.

\begin{lemma}\label{Proposition:SampleBoundary}
Let $K\in \K_{\leq \delta}$ be a convex set that is missed by $N_{\leq \delta,1}\cup N_{\leq \delta,2}$, and that falls into case (b2). Then, provided a sufficiently small choice of the constant $c'_3>0$, there is a $(d-1)$-dimensional convex polytope $T_K$ within $H_K$, and a subset $B_{K}\subseteq A_K\cap \tilde{B}$ of at least $|A_K|/12$ such blue points whose ambient cells in $\A(\R)$ meet the relative boundary of $T_K$ within $H_K$.
\end{lemma}

To complete the proof of Lemma \ref{Lemma:3rdNetHollow} using Lemma \ref{Proposition:SampleBoundary},
let $\tilde{\A}_K$ denote the subset of all such cells $\Delta\in \tilde{\A}$ that meet the relative boundary of the $(d-1)$-dimensional polytope $T_K\subset H_K$ described in Lemma \ref{Proposition:SampleBoundary}.
Notice that every cell $\Delta\in \tilde{\A}_K$ corresponds to a unique $(d-1)$-dimensional cell $\Delta\cap H_K$, which too meets the relative boundary of $T_K$, and is part of the arrangement that is determined by the family $\{H\cap H_K\mid H\in \R\}$ of $(d-2)$-planes within $H_K\equiv \reals^{d-1}$.
Applying Lemma \ref{Lemma:Zone} to this $(d-1)$-dimensional arrangement implies the estimate $|\tilde{\A}_K|\leq cr^{d-2}$, with a suitable constant $c>0$ which depends on the dimension $d\geq 3$. 
As a result, the pigeonhole principle yields such a cell $\Delta\in \tilde{\A}_K$ that satisfies 
$$
|B_K\cap B_\Delta|\geq \frac{|B_K|}{cr^{d-2}}\geq \frac{|A_K|}{12cr^{d-2}}\geq \frac{c'_2}{12cr^{d-2}}\cdot \sigma_0^{\beta_{d-1}}\lceil \eps n\rceil.
$$
\noindent Hence, provided that $c'_3\leq c'_2/(12c)$, such a convex set $K$ must have been pierced by the ``local" net $N_\Delta\subseteq N_B$, which was constructed for the family $\K\left(B_\Delta,\epsilon_\Delta\right)$.

\begin{proof}[Proof of Lemma \ref{Proposition:SampleBoundary}.] 
Let us fix a convex set $K\in \K_{\leq \delta}$ as in the hypothesis, and denote $\tilde{B}_K:=A_K\cap \tilde{B}$. Consider the partition 
$\left\{A_K\cap B_\Delta \mid \Delta\in \tilde{\A}\right\}$
of $\tilde{B}_K$, by the cells of $\tilde{\A}$, into $|\tilde{\A}|\leq (d+1){r\choose d}\leq r^d$ subsets (see, e.g., \cite[Proposition 6.1.1]{JirkaBook}). According to Lemma \ref{Lemma:Sample}, a random $\tilde{r}$-sample $S\subseteq \tilde{B}_K$, with $\tilde{r}=\min\left\{\left\lceil 10^3 r^d\right\rceil,|\tilde{B}_K|\right\}$, meets the following condition with probability at least $1/2$: 

\medskip
\noindent {\tt (C1)} {\it The cells of $\tilde{\A}_S=\{\Delta\in \tilde{\A}\mid S\cap \Delta\neq \emptyset\}$ encompass a total of at least $|\tilde{B}_K|/3$ points of $\tilde{B}_K$.}

\medskip
\noindent Furthermore, the fact that the principal subset $P_K$ is $\delta$-hollow implies, in particular, that there exist at most $\delta{\lceil \eps n\rceil\choose d+1}$ punctured $(d+1)$-subsets in ${\tilde{B}_K\choose d+1}$. Therefore, and because of (\ref{Eq:ManyBlue}), the point set $\tilde{B}_K$ must be vertically $\tilde{\delta}$-convex, with $\tilde{\delta}\geq 0$ that satisfies
$$
\tilde{\delta}\leq \delta {\lceil \eps n\rceil\choose d+1}/{|\tilde{B}_K|\choose d+1}  \leq \delta {\lceil \eps n\rceil\choose d+1}/{\lceil c'_2 \sigma_0^{\beta_{d-1}} \eps n\rceil/4\choose d+1}
=O(\delta)=O\left(1/r^{d(d+1)+1}\right).
$$

\begin{figure}
    \begin{center}      
        \input{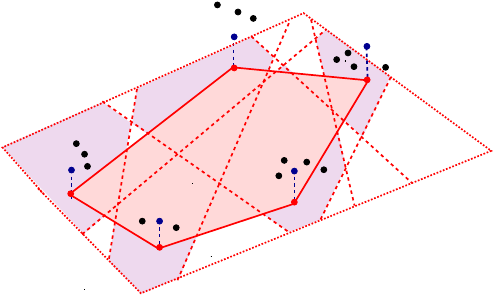_t}
        \caption{\small Proof of Lemma \ref{Proposition:SampleBoundary}. We sample a $0$-hollow subset $S$ of $\tilde{r}=\min\left\{\lceil 10^3r^d\rceil,|\tilde{B}_K|\right\}$ blue points from $\tilde{B}_K$. The projected red points of $[S]_{H_K}$ determine the $(d-1)$-dimensional convex polytope $T_K:=\conv\left([S]_{H_K}\right)$ within the principal hyperplane $H_K$. The cross-sections $\Delta\cap H_K$ of the ambient cells $\Delta\in \A(\R)$ of the points $p\in S$ are shaded as well.}
        \label{Fig:ZoneU}
    \end{center}
\end{figure}

Since we have that $r=\left\lceil (1/\eps)^{\tilde{\eta}^3}\right\rceil$, $\tilde{\delta}=o\left(1/r^{d(d+1)}\right)$, and $\eps<c'_3$, any sufficiently small, albeit fixed choice of $c'_3>0$ (which does not depend on the on the quantities $\eps$ and $n$) yields the inequality

$$
\tilde{\delta}\leq \frac{c_0}{\left\lceil 10^3 r^d \right\rceil^{(d+1)}}\leq \frac{c_0}{\tilde{r}^{d+1}}.
$$ 

\noindent According to Lemma \ref{Lemma:ConvexProjSample}, the sample $S$ 
 is $0$-hollow with probability at least $0.99$ (as depicted in Figure \ref{Fig:ZoneU}). Hence, with probability at least $0.4$, it both (i) satisfies the previous condition (C1), and (ii) is $0$-hollow.

Let us assume, then, that both criteria hold for the sample $S$ at hand. Consider its $H_K$-projection 
$[S]_{H_K}:=\{[p]_{H_K}\mid p\in S\}$. We claim that the $(d-1)$-dimensional polytope $T_K:=\conv\left([S]_{H_K}\right)\subset H_K$ meets the criteria of the proposition with 
$B_K:=\tilde{B}_K\cap \left(\bigcup \tilde{\A}_S\right)$. 
In view of property (C1), it suffices to check that every cell of $\tilde{\A}_S$ meets the relative boundary of $T_K$ within $H_K$.
Indeed, fix such a cell $\Delta\in \tilde{\A}_S$, with a point $p\in S\cap \Delta$. Then, by the $0$-hollowness of our sample $S$, the point $p$ projects to a boundary vertex $[p]_{H_K}$ of $T_K$. Furthermore, since $p$ sees $H_K$ (as $S\subseteq A_K$), the vertex $[p]_{H_K}$ too lies in the cell $\Delta$, which then intersects the relative boundary of $T_K$. 
\end{proof}

\noindent{\bf Part 2.} Let us now bound the cardinality of $N_{\leq \delta,3}=N_R\cup N_B$. 
To estimate $|N_R|$, recall that $|\D_{R}|\leq 2k$. Together with that fact that $h\lll_\eta r\lll_\eta 1/\eps$, this yields
$$
|N_R|=\left|\left(\bigcup_{\varphi\in \D_{R}} N_\varphi\right)\right|=O\left(k\cdot f\left(\frac{\sigma_0^{\beta_{d-1}}r}{h\log r}\cdot \eps\right)\right)=O\left(k\cdot f\left(\eps\cdot r^{1-\eta}\right)\right).
$$

To bound the remaining quantity $|N_B|=|N_{B'}|+\sum_{\Delta\in \tilde{\A}}|N_\Delta|$, note first that $|N_{B'}|\leq f\left(\eps\log r\right)$.

To estimate the overall cardinality of the remaining ``blue" nets $N_\Delta$, let us first arrange their respective cells $\Delta\in \tilde{\A}$ into sub-families with roughly the same density of the blue points in each cell. To this end, we fix  $l:=\log \lceil r^{d+2}/M\rceil=O(\log r)$, and set

$$
\tilde{\A}_i:=\left\{\Delta\in \tilde{\A}\ \Bigg\vert \frac{b}{2^i} \leq |B_\Delta|<\frac{b}{2^{i-1}}\right\}
$$

\noindent for all $1\leq i\leq l$. Notice that $\tilde{\A}=\biguplus_{i=1}^l\tilde{\A}_i$ and, according to (\ref{Eq:Ni}), every cell $\Delta\in \A_i$ satisfies

$$
\eps_\Delta=c'_3\cdot \frac{\sigma_0^{\beta_{d-1}} \eps}{r^{d-2}}\frac{n}{|B_\Delta|}=\Theta\left(\frac{\eps}{r^{d-2}}\cdot 2^iM\right)=\Theta\left(\frac{2^i k^2\eps}{r^{2d-2}\cdot (\log r)^{2\zeta_d+2}}\right).
$$

\noindent Since we have that

$$
\left|\tilde{\A}_i\right|\leq 2^i n/b=\Theta(2^i\cdot M)=\Theta\left(\frac{2^i k^2}{r^d(\log r)^{2\zeta_d+2}}\right), and
$$
\noindent and $\log r\lll_\eta r$, it follows that
$$
\left|\bigcup_{\Delta\in \tilde{\A}_i}N_\Delta\right|\leq \left|\tilde{\A}_i\right|\cdot f\left(\frac{2^i k^2\eps}{r^{2d-2}\cdot (\log r)^{2\zeta_d+2}}\right)=O\left(\frac{2^ik^2}{r^{d}}f\left(\eps \cdot \frac{2^i k^2}{r^{2d-2+\eta/(10d)}}\right)\right)
$$

\noindent for all $1\leq i\leq l$, which concludes the proof of the second part of Lemma \ref{Lemma:3rdNetHollow}. $\Box$

\subsection{Wrap-up for the $\delta$-hollow sets} 
According to the first part of Lemma \ref{Lemma:3rdNetHollow}, every set in $\K_{\leq \delta}$ is pierced by the combined net
$$
N_{\leq \delta}=N_{\leq \delta,1}\cup N_{\leq \delta,2}\cup N_{\leq \delta,3}.
$$

\noindent Putting together the bounds in Lemmas \ref{Lemma:1stNetHollow}, \ref{Lemma:2ndNetHollow}, and \ref{Lemma:3rdNetHollow} yields

$$
|N_{\leq \delta}|\leq f(\eps,\lambda/h,\sigma/2)+f\left(\eps\log r\right)+
$$

$$
+O\left(k\cdot f\left(\eps\cdot r^{1-\eta}\right)+\sum_{i=1}^{l} \frac{2^ik^2}{r^d}\cdot  f\left(\eps\cdot 2^i\cdot \frac{k^2}{r^{2d-2+\eta/(10d)}}\right)\right)+O\left(\frac{1}{\eps^{d-1+\eta}}\right),
$$

\noindent where $l=O(\log r)$.
Recalling our choice $k=r^{(d+\sqrt{d^2-2d})/2}=r^{d-1/2-o_d(1)}$, and fixing $w:=k^2/r^{2d-2+\eta/(10d)}$, yields

$$
|N_{\leq \delta}|\leq f(\eps,\rho/h,\sigma/2)+f\left(\eps \cdot \log r \right)+
$$

$$
+O\left(r^{(d+\sqrt{d^2-2d})/2+\eta}\cdot  f\left(\eps\cdot r^{1-\eta}\right)+\sum_{i=1}^{l} 2^iw^{(d+\sqrt{d^2-2d})/2+\eta}\cdot f\left(\eps \cdot 2^i  w\right)\right)+
$$

$$
+O\left(\frac{1}{\eps^{d-1+\eta}}\right).
$$
\noindent Since we have that $\alpha_d\geq (d+\sqrt{d^2-2d})/2$ for all $d\geq 3$, this completes the proof of Theorem \ref{Theorem:BoundHollow}. $\Box$

\section{Concluding remarks}\label{Sec:Final}
\begin{itemize}

\item Despite the upper bound $f_d(\eps)=O\left((1/\eps)^{d-1/2-o_d(1)}\right)$ that Theorem \ref{Thm:Main} yields in all dimensions $d\geq 4$, the present analysis cannot be used to show that $f_2(\eps)=o\left((1/\eps)^{3/2}\right)$. (Nevertheless, it can be easily adapted to show a weaker bound $f_2(\eps)=O\left((1/\eps)^{a}\right)$, for some constant $3/2<a<2$.)

\item 
The author conjectures that the actual asymptotic behaviour of the functions $f_d(\eps)$  in any dimension $d\geq 1$ is close to $1/\eps$, as is indeed the case for their ``strong" counterparts which exist with respect to simply shaped objects in $\reals^d$ \cite{HW87}.
%Its worth mentioning that our main argument in Section \ref{Sec:Surrounded} exploits the delicate interplay between the two notions of $\eps$-nets which was partly explored by Mustafa and Ray \cite{MustafaRay} and, more recently, by Har-Peled and Jones \cite{NetShape}.

\item The author anticipates that a major improvement of the upper bound on $f_d(\eps)$ (i.e., beyond $O^*\left(1/\eps^{d-1}\right)$) would require stronger selection-type results for piercing simplices with points, lines and other flats, in the general vein of Theorem \ref{Theorem:MultipleSelection}.

\item 
The recently improved analysis of the transversal numbers $C_d(p,q)$ that arise in the Hadwiger-Debrunner problem (see Section \ref{sec:intro}), due to Keller, Smorodinsky, and Tardos \cite{Shakhar}, implies that

\begin{equation}\label{Eq:Chaya}
C_d(p,q)\leq f_d\left(\Omega\left(p^{-1-\frac{d-1}{q-d}}\right)\right).
\end{equation}

Here, as before, $f_d(\eps)$ denotes the smallest possible number $f$ with the property that any $n$-point set $P\subset \reals^d$ admits a weak $\eps$-net of cardinality $f$ with respect to convex sets.

Plugging the result of Theorem \ref{Thm:Main}, and the previous planar bound \cite{FOCS18}, into 
(\ref{Eq:Chaya}) yields improved bounds in all dimensions $d\geq 2$, namely,

$$
C_2(p,q)=O\left(p^{(3/2+\gamma)\left(1+\frac{d-1}{q-d}\right)}\right)
$$

and

$$
C_d(p,q)=O\left(p^{(\alpha+\gamma)\left(1+\frac{d-1}{q-d}\right)}\right),
$$

for all $d\geq 3$. Here $\gamma>0$ is an arbitrary positive constant, and $p$ has to be larger than a certain constant threshold which depends on $\gamma$.

\item Our proof of Theorem \ref{Thm:Main} is fully constructive and combines the following explicit ingredients across the various instances $\K\left(P,\Pi,\eps,\sigma\right)$ which are encountered in the course of our recurrence in $\eps>0$, the ground set $P$, and the subset $\Pi\subseteq {P\choose d}$ of $(d-1)$-simplices:

\begin{enumerate}
\item $1$-dimensional strong $\Omega^*(\eps^{\alpha_d})$-nets that are constructed via Theorem \ref{Theorem:Sparse}  within the lines $\ell$ of the canonical families $\L(P,s)$. Each of these nets is defined over the $\ell$-intercepts $\ell\cap \tau$ of the $(d-1)$-simplices $\tau\in \Pi$.

\item $1$-dimensional strong $\Omega^*\left(\eps^\alpha\right)$-nets that are obtained in Step 2 of the construction of $N_{>\delta}$, for the ``secondary" canonical lines $\ell\in \L(P_i,u)$. Each of these nets $N'_\ell$ is defined with respect to the $\ell$-intercepts of the $O^*\left(t^d\right)$ auxiliary $(d-1)$-simplices of ${V'\choose d}$ (where the set $V'$ is comprised of all the vertices of the clipped cells $\varphi'(p)$ which have been constructed around the points $p\in P$).
%Each set $V_i$ or $\hat{V}$ is comprised of the $\tilde{O}(t)$ vertices of the possibly clipped $d$-dimensional cells $\varphi_{i,j}$ in the partition $\Q(s,t)$ of Section \ref{Subsec:MainPartition}.

\item $1$-dimensional strong $\Omega^*\left(\eps^{d-1}\right)$-nets that are obtained in Step 2 of the construction of $N_{\leq \delta}$, for the lines $\ell\in \L(P,s)$. Each of these nets $N_\ell$ over the $\ell$-intercepts of the ``mixed" $(d-1)$-simplices of ${P\choose d-1}\ast X$. (The set $X$ is comprised of the $\ell'$-intercepts of the sampled hyperplanes of $\R$, taken over all canonical lines $\ell'\in \L(P,s)$, and also of the vertices of the cutting $\D$.)

\item Strong $\Theta\left(\frac{\delta\eps}{st^{1-1/d}}\right)=\Omega(\eps^{\alpha_d})$-nets that are constructed in Section \ref{Subsec:MainPartition}, via Lemma \ref{Thm:StrongNet}, with respect to convex $(2d)$-hedra.
\end{enumerate}

\item As the primary focus of this study is on the quantity $f_d(\eps)$, we did not seek to optimize the construction cost of our net $N$.
A straightforward implementation of the recursive construction runs in time $O^*\left(n^{d}\right)$, which is dominated by the non-recursive overhead spent on maintaining the sets $\Pi\subseteq {P\choose d}$ and tracing the zones of the hyperplanes of $\HH(\Pi)$ within the cutting $\D=\D(P,\R)$ of Section \ref{Sec:VerticallyConvex}. This also accounts for the cost of constructing the $1$-dimensional nets of Theorem \ref{Theorem:Sparse} with respect to the $\ell$-intercepts $\ell\cap\tau$, for all simplices $\tau\in \Pi$ and canonical lines $\ell\in \L(P,s)$.

Since $s\lll_\eta u\lll_\eta 1/\eps$, all the vertical partitions $\V(P,s)$ and $\V(P_i,u)$, along with the induced canonical line families $\L(P,s)$ and $\V(P_i,u)$, can be obtained in $O^*(n)$ time via Matou\v{s}ek's Theorem \ref{Theorem:Simplicial}. Since $r\lll_\eta s$, the same holds true for the arrangement $\A(\R)$ and its cutting $\D=\D(\R,P)$ in Section \ref{Sec:VerticallyConvex}, and the assignment of the points of $P$ to their ambient cells in $\A(\R)$ and $\D$. 
Furthermore, as Matou\v{s}ek's partition theorem holds with any partition parameter {$1\leq r\leq n$}, the secondary partition $\P(P,s,t)$ in Section \ref{Sec:Surrounded} can be constructed in similar time $O^*(n)$. Armed with the partition $\P(P,s,t)$, the set $\tilde{\Pi}$ in Lemma \ref{Lemma:SparsePi} can be obtained in $O\left(|\Lambda(P,s,t)|\cdot (st)^{d-1}+n^d\right)=O\left(n^d\right)$ time. 

\item It would be interesting to obtain a more elementary weak epsilon-net construction in all dimensions $d\geq 2$, whose cardinality is close to $1/\eps^{d-1/2}$ (or at least far smaller than $1/\eps^d$), and that can be implemented in $O^*(n)$ time.

\end{itemize}

\end{document}